\newcommand{\Appendix}[1]{the full version for}
\newtheorem{problem}{Problem}[section]
\newtheorem{theorem}{Theorem}[section]
\newtheorem{lemma}[theorem]{Lemma}
\newtheorem{remark}{Remark}
\newtheorem{claim}{Claim}
\newtheorem{definition}{Definition}
\newcommand{\e}{\mathbf{e}}
\newcommand{\s}{\mathbf{s}}
\renewcommand{\u}{\mathbf{u}}
\renewcommand{\v}{\mathbf{v}}
\newcommand{\w}{\mathbf{w}}
\newcommand{\x}{\mathbf{x}}
\newcommand{\y}{\mathbf{y}}
\newcommand{\A}{\mathbf{A}}
\newcommand{\B}{\mathbf{B}}
\newcommand{\C}{\mathbf{C}}
\newcommand{\D}{\mathbf{D}}
\newcommand{\E}{\mathbf{E}}
\newcommand{\F}{\mathbb{F}}
\newcommand{\G}{\mathbf{G}}
\renewcommand{\H}{\mathbf{H}}
\newcommand{\I}{\mathbf{I}}
\renewcommand{\L}{\mathbf{L}}
\newcommand{\M}{\mathbf{M}}
\newcommand{\R}{\mathbb{R}}
\renewcommand{\S}{\mathbf{S}}
\newcommand{\T}{\mathbf{T}}
\newcommand{\U}{\mathbf{U}}
\newcommand{\V}{\mathbf{V}}
\newcommand{\W}{\mathbf{W}}
\newcommand{\X}{\mathbf{X}}
\newcommand{\rank}{\mathsf{rank}}
\newcommand{\0}{\mathbf{0}}
\newcommand{\1}{\mathbf{1}}
\renewcommand{\comment}[1]{}
\newcommand{\red}[1]{{\color{red}#1}}
\newcommand{\cA}{\mathcal{A}}
\newcommand{\cC}{\mathcal{C}}
\newcommand{\cD}{\mathcal{D}}
\newcommand{\cE}{\mathcal{E}}
\newcommand{\cG}{\mathcal{G}}
\newcommand{\cI}{\mathcal{I}}
\newcommand{\cL}{\mathcal{L}}
\newcommand{\cR}{\mathcal{R}}
\newcommand{\cS}{\mathcal{S}}
\newcommand{\cT}{\mathcal{T}}
\newcommand{\cU}{\mathcal{U}}
\newcommand{\cM}{\mathcal{M}}
\newcommand{\cN}{\mathcal{N}}
\newcommand{\cO}{\mathcal{O}}
\newcommand{\cQ}{\mathcal{Q}}
\newcommand{\cX}{\mathcal{X}}
\newcommand{\bbF}{\mathbb{F}}
\newcommand{\bO}{\mathbf{O}}
\newcommand{\BHH}{\mathsf{BHH}}
\newcommand{\RS}{\mathsf{RS}}
\newcommand{\YES}{\mathsf{YES}}
\newcommand{\NO}{\mathsf{NO}}
\newcommand{\srank}{\mathsf{srank}}
\DeclareMathOperator{\erfc}{erfc}
\DeclareMathOperator*{\bbE}{\mathbb{E}}
\DeclareMathOperator{\tr}{tr}
\DeclareMathOperator{\poly}{\mathsf{poly}}
\algnewcommand\algorithmicinput{\textbf{Input:}}
\algnewcommand\algorithmicoutput{\textbf{Output:}}
\algnewcommand\INPUT{\item[\algorithmicinput]}
\algnewcommand\OUTPUT{\item[\algorithmicoutput]}
\algnewcommand{\LineComment}[1]{\Statex \(\triangleright\) #1}
\title{Testing Matrix Rank, Optimally}
\author{Maria-Florina Balcan\thanks{Carnegie Mellon University. Email: ninamf@cs.cmu.edu.} \and Yi Li\thanks{Nanyang Technological University. Email: yili@ntu.edu.sg.} \and David P. Woodruff\thanks{Carnegie Mellon University. Email: dwoodruf@cs.cmu.edu.} \and
Hongyang Zhang\thanks{Carnegie Mellon University. Email: hongyanz@cs.cmu.edu. 
}
}
\date{}
\begin{document}

\maketitle

\begin{abstract}
We show that for the problem of testing if a matrix $\A \in \mathbb{F}^{n \times n}$ has rank at most $d$, or requires changing an $\epsilon$-fraction of entries to have rank at most $d$, there is a {\it non-adaptive} query algorithm making $\widetilde{\cO}(d^2/\epsilon)$ queries. Our algorithm works for any field $\mathbb{F}$. This improves upon the previous $\cO(d^2/\epsilon^2)$ bound (Krauthgamer and Sasson, SODA '03), and bypasses an $\Omega(d^2/\epsilon^2)$ lower bound of (Li, Wang, and Woodruff, KDD '14) which holds if the algorithm is required to 
read a submatrix. 
Our algorithm is the first such algorithm which does not read a submatrix, and instead reads a carefully selected non-adaptive pattern of entries in rows and columns of $\A$. We complement our algorithm with a matching $\widetilde{\Omega}(d^2/\epsilon)$ query complexity lower bound for non-adaptive testers over any field. We also give tight bounds of $\widetilde{\Theta}(d^2)$ queries in the sensing model for which query access comes in the form of $\langle \X_i, \A\rangle:=\tr(\X_i^\top \A)$; perhaps surprisingly these bounds do not depend on $\epsilon$.

Testing rank is only one of many tasks in determining if a matrix has low intrinsic dimensionality. We next develop a novel property testing framework for testing numerical properties of a real-valued matrix $\A$ more generally, which includes the stable rank, Schatten-$p$ norms, and SVD entropy. Specifically, we propose a {\it bounded entry model}, where $\A$ is required to have entries bounded by $1$ in absolute value. Such a model provides a meaningful framework for testing numerical quantities and avoids trivialities caused by single entries being arbitrarily large. It is also well-motivated by recommendation systems. 
We give upper and lower bounds for a wide range of problems in this model, and discuss connections to the sensing model above. We obtain several results for estimating the operator norm that may be of independent interest. For example, we show that if the stable rank is constant, $\|\A\|_F = \Omega(n)$, and the singular value gap $\sigma_1(\A)/\sigma_2(\A) = (1/\epsilon)^{\gamma}$ for any constant $\gamma >0$, then the operator norm can be estimated up to a $(1\pm\epsilon)$-factor non-adaptively by querying $\cO(1/\epsilon^2)$ entries. This should be contrasted to adaptive methods such as the power method, or previous non-adaptive sampling schemes based on matrix Bernstein inequalities which read a $1/\epsilon^2\times 1/\epsilon^2$ submatrix and thus make $\Omega(1/\epsilon^4)$ queries. Similar to our non-adaptive algorithm for testing rank, our scheme instead reads a carefully selected pattern of entries.
\end{abstract}

\thispagestyle{empty}
\setcounter{page}{0}

\newpage
\tableofcontents

\section{Introduction}
Data intrinsic dimensionality is a central object of study in compressed sensing, sketching, numerical linear algebra, machine learning, and many other domains~\cite{lin2017low,krauthgamer2003property,zhang2013counterexample,woodruff2014sketching,clarkson2013low,zhang2015relations,zhang2014robust}. In compressed sensing and sketching, the study of intrinsic dimensionality has led to significant advances in compressing the data to a size that is far smaller than the ambient dimension while still preserving useful properties of the signal~\cite{nakos2018improved,awasthi2016learning}. In numerical linear algebra and machine learning, understanding intrinsic dimensionality serves as a necessary condition for the success of various subspace recovery problems~\cite{hardt2013algorithms}, e.g., matrix completion~\cite{zhang2016completing,hardt2014understanding,jain2013low,sun2015guaranteed} and robust PCA~\cite{balcan2018matrix,zhang2015exact,bouwmans2018applications}. The focus of this work is on the intrinsic dimensionality of matrices, such as the rank, stable rank, Schatten-$p$ norms, and SVD entropy. The stable rank is defined to be the squared ratio of the Frobenius norm and the largest singular value, and the Schatten-$p$ norm is the $\ell_p$ norm of the singular values (see Appendix \ref{section: Entropy} for our definition of SVD entropy). We study these quantities in the framework of non-adaptive property testing~\cite{parnas2003testing,chan2014optimal,daskalakis2013testing}: given non-adaptive query access to the unknown matrix $\A\in\mathbb{F}^{n\times n}$ over a field $\mathbb{F}$, our goal is to determine whether $\A$ is of dimension $d$ (where dimension depends on the specific problem), or is $\epsilon$-far from having this property. The latter means that at least an $\epsilon$-fraction of entries of $\A$ should be modified in order to have dimension $d$. Query access typically comes in the form of reading a single entry of the matrix, though we will also discuss sensing models where a query returns the value $\langle\X_i,\A\rangle:= \tr(\X_i^\top\A)$ for a given $\X_i$. Without making assumptions on $\A$, we would like to choose our sample pattern or set $\{\X_i\}$ of query matrices so that the query complexity is as small as possible.

Despite a large amount of work on testing matrix rank, many fundamental questions remain open. In the rank testing problem in the sampling model, one such question is to design an efficient algorithm that can distinguish rank-$d$ vs. $\epsilon$-far from rank-$d$ with optimal sample complexity. 
The best-known sampling upper bound for non-adaptive rank testing for general $d$ is $\cO(d^2/\epsilon^2)$, which is achieved simply by sampling an $\cO(d/\epsilon)\times \cO(d/\epsilon)$ submatrix uniformly at random~\cite{krauthgamer2003property}. For arbitrary fields $\mathbb{F}$, only an $\Omega((1/\epsilon)\log(1/\epsilon))$ lower bound for constant $d$ is known~\cite{li2014improved}. 

Besides the rank problem above, testing many numerical properties of real matrices has yet to be explored. For example, it is unknown what the query complexity is for the stable rank, which is a natural relaxation of rank in applications. Other examples for which previously we had no bounds are the Schatten-$p$ norms and SVD entropy. We discuss these problems in a new property testing framework that we call the \emph{bounded entry model}. This model has many realistic applications in the Netflix challenge~\cite{koren2009matrix}, where each entry of the matrix corresponds to the rating from a customer to a movie, ranging from 1 to 5. Understanding the query complexity of testing numerical properties in the bounded entry model is an important problem in recommendation systems and applications of matrix completion, where often entries are bounded.

\comment{ 
There are some lower bounds that come from the sketching literature: for stable rank $d$ matrices, with fewer than $\cO(d^2/\epsilon^2)$ non-adaptive samples, a $(1\pm\epsilon)$-approximation to the largest singular value is not possible \cite{li2016tight}. When no assumptions are made on the stable rank, there is an $\cO(n^2/\alpha^4)$ upper bound for an $\alpha$-approximation to the largest singular value. Another line of research~\cite{kong2017spectrum,khetan2017spectrum} considers estimating  the Schatten-$p$ norm up to a constant factor, and when $p$ is as large as $\Theta(\log n/\epsilon)$, this estimator approximates the largest singular value to the desired $(1\pm\epsilon)$-factor accuracy. However, the sample complexity in those works is exponential in $p$.
}

\comment{
\medskip
\noindent{\textbf{Connections to Machine Learning.}}
We provide both positive and negative answers to the questions above. Our study is motivated by connections between property testing and machine learning~\cite{goldreich1998property,ron2008property}: property testing can determine inexpensively whether learning with a low-rank hypothesis class is worthwhile~\cite{balcan2012active}. Problems related to our model include matrix completion~\cite{balcan2016noise,zhang2016completing,balcan2018matrix,ge2016matrix,hardt2014computational} and robust PCA~\cite{zhang2015exact}, which have applications in recommendation system, video surveillance, etc~\cite{lin2017low}. In these problems, the data points are the column vectors of a matrix $\A=\L+\S$, where $\L$ is a low-rank matrix and $\S$ is only guaranteed to be a sparse matrix; note the entries in $\S$ may be arbitrarily large. Our result gives a near-optimal decision version of the robust PCA problem with a sample size independent of $n$: either $\A=\L$ for a rank-$d$ matrix $\L$, or if $\A=\L+\S$, then necessarily more than an $\epsilon$-fraction of entries of $\S$ are non-zero. Distinguishing the two cases helps detect, inexpensively, whether there is a rank-$d$ matrix that agrees with the observed entries which are corrupted with sparse noise~\cite{recht2011simpler,candes2011robust,pimentel2016converse,ashraphijuo2018deterministic}. In comparison, learning algorithms require $\Omega(nd\log n)$ samples for complete recovery of an underlying rank-$d$ matrix~\cite{candes2010power}. 

Our study of various numerical properties of matrices has applications to approximate matrix multiplication~\cite{cohen2015optimal}, random sparsification~\cite{tropp2015introduction}, column subset selection, matrix factorization, and eigenvalue optimization~\cite{tropp2009column,rudelson2007sampling}, since these measure the intrinsic dimensionality of matrices~\cite{tropp2015introduction}. A byproduct of our algorithms is a non-adaptive algorithm for estimating the largest singular value of an $n\times n$ matrix.
It is well-known that the power method is an efficient \emph{adaptive} algorithm for this problem. However, it typically requires an eigengap and needs to read all the entries of a matrix in order to compute a matrix-vector product. Other work has applied cycle-like estimators with sample complexity of $\mathsf{poly}(n)$~\cite{kong2017spectrum,khetan2017spectrum,li2014sketching}. In contrast, we provide the first \emph{non-adaptive} algorithm to estimate the largest singular value up to $(1\pm\epsilon)$ relative error with $\mathsf{poly}(d\log (n)/\epsilon)$ samples.}

\subsection{Problem Setup, Related Work, and Our Results}
\label{section: Problem Setup and Our Results}
Our work has two parts: (1)
we resolve the query complexity of non-adaptive matrix rank testing, a well-studied problem in this model, and (2) we develop a new framework for testing 
numerical properties of real matrices, including the stable rank, the Schatten-$p$ norms and the SVD entropy.
Our results are summarized in Table~\ref{table: result}. We use $\widetilde\cO$ and $\widetilde\Omega$ notation to hide polylogarithmic factors in the arguments inside. For the rank testing results, the hidden polylogarithmic factors depend only on $d$ and $1/\epsilon$ and do not depend on $n$; for the other problems, they may depend on $n$.

\medskip
\noindent{\textbf{Rank Testing.}}
We first study the rank testing problem when we can only non-adaptively query entries. The goal is to design a sampling scheme on the entries of the unknown matrix $\A$ and an algorithm so that we can distinguish whether $\A$ is of rank $d$, or at least an $\epsilon$-fraction of entries of $\A$ should be modified in order to reduce the rank to $d$. This problem was first proposed by Krauthgamer and Sasson in~\cite{krauthgamer2003property} with a sample complexity upper bound of $\cO(d^2/\epsilon^2)$. In this work, 
we improve this to $\widetilde{\cO}(d^2/\epsilon)$ for every $d$ and $\epsilon$, and complement this with a matching lower bound, showing that any algorithm with constant success probability requires at least $\widetilde{\Omega}(d^2/\epsilon)$ samples:

\medskip
\noindent{\textbf{Theorems \ref{theorem: correctness of algorithm for general d}, \ref{theorem: lower bound of rank testing over finite fields under sampling model}, and \ref{theorem: lower bound of rank testing over reals}} (Informal)\textbf{.}}
\emph{For any matrix $\A\in\mathbb{F}^{n\times n}$ over any field, there is a randomized non-adaptive sampling algorithm which reads $\widetilde{\cO}(d^2/\epsilon)$ entries and runs in $\mathsf{poly}(d/\epsilon)$ time, and with high probability correctly solves the rank testing problem. Further, any non-adaptive algorithm with constant success probability requires $\widetilde{\Omega}(d^2/\epsilon)$ samples over $\mathbb{R}$ or any finite field.}

\medskip
Our non-adaptive sample complexity bound of $\widetilde\cO(d^2/\epsilon)$ matches what is known with adaptive queries~\cite{li2014improved}, and thus we show the best known upper bound might as well be non-adaptive.

\begin{table}[t]
\caption{Query complexity results in this paper for non-adaptive testing of the rank, stable rank, Schatten-$p$ norms, and SVD entropy. The testing of the stable rank, Schatten $p$-norm and SVD entropy are considered in the bounded entry model.}
\centering
\label{table: result}
{
\centering
\begin{tabular}{c||c|c|c|c}
\hline
Testing Problems & Rank & Stable Rank & Schatten-$p$ Norm & Entropy\\
 \hline\hline
\multirow{2}{1.5cm}{Sampling} & $\widetilde\cO(d^2/\epsilon)$ (all fields) & $\widetilde\cO(d^3/\epsilon^4)$ & & \multirow{4}{0.8cm}{$\Omega(n)$\textsuperscript{\textdagger}}\\
& $\widetilde\Omega(d^2/\epsilon)$ (finite fields and $\R$) & $\widetilde\Omega(d^2/\epsilon^2)$\textsuperscript{\textdagger} & $\widetilde\cO(1/\epsilon^{4p/(p-2)})$ ($p>2$) & \\
\cline{1-3}
\multirow{2}{1.4cm}{Sensing} & $\cO(d^2)$ (all fields) & $\widetilde{\cO}(d^{2.5}/\epsilon^2)$ & $\Omega(n)$ ($p\in[1,2)$) & \\
& $\widetilde\Omega(d^2)$ (finite fields) & $\widetilde\Omega(d^2/\epsilon^2)$\textsuperscript{\textdagger} &  & \\
\hline
\end{tabular}
}

\rule{0in}{1.2em}{\footnotesize \textsuperscript{\textdagger} The lower bound involves a reparameterization of the testing problem. Please see the respective theorem for details.}
\end{table}

\medskip
\noindent{\textbf{New Framework for Testing Matrix Properties.}} Testing rank is only one of many tasks in determining if a matrix
has low intrinsic dimensionality. In several applications, we require a less fragile measure of the collinearity of rows and columns, which is known as the stable rank~\cite{tropp2015introduction}. We introduce what we call the \emph{bounded entry model} as a new framework for studying such problems through the lens of property testing. In this model, we require all entries of a matrix to be bounded by $1$ in absolute value. Boundedness has many natural applications in recommendation systems, e.g., the user-item matrix of preferences for products by customers has bounded entries in the Netflix challenge~\cite{koren2009matrix}. Indeed, there are many user rating matrices, etc., which naturally have a small
number of discrete values, and therefore fit into a bounded entry model. 
The boundedness of entries also avoids trivialities in which one can modify
a matrix to have 
a property by setting a single entry to be 
arbitrarily large, which, e.g., could make 
the stable rank arbitrarily close to $1$. 

Our model is a generalization of previous work in which stable rank testing
was done in a model for which all rows had to have bounded norm
\cite{li2014improved}, and the algorithm is only allowed to change entire
rows at a time. As our non-adaptive rank testing algorithm will illustrate,
one can sometimes do better by only reading certain carefully selected
entries in rows and columns. Indeed, this is precisely the source of our
improvement over prior work. Thus, the restriction of having to read an entire
row is often unnatural, and further motivates our bounded entry model.  
%
We first informally state our main theorems on stable rank testing in this model. 

\medskip
\noindent{\textbf{Theorem \ref{theorem: stable rank testing upper bound}} (Informal)\textbf{.}}
\emph{There is a randomized algorithm for the stable rank testing problem to decide whether a matrix is of stable rank at most $d$ or is $\epsilon$-far from stable rank at most $d$, with failure probability at most $1/3$, and which reads $\widetilde\cO(d^3/\epsilon^4)$ entries.}

\medskip
Theorem \ref{theorem: stable rank testing upper bound} relies on a new $(1\pm\tau)$-approximate non-adaptive estimator of the largest singular value of a matrix, which may be of independent interest. 
%

\medskip
\noindent{\textbf{Theorem \ref{theorem: operator norm estimator by Ismail}} (Informal)\textbf{.}}
\emph{Suppose that $\A\in\mathbb{R}^{n\times n}$ has stable rank $\cO(d)$ and $\|\A\|_F^2=\Omega(\tau n^2)$. Then in the bounded entry model, there is a randomized non-adaptive sampling algorithm which reads $\widetilde\cO(d^2/\tau^4)$ entries and with probability at least $0.9$, outputs a $(1\pm\tau)$-approximation to the largest singular value of $\A$.}

\medskip
We remark that when the stable rank is constant and the singular value gap $\sigma_1(\A)/\sigma_2(\A) = (1/\tau)^{\gamma}$ for an arbitrary constant $\gamma >0$, the operator norm can be estimated up to a $(1\pm\tau)$-factor by querying $\cO(1/\tau^2)$ entries non-adaptively. We defer these and related results to Appendix \ref{section: Estimation with Eigengap}.

Other measures of intrinsic dimensionality include matrix norms, such as the Schatten-$p$ norm $\|\cdot\|_{\cS_p}$, which measures the central tendency of the singular values. Familiar special cases are $p=1$, $2$ and $\infty$, which have applications in differential privacy~\cite{hardt2012simple} and non-convex optimization~\cite{balcan2018matrix,deshpande2011algorithms} for $p=1$, and in numerical linear algebra~\cite{mahoney2011randomized} for $p\in\{2,\infty\}$. Matrix norms have been studied extensively in the streaming literature \cite{li2014sketching,li2016approximating,li2016tight,LW17}, though their study in property testing models is lacking. 

We study non-adaptive algorithms for these problems in the bounded entry model. We consider distinguishing whether $\|\A\|_{\cS_p}^p$ is at least $cn^p$ for $p>2$ (at least $cn^{1+1/p}$ for $p<2$), or at least an $\epsilon$-fraction of entries of $\A$ should be modified in order to have this property, where $c$ is a constant (depending only on $p$). We choose the threshold $n^p$ for $p>2$ and $n^{1+1/p}$ for $p<2$ because they are the largest possible value of $\|\A\|_{\cS_p}^p$ for $\A$ under the bounded entry model. When $p>2$, $\|\A\|_{\cS_p}$ is maximized when $\A$ is of rank $1$, and so this gives us an alternative ``measure'' of how close we are to a rank-$1$ matrix. Testing whether $\|\A\|_{\cS_p}$ is large in sublinear time allows us to quickly determine whether $\A$ can be well approximated by a low-rank matrix, which could save us from running more expensive low-rank approximation algorithms. In contrast, when $p<2$, $\|\A\|_{\cS_p}$ is maximized when $\A$ has a flat spectrum, and so is a measure of how well-conditioned $\A$ is. A fast tester could save us from running expensive pre-conditioning algorithms. We state our main theorems informally below.

\medskip
\noindent{\textbf{Theorem \ref{theorem: Schatten-p norm upper bound}} (Informal)\textbf{.}}
\emph{For constant $p>2$, there is a randomized algorithm for the Schatten-$p$ norm testing problem with failure probability at most $1/3$ which reads $\widetilde\cO(1/\epsilon^{4p/(p-2)})$ entries.}

\medskip
\noindent{\textbf{Results for Sensing Algorithms.}}
We also consider a more powerful query oracle known as the \emph{sensing model}, where query access comes in the form of $\langle\X_i,\A\rangle := \tr(\X_i^\top\A)$ for some sensing matrices $\X_i$ of our choice. 
These matrices are chosen non-adaptively.
We show differences in the complexity of the above problems in this and the above sampling model.  For the testing and the estimation problems above, we have the following results in the sensing model:

\medskip
\noindent{\textbf{Theorem \ref{theorem: lower bound over finite fields in general basis}} (Informal)\textbf{.}}
\emph{Over an arbitrary finite field, any non-adaptive algorithm with constant success probability for the rank testing problem in the sensing model requires $\widetilde\Omega(d^2)$ queries.}

\medskip
\noindent{\textbf{Theorems \ref{theorem: stable rank testing upper bound} and \ref{theorem: stable rank lower bound}} (Informal)\textbf{.}}
\emph{There is a randomized algorithm for the stable rank testing problem with failure probability at most $1/3$ in the sensing model with $\widetilde\cO(d^{2.5}/\epsilon^2)$ queries. Further, any algorithm with constant success probability requires $\widetilde{\Omega}(d^2/\epsilon^2)$ queries.}

\medskip
\noindent{\textbf{Theorem \ref{thm:schatten_p_lb}} (Informal)\textbf{.}}
\emph{For $p\in[1,2)$, any algorithm for the Schatten-$p$ norm testing problem with failure probability at most $1/3$ requires $\Omega(n)$ queries.}

\medskip
\noindent{\textbf{Theorem \ref{theorem: operator norm estimator under sensing model}} (Informal)\textbf{.}}
\emph{Suppose that $\A\in\mathbb{R}^{n\times n}$ has stable rank $\cO(d)$ and $\|\A\|_F^2=\Omega(\tau n^2)$. In the bounded entry model, there is a randomized sensing algorithm with sensing complexity $\widetilde\cO(d^2/\tau^2)$ which outputs a $(1\pm\tau)$-approximation to the largest singular value with probability at least $0.9$. This sensing complexity is optimal up to polylogarithmic factors.}

\medskip
We also provide an $\Omega(n)$ query lower bound for the SVD entropy testing in the sensing model. We defer the definition of the problem and related results to Section~\ref{section: Entropy}.

\subsection{Our Techniques}
\label{section: Our Techniques}
We now discuss the techniques in more detail, starting with the rank testing problem. 

Prior to the work of \cite{li2014improved}, the only known algorithm for $d = 1$ was to sample an $\cO(1/\epsilon)\times \cO(1/\epsilon)$ submatrix. In contrast, for rank $1$ an algorithm in \cite{li2014improved} samples $\cO(\log(1/\epsilon))$ blocks of varying shapes ``within a random $\cO(1/\epsilon)\times \cO(1/\epsilon)$ submatrix'' and argues that these shapes are sufficient to expose a rank-$2$ submatrix. For $d=1$ the goal is to augment a $1\times 1$ matrix to a full-rank $2\times 2$ matrix. One can show that with good probability, one of the shapes ``catches'' an entry that enlarges the $1\times 1$ matrix to a full-rank $2\times 2$ matrix. For instance, in Figure~\ref{figure: our sampling scheme}, $(r,c)$ is our $1\times 1$ matrix and the leftmost vertical block catches an ``augmentation element'' $(r',c')$ which makes $\left[\begin{smallmatrix} (r,c') & (r,c)\\ (r',c') & (r',c)\end{smallmatrix}\right]$ a full-rank $2\times 2$ matrix. Hereby, the ``augmentation element'' means the entry by adding which we augment a $r\times r$ matrix to a $(r+1)\times(r+1)$ matrix. In \cite{li2014improved}, an argument was claimed for $d = 1$, though we note an omission in their analysis. Namely, the ``augmentation entry'' $(r',c')$ can be the $1\times 1$ matrix we begin with (meaning that $\A_{r',c'}\neq 0$, which might not be true), and since one can show that both $(r,c)$ and $(r',c')$ fall inside the same sampling block with good probability, the $2\times 2$ matrix would be fully observed and the algorithm would thus be able to determine that it has rank $2$. However, it is possible that $\A_{r',c'} = 0$ and $(r',c')$ would not be a starting point (i.e., a $1\times 1$ rank-$1$ matrix), and in this case, $(r',c)$ may not be observed, as illustrated in Figure~\ref{figure: our sampling scheme}. In this case the algorithm will not be able to determine whether the augmented $2\times 2$ matrix is of full rank. For $d > 1$, nothing was known. One issue is that the probability of fully observing a $d\times d$ submatrix within these shapes is very small. To overcome this, we propose what we call \emph{rebasing} and \emph{transformation to a canonical structure}. These arguments allow us to tolerate unobserved entries and conveniently obtain an algorithm for every $d$, completing the analysis of \cite{li2014improved} for $d = 1$ in the process. 

\begin{figure}
\centering
\includegraphics[width=4.5cm]{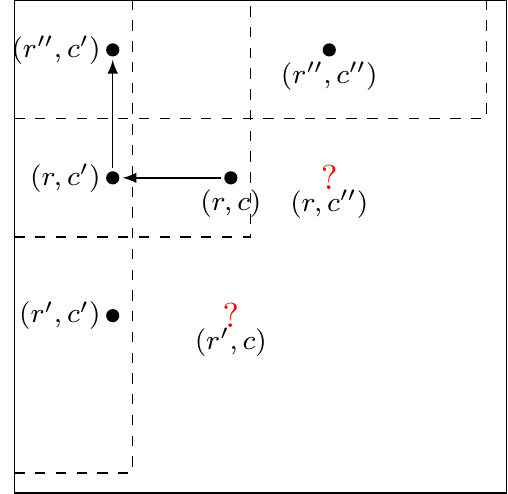}
\caption{\small{Our sampling scheme (the region enclosed by the dotted lines modulo permutation of rows and columns) and our path of augmenting a $1\times 1$ submatrix. The whole region is the $\cO(d/\epsilon)\times \cO(d/\epsilon)$ submatrix sampled from the $n\times n$ matrix.}}
\label{figure: our sampling scheme}
\end{figure}

\medskip
\noindent{\textbf{Rebasing Argument + Canonical Structure.}} The best previous result for the rank testing problem uniformly samples an $\cO(d/\epsilon)\times \cO(d/\epsilon)$ submatrix and argues that one can find a $(d+1)\times (d+1)$ full-rank submatrix within it when $\A$ is $\epsilon$-far from rank-$d$~\cite{krauthgamer2003property}. In contrast, our algorithm follows from subsampling an $\cO(\epsilon)$-fraction of entries in this $\cO(d/\epsilon)\times \cO(d/\epsilon)$ submatrix. Let $\cR_1\subseteq \cdots \subseteq \cR_m$ and $\cC_1\supseteq \cdots \supseteq \cC_m$ be the indices of subsampled rows and columns, respectively, with $m=\cO(\log(1/\epsilon))$. We choose these indices uniformly at random such that $|\cR_i|=\widetilde\cO(d2^i)$ and $|\cC_i|=\widetilde\cO(d/(2^i\epsilon))$, and sample the entries in all $m$ blocks determined by the $\{\cR_i,\cC_i\}$ (see Figure~\ref{figure: our sampling scheme}, where our sampled regions are enclosed by the dotted lines). Since there are $\widetilde\cO(\log(1/\epsilon))$ blocks and in each block we sample $\widetilde\cO(d^2/\epsilon)$ entries, the sample complexity of our algorithm is as small as $\widetilde\cO(d^2/\epsilon)$.

The correctness of our algorithm for $d=1$ follows from what we call a rebasing argument. Starting from an empty matrix, our goal is to maintain and augment the matrix to a $2\times 2$ full-rank matrix when $\A$ is $\epsilon$-far from rank-$d$. By a level-set argument, we show an oracle lemma which states that \emph{we can augment any $r\times r$ full-rank matrix to an $(r+1)\times (r+1)$ full-rank matrix by an augmentation entry in the sampled region}, as long as $r\le d$ and $\A$ is $\epsilon$-far from rank-$d$. Therefore, as a first step we successfully find a $1\times 1$ full-rank matrix, say with index $(r,c)$, in the sampled region. We then argue that we can either (a) find a $2\times 2$ fully-observed full-rank submatrix or a $2\times 2$ submatrix which is not fully observed but we know must be of full rank, or (b) move our maintained $1\times 1$ full-rank submatrix upwards or leftwards to a new $1\times 1$ \emph{full-rank} submatrix and repeat checking whether case (a) happens or not; if not, we implement case (b) again and repeat the procedure. To see case (a), by the oracle lemma, if the augmented entry is $(r'',c')$ (see Figure \ref{figure: our sampling scheme}), then we fully observe the submatrix determined by $(r'',c')$ and $(r,c)$ and so the algorithm is correct in this case. On the other hand, if the augmented entry is $(r',c')$, then we fail to see the entry at $(r',c)$. In this case, when $\A_{r,c'}= 0$, then we must have $\A_{r',c'}\neq 0$; otherwise, $(r',c')$ is not an augment of $(r,c)$, which leads to a contradiction with the oracle lemma. Thus we find a $2\times 2$ matrix with structure
\begin{equation}
\label{equ: structure for d=1}
\begin{bmatrix}
\A_{r,c'} & \A_{r,c}\\
\A_{r',c'} & \A_{r',c}
\end{bmatrix}
=
\begin{bmatrix}
0 & \neq 0\\
\neq 0 & ?
\end{bmatrix},
\end{equation}
which must be of rank $2$ despite an unobserved entry, and the algorithm therefore is correct in this case. The remaining case of the analysis above is when $\A_{r,c'}\neq 0$. Instead of trying to augment $\A_{r,c}$, we augment $\A_{r,c'}$ in the next step. Note that the index $(r,c')$ is to the left of $(r,c)$. This leads to case (b). In the worst case, we move the $1\times 1$ non-zero matrix to the uppermost left corner,\footnote{The upper-left corner refers to the intersection of all sampled blocks, namely, $\cR_1\times \cC_m$; it does not mean the top-left entry.} e.g., $(r'',c')$. Fortunately, since $(r'',c')$ is in the uppermost left corner, we can, as guaranteed by the oracle lemma, augment it to a $2\times 2$ \emph{fully-observed} full-rank matrix. Again the algorithm outputs correctly in this case.

The analysis becomes more challenging for general $d$, since the number of unobserved/unimportant entries (i.e., those entries marked as ``$?$'') may propagate as we augment an $r\times r$ submatrix ($r=1,2,...,d$) in each round. To resolve the issue, we maintain a structure (modulo elementary transformations) similar to structure \eqref{equ: structure for d=1} for the $r\times r$ submatrix, that is,
\begin{equation}
\label{equ: structure for general d}
\begin{bmatrix}
0 & 0 & \cdots & 0 & \cdots & 0 & \ne 0\\
0 & 0 & \cdots & 0 & \cdots & \ne 0 & ?\\
\vdots & \vdots &  & \vdots &  & \vdots & \vdots\\
0 & \ne 0 & \cdots & ? & \cdots & ? & ?\\
\ne 0 & ? & \cdots & ? & \cdots & ? & ?\\
\end{bmatrix}.
\end{equation}
Since the proposed structure has non-zero determinant, the submatrix is always of full rank. Similar to the case for $d=1$, we show that we can either (a) augment the $r\times r$ submatrix to an $(r+1)\times (r+1)$ submatrix with the same structure \eqref{equ: structure for general d} (modulo elementary transformations); or (b) find another $r\times r$ submatrix of structure \eqref{equ: structure for general d} that is closer to the upper-left corner than the original $r\times r$ matrix. Hence the algorithm is correct for general $d$. More details are provided in the proof of Theorem \ref{theorem: correctness of algorithm for general d}.

\medskip
\noindent{\textbf{Pivot-Node Assignment.}} Our rank testing lower bound under the sampling model over a finite field $\mathbb{F}$ follows from distinguishing two hard instances $\U\V^\top$ vs. $\W$, where $\U,\V\in\bbF^{t\times d}$ and $\W\in\bbF^{t\times t}$ have i.i.d.\ entries that are uniform over $\mathbb{F}$. For an observed subset $\cS$ of entries with $|\cS|=\cO(d^2)$, we bound the total variation distance between the distributions of the observed entries in the two cases by a small constant. In particular, we show that the probability $\Pr[(\U\V^\top)|_{\cS}=\x]$ is large for any observation $\x\in\mathbb{F}^{|\cS|}$, by a \emph{pivot-node assignment} argument, as follows. We reformulate our problem as a bipartite graph assignment problem $G=(L\cup R,E)$, where $L$ corresponds to the rows of $\U$, $R$ the rows of $\V$ and each edge of $E$ one entry in $\cS$. We want to assign each node a vector/affine subspace, meaning that the corresponding row in $\U$ or $\V$ will be that vector or in that affine subspace, such that they agree with our observation, i.e., $(\U\V^\top)|_{\cS}=\x$. Since $\U,\V$ are random matrices, we assign random vectors to nodes adaptively, one at a time, and try to maintain consistency with the fact that $(\U\V^\top)|_{\cS} = \x$. Note that the order of the assignment is important, as a bad choice for an earlier node may invalidate any assignment to a later node. To overcome this issue, we choose nodes of large degrees as \emph{pivot nodes} and assign each non-pivot node adaptively in a careful manner so as to guarantee that the incident pivot nodes will always have valid assignments (which in fact form an affine subspace). In the end we assign the pivot node vectors from their respective affine subspaces. We employ a counting argument for each step in this assignment procedure to lower bound the number of valid assignments, and thus lower bound the probability $\Pr[(\U\V^\top)|_{\cS}=\x]$.

The above analysis gives us an $\Omega(d^2)$ lower bound for constant $\epsilon$ since $\W$ is constant-far from being of rank $d$. The desired $\Omega(d^2/\epsilon)$ lower bound follows from planting $\U\V^\top$ vs. $\W$ with $t=\sqrt{\epsilon}n$ into an $n\times n$ matrix at uniformly random positions, and padding zeros everywhere else.

\medskip
\noindent{\textbf{New Analytical Framework for Stable Rank, Schatten-$p$ Norm, and Entropy Testing.}} We propose a new analytical framework by reducing the testing problem to a sequence of estimation problems \emph{without involving $\mathsf{poly}(n)$ in the sample complexity}. There is a two-stage estimation in our framework: (1) a constant-approximation to some statistic $X$ of interest (e.g., stable rank) which enables us to distinguish $X\le d$ vs. $X\ge 10d$ for the threshold parameter $d$ of interest. If $X\ge 10d$, we can safely output ``$\A$ is far from $X\le d$''; otherwise, the statistic is at most $10d$, and (2) we show that $X$ has a $(1\pm\epsilon)$-factor difference between ``$X\leq d$'' and ``far from $X\leq d$'', and so we implement a more accurate $(1\pm\epsilon)$-approximation to distinguish the two cases. The sample complexity does not depend on $n$ polynomially because (1) the first estimator is ``rough'' and gives only a constant-factor approximation and (2) the second estimator operates under the condition that $X\leq 10d$ and thus $\A$ has a low intrinsic dimension. We apply the proposed framework to the testing problems of the stable rank and the Schatten-$p$ norm by plugging in our estimators in Theorem~\ref{theorem: operator norm estimator by Ismail} and Theorem~\ref{theorem: operator norm estimator under sensing model}. This analytical framework may be of independent interest to other property testing problems more broadly.

In a number of these problems, a key difficulty is arguing about spectral
properties 
of a matrix $\A$
when it is $\epsilon$-far from having a property, such as having
stable rank at most $d$. 
Because of the fact that
the entries must always be bounded by $1$ in absolute value, it becomes non-trivial
to argue, for example, that if $\A$ is $\epsilon$-far from having stable
rank at most $d$, that its stable rank is even slightly larger than $d$. A natural
approach is to argue that you could change an $\epsilon$-fraction of rows 
of $\A$ to agree with a multiple of the
top left or right singular vector of $\A$, and since we are
still guaranteed to have stable rank at least $d$ after changing such entries, 
it means that the operator
norm of $\A$ must have been small to begin with (which says something about the
original stable rank of $\A$, since its Frobenius norm can also be estimated).
The trouble is, if the top singular vector has some entries that are very large,
and others that are small, one cannot scale the singular vector by a large
amount since then we would violate the boundedness criterion of our model. We
get around this by arguing there either needs to exist a left or a right singular
vector of large $\ell_1$-norm (in some cases such vectors may only be right
singular vectors, and in other cases only left singular vectors). The $\ell_1$-norm
is a natural norm to study in this context, since it is dual to the $\ell_{\infty}$-norm, which we use to capture the boundedness property of the matrix. 

Our lower bounds for the above problems follow from the corresponding sketching lower bounds for the estimation problem in \cite{li2016tight,LNW:schatten_unpublished}, together with rigidity-type results \cite{valiant1977graph} for the hard instances regarding the respective statistic of interest.

\section{Preliminaries}
We shall use bold capital letters $\A$, $\B$, ... to indicate matrices, bold lower-case letters $\u$, $\v$, ... to indicate vectors, and lower-case letters $a$, $b$, ... to indicate scalars. We adopt the convention of abbreviating the set $\{1,2,...,n\}$ as $[n]$. We write $f\gtrsim g$ (resp. $f\lesssim g$) if there exists a constant $C>0$ such that $f\ge Cg$ (resp. $f\le Cg$).

For matrix functions, denote by $\rank(\A)$ and $\srank(\A)$ the rank and the stable rank of $\A\neq\0$, respectively. It always holds that $1\le \srank(\A)\le \rank(\A)$. For matrix norms, let $\|\A\|_{\cS_p}$ denote the Schatten-$p$ norm of $\A$, defined as $\|\A\|_{\cS_p}=\left(\sum_{i=1}^n \sigma_i^p(\A)\right)^{1/p}$. The Frobenius norm $\|\A\|_F$ is a special case of the Schatten-$p$ norm when $p=2$, the operator norm or the spectral norm (the largest singular value) of $\|\A\|$ equals to the limit as $p\rightarrow +\infty$. When $0 <p<1$, $\|\A\|_{\cS_p}$ is not a norm but is still a well-defined quantity, and it tends to $\rank(\A)$ as $p\to 0^+$.  Let $\|\A\|_0$ denote the number of non-zero entries in $\A$, and $\|\A\|_\infty$ denote the entrywise $\ell_\infty$ norm of $\A$, i.e., $\|\A\|_\infty = \max_{i,j} |\A_{i,j}|$.
The \emph{rigidity} of a matrix $\A$ over a field $\F$, denoted by $\cR_{\A}^{\F}(r)$, is the least number of entries of $\A$ that must be changed in order to reduce the rank of $\A$ to a value at most $r$:
\begin{equation*}
\cR_{\A}^{\F}(r):=\min\{\|\C\|_0:\rank_{\F}(\A+\C)\le r\}.
\end{equation*}
Sometimes we may omit the subscript $\A$ in $\cR_{\A}^{\F}(r)$ when the matrix of interest is clear from the context.

We define the entropy of an unnormalized distribution $(p_1,\dots,p_n)$ ($ 0 <p_1+\cdots+p_n\leq 1$ with $p_i \geq 0$ for all $i$) to be
\[
H(p_1,\dots,p_n) = -\sum_i p_i\log p_i.
\]
Let $\A\in \R^{n\times n}$, we define its entropy as
\begin{equation}\label{eqn:entropy_defn}
H(\A) = H\left(\frac{\sigma_1^2(\A)}{n^2},\dots,\frac{\sigma_n^2(\A)}{n^2}\right) = \frac{-\sum_i \frac{\sigma_i^2(\A)}{n^2}\log\frac{\sigma_i^2(\A)}{n^2}}{\sum_i \frac{\sigma_i^2(\A)}{n^2}}.
\end{equation}
with the convention that $0\cdot\infty = 0$. For matrices $\A$ satisfying $\|\A\|_\infty\leq 1$, it holds that $\sigma_i(\A)\leq n$ for all $i$ and the entropy above coincides with the usual Shannon entropy. Note that scaling only changes the entropy additively; that is, $H(\beta\A) = H(\A) - \log\beta^2$.

Let $\cG(m,n)$ denote the distribution of $m\times n$ i.i.d. standard Gaussian matrix over $\R$ and $\cU_{\mathbb{F}}(m,n)$ (or $\cU(\cS)$) represent $m\times n$ i.i.d. uniform matrix over a finite field $\mathbb{F}$ (or a finite set $\cS$). We use $d_{TV}(\cL_1,\cL_2)$ to denote the total variation distance between two distributions $\cL_1$ and $\cL_2$.

We shall also frequently use $c$, $c'$, $c_0$, $C$, $C'$, $C_0$, etc., to represent constants, which are understood to be absolute constants unless the dependency is otherwise specified.

\section{Non-Adaptive Rank Testing}
In this section, we study the following problem of testing low-rank matrices.
\begin{problem}[Rank Testing with Parameter $(n,d,\epsilon)$ in the Sampling Model]
\label{problem: problem of property testing of rank}
Given a field $\mathbb{F}$ and a matrix $\A\in\mathbb{F}^{n\times n}$ which has one of promised properties:
\begin{itemize}
\item[$\mathsf{H0.}$]
$\A$ has rank at most $d$;
\item[$\mathsf{H1.}$]
$\A$ is $\epsilon$-far from having rank at most $d$, meaning that $\A$ requires changing at least an $\epsilon$-fraction of its entries to have rank at most $d$.
\end{itemize}
The problem is to design a property testing algorithm that outputs $\mathsf{H0}$ with probability $1$ if $\A\in \mathsf{H0}$, and output $\mathsf{H1}$ with probability at least 0.99 if $\A\in\mathsf{H1}$, with the least number of queried entries.
\end{problem}

\subsection{Positive Results}
\label{section: positive results}
Below we provide a non-adaptive algorithm for the rank testing problem under the sampling model with $\widetilde\cO(\frac{d^2}{\epsilon})$ queries when $\epsilon\le\frac{1}{e}$. Let $\eta\in(0,\frac12)$ be such that $\eta\log(\frac{1}{\eta})=\epsilon$ and let $m = \lceil\log (\frac{1}{\eta})\rceil$.
\begin{algorithm}[ht]
\caption{Robust non-adaptive testing of matrix rank}
\label{algorithm: non-adaptive robust testing of rank}
\begin{algorithmic}[1]
\State Choose $\cR_1,\dots,\cR_m$ and $\cC_1,\dots,\cC_m$ from $[n]$ uniformly at random such that
\begin{equation*}
\cR_1\subseteq \cdots \subseteq \cR_m,\qquad \cC_1\supseteq \cdots \supseteq \cC_m,
\end{equation*}
and
\begin{equation*}
|\cR_i|=c[\log d+\log\log(1/\eta)]d\log(1/\eta)2^i,\qquad |\cC_i|=c[\log d+\log\log(1/\eta)]d\log (1/\eta)/(2^i\eta),
\end{equation*}
where $c>0$ is an absolute constant. To impose containment for $\cR_i$'s, $\cR_i$ can be formed by appending to $\cR_{i-1}$ uniformly random $|\cR_i|-|\cR_{i-1}|$ rows. The containment for $\cC_i$'s can be imposed similarly.

\State  Query the entries in $\cQ=\bigcup_{i=1}^m (\cR_i\times \cC_i)$. Note that the entries in $(\cR_m\times \cC_1)\setminus \cQ$ are unobserved. The algorithm solves the following minimization problem by filling in those entries of $\A_{(\cR_m\times \cC_1)\setminus\cQ}$ given input $\A_{\cQ}$.
\begin{equation}
\label{equ: upper bound}
r:=\min_{\A_{(\cR_m\times \cC_1)\setminus \cQ}} \rank(\A_{\cR_m,\cC_1}).
\end{equation}
\State Output ``$\A$ is $\epsilon$-far from having rank $d$'' if $r>d$; otherwise, output ``$\A$ is of rank at most $d$''.
\end{algorithmic}
\end{algorithm}

We note that the number of entries that Algorithm \ref{algorithm: non-adaptive robust testing of rank} queries is
$$\cO(k\cdot [\log d+\log\log(1/\eta)]^2d^2\log^2(1/\eta)/\eta)=\widetilde\cO(d^2/\epsilon).$$
We now prove the correctness of Algorithm \ref{algorithm: non-adaptive robust testing of rank}. Before proceeding, we reproduce the definitions \emph{augment set} and \emph{augment pattern $i$} and relevant lemmata from~\cite{li2014improved} as follows.

\begin{definition}[Augment]
For $n\times n$ fixed matrix $\A$, we call $(r,c)$ an augment for $\cR\times \cC\subseteq [n]\times [n]$ if $r\in[n]\backslash \cR$, $c\in[n]\backslash \cC$ and $\rank(\A_{\cR\cup\{r\},\cC\cup\{c\}})>\rank(\A_{\cR,\cC})$. We denote by $\mathsf{aug}(\cR,\cC)$ the set of all the augments for $\cR\times \cC$, namely,
\begin{equation*}
\mathsf{aug}(\cR,\cC)=\{(r,c)\in([n]\backslash \cR)\times ([n]\backslash \cC)\mid \rank(\A_{\cR\cup\{r\},\cC\cup\{c\}})>\rank(\A_{\cR,\cC})\}.
\end{equation*}
\end{definition}

\begin{definition}[Augment Pattern]
For fixed $\cR$, $\cC$ and $\A$, define $\mathsf{count}_r$ (where $r\in[n]\backslash \cR$) to be the number of $c$'s such that $(r,c)\in \mathsf{aug}(\cR,\cC)$. Let $\{\mathsf{count}^\ast_i\}_{i\in [n-|\cR|]}$ the non-increasing reordering of the sequence $\{\mathsf{count}_i\}_{i\in [n]\backslash \cR}$, and $\mathsf{count}_{i}^\ast=0$ for $i>n-|\cR|$. We say that $(\cR,\cC)$ has augment pattern $i$ on $\A$ if and only if $\mathsf{count}_{n/2^i}^\ast\ge 2^{i-1}\eta n$.
\end{definition}

\begin{lemma}
\label{lemma: augment}
Let $\A_{\cR,\cC}$ be a $t\times t$ full-rank matrix. If $\A$ is $\epsilon$-far from having rank $d$ and $\rank(\A_{\cR,\cC})=t\le d$, then
\begin{equation*}
|\mathsf{aug}(\cR,\cC)|=\sum_{r\in[n]\backslash \cR} \mathsf{count}_r=\sum_{i=1}^{n-|\cR|}\mathsf{count}_i^*\ge \frac{\epsilon n^2}{3}.
\end{equation*}
\end{lemma}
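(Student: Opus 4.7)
The plan is to prove the bound $|\mathsf{aug}(\cR,\cC)|\ge \epsilon n^2/3$ by contrapositive: I will exhibit an explicit matrix $\A'$ with $\rank(\A')\le t\le d$ whose Hamming distance from $\A$ equals exactly $|\mathsf{aug}(\cR,\cC)|$, and then invoke the $\epsilon$-farness hypothesis. The two equalities $\sum_{r\notin\cR}\mathsf{count}_r=\sum_i\mathsf{count}^\ast_i=|\mathsf{aug}(\cR,\cC)|$ are free from the definitions (a reordering, and the fact that each augment $(r,c)$ contributes $1$ to $\mathsf{count}_r$), so the only content lies in the cardinality lower bound.

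The construction of $\A'$ uses the invertibility of $\A_{\cR,\cC}$. For each $r\notin\cR$, I would set $\boldsymbol{\alpha}_r:=\A_{r,\cC}\A_{\cR,\cC}^{-1}\in\mathbb{F}^{1\times t}$, so that by construction $\A_{r,\cC}=\boldsymbol{\alpha}_r\A_{\cR,\cC}$. Then define $\A'_{r,c}:=\boldsymbol{\alpha}_r\A_{\cR,c}$ for $r\notin\cR$ and every $c\in[n]$, and $\A'_{r,c}:=\A_{r,c}$ for $r\in\cR$. Every row of $\A'$ is by design a linear combination of the $t$ rows of $\A_{\cR,\cdot}$, so $\rank(\A')\le t\le d$. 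Observe also that for $c\in\cC$ we automatically have $\A'_{r,c}=\boldsymbol{\alpha}_r\A_{\cR,c}=\A_{r,c}$, so no entry in columns $\cC$ (or rows $\cR$) is actually disturbed.

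The next step is to identify the change-set with the augment set exactly, via a Schur-complement identity. Applying it to the $(t+1)\times(t+1)$ block
\begin{equation*}
\A_{\cR\cup\{r\},\,\cC\cup\{c\}}=\begin{pmatrix}\A_{\cR,\cC} & \A_{\cR,c}\\ \A_{r,\cC} & \A_{r,c}\end{pmatrix}
\end{equation*}
yields $\det(\A_{\cR\cup\{r\},\cC\cup\{c\}})=\det(\A_{\cR,\cC})\,(\A_{r,c}-\boldsymbol{\alpha}_r\A_{\cR,c})$. Since $\det(\A_{\cR,\cC})\ne 0$, we see that $(r,c)\in\mathsf{aug}(\cR,\cC)$ if and only if $\A_{r,c}\ne\boldsymbol{\alpha}_r\A_{\cR,c}$, i.e., if and only if $\A'_{r,c}\ne\A_{r,c}$. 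Combining this with the observation at the end of the previous paragraph, I get $\|\A'-\A\|_0=|\mathsf{aug}(\cR,\cC)|$ exactly.

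Finally, $\rank(\A')\le d$ together with the $\epsilon$-farness of $\A$ from rank $d$ forces $\|\A'-\A\|_0\ge\epsilon n^2$, so $|\mathsf{aug}(\cR,\cC)|\ge\epsilon n^2\ge\epsilon n^2/3$ (in fact the argument yields the slightly cleaner constant $1$ instead of $1/3$; the looser form is presumably stated because it is all that is needed downstream). The argument is a one-step rank-preserving row completion plus a standard determinant identity; there is no real obstacle. The only place that warrants care is checking that the definition of $\boldsymbol{\alpha}_r$ really makes $\A'$ agree with $\A$ on the $\cC$-columns, so that the change-set is confined to $([n]\setminus\cR)\times([n]\setminus\cC)$ and counted exactly by $\mathsf{aug}(\cR,\cC)$.
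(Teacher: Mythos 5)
Your proof is correct, and it is cleaner and sharper than the one in the paper. You construct a single rank-$\le t$ completion $\A'$ whose set of changed entries is \emph{exactly} $\mathsf{aug}(\cR,\cC)$, using the fact that $\A_{\cR,\cC}^{-1}$ determines the unique coefficient vector $\boldsymbol{\alpha}_r$ for each off-$\cR$ row, and the Schur-complement identity to see that the entries needing alteration are precisely the augments. This directly yields $|\mathsf{aug}(\cR,\cC)|\ge\epsilon n^2$, improving the stated constant from $1/3$ to $1$.

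The paper instead does a case analysis over a pair $(r,c)\in\mathsf{aug}(\cR,\cC)$ depending on whether row $r$ or column $c$ contains a non-augment ``anchor'' $(r,c')\in\cT$ or $(r',c)\in\cT$: if an anchor exists, it uses the anchor to read off the linear-combination coefficients and change just the one entry (this is exactly your $\boldsymbol{\alpha}_r$, which you observe does not actually require an anchor since it is determined by $\A_{\cR,\cC}^{-1}$); if no anchor exists (the entire punctured row and column are augments), it overwrites the whole row and column, paying up to $2n$ changes for that pair. The $3|\cS|$ total and the $1/3$ constant come entirely from this second case. Your construction shows the case split is unnecessary: $\boldsymbol{\alpha}_r$ exists regardless, so a single changed entry per augment always suffices. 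Both approaches hinge on the same underlying fact --- that the coefficients making a row dependent on $\A_{\cR,\cdot}$ are pinned down by invertibility of $\A_{\cR,\cC}$ --- but your formulation applies it uniformly, whereas the paper's Case (ii) re-derives a coarser bound by brute force.
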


\begin{proof}
Let $\cS$ be the set of entries $(r,c)$ in $\cR^c\times \cC^c$ such that $\rank(\A_{\cR\cup \{r\},\cC\cup \{c\}})>\rank(\A_{\cR,\cC})$, i.e., $\cS=\mathsf{aug}(\cR,\cC)$. We will show that $|\cS|\ge \epsilon n^2/3$.

Let $\cT$ be the complement of $\cS$ inside the set $\cR^c\times \cC^c$. For any $(r,c)\in\cS$, we discuss the following two cases.

\medskip
\noindent{\textbf{Case (i). There is $c'\in \cC^c$ such that $(r,c')\in \cT$ or $r'\in \cR^c$ such that $(r',c)\in \cT$}}

In the former case, the row vector $\A_{r,\cC\cup\{c'\}}$ is a linear combination of the rows of $\A_{\cR,\cC\cup\{c'\}}$. So we can change the value of $\A_{r,c}$ so that $\A_{r,\cC\cup\{c\}}$ is a linear combination of $\A_{\cR,\cC\cup\{c\}}$ with the same representation coefficients as that of $\A_{\cR,\cC\cup\{c'\}}$. Therefore, augmenting $\A_{R,C}$ by the pair $(r,c)$ would not increase $\rank(\A_{\cR,\cC})$. Similarly, if there is $r'\in \cR^c$ such that $(r',c)\in T$, we can change the value of $\A_{r,c}$ so that augmenting $\A_{\cR,\cC}$ by the pair $(r,c)$ would not increase $\rank(\A_{\cR,\cC})$. We change at most $|\cS|$ entries for both cases combined.

\medskip
\noindent{\textbf{Case (ii). $(r,c')\in \cS$ for all $c'\in \cC^c$ and $(r',c)\in \cS$ for all $r'\in \cR^c$}}

In this case, we can change the entire $r$-th row and $c$-th column of $\A$ so that $\rank(\A_{\cR,\cC})$ does not increase by augmenting it with any pair in $(\cR^c\times \{c\})\cup(\{r\}\times \cC^c)$. Recall that $n\ge 2d$ and $t\le d$. It follows that $n\le 2(n-t)$. Therefore, this specific pair $(r,c)$ would lead to the change of at most $2n\le 2(n-t)+2(n-t)\le 2(|\cR^c|+|\cC^c|)$ entries. For all such $(r,c)$'s, we change at most $2|\cS|$ entries in this case.

In summary, we can change at most $3|\cS|$ entries of $\A$ so that $\rank(\A_{\cR,\cC})$ cannot increase by augmenting $\A_{\cR,\cC}$ with any pair $(r,c)\in \cR^c\times \cC^c$. Since $\A$ is $\epsilon$-far from being rank $d$, we must have $3|\cS|\ge \epsilon n^2$. Namely, $|\mathsf{aug}(\cR,\cC)|=|\cS|\ge \epsilon n^2/3$.
\end{proof}

\begin{lemma}
\label{lemma: existence of augment pattern}
Let $\A_{\cR,\cC}$ be a $t\times t$ full-rank matrix. If $\A$ is $\epsilon$-far from being rank $d$ and $\rank(\A_{\cR,\cC})=t\le d$, then there exists $i$ such that $(\cR,\cC)$ has augment pattern $i$.
\end{lemma}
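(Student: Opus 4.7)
The plan is to prove Lemma~\ref{lemma: existence of augment pattern} by contradiction. Suppose $(\cR,\cC)$ has no augment pattern $i$ for any $i\ge 1$; by definition this means
\[
\mathsf{count}^\ast_{\lceil n/2^i\rceil} \;<\; 2^{i-1}\eta n \qquad \text{for every } i\ge 1.
\]
I will upper bound the total sum $\sum_j \mathsf{count}^\ast_j$ and contradict the lower bound $\epsilon n^2/3$ provided by Lemma~\ref{lemma: augment}.

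The key tool is a dyadic decomposition of the index set $[n-|\cR|]$. Partition the indices into bands $B_i=(n/2^{i+1},\,n/2^i]$ for $i=0,1,\ldots,m-1$, where $m=\lceil\log(1/\eta)\rceil$, together with a tail $T=\{j:j\le n/2^m\}$. Since $\{\mathsf{count}^\ast_j\}$ is non-increasing, every $j\in B_i$ satisfies
\[
\mathsf{count}^\ast_j \;\le\; \mathsf{count}^\ast_{\lceil n/2^{i+1}\rceil} \;<\; 2^i\eta n,
\]
where the last inequality is the ``no augment pattern $i+1$'' hypothesis applied at the left endpoint of $B_i$. Because $|B_i|\le n/2^{i+1}$, band $B_i$ contributes at most $(n/2^{i+1})\cdot (2^i\eta n)=\eta n^2/2$ to the sum, and summing over the $m$ bands gives $m\eta n^2/2=O(\eta\log(1/\eta)\cdot n^2)=O(\epsilon n^2)$ using the identity $\eta\log(1/\eta)=\epsilon$. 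For the tail, $|T|\le n/2^m\le \eta n$, and using the trivial bound $\mathsf{count}^\ast_j\le n$ its contribution is at most $\eta n^2$, a lower-order term because $\eta\le\epsilon/\log(1/\epsilon)\ll\epsilon$.

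Combining the band and tail contributions yields $\sum_j\mathsf{count}^\ast_j\lesssim \epsilon n^2$. With the constants in the definition of augment pattern chosen appropriately, the implicit constant in the dyadic sum drops strictly below $1/3$, contradicting Lemma~\ref{lemma: augment}. The main obstacle in carrying out this plan is the bookkeeping of constants: the telescoping structure of the dyadic sum together with the specific choice $\eta\log(1/\eta)=\epsilon$ (which forces $\eta$ to be a logarithmic factor smaller than $\epsilon$) are both needed to simultaneously control the $m$ band contributions and render the tail negligible, so that the resulting upper bound on $\sum_j \mathsf{count}^\ast_j$ falls below the lower bound $\epsilon n^2/3$ from the preceding lemma.
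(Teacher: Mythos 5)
Your proof is essentially the paper's own argument: the same dyadic decomposition of $[n]$ into bands $(n/2^{i+1}, n/2^i]$ plus a tail of length $\approx\eta n$, the same use of monotonicity of $\mathsf{count}^\ast$ to bound each band's contribution by $\eta n^2/2$ via the ``no augment pattern'' hypothesis at the band's left endpoint, the same trivial bound $\mathsf{count}^\ast_j\le n$ on the tail, and the same contradiction with Lemma~\ref{lemma: augment}; the only superficial difference is your band indexing starting at $0$. Your caveat about needing to tune the constants is actually warranted: the paper's final step asserts $\tfrac{\eta n^2}{2}(\log(1/\eta)+2)\le\tfrac{\epsilon n^2}{3}$, but with $\eta\log(1/\eta)=\epsilon$ the left side is already $\ge\tfrac{\epsilon n^2}{2}$, so the threshold constant in the augment-pattern definition (or the $1/3$ in Lemma~\ref{lemma: augment}) must be adjusted for the inequality to close --- a bookkeeping issue, not a structural one, and your version is commendably upfront about it.
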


\begin{proof}
Suppose that $(\cR,\cC)$ does not have any augment pattern in $[\log(1/\eta)]$. That is
\begin{equation*}
\mathsf{count}_{n/2^i}^\ast<2^{i-1}\eta n,\quad i=1,2,...,\log(1/\eta).
\end{equation*}
Therefore,
\begin{equation*}
\begin{split}
\sum_i \mathsf{count}_i^*&=\sum_{i=\frac{n}{2}+1}^n \mathsf{count}_i^*+\sum_{i=\frac{n}{4}+1}^{\frac{n}{2}} \mathsf{count}_i^*+...+\sum_{i=\frac{n}{2^{\log(1/\eta)}}+1}^{\frac{n}{2^{\log(1/\eta)-1}}}\mathsf{count}_i^*+\sum_{i=1}^{\eta n}\mathsf{count}_i^*\\
&\le \frac{n}{2}\mathsf{count}_{\frac{n}{2}+1}^\ast+\frac{n}{4}\mathsf{count}_{\frac{n}{4}+1}^\ast+\cdots+\frac{n}{2^{\log(1/\eta)}}\mathsf{count}_{\frac{n}{2^{\log(1/\eta)}}+1}^\ast+\eta n\mathsf{count}_{1}^\ast\\
&< \frac{n}{2}\eta n+\frac{n}{4}2\eta n+...+\eta n2^{\log(1/\eta)-1}\eta n+\eta n^2\\
&=\frac{\eta n^2}{2}(\log(1/\eta)+2)\\
&\le \frac{\epsilon n^2}{3},
\end{split}
\end{equation*}
which leads to a contradiction with Lemma \ref{lemma: augment}.
\end{proof}

\begin{lemma}
\label{lemma: uniform sampling captures augment pattern}
For fixed $(\cR,\cC)$, suppose that $(\cR,\cC)$ has augment pattern $i$ on $\A$. Let $\cR',\cC'\subseteq [n]$ be uniformly random such that $|\cR'|=c2^i$, $|\cC'|=c/(2^i\eta)$. Then the probability that $(\cR',\cC')$ contains at least one augment of $(\cR,\cC)$ on $\A$ is at least $1-2e^{-c/2}$.
\end{lemma}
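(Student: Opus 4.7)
The plan is to unpack the definition of augment pattern $i$ into a structural statement about the matrix and then perform two sequential sampling arguments, one on $\cR'$ and one on $\cC'$.

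First, I would let $R^{\ast}\subseteq [n]\setminus \cR$ denote the set of rows $r$ with $\mathsf{count}_r \ge 2^{i-1}\eta n$. Since by assumption $(\cR,\cC)$ has augment pattern $i$, we have $\mathsf{count}_{n/2^i}^{\ast}\ge 2^{i-1}\eta n$, so the reordered sequence guarantees $|R^{\ast}|\ge n/2^i$. For each $r\in R^{\ast}$, let $C_r=\{c:(r,c)\in \mathsf{aug}(\cR,\cC)\}$, which has size $|C_r|\ge 2^{i-1}\eta n$ by the definition of $R^{\ast}$.

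Next I would analyze the event that $\cR'$ hits $R^{\ast}$. Since $\cR'$ is a uniform size-$c2^i$ subset of $[n]$, the probability of missing $R^{\ast}$ entirely is at most
\[
\frac{\binom{n-|R^{\ast}|}{c2^i}}{\binom{n}{c2^i}} \le \left(1-\frac{1}{2^i}\right)^{c2^i}\le e^{-c}.
\]
Condition on some fixed $r_0\in \cR'\cap R^{\ast}$. A symmetric calculation on $\cC'$ bounds the probability that $\cC'$ misses the at least $2^{i-1}\eta n$ columns of $C_{r_0}$ by
\[
\left(1-\frac{2^{i-1}\eta n}{n}\right)^{c/(2^i\eta)}\le e^{-c/2}.
\]
Whenever both $\cR'\cap R^{\ast}\neq\emptyset$ and $\cC'\cap C_{r_0}\neq\emptyset$ hold, the pair $(r_0,c)$ with $c\in \cC'\cap C_{r_0}$ lies in $(\cR'\times \cC')\cap\mathsf{aug}(\cR,\cC)$, producing the desired augment. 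Since $\cR'$ and $\cC'$ are sampled independently, a union bound gives the failure probability at most $e^{-c}+e^{-c/2}\le 2e^{-c/2}$, as claimed.

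The main subtlety, and the only place I expect the argument to require care, is that $\cR'$ and $\cC'$ are subsets drawn without replacement, so I would justify the inequality $\binom{n-s}{k}/\binom{n}{k}\le (1-s/n)^k$ (by writing the ratio as a product of fractions $(n-s-j)/(n-j)\le 1-s/n$) before invoking $1-x\le e^{-x}$. Everything else reduces to the two telescoped exponential bounds above.
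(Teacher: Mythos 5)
Your proposal is correct and follows essentially the same two-stage sampling argument as the paper's proof (first hit a row with many augmenting columns, then hit one of those columns, using independence of $\cR'$ and $\cC'$); the only difference is that you carefully handle the without-replacement subtlety via the binomial-ratio inequality and close with a union bound on failure events rather than the paper's product of success probabilities, both of which yield the same final bound $1-2e^{-c/2}$.
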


\begin{proof}
Since $(\cR,\cC)$ has augment pattern $i$ on matrix $\A$, the probability that $\cR'$ (and $\cC'$) does not hit row (and column) of any augment is $(1-2^{-i})^{c2^i}$ (and $(1-2^{i-1}\eta)^{c/(2^i\eta)}$). Therefore, the probability that $(\cR',\cC')$ hits at least one augment is given by
\[
\left(1-(1-2^{-i})^{c2^i}\right)\left(1-(1-2^{i-1}\eta)^{c/(2^i\eta)}\right)\ge 1-\frac{2}{e^{c/2}}.\qedhere
\]
\end{proof}

\subsubsection{Warm-Up: The Case of $d=1$}

Without loss of generality, we may permute the rows and columns of $\A$ and assume that $\cR_i = \{1,\dots,|\cR_i|\}$ and $\cC_i = \{1,\dots,|\cC_i|\}$ for all $i\leq \lceil\log\frac{1}{\eta}\rceil$.

\begin{theorem}
\label{theorem: correctness of algorithm for d=1}
Let $\epsilon\le 1/e$ and $d=1$. For any matrix $\A$, the probability that Algorithm \ref{algorithm: non-adaptive robust testing of rank} fails is at most $1/3$.
\end{theorem}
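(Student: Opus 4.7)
\medskip
\noindent\textbf{Proof Plan.} The plan is to show that, when $\A$ is $\epsilon$-far from rank $1$, with probability at least $2/3$ over the random samples the observed entries in $\cQ = \bigcup_{i}(\cR_i \times \cC_i)$ cannot be extended to a rank-$1$ matrix on $\cR_m \times \cC_1$, forcing the optimization in \eqref{equ: upper bound} to output $r > 1$. It suffices to exhibit, with that probability, a $2\times 2$ submatrix inside $\cR_m \times \cC_1$ whose observed entries force rank $2$ in any completion of the unobserved entries. The converse direction ($\A$ has rank at most $1$) is immediate, since a rank-$1$ completion of $\A_{\cR_m \times \cC_1}$ trivially exists, so $r\le 1$.

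First I would establish an \emph{oracle lemma}: for any fixed rank-at-most-$1$ submatrix $\A_{\cR, \cC}$, Lemma~\ref{lemma: existence of augment pattern} supplies an augment pattern $i\in [\log(1/\eta)]$, and Lemma~\ref{lemma: uniform sampling captures augment pattern} then implies that the block $\cR_i\times \cC_i$, whose sizes in Algorithm~\ref{algorithm: non-adaptive robust testing of rank} match the lemma's hypotheses with effective constant $c' = \Theta(c\log\log(1/\eta)\log(1/\eta))$, catches an augment of $(\cR, \cC)$ with failure probability at most $2\exp(-c'/2)$. Applied to the empty base, this furnishes an initial nonzero entry $(r_1, c_1)\in \cQ$.

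The core argument is a \emph{rebasing} procedure that iteratively tries to extend the current nonzero base to a $2\times 2$ witness. At iteration $j$, let $\rho_j = \min\{k: r_j\in\cR_k\}$ and $\gamma_j = \max\{k: c_j\in\cC_k\}$, so observability of $(r_j, c_j)$ is equivalent to $\rho_j\le\gamma_j$. Apply the oracle lemma to $(\{r_j\},\{c_j\})$ to obtain an augment $(r', c')\in \cR_{i_j}\times \cC_{i_j}$ with analogous indices $\rho'\le i_j\le \gamma'$; then $\A_{\{r_j, r'\},\{c_j, c'\}}$ has rank $2$. The cross entry $(r_j, c')$ is observed iff $\rho_j\le \gamma'$ and $(r', c_j)$ iff $\rho'\le \gamma_j$, and a short index chase (using $\rho_j\le \gamma_j$, $\rho'\le \gamma'$) rules out both being simultaneously unobserved. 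Case analysis: (a) if both are observed, the full $2\times 2$ block is a fully observed rank-$2$ witness; (b) if only $(r', c_j)$ is unobserved, then either $\A_{r_j, c'}=0$, forcing $\A_{r', c'}\ne 0$ (so that the determinant $\A_{r_j, c_j}\A_{r', c'} - \A_{r_j, c'}\A_{r', c_j}$ is nonzero), in which case the observed pattern $\bigl[\begin{smallmatrix} \ne 0 & 0 \\ ? & \ne 0 \end{smallmatrix}\bigr]$ forces rank $2$ under any completion, or else $\A_{r_j, c'}\ne 0$ and we rebase to $(r_{j+1}, c_{j+1}) := (r_j, c')$ with $\gamma_{j+1}\ge \gamma' > \gamma_j$; the symmetric case where only $(r_j, c')$ is unobserved yields either an analogous structural witness $\bigl[\begin{smallmatrix} \ne 0 & ? \\ 0 & \ne 0 \end{smallmatrix}\bigr]$ or a rebase with $\rho_{j+1} < \rho_j$.

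Each rebase strictly increases the potential $\gamma_j - \rho_j\ge 0$ by at least one, while $\gamma_j - \rho_j \le m-1 = O(\log(1/\eta))$, so the procedure terminates in at most $m$ rebases. At the terminal configuration $(\rho_j, \gamma_j) = (1, m)$, any augment automatically satisfies both cross-observability conditions, so case (a) applies and we are done. A union bound over the $O(\log(1/\eta))$ oracle-lemma invocations---which form a deterministic sequence of bases given the realized samples and $\A$---gives total failure probability at most $O(\log(1/\eta))\cdot 2\exp(-c'/2)\le 1/3$ for a sufficiently large absolute constant $c$. The main obstacle is the careful bookkeeping of block indices through the case analysis to certify monotone progress of each rebase, together with arranging the union bound so that it ranges only over the $O(\log(1/\eta))$-length deterministic rebasing trajectory rather than over all $n^2$ possible bases (which would introduce a $\log n$ dependence inconsistent with the claimed query complexity).
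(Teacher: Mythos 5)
Your proof follows the paper's own proof nearly step for step: obtain an initial nonzero base from Lemmas~\ref{lemma: existence of augment pattern} and~\ref{lemma: uniform sampling captures augment pattern}, then rebase toward the intersection $\cR_1\times\cC_m$ via the same trichotomy of cases (fully observed $2\times 2$ augment, forced structural witness $\left[\begin{smallmatrix}\ne 0 & 0\\ ? & \ne 0\end{smallmatrix}\right]$ when $\A_{r_j,c'}=0$, or rebase to $(r_j,c')$ when $\A_{r_j,c'}\ne 0$), terminating after $O(\log(1/\eta))$ rounds with a union bound. Your explicit potential $\gamma_j-\rho_j$, which strictly increases each rebase and is bounded by $m-1$, is a clean formalization of the paper's informal ``moves toward the upper-left corner'' progress claim, and the index chase ($\rho'>\gamma_j\ge\rho_j>\gamma'\ge\rho'$) ruling out two unobserved cross entries is a tidy way to see Case~(ii.1) is impossible; but these are cosmetic improvements to the identical mechanism, and like the paper's $d=1$ argument, you leave implicit how to handle the dependence between the sampled blocks and the sample-dependent rebasing trajectory (the paper resolves this for general $d$ by splitting each $\cR_i,\cC_i$ into $\ell$ independent subparts and union bounding over the $\ell^2$ pairs).
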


\begin{proof}
If $\A$ is of rank at most $d$, then the algorithm will never make mistake; so we assume that $\A$ is $\epsilon$-far from being rank $d$ in the proof below.

Lemma~\ref{lemma: existence of augment pattern} shows that $(\emptyset,\emptyset)$ has some augment pattern $s$ and by Lemma~\ref{lemma: uniform sampling captures augment pattern}, with probability at least $1-2e^{-c/2}$ there exists $(r,c)\in (\cR_s,\cC_s)$ such that $(r,c)\in \textsf{aug}(\emptyset,\emptyset)$, i.e., $\A_{(r,c)}\ne 0$. We now argue that the rank-$1$ submatrix $\A_{(r,c)}$ can be augmented to a rank-$2$ submatrix.

Again by Lemma~\ref{lemma: existence of augment pattern}, $(\{r\},\{c\})$ has an augment pattern $j$; otherwise, $\A$ is not $\epsilon$-far from being rank-$d$, and with probability at least $1-2e^{-c/2}$ there exists $(r',c')\in (\cR_j,\cC_j)$ such that $(r',c')\in \textsf{aug}(\{r\},\{c\})$. We now discuss three cases based on the position of $(r',c')$ in relation to $(r,c)$.

\medskip
\noindent{\textbf{Case (i). $(r',c')\in R_s\times C_s$.}}

By Lemma \ref{lemma: uniform sampling captures augment pattern}, with probability at least $1-2e^{-c/2}$, $\cR_j\times \cC_j$ contains an argument for $({r},{c})$, denoted by $(r',c')$. By construction of $\{\cR_j\}$ and $\{\cC_j\}$, $(r,c')$ and $(r',c)$ are also queried (See Figure \ref{figure: case i}). Thus we find a $2\times 2$ non-singular matrix. The algorithm answers correctly with probability at least $1-4e^{-c_0/4}>2/3$ in this case.

\medskip
\noindent{\textbf{Case (ii). $r'\not\in \cR_s$ or $c'\not\in \cC_s$.}}

In this case, we show that starting from $\A_{r,c}$, we can always find a path for the non-singular $1\times 1$ submatrix $\A_{*,*}$ such that the index $(*,*)$ always moves to the left or above, so we make progress towards case (i): we note that the non-zero element in the most upper left corner can always be augmented with three queried elements in the same augment pattern (i.e., Case (i)), because the uppermost left corner belongs to all $(\cR_i,\cC_i)$'s by construction. We now show how to find the path (Please refer to Figure \ref{figure: case ii} for the following proofs).
\begin{figure}
\centering
\subfigure[Case (i).]{
\includegraphics{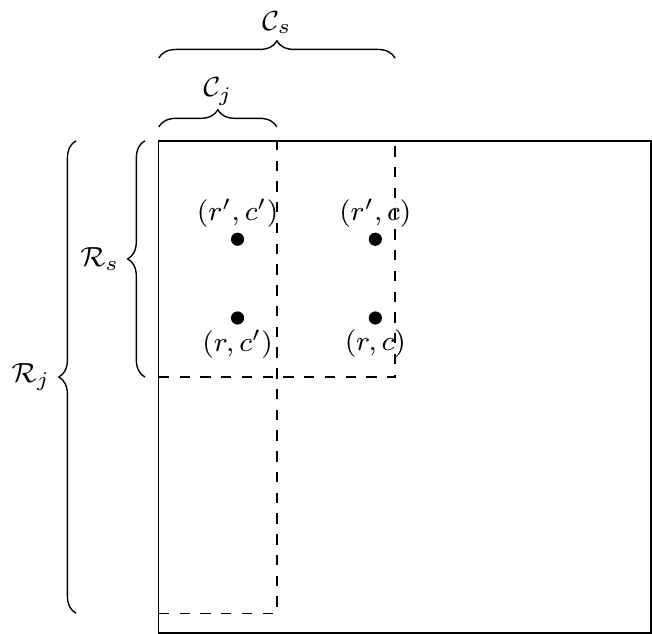}
\label{figure: case i}
}
\subfigure[Case (ii).]{
\includegraphics{caseii.pdf}
\label{figure: case ii}
}
\caption{Finding an augmentation path $(d=1)$, where the whole region is the $\cO(d/\epsilon)\times \cO(d/\epsilon)$ submatrix uniformly sampled from the original $n\times n$ matrix.}
\end{figure}

For index $(r,c)$ such that $\A_{r,c}\ne 0$, if $r'\not\in \cR_s$ or $c'\not\in \cC_s$ (say $r'\not\in\cR_s$ at the moment), then by Lemma \ref{lemma: uniform sampling captures augment pattern}, there exists an index $(r',c')\in (\cR_j,\cC_j)$ such that $(r,c)$ can be augmented by $(r',c')$. However, we cannot observe $\A_{r',c}$ so we do not find a $2\times 2$ submatrix at the moment. To make progress, we further discuss two cases.

\medskip
\noindent{\textbf{Case (ii.1). $\A_{r,c'}=0$ and $\A_{r',c'}=0$.}}

This case is impossible; otherwise, $(r',c')$ cannot be an augment of $(r,c)$.

\medskip
\noindent{\textbf{Case (ii.2). $\A_{r,c'}=0$ and $\A_{r',c'}\neq 0$.}}

Since $\A_{r,c}\ne 0$, $\A_{r',c'}\ne 0$ and $\A_{r,c'}=0$, no matter what $\A_{r,c'}$ is, the $2\times 2$ submatrix
\begin{equation*}
\begin{bmatrix}
\A_{r,c'} & \A_{r,c}\\
\A_{r',c'} & \A_{r',c}
\end{bmatrix}
=
\begin{bmatrix}
0 & \ne 0\\
\ne 0 & ?
\end{bmatrix}
\end{equation*}
is always non-singular (Denote by $?$ the entry which can be observed or unobserved, meaning that the specific value of the entry is unimportant for our purpose). So the algorithm answers correctly with probability at least $1-4e^{-c_0/4}>2/3$.

\medskip
\noindent{\textbf{Case (ii.3). $\A_{r,c'}\neq 0$.}}
Instead of augmenting $(r,c)$, we shall pick $(r, c')$ to be our new base entry ($1\times 1$ matrix) and try to augment it to a $2\times 2$ matrix. In this way, we have moved our base $1\times 1$ matrix towards the upper-left corner. We can repeat the preceding arguments of different cases.

If Case (i) happens for $(r,c')$, we immediately have a $2\times 2$ rank-$2$ submatrix and the algorithm answers correctly with a good probability. If Case (i) does not happen, we shall demonstrate that we can make further progress. Suppose that $(r'',c'')$ is an augment of $(\{r\},\{c'\})$ and $c''\not\in \cC_s\cup \cC_j$. We intend to look at the submatrix
\[
\begin{bmatrix}
\A_{r'',c'} & \A_{r'',c''}\\
\A_{r,c'} & \A_{r,c''}
\end{bmatrix}
\]
Here we cannot observe $\A_{r,c''}$. We know that $\A_{r'',c'}$ and $\A_{r'',c''}$ cannot be both $0$, otherwise $(r'',c'')$ would not be an augment for $(r,c')$. If $\A_{r'',c'} = 0$ and $\A_{r'',c''}\neq 0$, this $2\times 2$ matrix is nonsingular regardless of the value of $\A_{r,c''}$ and the algorithm will answer correctly. If $\A_{r'',c'} \neq 0$, we can rebase our $1\times 1$ base matrix to be $(r'',c')$ and try to augment it. Since $(r'',c')$ is above $(r',c)$, we have again moved towards the upper-left corner.

Note that there are at most $\log(1/\eta)$ different augment patterns and each time we rebase, $\A_{*,*}$ moves from one $(\cR_t,\cC_t)$ to another for some $t$. Hence, after repeating the argument above at most $2\log (1/\eta)$ times, the algorithm is guaranteed to observe a $2\times 2$ non-singular submatrix. Since the failure probability in each round is at most $4e^{-c_0/4}$, by union bound over $2\log (1/\eta)$ rounds, the overall failure probability is at most $8\log(1/\eta)e^{-c_0/4} \leq 1/3$, provided that $c_0=\cO(\log\log(\frac{1}{\eta}))$.

In summary, the overall probability is at least $2/3$ that the algorithm answers correctly in all cases by finding a submatrix of rank $2$, when $\A$ is $\epsilon$-far from being rank-$1$.
\end{proof}

\subsubsection{Extension to General Rank $d$}

\begin{theorem}
\label{theorem: correctness of algorithm for general d}
Let $\epsilon\le 1/e$ and $d\ge 1$. For any matrix $\A$, the probability that Algorithm \ref{algorithm: non-adaptive robust testing of rank} fails is at most $1/\mathsf{poly}(d\log (\frac{1}{\epsilon}))$.
\end{theorem}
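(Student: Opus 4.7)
The plan is to induct on the rank $r = 0, 1, \ldots, d$ of the current submatrix maintained by the algorithm. The invariant I want after round $r$ is that we have identified $\cR, \cC \subseteq [n]$ of size $r$ for which, up to a permutation of rows and columns, $\A_{\cR,\cC}$ has the canonical form \eqref{equ: structure for general d}: zeros strictly above the anti-diagonal, nonzero entries $a_1,\ldots,a_r$ on it, and unconstrained entries (possibly unobserved) below. By cofactor expansion the determinant equals $\pm a_1\cdots a_r \ne 0$, so $\A_{\cR,\cC}$ is full rank no matter what the `?' slots take; this is precisely why unobserved entries below the anti-diagonal cost us nothing, and it is the structural reason the whole argument works.

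For the inductive step, assume $\A$ is $\epsilon$-far from rank $d$ and $r < d$. Lemmas~\ref{lemma: existence of augment pattern} and~\ref{lemma: uniform sampling captures augment pattern} produce, with failure probability at most $2e^{-c/2}$, an augment $(r^\ast,c^\ast)\in\cR_j\times\cC_j$ for some pattern $j$. I would then run a case analysis that generalises the one in Theorem~\ref{theorem: correctness of algorithm for d=1}. In the favourable case, every existing row of $\cR$ sits in some $\cR_i$ with $i\le j$ and every column of $\cC$ sits in some $\cC_i$ with $i\le j$; the nesting $\cR_1\subseteq\cdots\subseteq\cR_m$, $\cC_1\supseteq\cdots\supseteq\cC_m$ guarantees that all entries $\A_{r^\ast,c_i}$ and $\A_{r_i,c^\ast}$ have been queried, and after placing $(r^\ast,c^\ast)$ at the appropriate outermost corner of the new anti-diagonal, the augmented $(r+1)\times(r+1)$ submatrix already fits~\eqref{equ: structure for general d}, so we advance to rank $r+1$. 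In the unfavourable case exactly one such entry is unobserved, and mimicking Cases~(ii.2)/(ii.3) for $d=1$, either that entry provably falls strictly below the new anti-diagonal and can safely be relabelled as a `?' (and we still advance), or we \emph{rebase}: replace one element of $\cR$ or $\cC$ by $r^\ast$ or $c^\ast$ to obtain a new $r\times r$ submatrix that is still in canonical form but whose rows and columns now sit in strictly narrower blocks of the nested family.

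The main technical obstacle, as I see it, is verifying that each rebase genuinely preserves~\eqref{equ: structure for general d}: after replacing one row (or column) by one from a strictly narrower block, the zeros above the anti-diagonal must continue to hold. I would handle this by bookkeeping each $r_i\in\cR$ by the smallest index $\ell$ with $r_i\in\cR_\ell$, and similarly for columns, so that ordering rows and columns by these block indices precisely reproduces the canonical positioning. The zeros above the anti-diagonal are then forced: any offending nonzero entry would, by Lemma~\ref{lemma: augment} combined with the monotonicity of $\{\cR_i\},\{\cC_i\}$, have constituted an augment of a smaller canonical submatrix at an earlier round, contradicting the choice of block indices made at that round. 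A lexicographic potential $\Phi$ on the multiset of row- and column-block indices then strictly decreases with each rebase and is bounded above by $O(dm) = O(d\log(1/\eta))$, so the total number of rebases across the whole run is $O(d\log(1/\eta))$.

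Finally, combining everything: with the inflated sample sizes $|\cR_i|,|\cC_i|$ in Algorithm~\ref{algorithm: non-adaptive robust testing of rank}, the effective constant in Lemma~\ref{lemma: uniform sampling captures augment pattern} is $c\cdot[\log d+\log\log(1/\eta)]d\log(1/\eta)$, so each individual invocation fails with probability at most $(d\log(1/\eta))^{-\Omega(cd\log(1/\eta))}$. A union bound over the $O(d\log(1/\eta))$ rebases and over the $d$ rank-augmentation steps yields an overall failure probability of $1/\poly(d\log(1/\epsilon))$, matching the bound claimed in the theorem.
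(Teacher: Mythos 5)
Your overall architecture matches the paper's: maintain an $r\times r$ submatrix in the lower-triangular canonical form~\eqref{equ: structure for general d}, observe that the determinant is $\pm a_1\cdots a_r\neq0$ regardless of the `?' slots, and use Lemmas~\ref{lemma: existence of augment pattern}--\ref{lemma: uniform sampling captures augment pattern} to find an augment, either promoting to $r+1$ or rebasing toward the upper-left corner. However there are two substantive gaps.

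First, your claim that ``in the unfavourable case exactly one such entry is unobserved'' is false, and more importantly it hides the hardest part of the induction. When the augment $(p,q)$ lies in $\cR_j\times\cC_j$, the entries $\A_{p,c}$ for existing columns $c\notin\cC_j$ and $\A_{r,q}$ for existing rows $r\notin\cR_j$ are \emph{all} unobserved --- there can be up to $2(r-1)$ of them. In particular, $(p,q)$ may land strictly \emph{below} the anti-diagonal of the candidate $(r+1)\times(r+1)$ matrix (the paper's Case~(iii), $a+b>r-1$), in which case the entries on the anti-diagonal that would need to be nonzero are themselves `?'. Your proposal has no mechanism to certify nonsingularity there. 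The paper handles it by isolating the $\min\{a,b\}\times\min\{a,b\}$ ``central submatrix'' around $(p,q)$: it is fully observed (because both its row indices lie in $\cR_j$ and its column indices in $\cC_j$), must have nonzero determinant by the augment property, and an elementary transformation restricted to this block re-triangularizes the whole $(r+1)\times(r+1)$ matrix while mapping $0$'s to $0$'s and `?'s to `?'s. This is the genuinely new combinatorial idea for general $d$, and it is missing from your sketch.

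Second, your concluding union bound does not address the dependence problem. Lemma~\ref{lemma: uniform sampling captures augment pattern} is stated for a \emph{fixed} pair $(\cR,\cC)$ and an independent random $(\cR',\cC')$. In Algorithm~\ref{algorithm: non-adaptive robust testing of rank} the blocks $\cR_i,\cC_i$ are sampled once and reused across rounds, so the $(\cR,\cC)$ you hold at round $t$ is itself a function of those samples, and you cannot invoke the lemma as stated. You assign the lemma's parameter $c$ the inflated value $c[\log d+\log\log(1/\eta)]d\log(1/\eta)$ (i.e., you use the full $\cR_i$), which would indeed give a tiny per-invocation bound \emph{if} it applied, but the conditional independence needed to justify it is absent. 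The paper resolves this by partitioning each $\cR_i$, $\cC_i$ into $\ell=d+d\log(1/\eta)$ independent sub-blocks, using a fresh sub-block per round, and union bounding over the $\ell^2$ choices; each sub-block supplies the weaker parameter $c[\log d+\log\log(1/\eta)]$, which is exactly tuned so the union bound lands at $1/\poly(d\log(1/\epsilon))$. Your lexicographic potential $\Phi$ correctly bounds the number of rounds by $O(d\log(1/\eta))$, but by itself it does not restore the independence the lemma requires; you would either need the paper's partition trick or a (much larger) union bound over all possible $(\cR,\cC)$, which you do not carry out.
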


\begin{proof}
If $\A$ is of rank at most $d$, then the algorithm will never make mistake, so we assume that $\A$ is $\epsilon$-far from being rank $d$ in the proof below.

The idea is that, we start with the base case of an empty matrix, and augment it to a full-rank $r\times r$ matrix in $r$ rounds, where in each round we increase the dimension of the matrix by exactly one. Each round may contain several steps in which we move the intermediate $j\times j$ matrix ($j\le r$) towards the upper-left corner without augmenting it; here, moving the matrix towards the upper-left corner means changing $\A_{\cR,\cC}$ to $\A_{\cR',\cC'}$, of the same rank, with $|\cR'|=|\cR|=|\cC'|=|\cC|=j$ and $\cR'\preceq \cR$ and $\cC'\preceq \cC$, where $\cR'\preceq \cR$ means that, suppose that $r_1' < r_2' < \cdots < r_j'$ are the (sorted) elements in $\cR'$ and $r_1 < r_2 < \cdots < r_j$ are the (sorted) elements in $\cR$, it holds that $r_i'\leq r_i$ for all $1\leq i\leq j$, and $\cC'\preceq \cC$ has a similar meaning.

The challenge is that those unobserved entries $?$'s may propagate as we augment the submatrix in each round. Our goal is to prove that starting from a \emph{structural} $(r-1)\times (r-1)$ full-rank submatrix which might have $?$'s as its entries, no matter what values of \emph{all} $?$'s are, with the augment operator we either (1) make progress for $(r-1)\times (r-1)$ submatrix, or (2) obtain an $r\times r$ full-rank submatrix \emph{with the same structure}. Let us first condition on the event that Lemma~\ref{lemma: uniform sampling captures augment pattern} holds true. Regarding the structure, we have the following claim.

\begin{claim}
\label{claim: structure for question mark}
There exists a searching path for $r\times r$ full-rank submatrices with non-decreasing $r$ which has the following lower triangular form modulo an elementary transformation
\begin{equation}
\label{equ: form}
\begin{bmatrix}
0 & 0 & \cdots & 0 & \cdots & 0 & \ne 0\\
0 & 0 & \cdots & 0 & \cdots & \ne 0 & ?\\
\vdots & \vdots &  & \vdots &  & \vdots & \vdots\\
0 & 0 & \cdots & \ne 0 & \cdots & ? & ?\\
\vdots & \vdots &  & \vdots &  & \vdots & \vdots\\
0 & \ne 0 & \cdots & ? & \cdots & ? & ?\\
\ne 0 & ? & \cdots & ? & \cdots & ? & ?\\
\end{bmatrix},
\end{equation}
where $\ne 0$ denotes the known entry which is non-zero, and $?$ denotes an entry which can be either observed or unobserved.
\end{claim}

\begin{proof}[Proof of Claim \ref{claim: structure for question mark}]
Without loss of generality, we assume that all $?$'s are unobserved, which is the most challenging case; otherwise, the proof degenerates to the discussion of \emph{central submatrix} in Case (iii) which we shall specify later.
We prove the claim by induction. The base case $r=1$ is true by Theorem \ref{theorem: correctness of algorithm for d=1}. Suppose the claims holds for $r-1$. We now argue the correctness for $r$.

Let $(p,q)$ be the augment. Denote the augment row by
\begin{equation*}
\begin{bmatrix}
y_1 & \cdots & y_b & \A_{p,q} & y_{b+2} & \cdots & y_{r}
\end{bmatrix},
\end{equation*}
and the augment column by
\begin{equation*}
\begin{bmatrix}
x_1 & \cdots & x_a & \A_{p,q} & x_{a+2} & \cdots & x_{r}
\end{bmatrix}^\top.
\end{equation*}
We now discuss three cases based on the relation between $a+b$ and $r$.

\medskip
\noindent{\textbf{Case (i). $a+b=r-1$ ($\A_{p,q}$ is on the antidiagonal of $r\times r$ submatrix)}.}

In this case, $y_{b+2},\dots,y_r$ and $x_{a+2},\dots,x_r$ are all $?$'s. We argue that $x_1=x_2=\cdots=x_a=0$ and $y_1=y_2=\cdots=y_b=0$; otherwise, we can make progress. First consider $y_i$ for $1\leq i\leq b$. If some $y_i\ne 0$, we can delete the $(r-i)$-th row in the $(r-1)\times (r-1)$ submatrix and insert the augment row (without the augment entry $\A_{p,q}$), which is above the deleted row.
Thus we obtain a new $(r-1)\times (r-1)$ submatrix towards the upper-left corner, and furthermore, the new submatrix exhibits the structure in \eqref{equ: form}. The same argument applies for $x_1,x_2,...,x_a$. Therefore, if no progress is made, it must hold that $x_1=x_2=...=x_a=0$ and $y_1=y_2=...=y_b=0$. In this case, $\A_{p,q}\ne 0$; otherwise, $(p,q)$ is not an augment. Therefore we obtain an $r\times r$ full-rank matrix of the form \eqref{equ: form}.

\medskip
\noindent{\textbf{Case (ii). $a+b< r-1$ ($\A_{p,q}$ is above the antidiagonal of $r\times r$ submatrix)}.}

In this case, $y_{r-a+1},\dots,y_r$ and $x_{r-b+1},\dots,x_r$ are all $?$'s. Similarly to Case (i), we shall argue that $x_1= \cdots =x_a=x_{a+2}= \cdots =x_{r-b}=0$ and $y_1=\cdots =y_b=y_{b+2}=\cdots=y_{r-a}=0$; otherwise, we can make progress. To see this, consider first $y_i$ for $1\leq i\leq b$ and then for $b+2\leq i\leq r-a$. If $y_i\ne 0$ for some $i\leq b$, we can delete the $(r-i)$-th row in the $(r-1)\times (r-1)$ submatrix  and insert the augment row (without the augment entry $\A_{p,q}$), which is above the deleted row, and so we make progress. Now assume that $y_1=\cdots=y_b=0$. If $y_i\ne 0$ for some $i$ such that $b+2\leq i\leq r-a$, we can delete the $(r-i+1)$-st row in the $(r-1)\times (r-1)$ submatrix of the last step and insert the augment row (without the augment entry $\A_{p,q}$), which is above the deleted row. So we make progress towards the most upper left corner. The same argument applies to $x_1,\dots,x_a,x_{a+2},\dots,x_{r-b}$. Therefore, $x_1=\cdots=x_a=x_{a+2}=...=x_{r-b}=0$ and $y_1=\cdots=y_b=y_{b+2}=\cdots=y_{r-a}=0$. In this case, $\A_{p,q}\ne 0$; otherwise, $(p,q)$ is not an augment since all possible choices of $?$'s cannot make the $r\times r$ submatrix non-singular. By exchanging the $(a+1)$-st row and the $(r-b)$-th row of the $r\times r$ submatrix or exchanging the $(b+1)$-st column and the $(r-a)$-th column, we obtain an $r\times r$ submatrix of the form \eqref{equ: form}.

\medskip
\noindent{\textbf{Case (iii). $a+b> r-1$ ($\A_{p,q}$ is below the antidiagonal of $r\times r$ submatrix)}.}

In this case, we argue that $x_i=y_j=0$ for all $i\leq r-b-1$ and $j\leq r-a-1$; otherwise we can make progress as Cases (i) and (ii) for $y_j$. To see this, let us discuss from $j=1$ to $r-a-1$. If $y_j\ne 0$ ($j=1,2,\dots,r-a-1$), we can delete the $(r-j)$-th row in the $(r-1)\times (r-1)$ submatrix and insert the augment row (without the augment entry $\A_{p,q}$), which is above the deleted row. So we make progress. The same argument applies to $x_1,\dots,x_{r-b-1}$. So $x_i=y_j=0$ for all $i\leq r-b-1$ and $j\leq r-a-1$.

Given that there is only one non-zero entry in the first $r-b-1$ rows and the first $r-a-1$ columns of the $r\times r$ submatrix (i.e., the Laplace expansion of the determinant), we only need to focus on a minor corresponding to a $\min\{a,b\}\times \min\{a,b\}$ \emph{central submatrix}, which decides whether the determinant of the $r\times r$ submatrix is zero and is fully-observed because the augment $(p,q)$ is at the lower right corner of the central submatrix (see the red part in Eqn.~\eqref{equ: central matrix}). Since it is fully-observed, the minor must be non-zero; otherwise, $(p,q)$ cannot be an augment for all choices of $?$'s. Therefore, we can do an elementary transformation to make the central submatrix a lower triangular matrix with non-zero antidiagonal entries. More importantly, such an elementary transformation also transforms the $r\times r$ matrix to a lower triangular matrix with non-zero antidiagonal entries, because all the entries to the left and above of the central matrix are $0$'s, and all the entries to the right and below of the central matrix are $?$'s. Hence any elementary transformation keeps $0$'s and $?$'s unchanged, and we obtain therefore an $r\times r$ submatrix of the form \eqref{equ: form}.

\begin{equation}
\label{equ: central matrix}
\begin{bmatrix}
0 & 0 & \cdots & \cdots & 0 & \cdots & 0 & \cdots & 0 & \ne 0\\
0 & 0 & \cdots & \cdots & 0 & \cdots & 0 & \cdots & \ne 0 & ?\\
\vdots & \vdots & & & \vdots & & \vdots & & \vdots & \vdots\\
\vdots & \vdots & & & \vdots & & \vdots & & \vdots & \vdots\\
0 & 0 & \cdots & \cdots & \red{\ne 0} & \red{\cdots} & \red{\textsf{known}} & \cdots & ? & ?\\
\vdots & \vdots & & & \red{\vdots} & & \red{\vdots} & & \vdots & \vdots\\
0 & 0 & \cdots & \cdots & \red{\textsf{known}} & \red{\cdots} & \red{\textsf{augment } \A_{p,q}} & \cdots & ? & ?\\
\vdots & \vdots & & & \vdots & & \vdots & & \vdots & \vdots\\
0 & \ne 0 & \cdots & \cdots & ? & \cdots & ? & \cdots & ? & ?\\
\ne 0 & ? & \cdots & \cdots & ? & \cdots & ? & \cdots & ? & ?
\end{bmatrix}.
\end{equation}
\end{proof}

Now we are ready to prove Theorem \ref{theorem: correctness of algorithm for general d}. Note that Lemma \ref{lemma: uniform sampling captures augment pattern} works only for \emph{fixed} $(\cR,\cC)$. To make the lemma applicable ``for all'' $(\cR,\cC)$ throughout the augmentation process, we shall take a union bound by choosing $|\cR|$ and $|\cC|$ large enough. Specifically, for each $i$, we divide $\cR_i=\bigcup_{k=1}^\ell \cR_i^{(k)}$ uniformly at random into $\ell = d+d\log(\frac{1}{\eta})$\footnote{In the number of parts $d+d\log(\frac{1}{\eta})$, the first term follows from the operation of augmenting $1\times 1$ submatrix to $d\times d$. The second term follows from moving the submatrix towards the upper left corner (from the lower-right corner in the worst case).} even parts $\cR_i^{(1)}, \cR_i^{(2)}, \dots, \cR_i^{(d)}$, where each $|\cR_i^{(k)}|=c[\log(d)+\log\log(\frac{1}{\eta})]2^i$, and divide $\cC_i=\bigcup_{k=1}^d \cC_i^{(k)}$ uniformly at random into $\ell$ even parts $\cC_i^{(1)}, \cC_i^{(2)}, \dots, \cC_i^{(\ell)}$, where each $|\cC_i^{(k)}|=c[\log(d)+\log\log(\frac{1}{\eta})]/(2^i\eta)$ for every $k$. We note that $\{\cR_i^{(k)}\}_k$ (and $\{\cC_i^{(k)}\}_k$) are independent of each other. It follows that
the event in Lemma \ref{lemma: uniform sampling captures augment pattern} holds
with probability at least $1-\frac{1}{\operatorname{poly}(d\log(1/\eta))}$. By a union bound over all $\ell^2 = \Theta(d^2\log^2(\frac{1}{\eta}))$ possible choices of $\{\cR_i^{(k)}\}\times \{\cC_i^{(k)}\}$ and Claim \ref{claim: structure for question mark}, with probability at least $1-1/\mathsf{poly}(d\log(\frac{1}{\epsilon}))$, Algorithm \ref{algorithm: non-adaptive robust testing of rank} answers correctly, when $\A$ is $\epsilon$-far from having rank $d$.
\end{proof}

\subsubsection{A Computationally Efficient Algorithm}

We now show how to implement Algorithm \ref{algorithm: non-adaptive robust testing of rank} efficiently, for which we only need to give a polynomial-time algorithm to solve the minimization problem \eqref{equ: upper bound} in Algorithm \ref{algorithm: a computationally efficient algorithm}. We have the following theorem.

\begin{algorithm}[ht]
\caption{Solving problem \eqref{equ: upper bound} in the polynomial time}
\label{algorithm: a computationally efficient algorithm}
\begin{algorithmic}[1]
\INPUT $\cR_1,\dots,\cR_k$ and $\cC_1,\dots,\cC_k$. Denote by $\cR^{(i)}$ the set of sampled indices in the $i$-th column of $\A$, and let $\cR^{(i)}_\perp = \cR_k \setminus \cR^{(i)}$.
\OUTPUT the solution $r$ to the minimization problem \eqref{equ: upper bound}.
\State $\S\leftarrow [\ ]$ is an empty matrix.
\State $r\leftarrow 0$.
\For{$i=1,\dots,|\cC_1|$}
	\If {there exists $\x$ such that $\S_{\cR^{(i)},:}\x = \A_{\cR^{(i)},i}$}
		\State $\A_{\cR_\perp^{(i)},i}\leftarrow \S_{\cR_\perp^{(i)},:}\x$.
	\Else
		\State $\A_{\cR_\perp^{(i)},i}\leftarrow \1$.
		\State $\S\leftarrow [\S,\A_{\cR_k,i}]$.
		\State $r\leftarrow r+1$.
	\EndIf
\EndFor
\State \Return $r$.
\end{algorithmic}
\end{algorithm}

\begin{theorem}
\label{theorem: time complexity of rank testing}
Algorithm \ref{algorithm: a computationally efficient algorithm} correctly solves the minimization problem \eqref{equ: upper bound} in $\mathsf{poly}(\frac{d}{\epsilon})$ time.
\end{theorem}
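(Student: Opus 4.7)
The plan is to prove two claims: Algorithm~\ref{algorithm: a computationally efficient algorithm} outputs the minimum of $\rank(\A_{\cR_m,\cC_1})$ over all fillings of $\A_{(\cR_m\times\cC_1)\setminus\cQ}$, and it runs in time $\poly(d/\epsilon)$. The structural fact that makes the algorithm work is the nested staircase shape of the sampling pattern $\cQ$: under the permutation normalization used in the analysis of Algorithm~\ref{algorithm: non-adaptive robust testing of rank}, column $i\in\cC_1$ is observed on exactly the rows $\cR^{(i)}=\cR_{k_i}$ where $k_i=\max\{k:i\in\cC_k\}$, and since each $\cC_k=\{1,\dots,|\cC_k|\}$ is nested, $k_i$ is weakly decreasing in $i$. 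Consequently the main loop processes columns in order of weakly decreasing $|\cR^{(i)}|$, and for any two indices $j<i$ we have $\cR^{(j)}\supseteq\cR^{(i)}$.

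For the upper bound on the minimum rank, the completion produced by the algorithm assigns every non-selected column $i\notin\S$ the value $\A_{\cR_m,i}=\S\x$ for the witness $\x$ of the solvability test, while each selected column is filled arbitrarily (with $\1$) in unobserved rows. Thus every column of the completed matrix lies in the column span of $\S$, which has rank at most $r$. For the matching lower bound, let $i_1<i_2<\cdots<i_r$ be the columns selected into $\S$ and let $\A'$ be any completion. I would show that $\A'_{\cR_m,i_1},\dots,\A'_{\cR_m,i_r}$ are linearly independent, so $\rank(\A')\ge r$. Suppose for contradiction that $\sum_k c_k\A'_{\cR_m,i_k}=\0$ with some $c_k\ne 0$, and let $k^*$ be the largest such index. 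Restricting the identity to rows $\cR^{(i_{k^*})}$ yields
\[
\A_{\cR^{(i_{k^*})},i_{k^*}}\;=\;-\tfrac{1}{c_{k^*}}\sum_{k<k^*}c_k\,\A'_{\cR^{(i_{k^*})},i_k}.
\]
By the staircase property, for every $k<k^*$ we have $\cR^{(i_k)}\supseteq\cR^{(i_{k^*})}$, so $\A'_{\cR^{(i_{k^*})},i_k}=\A_{\cR^{(i_{k^*})},i_k}=\S_{\cR^{(i_{k^*})},k}$ regardless of the completion. This exhibits a vector $\y$ with $\S_{\cR^{(i_{k^*})},:}\,\y=\A_{\cR^{(i_{k^*})},i_{k^*}}$, contradicting the failure of the solvability test that caused the algorithm to add $i_{k^*}$ to $\S$.

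The runtime bound is routine: the loop runs for $|\cC_1|=\widetilde{\cO}(d/\epsilon)$ iterations, and each iteration solves a linear system of size at most $|\cR_m|\times r\le\widetilde{\cO}(d/\epsilon)\times (d+1)$, since we may stop as soon as $r>d$. Gaussian elimination then costs $\poly(d/\epsilon)$ per iteration, for a total of $\poly(d/\epsilon)$. The step I expect to be the main obstacle is the lower-bound direction: a test performed only on the restricted rows $\cR^{(i)}$ at the moment column $i$ is examined must certify global linear independence against \emph{every} admissible completion, including those very different from the one the algorithm builds. This is precisely where the staircase nesting is indispensable, since it guarantees that the columns stored in $\S$ coincide with the true observed values of $\A$ on the restricted rows of any later column, so the restricted test transfers to arbitrary completions.
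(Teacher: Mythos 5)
Your proof is correct and takes essentially the same greedy-column approach as the paper, with the staircase nesting $\cR^{(j)}\supseteq\cR^{(i)}$ for $j<i$ serving in both as the crucial structural fact that lets a local test on $\cR^{(i)}$ certify independence in every admissible completion. Your framing as explicit upper and lower bounds on the minimum rank is slightly more direct than the paper's formulation via two loop invariants, but the underlying argument is identical.
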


\begin{proof}
Without loss of generality, we may permute the rows and columns of $\A$ and assume that $\cR_i = \{1,\dots,|\cR_i|\}$ and $\cC_i = \{1,\dots,|\cC_i|\}$ for all $i\leq \lceil\log\frac{1}{\eta}\rceil$. Our goal is to complete the submatrix $\A_{(\cR_m\times \cC_1)}$ such that its rank is minimal. Denote by $\cR^{(i)}$ the set of sampled indices in the $i$-th column of $\A$. We start from an empty matrix $\S=[\ ]$. We will extend $\S$ as we process the columns of $\A$ that are not in the block $(\cR_k,\cC_k)$ from left to right. We will maintain the following two invariants:

\begin{itemize}
\item
The minimal rank of matrix completion is always equal to the number of columns of $\S$.
\item
After processing the $i$-th column of $\A$, the restricted column $\A_{\cR^{(i)},i}$ is in the column space of $\S_{\cR^{(i)},:}$.
\end{itemize}

Note that both invariants hold in the base case. For the $i$-th column $\A_{:,i}$ that we encounter, if $\A_{\cR^{(i)},i}$ is in the column space of $\S_{\cR^{(i)},:}$, then we use a linear combination on the first $|\cR^{(i)}|$ coordinates of vectors given by the columns of $\S_{\cR^{(i)},:}$ to extend $\A_{\cR^{(i)},i}$ from a vector in $|\cR^{(i)}|$ dimensions to a vector in $|\cR_m|$ dimensions, and we do not change $\S$. Notice that the two invariants are preserved in this case.

Otherwise, $\A_{\cR^{(i)},i}$ is not in the column space of $\S_{\cR^{(i)},:}$. If $\A_{\cR^{(i)},i}$ were in the column space of $\A_{\cR^{(i)},1:{(i-1)}}$, we would, by the second invariant above, have that $\A_{\cR^{(i)},i}$ is in the column space of $\S_{\cR^{(i)},:}$, a contradiction. Therefore, $\A_{\cR^{(i)},i}$ is not in the column space of $\A_{\cR^{(i)},[i-1]}$. In this case, $\A_{:,i}$ must be linearly independent of all previous columns $\A_{:,[i-1]}$. We can thus append to $\S$ on the right the vector $[\A_{\cR^{(i)},i};\1]$ (The vector $\1$ can be replaced with any $(|\cR_k|-|\cR^{(i)}|)$-dimensional vector), which increases the size and rank of $\S$ by $1$, and we maintain our two invariants.

The time complexity of Algorithm \ref{algorithm: a computationally efficient algorithm} is $\mathsf{poly}(\frac{d}{\epsilon})$.
\end{proof}

\subsection{Lower Bounds over Finite Fields in the Sampling Model}

According to Yao's minimax principle, it suffices to provide a distribution on $n\times n$ input matrices $\A$ for which any deterministic testing algorithm fails with significant probability over the choice of $\A$. Before proceeding, we first state a hardness result that we want to reduce from.

\begin{algorithm}[ht]
\caption{Decomposing edges $E$}
\label{algorithm: decompose E}
\begin{algorithmic}[1]
\INPUT A bipartite graph $G=(L\cup R,E)$.
\OUTPUT Partition of $E=E_1\cup \cdots \cup E_t$ and the set of pivot nodes $\{w_t\}$.
\State $t\leftarrow 0$.
\While {$E\ne\emptyset$}
	\State Find $v$ such that $1\le \mathsf{deg}(v)\le\gamma d$.
	\State $t\rightarrow t+1$.
	\State $E_t\leftarrow$ edges between $v$ and all its neighbours.
	\State $w_t\leftarrow v$.
	\State $E\leftarrow E \setminus E_t$.
\EndWhile
\State \Return $E=E_1\cup\cdots\cup E_t$ and $\{w_t\}$.
\end{algorithmic}
\end{algorithm}

\begin{lemma}
\label{lemma: partition of graph}
Let $G=(L\cup R,E)$ be a bipartite graph such that $|L|=|R|=n$ and $|E|<\gamma^2 d^2$ for $d\leq n/\gamma$. Then Algorithm \ref{algorithm: decompose E} returns a partition $E=E_1\cup E_2\cup\cdots\cup E_t$, where $t\le \gamma^2 d^2$ and $|E_i|\le \gamma d$ for all $i$.
\end{lemma}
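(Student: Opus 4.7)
The lemma has three assertions to verify: the algorithm is well-defined (a vertex $v$ with $1 \le \mathsf{deg}(v) \le \gamma d$ can always be found while $E \ne \emptyset$), each block satisfies $|E_i| \le \gamma d$, and the total number of blocks satisfies $t \le \gamma^2 d^2$. My plan is to dispose of the latter two assertions quickly, since they follow directly from the algorithm's specification, and then to concentrate on the first, which is the only place a nontrivial argument is required.

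For the bound $|E_i| \le \gamma d$: by construction $E_i$ consists precisely of the edges incident to the pivot $w_i$, and the pivot is chosen so that $\mathsf{deg}(w_i) \le \gamma d$. For the bound $t \le \gamma^2 d^2$: each iteration removes at least one edge because $\mathsf{deg}(w_i) \ge 1$, so $t$ is at most the initial number of edges, which is strictly less than $\gamma^2 d^2$.

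The main step is to show the following key claim: in any bipartite graph $G = (L \cup R, E)$ with $1 \le |E| < \gamma^2 d^2$, there exists a vertex $v$ with $1 \le \mathsf{deg}(v) \le \gamma d$. I would prove this by contradiction. Suppose every vertex of degree at least $1$ has degree strictly greater than $\gamma d$. Let $L' \subseteq L$ and $R' \subseteq R$ denote the sets of non-isolated vertices. Every edge has both endpoints in $L' \cup R'$, so on the one hand $|E| \le |L'| \cdot |R'|$. On the other hand, summing degrees on each side gives
\[
|E| = \sum_{v \in L'} \mathsf{deg}(v) > \gamma d \cdot |L'|, \qquad |E| = \sum_{v \in R'} \mathsf{deg}(v) > \gamma d \cdot |R'|,
\]
so that $|L'|, |R'| < |E|/(\gamma d)$. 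Multiplying these two inequalities yields $|E| \le |L'| \cdot |R'| < |E|^2/(\gamma d)^2$, which rearranges to $|E| > \gamma^2 d^2$, contradicting the hypothesis. Since the invariant $|E| < \gamma^2 d^2$ is trivially preserved as edges are removed, the algorithm can always proceed whenever edges remain, establishing well-definedness and completing the proof.

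I do not anticipate any substantive obstacle here; the only nontrivial ingredient is the double-counting trick $|E| \le |L'| \cdot |R'|$ combined with the per-side degree lower bounds. The hypothesis $d \le n/\gamma$ is not actually used in the argument above, so presumably it is carried from the context in which the lemma is applied rather than being essential to the statement itself.
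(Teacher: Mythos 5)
Your proof is correct, and it is actually more careful than the paper's own argument. At the corresponding step the paper asserts that if every non-isolated vertex had degree greater than $\gamma d$, then ``the total number of edges would be at least $\gamma d n \ge d^2\gamma^2$'' --- but the bound $|E| \ge \gamma d n$ does not follow from the stated hypothesis: it would require at least $n$ vertices on one side to be non-isolated, which is not given. Your double-counting route, via $|E| \le |L'|\cdot|R'|$ together with $|L'|,|R'| < |E|/(\gamma d)$, delivers the (weaker but sufficient) contradiction $|E| > \gamma^2 d^2$ directly. An equivalent and slightly shorter formulation: any edge has an endpoint $u \in L'$ with at least $\gamma d$ neighbors, all of which lie in $R'$, so $|R'| \ge \gamma d$; symmetrically $|L'| \ge \gamma d$; hence $|E| \ge \gamma d\,|L'| \ge (\gamma d)^2$. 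You are also right that the hypothesis $d \le n/\gamma$ plays no role in this lemma --- the paper seems to invoke it only to convert the unjustified $\gamma d n$ into $\gamma^2 d^2$, and your argument dispenses with it entirely. The remaining two assertions ($|E_i| \le \gamma d$ and $t \le \gamma^2 d^2$) you handle the same way the paper does.
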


\begin{proof}
We first show that Algorithm \ref{algorithm: decompose E} can be executed correctly, that is, whenever $E\ne \emptyset$ there always exists $v$ such that $1\le \mathsf{deg}(v)\le\gamma d$. We note that $1\le \mathsf{deg}(v)$ is obvious because $E\ne \emptyset$. If all vertices with non-zero degree have degree at least $\gamma d$, the total number of edges would be at least $\gamma d n\geq d^2 \gamma^2$, contradicting our assumption on the size of $E$. When the algorithm terminates, it is clear that each $E_i$ generates at most $\gamma d$ edges and the $E_i$'s are disjoint and so $t\le \gamma^2 d^2$.
\end{proof}

\begin{lemma}
\label{Lemma: independence}
Suppose that there are $t$ groups of (fixed) vectors $\{\v_1^{(k)},\dots,\v_{s_k}^{(k)}\}_{k\in [t]}\subset\mathbb{F}^d$ such that the vectors in each group are linearly independent (denoted by $\perp$). Let $\w_1,\dots,\w_r$ be random vectors in $\bbF^d$ such that each $\w_i$ is chosen uniformly at random from some set $\cS_i\subseteq \bbF^d$ with $|\cS_i|\geq |\bbF|^{(1-\gamma)d}$. Let $s = \max_k {s_k}$. When $s+r\le \gamma d$ for all $k$ and $t\le \gamma^2 d^2$, it holds that
\begin{equation*}
\Pr_{\w_1,\dots,\w_r} \left\{\v_1^{(k)}\perp \dots \perp\v_{s_k}^{(k)}\perp\w_1\perp \cdots \perp\w_r \text{ for all } k\in [t]\right \}\ge 1-\frac{\gamma^3 d^3}{|\bbF|^{(1-2\gamma)d}}.
\end{equation*}
\end{lemma}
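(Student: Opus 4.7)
\medskip
\noindent\emph{Proof plan.}
The plan is to bound the failure probability for each group separately and then union-bound over the $t$ groups. Fix a group $k$ and consider the filtration defined by revealing $\w_1,\w_2,\dots,\w_r$ one at a time. Let $F_i^{(k)}$ be the event that $\v_1^{(k)},\dots,\v_{s_k}^{(k)},\w_1,\dots,\w_i$ are linearly independent; by assumption, $F_0^{(k)}$ holds with probability $1$. Writing
\[
\Pr[\neg F_r^{(k)}] \le \sum_{i=1}^r \Pr[\neg F_i^{(k)}\mid F_{i-1}^{(k)}],
\]
I would reduce the problem to bounding each conditional term.

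The key estimate is as follows. Conditioning on $F_{i-1}^{(k)}$ and on any realization of $\w_1,\dots,\w_{i-1}$ consistent with $F_{i-1}^{(k)}$, the subspace
$U = \mathrm{span}\{\v_1^{(k)},\dots,\v_{s_k}^{(k)},\w_1,\dots,\w_{i-1}\}$
has dimension $s_k + i - 1 \le s + r \le \gamma d$, so $|U|\le |\bbF|^{\gamma d}$. Since $\w_i$ is uniform on $\cS_i$ with $|\cS_i|\ge |\bbF|^{(1-\gamma)d}$ and is independent of the previous draws,
\[
\Pr[\w_i\in U\mid \w_1,\dots,\w_{i-1}]
\le \frac{|U\cap \cS_i|}{|\cS_i|}
\le \frac{|\bbF|^{\gamma d}}{|\bbF|^{(1-\gamma)d}}
= \frac{1}{|\bbF|^{(1-2\gamma)d}}.
\]
Averaging over the conditioning gives $\Pr[\neg F_i^{(k)}\mid F_{i-1}^{(k)}]\le |\bbF|^{-(1-2\gamma)d}$, and summing over $i=1,\dots,r$ yields $\Pr[\neg F_r^{(k)}]\le r\,|\bbF|^{-(1-2\gamma)d}\le \gamma d\,|\bbF|^{-(1-2\gamma)d}$.

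Finally, a union bound over the $t\le \gamma^2 d^2$ groups (the groups share the same $\w_i$'s but union bound does not care about independence) gives
\[
\Pr\!\left[\exists k: \neg F_r^{(k)}\right]
\le t\cdot \gamma d \cdot |\bbF|^{-(1-2\gamma)d}
\le \frac{\gamma^3 d^3}{|\bbF|^{(1-2\gamma)d}},
\]
which is exactly the claimed bound. The only mildly delicate point is the second step: since the $\w_i$'s are uniform on arbitrary subsets $\cS_i$ rather than on all of $\bbF^d$, one cannot simply quote the uniform estimate $|U|/|\bbF|^d$; the hypothesis $|\cS_i|\ge |\bbF|^{(1-\gamma)d}$ together with the dimension bound $s+r\le\gamma d$ is precisely what makes the ratio $|U|/|\cS_i|$ decay like $|\bbF|^{-(1-2\gamma)d}$, and this is what everything else hinges on.
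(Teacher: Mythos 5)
Your proof is correct and takes essentially the same approach as the paper's; the only difference is cosmetic reordering (you sum over $i$ within each group $k$ and then union-bound over $k$, whereas the paper union-bounds over $k$ within each step $i$ and then telescopes the product over $i$), but both amount to bounding $rt$ per-step failure events, each with probability at most $|\bbF|^{-(1-2\gamma)d}$, using the same key estimate that the relevant span has dimension at most $s+i-1\le\gamma d$ while $|\cS_i|\ge|\bbF|^{(1-\gamma)d}$.
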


\begin{proof}
For fixed $\w_1,\dots,\w_{i-1}$ such that $\v_1^{(k)},\dots,\v_{s_k}^{(k)},\w_1,\dots,\w_{i-1}$ are linearly independent for all $k\in[t]$, the probability that $\w_i\in \cS_i$ is linearly independent of $\v_1^{(k)},\dots,\v_{s_k}^{(k)},\w_1,\dots,\w_{i-1}$ for all $k\in[t]$ is at least
$
1-\frac{t|\bbF|^{s_k+i-1}}{|\cS_i|} \geq 1-\frac{t|\bbF|^{s_k+i-1}}{|\bbF|^{(1-\gamma)d}} = 1-\frac{t}{|\bbF|^{(1-\gamma)d-(s_k+i-1)}} \geq 1 - \frac{t}{|\bbF|^{(1-\gamma)d-(s+i-1)}}.
$
Therefore, for all $k\in[t]$ with $t\le\gamma^2 d^2$, we have
\begin{align*}
& \quad\ \Pr_{\w_1,\dots,\w_r}\{\v_1^{(k)}\perp\cdots\perp\v_{s_k}^{(k)}\perp\w_1\perp\cdots\perp\w_r\text{ for all }k\}\\
&=  \prod_{i=2}^{r} \Pr_{\w_i}\{\v_1^{(k)}\perp\cdots\perp\v_{s_k}^{(k)}\perp \w_1 \perp \cdots \perp \w_i \ \text{ for all }k \mid \v_1^{(k)}\perp\cdots\perp\v_{s_k}^{(k)}\perp \w_1 \perp \cdots \perp \w_{i-1}\text{ for all }k\}
\\
&\qquad \times \Pr_{\w_1}\{\v_1^{(k)}\perp \cdots \perp\v_{s_k}^{(k)}\perp\w_1\text{ for all }k\}\\
&\ge \prod_{i=1}^r \left(1-\frac{t}{|\bbF|^{(1-\gamma)d-(s+i-1)}}\right)\\
&\ge \prod_{i=1}^r \left(1-\frac{\gamma^2 d^2}{|\bbF|^{(1-2\gamma)d}}\right)\quad\text{(by $s+i-1\le\gamma d$ and $t\le \gamma^2 d^2$)}\\
&\ge 1- \frac{r\gamma^2 d^2}{|\bbF|^{(1-2\gamma)d}}\quad\text{(by $(1-x)^t\ge 1-tx$ for $x\in(0,1)$)}\\
&\ge 1-\frac{\gamma^3 d^3}{|\bbF|^{(1-2\gamma)d}}.\quad\text{(since $r\le\gamma d$)}. \qedhere
\end{align*}
\end{proof}

When $|\cS_i| = |\bbF|^{d-d_i}$ for $d_i\leq \gamma d$, it follows from Lemma \ref{Lemma: independence} that the number of choices of the event
\begin{equation}
\label{equ: fraction}
\begin{split}
&\left|\left\{(\w_1,\dots,\w_r)\in \cS_1\times\cdots\times \cS_r: \v_1^{(k)}\perp \cdots \perp\v_{s_k}^{(k)}\perp\w_1 \perp \cdots \perp \w_r \text{ for all } k\right\}\right|\\
=&\Pr_{\w_1,\dots,\w_r}\left\{\v_1^{(k)}\perp \cdots \perp\v_{s_k}^{(k)}\perp\w_1 \perp \cdots \perp \w_r \text{ for all } k\right\}\cdot \prod_{i=1}^r|\cS_i|\\
\ge & \left(1-\frac{\gamma^3 d^3}{|\bbF|^{(1-2\gamma)d}}\right)\cdot \prod_{i=1}^r |\cS_i|\quad\text{(by Lemma \ref{Lemma: independence})}\\
= & \left(1-\frac{\gamma^3 d^3}{|\bbF|^{(1-2\gamma)d}}\right)|\bbF|^{rd-\sum_{i=1}^r d_i}.\quad\text{(recall that $|\cS_i|=|\bbF|^{d-d_i}$)}
\end{split}
\end{equation}

Based on this result, we have the following lemma.
\begin{algorithm}[ht]
\caption{Path for assigning subspace $H_v$ and random vector $\x_v$ to each node $v$}
\label{algorithm: path for assigning subspace and random vector for each node}
\begin{algorithmic}[1]
\INPUT Bipartite graph $G=(L\cup R,E)$, partition $E=E_1\cup\cdots \cup E_t$ and pivot nodes $\{w_t\}$ by Algorithm \ref{algorithm: decompose E}, observed entries $\x|_{E}$.
\OUTPUT An affine space $H_v$ of vectors for every node $v$ and a vector $\x_v\in H_v$ for every node $v$.
\State $H_v\gets \bbF^d$ for all $v$.
\State Set all nodes $v$ unassigned.
\For {$i\leftarrow t$ \textbf{down to} $1$}
	\State Let $v_1^{(i)},\dots,v_{|E_i|}^{(i)}$ be the non-pivot nodes in $E_i$ (i.e., the edges in $E_i$ are $(w_i, v_j^{(i)})$).
	\For {$j\leftarrow 1$\textbf{ to }$|E_i|$}
		\If{$v_j^{(i)}$ is unassigned} \label{alg:line:v_unassigned}
			\State $W_{v_j^{(i)}}\hspace{-0.1cm} \gets\hspace{-0.1cm} H_{v_j^{(i)}}\setminus\bigcup_{k\leq i: v_{j}^{(i)}\neq w_k} \hspace{-0.2cm} \operatorname{span}\{\x_{v_1}^{(i)}, \dots, \x_{v_{j-1}}^{(i)}, \text{previously assigned non-pivot nodes in }E_k\}$.
			\State Choose $w_{v_j^{(i)}}$ uniformly at random from $H_{v_j^{(i)}}$. \label{alg:line:nonpivot_vector_assignment}
			\If {$w_{v_j^{(i)}}\not\in W_{v_j^{(i)}}$}
				\State \textbf{abort}.
			\EndIf
			\State Set $v_j^{(i)}$ to be assigned.
		\EndIf
	\EndFor
	\State Let $H_{w_i}$ be the solution set to the linear system (w.r.t. $\x_{w_i}$): $\x_{w_i}^\top [\x_{v_1}^{(i)},\cdots ,\x_{v_{|E_i|}}^{(i)}]=(\x|_{E_i})^\top$.
\EndFor
\State Choose $\x_{w_s}$ uniformly from $H_{w_s}$ of dimension $d-|E_s|$ for all $s\in\cS_0=\{p\in[t]\mid w_p\text{ is
 unassigned}\}$\label{alg:line:pivot_vector_assignment}.
\State \Return $\{H_v\}$ and $\{\x_v\}$.
\end{algorithmic}
\end{algorithm}

\begin{lemma}
Let $\U,\V\sim \mathcal{U}_\bbF(n,d)$, where $\cU_{\mathbb{F}}(m,n)$ represents $m\times n$ i.i.d. uniform matrix over a finite field $\mathbb{F}$. Denote by $\cS$ any subset of $[n]\times [n]$ such that $|\cS|<\gamma^2 d^2$ for $\gamma\in (0,1/4)$ and $d\leq n/\gamma$. It holds that for any $\x\in\mathbb{F}^{|\cS|}$,
\begin{equation*}
\Pr[(\U\V^T)|_\cS=\x]-\frac{1}{|\bbF|^{|\cS|}}\ge -\frac{\gamma^5 d^5}{|\bbF|^{(1-2\gamma)d+|\cS|}}.
\end{equation*}
\end{lemma}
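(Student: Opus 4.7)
The plan is to reduce the probability $\Pr[(\U\V^\top)|_\cS = \x]$ to a counting problem on the bipartite graph $G=(L\cup R, E)$ where $L$ indexes the rows of $\U$, $R$ indexes the rows of $\V$, and each edge in $E$ corresponds to one constraint in $\cS$, so $|E|=|\cS|<\gamma^2 d^2$. Every $(\U,\V)$ pair consistent with $(\U\V^\top)|_\cS = \x$ can be built by choosing a vector $\x_v\in\bbF^d$ for each vertex $v$; the total number of pairs is $|\bbF|^{2nd}$, so it suffices to lower bound the number of valid assignments by $(1-\gamma^5 d^5/|\bbF|^{(1-2\gamma)d})\cdot|\bbF|^{2nd-|\cS|}$.

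First I would apply Lemma~\ref{lemma: partition of graph} (via Algorithm~\ref{algorithm: decompose E}) to $G$ to partition $E=E_1\cup\cdots\cup E_t$ with $t\le \gamma^2 d^2$, together with pivots $w_1,\dots,w_t$ such that $|E_i|\le\gamma d$ and every edge in $E_i$ is incident to $w_i$. Then I would invoke Algorithm~\ref{algorithm: path for assigning subspace and random vector for each node}, processing the groups in the reverse order $i=t,t-1,\dots,1$. For each $E_i$, the non-pivot endpoints $v_1^{(i)},\dots,v_{|E_i|}^{(i)}$ that have not yet been assigned are drawn uniformly from $H_{v_j^{(i)}}\subseteq\bbF^d$; these vectors must remain linearly independent within every pivot's incidence set in order that the subsequent linear system $\x_{w_i}^\top [\x_{v_1^{(i)}},\dots,\x_{v_{|E_i|}^{(i)}}] = (\x|_{E_i})^\top$ be consistent with solution space $H_{w_i}$ of dimension exactly $d-|E_i|$, i.e.\ of size $|\bbF|^{d-|E_i|}$. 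Pivots $w_s$ that never appear as non-pivots are then drawn uniformly from their $H_{w_s}$ at the end, and any vertex not incident to $\cS$ contributes $|\bbF|^d$ free choices.

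The combinatorial core is the independence estimate~\eqref{equ: fraction}: since each non-pivot assignment at stage $i$ participates in at most $t\le\gamma^2 d^2$ independence constraints against groups of size $s_k\le\gamma d$, the probability a uniform choice from $H_{v_j^{(i)}}$ lies in the "good" set $W_{v_j^{(i)}}$ is at least $1-\gamma^3 d^3/|\bbF|^{(1-2\gamma)d}$. Multiplying across the at most $\gamma^2 d^2$ groups and using $(1-x)^t\ge 1-tx$ yields that the algorithm aborts with probability at most $\gamma^2 d^2\cdot \gamma^3 d^3/|\bbF|^{(1-2\gamma)d} = \gamma^5 d^5/|\bbF|^{(1-2\gamma)d}$. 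Each successful run corresponds to a distinct valid $(\U,\V)$ realization, and the non-pivot and pivot contributions together with the $|\bbF|^d$-factors from the free rows give the total count
\[
N \;\ge\; \Bigl(1-\tfrac{\gamma^5 d^5}{|\bbF|^{(1-2\gamma)d}}\Bigr)\cdot |\bbF|^{2nd-|\cS|},
\]
after noting that $\sum_i |E_i| = |\cS|$ accounts for the total codimension of the pivot solution subspaces. Dividing by $|\bbF|^{2nd}$ gives the claimed bound.

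The main obstacle I anticipate is bookkeeping: verifying that the "rebased" affine subspaces $H_{v}$ (which shrink whenever a node has already served as a pivot for some earlier processed group) still have size at least $|\bbF|^{(1-\gamma)d}$ so that the hypothesis of Lemma~\ref{Lemma: independence} applies, and confirming that the exponent-arithmetic indeed telescopes to $2nd - |\cS|$ when summing $d$ for each free row plus $d - |E_i|$ for each pivot and $d$ for each non-pivot. Both facts follow from $|E_i|\le \gamma d$ and $t\le\gamma^2 d^2$ with $\gamma<1/4$, but they require a careful union bound across the $\gamma^2 d^2$ groups to avoid double-counting the independence events.
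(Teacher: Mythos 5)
Your proposal follows the same route as the paper: the bipartite-graph reformulation, the greedy pivot decomposition of Lemma~\ref{lemma: partition of graph}, the reverse-order assignment of Algorithm~\ref{algorithm: path for assigning subspace and random vector for each node}, and the counting estimate~\eqref{equ: fraction}; your reframing of the probability as a count of valid $(\U,\V)$ pairs divided by $|\bbF|^{2nd}$ is equivalent to the paper's direct probability lower bound (the $|\bbF|^d$ contributions from free rows cancel on both sides). One minor imprecision in your write-up: the $1-\gamma^3 d^3/|\bbF|^{(1-2\gamma)d}$ bound from Lemma~\ref{Lemma: independence} governs the joint event that \emph{all} fresh non-pivot vectors in a given group $E_i$ remain independent within every pivot's incidence set, not a single uniform choice from $H_{v_j^{(i)}}$; your subsequent multiplication across the $t\le\gamma^2 d^2$ groups then correctly yields the $\gamma^5 d^5$ failure probability, so the final bound is unaffected.
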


\begin{proof}
Consider a bipartite graph $G=(L\cup R,E)$ where $|L|=|R|=n$ and $(i,j)\in E$ if and only if $(i,j)\in\cS$. We run Algorithm \ref{algorithm: decompose E} on graph $G$. By Lemma \ref{lemma: partition of graph}, we obtain a sequence of edge sets $E_1\dots,E_t$ with $w_1,\dots,w_t$ (called \emph{pivot nodes}), such that
\begin{itemize}
\item[1.]
$\{E_i,\dots,E_t\}$ forms a partition of $E$;
\item[2.]
$|E_i|\le \gamma d$ for all $i$.
\end{itemize}
Since there is a one-by-one correspondence between the edges and the entries in $\cS$, we will not distinguish edges and entries in the rest of the proof.

We associate each node $v$ of $G$ with an affine space $H_v\subseteq \bbF^d$ and a random vector $\x_v\in H_v$ as in Algorithm \ref{algorithm: path for assigning subspace and random vector for each node}. Basically, Algorithm \ref{algorithm: path for assigning subspace and random vector for each node} first assigns the non-pivot nodes (to determine the affine subspace $H_{w_i}$) from the $E_t$ down to the $E_1$), and in the end assigns all unassigned pivot nodes.

In the following argument, we number the for-loop iterations in Algorithm \ref{algorithm: path for assigning subspace and random vector for each node} backwards, i.e., the for-loop starts with the $t$-th iteration and goes down to the first iteration. In the $i$-th iteration, let $r_i$ denote the number of nodes $v_j^{(i)}$ that are unassigned at the runtime of Line~\ref{alg:line:v_unassigned} and let $\#\cE_i$ denote the number of good choices (which do not trigger abortion) of Step~\ref{alg:line:nonpivot_vector_assignment}  over all $r_i$ nodes to be assigned.
Let $\#\cG$ be the number of possible choices of Step~\ref{alg:line:pivot_vector_assignment} of Algorithm \ref{algorithm: path for assigning subspace and random vector for each node} and $s_0=|\cS_0|$ be the number of assigned pivot nodes by Step~\ref{alg:line:pivot_vector_assignment}. Note that by the construction of Algorithm \ref{algorithm: decompose E}, the non-pivot nodes of $E_i$ cannot be the pivot nodes of $E_j$ for $j<i$. So Algorithm \ref{algorithm: path for assigning subspace and random vector for each node}, if terminated successfully, can find an assignment such that $(\U\V^\top)|_\cS=\x$. We now lower bound the success probability.

Let $d_j^{(i)}=d-\mathsf{dim}(H_{v_j^{(i)}})$, which is either $0$ or $|E_k|$ for some $k>i$. For any given realization $\x$, we have the following:

\begin{align*}
&\quad\ \Pr\{(\U \V^\top)|_\cS=\x\}\\
&\ge \frac{\#\cE_t\cdot \#\cE_{t-1} \cdots \#\cE_1}{|\bbF|^{d (r_t + \cdots + r_1)}}\frac{\#\cG}{|\bbF|^{d s_0}}\quad\mbox{(by rule of product and definition of $\#\cE_i$)}\\
&\ge \prod_{i=1}^{t} \frac{1}{|\bbF|^{d_1^{(i)}+ \cdots +d_{r_i}^{(i)}}}\left(1-\frac{\gamma^3 d^3}{|\bbF|^{(1-2\gamma)d}}\right)^t\cdot \frac{\#\cG}{|\bbF|^{d s_0}}\quad \mbox{(by Eqn. \eqref{equ: fraction})}\\
&\ge \frac{1}{|\bbF|^{\sum_{i=1}^t\sum_{j=1}^{r_i} d_j^{(i)}}}\left(1-\frac{\gamma^5 d^5}{|\bbF|^{(1-2\gamma)d}}\right)\cdot \frac{\#\cG}{|\bbF|^{d\times s_0}}\quad\mbox{(by $(1-x)^t\ge 1-tx$ for $x\in(0,1)$ and $t\le \gamma^2 d^2$)}\\
&=\frac{1}{|\bbF|^{\sum_{i=1}^t\sum_{j=1}^{r_i} d_j^{(i)}}}\left(1-\frac{\gamma^5 d^5}{|\bbF|^{(1-2\gamma)d}}\right)\cdot \frac{1}{|\bbF|^{\sum_{s\in\cS_0} |E_s|}}\quad\text{(by definition of $\#\cG$)}\\
&\ge \frac{1}{|\bbF|^{|E_1|+\cdots+|E_t|}}\left(1-\frac{\gamma^5 d^5}{|\bbF|^{(1-2\gamma)d}}\right),
\end{align*}
where the last inequality holds because $|E_1|+\dots+|E_t|=\sum_{j=1}^t\sum_{i=1}^{r_j} d_i^{(j)}+\sum_{s\in\cS_0}|E_s|$ as every pivot and non-pivot node must be assigned exactly once by Algorithm \ref{algorithm: path for assigning subspace and random vector for each node} upon successful termination. (Recall that $d_i^{(j)}$ is either equal to $0$ when $v_j^{(i)}$ is non-pivotal, or equal to $|E_k|$ when $v_j^{(i)}=w_k$.)
\end{proof}

Denote by $\cS\subset[n]\times [n]$ a set of indices of an $n\times n$ matrix. For any distribution $\cL$ over $\bbF^{n\times n}$, define $\cL(\cS)$ on $\bbF^{|\cS|}$ as the marginal distribution of $\cL$ on the entries of $\cS$, namely,
\begin{equation*}
(\X_{p_1,q_1},\X_{p_2,q_2},\dots,\X_{p_{|\cS|},q_{|\cS|}})\sim\cL(\cS),\qquad \X\sim\cL.
\end{equation*}
Now we are ready to show a lower bound of robust testing problem over any finite field.
\begin{theorem}
\label{theorem: total variation distance of sampled matrices for rank testing lower bound over finite field}
Suppose that $\bbF$ is a finite field and $\gamma\in(0,1/4)$ is an absolute constant. Let $\U,\V\sim\cU_{\bbF}(n,d)$ and $\W\sim\cU_{\bbF}(n,n)$, where $\cU_{\mathbb{F}}(m,n)$ represents $m\times n$ i.i.d. uniform matrix over a finite field $\mathbb{F}$. Consider two distributions $\cL_1$ and $\cL_2$ over $\bbF^{n\times n}$ defined by $\U\V^\top$ and $\W$, respectively. Let $\cS\subset [n]\times [n]$. When $|\cS|<\gamma^2 d^2$, it holds that
\begin{equation*}
d_{TV}(\cL_1(\cS),\cL_2(\cS))\le Cd^5|\bbF|^{-cd},
\end{equation*}
where $C,c>0$ are constants depending on $\gamma$, and $d_{TV}(\cdot,\cdot)$ represents the total variation distance between two distributions.
\end{theorem}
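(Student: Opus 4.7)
The plan is to directly convert the pointwise pmf bound from the preceding lemma into a total variation bound, using the fact that $\cL_2(\cS)$ is the uniform distribution on $\bbF^{|\cS|}$.

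First I would observe that since $\W\sim \cU_\bbF(n,n)$ has i.i.d.\ uniform entries, its marginal on any set $\cS$ of entries is also uniform, so $\Pr[\W|_\cS = \x] = |\bbF|^{-|\cS|}$ for every $\x\in\bbF^{|\cS|}$. Using the identity
\[
d_{TV}(\cL_1(\cS),\cL_2(\cS)) = \sum_{\x\in\bbF^{|\cS|}}\bigl(\Pr[\W|_\cS=\x] - \Pr[(\U\V^\top)|_\cS=\x]\bigr)_+,
\]
I would apply the previous lemma, which gives
\[
\Pr[\W|_\cS=\x] - \Pr[(\U\V^\top)|_\cS=\x] \le \frac{\gamma^5 d^5}{|\bbF|^{(1-2\gamma)d + |\cS|}},
\]
for every $\x$. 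Since the upper bound is nonnegative, the positive part is bounded by the same quantity.

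Next I would sum over the $|\bbF|^{|\cS|}$ possible values of $\x$, obtaining
\[
d_{TV}(\cL_1(\cS),\cL_2(\cS)) \le |\bbF|^{|\cS|}\cdot \frac{\gamma^5 d^5}{|\bbF|^{(1-2\gamma)d + |\cS|}} = \frac{\gamma^5 d^5}{|\bbF|^{(1-2\gamma)d}},
\]
which is of the claimed form with $C = \gamma^5$ and $c = 1-2\gamma > 0$ (using $\gamma\in(0,1/4)$).

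There is essentially no obstacle here: the heavy lifting was already done in the pivot-node assignment lemma that produced the pointwise lower bound on $\Pr[(\U\V^\top)|_\cS=\x]$. The only minor thing to be careful about is to express the total variation distance as the sum of the positive parts of the signed difference (rather than the absolute values), so that the one-sided lower bound from the previous lemma is exactly what is needed; this avoids having to also bound $\Pr[(\U\V^\top)|_\cS=\x]$ from above.
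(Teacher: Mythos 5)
Your proposal is correct and follows the same route as the paper: the paper defines the set $\cX$ of observations where the rank-$d$ probability falls below the uniform value, which is exactly your positive-part decomposition of the total variation distance, and then both proofs bound each of the at most $|\bbF|^{|\cS|}$ terms by the pointwise estimate from the pivot-node lemma. No gaps; the argument is as short as you expected.
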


\begin{proof}
Let
\begin{equation*}
\cX=\left\{\x\in\mathbb{F}^{|\cS|} \left|\ \Pr\left[(\U\V^\top)|_{\cS}=\x\right]<\frac{1}{|\bbF|^{|\cS|}}\right.\right\}.
\end{equation*}
It follows from the definition of total variation distance that
\begin{equation*}
d_{TV}(\cL_1(\cS),\cL_2(\cS))=\sum_{\x\in\cX}\left[\frac{1}{|\bbF|^{|\cS|}}-\Pr[(\U\V^\top)|_{\cS}=\x]\right]\le \sum_{\x\in\cX}\frac{\gamma^5 d^5}{|\bbF|^{(1-2\gamma)d}}\frac{1}{|\bbF|^{|\cS|}}\le \frac{\gamma^5 d^5}{|\bbF|^{(1-2\gamma)d}},
\end{equation*}
where the last inequality holds since $|\cX|\le |\bbF|^{|\cS|}$.
\end{proof}

Based on the above theorem, we have the following lower bound for the rank testing problem over finite field.
\begin{theorem}
\label{theorem: lower bound of rank testing over finite fields under sampling model}
Let $d\le \sqrt{\epsilon}n$. Any non-adaptive algorithm for Problem \ref{problem: problem of property testing of rank} over any finite field $\bbF$ requires $\Omega(d^2/\epsilon)$ queries.
\end{theorem}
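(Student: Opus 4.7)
The plan is to establish the lower bound via a planting/padding reduction from the $t\times t$ hard instance of Theorem \ref{theorem: total variation distance of sampled matrices for rank testing lower bound over finite field} to the $n\times n$ testing problem. Set $t=c_1\sqrt{\epsilon}n$ for a constant $c_1$ chosen below. By Yao's minimax principle, it suffices to construct distributions $\cD_\YES,\cD_\NO$ on $\bbF^{n\times n}$ and show that any deterministic non-adaptive $q$-query tester with $q\le c_0d^2/\epsilon$ fails with constant probability. Both $\cD_\YES$ and $\cD_\NO$ first sample uniformly random row and column subsets $R,C\subset[n]$ of size $t$; in $\cD_\YES$ we place $\U\V^\top$ (for $\U,\V\sim\cU_\bbF(t,d)$) into the $R\times C$ block and pad with zeros, while in $\cD_\NO$ we place $\W\sim\cU_\bbF(t,t)$ instead.

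The $\YES$-side is immediate: every matrix drawn from $\cD_\YES$ has rank at most $d$ by construction. For the $\NO$-side we need a draw from $\cD_\NO$ to be $\epsilon$-far from rank at most $d$ with constant probability. If a change set $S$ reduces a sample $\A\sim\cD_\NO$ to rank at most $d$, then its restriction to the $R\times C$ block is a rank-$\leq d$ perturbation of $\W$ with $|S\cap(R\times C)|$ corrupted entries, so $|S|\ge\cR_\W^\bbF(d)$. A standard rigidity counting argument (comparing the number of rank-at-most-$d$ matrices in $\bbF^{t\times t}$, which is bounded by roughly $|\bbF|^{d(2t-d)}$, against the volume of Hamming balls) gives $\cR_\W^\bbF(d)\gtrsim (t-d)^2$ (up to a logarithmic factor) with constant probability over $\W$; since $t-d\gtrsim c_1\sqrt{\epsilon}n$, choosing $c_1$ a sufficiently large absolute constant yields $|S|\ge \epsilon n^2$.

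For indistinguishability, fix a deterministic query set $\cS$ with $|\cS|=q\le c_0d^2/\epsilon$ and couple $\cD_\YES$ and $\cD_\NO$ on the same $R,C$. Outside $R\times C$ both distributions output zero, so the conditional TV distance is controlled entirely by $\cS_{R,C}:=\cS\cap(R\times C)$. By linearity, $\bbE_{R,C}|\cS_{R,C}|\le q(t/n)^2 = qc_1^2\epsilon\le c_0c_1^2d^2$, so Markov gives $\Pr_{R,C}[|\cS_{R,C}|\ge \gamma^2d^2]\le c_0c_1^2/\gamma^2$, which is an arbitrarily small constant once $c_0$ is taken small enough. Conditioned on $|\cS_{R,C}|<\gamma^2d^2$, Theorem \ref{theorem: total variation distance of sampled matrices for rank testing lower bound over finite field} applied to the $t\times t$ block bounds the conditional observation TV distance by $Cd^5|\bbF|^{-cd}=o(1)$ for $d$ growing (small $d$ is handled trivially, as the claim is asymptotic). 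Putting these together shows $d_{TV}(\cD_\YES|_\cS,\cD_\NO|_\cS)$ is a small constant, which by Yao completes the lower bound.

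The main obstacle is the $\NO$-side guarantee: producing a bound of $\epsilon n^2$ changes, rather than merely $\epsilon n^2/\text{polylog}$, requires a tight rigidity bound for random matrices and a careful choice of the padding constant $c_1$ so that the slack $(t-d)^2$ outpaces any logarithmic overhead in the counting argument, while simultaneously respecting $t\le n$. The remaining pieces, namely coupling on $(R,C)$, the Markov bound on the overlap, and the invocation of the earlier TV-distance theorem, are largely bookkeeping once the hard instance and the pad size are chosen correctly.
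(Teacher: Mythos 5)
Your proposal takes the same route as the paper: plant a $t\times t$ hard instance ($\U\V^\top$ vs.\ $\W$) at a uniformly random location in an otherwise zero $n\times n$ matrix, bound the overlap of the query set with the planted block via a Markov/coupling argument, and invoke the TV-distance theorem on the planted block. You are in fact more careful than the paper here: the published version compresses the coupling and the Markov step into a single sentence, whereas you spell out the expected overlap $\bbE|\cS_{R,C}|\le q(t/n)^2$ and the conditioning on $|\cS_{R,C}|<\gamma^2 d^2$ explicitly. Your choice $t=c_1\sqrt\epsilon n$ with $c_1>1$ is also cleaner than the paper's $t=\sqrt\epsilon n$, which leaves no room when $d=\sqrt\epsilon n$.

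The one place your reasoning wobbles is the \textbf{NO}-side rigidity step. You say the rigidity of a random $\W$ is $\gtrsim(t-d)^2$ ``up to a logarithmic factor'' and that this can be handled by taking $c_1$ large; but a constant $c_1$ cannot outpace a $\log n$-type overhead such as the $\log_p n$ factor appearing in Valiant's bound (Lemma \ref{lemma: matrix rigidity over GF(p)}), since $t$ grows with $n$. If that log factor were actually present, your argument would break. Fortunately the concern is spurious in the regime you are in: the $\log_p n$ penalty in Valiant's theorem is only needed when the target rank $r$ is close to $t$. For $d\ll t$, the counting argument you sketch gives the clean bound directly. There are at most $|\bbF|^{2td}$ matrices of rank $\le d$, and the Hamming ball of radius $s=\alpha t^2$ has size at most $\binom{t^2}{s}(|\bbF|-1)^s\le |\bbF|^{\alpha t^2(\log_{|\bbF|}(e/\alpha)+1)}$. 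Choosing a fixed small $\alpha$ (e.g.\ $\alpha = 1/10$, which works for every $|\bbF|\ge 2$) and requiring $d\le t/10$ makes the product of these two quantities a vanishing fraction of $|\bbF|^{t^2}$, so $\cR_\W^\bbF(d)\ge\alpha t^2 = \alpha c_1^2\epsilon n^2$ with probability $\ge 0.99$ over $\W$, with \emph{no} log penalty. Taking $c_1\ge \max\{10,\,\alpha^{-1/2}\}$ then gives $\epsilon$-farness outright (and in any case an $\Omega(\epsilon)$-far lower bound is an $\Omega(\epsilon)$-far lower bound up to constants). So the step is fine, but for a different reason than the one you gave: you should drop the ``choose $c_1$ large to beat the log'' framing and instead observe that the log is absent when $d\ll t$.
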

\begin{proof}
We first show that for constant $\epsilon$, any non-adaptive algorithm for Problem \ref{problem: problem of property testing of rank} over finite field $\bbF$ requires $\Omega(d^2)$ queries. Note that $\W\sim\cU_{\bbF}(n,n)$ is $\epsilon$-far from having rank less than $d$. It follows immediately from the preceding theorem that any algorithm which solves the matrix rank testing problem over a finite field must read $\Omega(d^2)$ entries; otherwise when $d$ is large enough, it will hold that $d_{TV}(\cL_1(\cS),\cL_2(\cS))<1/4$, contradicting the correctness of the algorithm on distinguishing $\cL_1$ from $\cL_2$.

We now prove the case for arbitrary $\epsilon$. Denote by $\A$ and $\B$ the two hard instances in Theorem \ref{theorem: total variation distance of sampled matrices for rank testing lower bound over finite field}. We construct two hard instances $\C$ and $\D$ by uniformly at random planting the above-mentioned hard instances $\A$ and $\B$ of dimension $\sqrt{\epsilon}n\times \sqrt{\epsilon}n$, respectively, and padding zeros everywhere else. Note that $\D$ being $\epsilon$-far from rank $d$ is equivalent to $\B$ being constant-far from rank $d$. Suppose that we can request $cd^2/\epsilon$ queries with a small absolute constant $c$ to distinguish the ranks of the hard instances $\C$ and $\D$, then in expectation (and with high probability by a Markov bound) we can request $cd^2$ queries of the hard instances $\A$ and $\B$ to distinguish their ranks, which leads to a contradiction.
\end{proof}

\subsection{Lower Bounds over Real Field under the Sampling Model}
The \emph{rigidity} of a matrix $\A$ over a field $\F$, denoted by $\cR_{\A}^{\F}(r)$, is the least number of entries of $\A$ that must be changed in order to reduce the rank of $\A$ to a value at most $r$:
$
\cR_{\A}^{\F}(r):=\min\{\|\C\|_0\mid\rank_{\F}(\A+\C)\le r\}.
$
We first cite the following lemma and theorem.

\begin{lemma}[Matrix Rigidity, Theorem 6.4, \cite{valiant1977graph}]
\label{lemma: matrix rigidity over reals}
The real $n\times n$ i.i.d. Gaussian matrix $\G$ is of rigidity $\cR_{\G}^{\R}(r)=\Omega((n-r)^2)$ with probability $1$.
\end{lemma}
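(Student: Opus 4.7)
The plan is to prove this via a classical dimension-counting argument in algebraic geometry, in the spirit of Valiant. First I would reformulate the event: $\cR_\G^\R(r) \le s$ is equivalent to the existence of matrices $\B$ and $\C$ with $\rank(\B) \le r$, $\|\C\|_0 \le s$, and $\G = \B + \C$. Hence $\{\cR_\G^\R(r) \le s\} = \{\G \in \bigcup_{S} V_S\}$, where the union runs over $S \subseteq [n]\times[n]$ with $|S| \le s$ and
\[
V_S := \{\B + \C : \B, \C \in \R^{n \times n},\ \rank(\B) \le r,\ \operatorname{supp}(\C) \subseteq S\}.
\]

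Next I would bound the dimension of each $V_S$. The set of $n\times n$ real matrices of rank at most $r$ is (the real points of) an irreducible algebraic variety of dimension $r(2n-r)$: it is the image of the polynomial map $(\U, \V) \mapsto \U \V^\top$ from $\R^{n\times r} \times \R^{n\times r}$ (of total dimension $2nr$) modulo the $r^2$-dimensional $GL_r$ symmetry $(\U,\V) \mapsto (\U \M^{-1},\V \M^\top)$. The set of $\C$ with $\operatorname{supp}(\C) \subseteq S$ is a linear subspace of dimension $|S| \le s$. Since $V_S$ is the image under the addition map of an algebraic set of dimension at most $r(2n-r) + s$, its Zariski closure is an algebraic variety of dimension at most $r(2n-r) + s$.

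Now whenever $s < (n-r)^2$, we have $r(2n-r) + s < r(2n-r) + (n-r)^2 = n^2$, so $V_S$ is contained in a proper algebraic subvariety of $\R^{n^2}$. Such a subvariety lies in the zero set of a nonzero polynomial on $\R^{n^2}$ and therefore has Lebesgue measure zero. Since $\G$ has a Gaussian density, which is absolutely continuous with respect to Lebesgue measure, $\Pr[\G \in V_S] = 0$.

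Finally, a union bound over the finitely many $S$ of size at most $s$ yields $\Pr[\cR_\G^\R(r) \le s] = 0$ for every $s < (n-r)^2$, so $\cR_\G^\R(r) \ge (n-r)^2 = \Omega((n-r)^2)$ almost surely. The only subtle point is the dimension claim for the rank variety: the image of a polynomial map is only constructible in general, so strictly speaking one argues with the Zariski closure, but since a proper Zariski-closed subset of $\R^{n^2}$ still has Lebesgue measure zero, the conclusion is unaffected. The rest is bookkeeping.
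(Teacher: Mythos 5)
Your proposal is correct, and in fact the paper offers no proof of its own for this lemma --- it simply cites Valiant's Theorem 6.4. Your dimension-counting argument is precisely the classical argument behind that theorem: the rank-$\le r$ determinantal variety in $\R^{n\times n}$ has dimension $n^2-(n-r)^2$, so adding a sparsity pattern of size $s<(n-r)^2$ still lands in a proper algebraic subvariety of $\R^{n^2}$, which is Lebesgue-null; a finite union over supports and over integers $s<(n-r)^2$ then gives $\cR_\G^\R(r)\ge(n-r)^2$ almost surely (even with the sharp constant $1$, strictly stronger than the $\Omega$ in the statement). Your caveat about images of polynomial maps being only constructible, and hence passing to the Zariski closure, is exactly the right point to flag and is handled correctly.
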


\begin{theorem}[Theorem 3.5, \cite{li2014sketching}]
\label{theorem: hardness of distinguishing Gaussian noise}
Let $\U,\V\sim \cG(n,d)$ and $\G\sim \cG(n,n)$. Consider two distributions $\cL_1$ and $\cL_2$ over $\R^{n\times n}$ defined by $\U\V^\top$ and $\U\V^\top+n^{-14}\G$, respectively. Let $\cS\subset [n]\times [n]$. Whenever $|\cS|\le d^2$, it holds that
\begin{equation*}
d_{TV}(\cL_1(\cS),\cL_2(\cS))\le C|S|(n^{-2}+dc^d),
\end{equation*}
where $C>0$ and $0<c<1$ are absolute constants.
\end{theorem}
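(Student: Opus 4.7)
The plan is to bound the pointwise difference between the two densities $f_1,f_2$ of $(\U\V^\top)|_\cS$ and $(\U\V^\top+n^{-14}\G)|_\cS$ on $\R^{|\cS|}$, then integrate. The key observation is that $\cL_2(\cS)$ is the convolution of $\cL_1(\cS)$ with a very narrow Gaussian kernel $n^{-14}\cN(0,I_{|\cS|})$, so the TV distance is controlled by how smoothly $f_1$ varies — together with exceptional events where this smoothness fails.

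The first step is to adapt the bipartite-graph decomposition used in the finite-field lower bound. Form the graph $G=(L\cup R,E)$ with edges indexed by $\cS$, and run Algorithm~\ref{algorithm: decompose E} to get a partition $E=E_1\cup\cdots\cup E_t$ with pivot nodes $w_1,\dots,w_t$, such that $t\le d^2$ and $|E_i|\le \gamma d$ for small $\gamma$. The density $f_1(\x)$ factors by processing blocks from $E_t$ down to $E_1$: conditionally on the Gaussian vectors assigned to all non-pivot nodes of $E_i$ and everything processed later, the observation constraints on $E_i$ become a system of $|E_i|$ linear equations in the pivot vector $\u_{w_i}\in\R^d$, whose solution set is an affine subspace of codimension $|E_i|$. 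The contribution to $f_1(\x)$ from block $E_i$ is the standard Gaussian density on $\R^d$ evaluated against this linear system, normalized by a Jacobian $|\det M_i|^{-1}$, where $M_i$ is the $|E_i|\times|E_i|$ matrix built from the non-pivot vectors. The same factorization goes through for $f_2(\x)$ verbatim, except each observation $x_e$ with $e\in E_i$ is replaced by $x_e-n^{-14}g_e$ when forming the linear system.

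With the factorization in hand, the pointwise difference $|f_1(\x)-f_2(\x)|$ is controlled in two regimes. Outside a ``bad event'' $\cB$ where some $M_i$ is nearly singular — an event of total probability at most $O(t\cdot dc^d)$ by Gaussian anti-concentration in $d$ dimensions — the densities are smooth in the observations with a Lipschitz constant polynomial in the condition numbers of the $M_i$'s. A direct Taylor expansion of the Gaussian density shows that shifting each $x_e$ by the Gaussian perturbation $n^{-14}g_e$ changes the density by $O(n^{-14})$ per edge, which after summing gives a term dominated by $C|\cS|n^{-2}$ once one trims $O(|\cS|n^{-2})$ tail mass of $\G$. Inside $\cB$, one simply bounds $\int |f_1-f_2|$ crudely by $2\Pr[\cB]\lesssim |\cS|\cdot dc^d$. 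Combining the two regimes yields the claimed $C|\cS|(n^{-2}+dc^d)$.

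The main obstacle is Step~3 — the quantitative Lipschitz control on $f_1$. One must carefully track how the Jacobian factors $|\det M_i|^{-1}$ propagate through the recursion, identify a clean event (of probability at most $O(n^{-2})$ per edge) outside of which all $M_i$'s have least singular values bounded below by some $\operatorname{poly}(1/n)$, and verify that this threshold is large enough to make the Lipschitz bound tighter than the $n^{-14}$ noise scale. A secondary subtlety is handling pivot nodes left unassigned by Algorithm~\ref{algorithm: path for assigning subspace and random vector for each node}: one must integrate their free Gaussian mass against both $f_1$ and $f_2$ identically so that the normalizations agree exactly and only the edge-by-edge perturbation contributes to the TV distance.
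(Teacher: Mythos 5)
This theorem is quoted from \cite{li2014sketching} (Theorem~3.5 there) and used as a black box; the paper you are reading does not contain a proof of it. So there is no in-paper argument to compare yours against, and I can only evaluate the proposal on its own terms.

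On its own terms the proposal is a plausible high-level sketch, but the two steps you flag as the ``main obstacles'' are genuine gaps that you have not closed, and one of your auxiliary claims is suspect. First, the block-by-block factorization of the density over $\R$ is stated but not verified: unlike the finite-field case, the contribution of block $E_i$ is not simply a count, but a marginal density of a $d$-dimensional Gaussian pushed through the $|E_i|$ linear functionals $\u_{w_i}\mapsto \langle \u_{w_i},\x_{v_j^{(i)}}\rangle$, whose volume factor is $\det(M_iM_i^\top)^{-1/2}$ with $M_i$ the $|E_i|\times d$ matrix of non-pivot rows --- not $|\det M_i|^{-1}$ of a square $|E_i|\times|E_i|$ matrix as you write, and the recursion over blocks couples these Jacobians in a way you have not controlled. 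Second, and more importantly, the bad event probability does not obviously produce the $dc^d$ term: each $M_i$ is a \emph{rectangular} $|E_i|\times d$ Gaussian-like matrix with $|E_i|\le\gamma d\ll d$, and for such a matrix $\sigma_{\min}(M_i)=\Omega(\sqrt d)$ with probability $1-e^{-\Omega(d)}$, which is much stronger than the $c^d$ you invoke ``by Gaussian anti-concentration in $d$ dimensions''; so if the $dc^d$ term in the claimed bound is real, it must come from somewhere else in the argument (e.g.\ a fully observed $d\times d$ block, or the pivot-assignment step), and your accounting does not locate it. Finally, the Taylor/Lipschitz step --- that shifting each observation by $n^{-14}g_e$ moves the density by $O(n^{-2})$ after trimming tail mass --- is asserted but not derived; you would need a quantitative bound showing that the relevant Lipschitz constant (which scales like an inverse power of the $\sigma_{\min}(M_i)$'s) is $\le n^{12}$ or so outside a $\mathrm{poly}(1/n)$-probability event, and that computation is exactly the heart of the proof, which is still missing.
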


Now we are ready to prove the sample complexity lower bound of rank testing over the reals in the sampling model.
\begin{theorem}
\label{theorem: lower bound of rank testing over reals}
Let $d\le \sqrt{\epsilon}n$. Any non-adaptive algorithm for Problem \ref{problem: problem of property testing of rank} over $\R$ requires $\Omega(d^2/\epsilon)$ queries.
\end{theorem}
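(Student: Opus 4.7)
The plan is to mirror the two-stage argument used over finite fields in Theorem~\ref{theorem: lower bound of rank testing over finite fields under sampling model}: first establish an $\Omega(d^2)$ lower bound in the constant-$\epsilon$ regime directly from Lemma~\ref{lemma: matrix rigidity over reals} and Theorem~\ref{theorem: hardness of distinguishing Gaussian noise}, and then lift it to an $\Omega(d^2/\epsilon)$ bound for arbitrary $\epsilon$ via a random planting reduction. By Yao's minimax principle, it suffices to exhibit two distributions over $\R^{n\times n}$, one supported on rank-$d$ matrices and one supported on matrices $\Omega(\epsilon)$-far from rank $d$, whose induced marginals on any fixed query set of the target size are statistically close.

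For the constant-$\epsilon$ case I take the distributions $\cL_1$ (law of $\U\V^\top$) and $\cL_2$ (law of $\U\V^\top+n^{-14}\G$) from Theorem~\ref{theorem: hardness of distinguishing Gaussian noise}. Samples from $\cL_1$ have rank exactly $d$ almost surely. The key step is a short rigidity-of-a-sum argument showing that $\cL_2$-samples are $\Omega(1)$-far from rank $d$: if $\C$ satisfies $\rank(\U\V^\top+n^{-14}\G+\C)\le d$, then the matrix $n^{-14}\G+\C=(\U\V^\top+n^{-14}\G+\C)-\U\V^\top$ has rank at most $2d$, equivalently $\rank(\G+n^{14}\C)\le 2d$, so Lemma~\ref{lemma: matrix rigidity over reals} yields $\|\C\|_0=\|n^{14}\C\|_0\ge \cR_{\G}^{\R}(2d)=\Omega((n-2d)^2)$, which is $\Omega(n^2)$ whenever $d\le n/4$. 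Combined with the TV bound $d_{TV}(\cL_1(\cS),\cL_2(\cS))\le C|\cS|(n^{-2}+dc^d)$ from Theorem~\ref{theorem: hardness of distinguishing Gaussian noise}, any non-adaptive tester reading $|\cS|\le c_0 d^2$ entries has vanishing distinguishing advantage for a sufficiently small constant $c_0$, giving the desired $\Omega(d^2)$ bound for constant $\epsilon$.

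To handle general $\epsilon$ I construct planted instances $\C,\D\in\R^{n\times n}$ by drawing a $\sqrt{\epsilon}n\times\sqrt{\epsilon}n$ sample from $\cL_1$ or $\cL_2$, respectively, embedding it into a uniformly random $\sqrt{\epsilon}n\times\sqrt{\epsilon}n$ principal submatrix, and padding zeros elsewhere. The planted $\cL_1$-instance has rank $d$, while the rigidity argument above applied at dimension $\sqrt{\epsilon}n$ shows that the planted $\cL_2$-instance requires $\Omega((\sqrt{\epsilon}n-2d)^2)=\Omega(\epsilon n^2)$ entry changes to reach rank $d$, i.e., it is $\Omega(\epsilon)$-far from rank $d$. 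Now suppose for contradiction that some non-adaptive tester $\cA$ solves Problem~\ref{problem: problem of property testing of rank} on $(\C,\D)$ with $c_1 d^2/\epsilon$ queries. Since the plant is a uniformly random principal submatrix of density $\epsilon$, the expected number of $\cA$'s queries that fall inside the planted block is $c_1 d^2$, so by Markov with probability at least $1/2$ at most $2c_1 d^2$ do. Simulating $\cA$ by answering $0$ for out-of-plant queries and forwarding in-plant queries to an $(\sqrt{\epsilon}n)\times(\sqrt{\epsilon}n)$ oracle would yield a non-adaptive $\cO(d^2)$-query tester for the constant-$\epsilon$ problem at dimension $\sqrt{\epsilon}n$, contradicting the first stage once $c_1$ is chosen small enough relative to $c_0$.

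The main subtlety is the rigidity-of-a-sum step and ensuring that the $\Omega(1)$-farness of the base hard instance survives the planting. This forces the constants to depend on the ratio $d/(\sqrt{\epsilon}n)$ through the $(n-2d)^2$ factor, but for $d\le \sqrt{\epsilon}n/4$ everything goes through cleanly, and the full regime $d\le \sqrt{\epsilon}n$ stated in the theorem follows by standard rescaling of absolute constants.
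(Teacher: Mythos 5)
Your proposal is correct and takes essentially the same route as the paper: it uses the same hard pair $\cL_1=\U\V^\top$ versus $\cL_2=\U\V^\top+n^{-14}\G$ from Theorem~\ref{theorem: hardness of distinguishing Gaussian noise}, the same rank-subadditivity reduction of the rigidity of the perturbed instance to $\cR_{\G}^{\R}(2d)$ via Lemma~\ref{lemma: matrix rigidity over reals}, and the same $\sqrt{\epsilon}n\times\sqrt{\epsilon}n$ random-planting-plus-Markov reduction to pass from constant $\epsilon$ to general $\epsilon$. The one difference is cosmetic: you explicitly flag that the $(n-2d)^2$ factor forces $d$ somewhat below $\sqrt{\epsilon}n$ and absorb this into constants, a point the paper leaves implicit.
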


\begin{proof}
We first show that for constant $\epsilon$, any non-adaptive algorithm for Problem \ref{problem: problem of property testing of rank} over $\R$ requires $\Omega(d^2)$ queries. Note that Theorem \ref{theorem: hardness of distinguishing Gaussian noise} provides two hard instances for distinguishing a rank-$d$ matrix (of the form $\A=\U\V^\top$) from a rank-$n$ matrix (of the form $\B=\U\V^\top+n^{-14}\G$), where $\U,\V\sim\cG(n,d)$ and $\G\sim\cG(n,n)$. For our purpose, we only need to show that the rank-$n$ matrix $\B=\U\V^\top+n^{-14}\G$ has rigidity $\cR_{\B}^{\R}(d)=\Omega(n^2)$.
Denote by $\rank_{\ell}(\B)=\min_{\|\S\|_0=\ell} \rank(\B+\S)$. We note that
\begin{equation*}
\begin{split}
d&\ge \rank_{\cR_{\B}^{\R}(d)}(\B)\\
&=\min_{\|\S\|_0=\cR_{\B}^{\R}(d)}\rank(\U\V^\top+n^{-14}\G+\S)\\
&\ge \min_{\|\S\|_0=\cR_{\B}^{\R}(d)}\rank(n^{-14}\G+\S)-\rank(\U\V^\top)\\
&\ge \min_{\|\S\|_0=\cR_{\B}^{\R}(d)}\rank(n^{-14}\G+\S)-d.
\end{split}
\end{equation*}
Therefore, $\min_{\|\S\|_0=\cR_{\B}^{\R}(d)}\rank(n^{-14}\G+\S)\le 2d$, i.e., $\cR_{\B}^{\R}(d)\ge \cR_{n^{-14}\G}^{\R}(2d)$. By Lemma \ref{lemma: matrix rigidity over reals}, we have $\cR_{n^{-14}\G}^{\R}(2d)=\Omega(n-2d)^2=\Omega(n^2)$. So $\cR_{\B}^{\R}(d)=\Omega(n^2)$.

We now prove the case for arbitrary $\epsilon$. We construct two hard instances $\C$ and $\D$ by uniformly at random planting the above-mentioned hard instances $\A$ and $\B$ of dimension $\sqrt{\epsilon}n\times \sqrt{\epsilon}n$, respectively, and padding zeros everywhere else. Note that $\D$ being $\epsilon$-far from rank $d$ is equivalent to $\B$ being constant-far from rank $d$. Suppose that we can request $cd^2/\epsilon$ queries with a small absolute constant $c$ to distinguish the ranks of the hard instances $\C$ and $\D$, then in expectation (and with high probability by a Markov bound) we can request $cd^2$ queries of the hard instances $\A$ and $\B$ to distinguish their ranks, which leads to a contradiction.
\end{proof}

\subsection{Lower Bounds over Finite Fields in the Sensing Model}
\label{section: Lower Bounds over Finite Fields under the Sensing Model}

In this section, we provide a lower bound for the rank testing problem in the sensing model over any finite field $\mathbb{F}$. The sensing problem can query the underlying matrix $\A$ in the form of $\langle \A,\X_{i} \rangle$ for any sequence of (randomized or deterministic) sensing matrices $\{\X_i\}$. The algorithms for querying entries of $\A$ are a special case of matrix sensing problem if we set $\X_i=\e_p\e_q^\top$ for some $(p,q)$. The problem can be stated more formally as follows:
\begin{problem}[Rank Testing with Parameter $(n,d,\epsilon)$ in the Sensing Model]
\label{problem: problem of property testing of rank in the sensing model}
Given a field $\mathbb{F}$ and a matrix $\A\in\mathbb{F}^{n\times n}$ which has one of promised properties:
\begin{itemize}
\item[$\mathsf{H0.}$]
$\A$ has rank at most $d$;
\item[$\mathsf{H1.}$]
$\A$ is $\epsilon$-far from having rank at most $d$, meaning that $\A$ requires changing at least an $\epsilon$-fraction of its entries to have rank at most $d$.
\end{itemize}
The problem is to design a property testing algorithm that outputs $\mathsf{H0}$ with probability $1$ if $\A\in \mathsf{H0}$, and output $\mathsf{H1}$ with probability at least 0.99 if $\A\in\mathsf{H1}$, with the least number of queries of the form $\langle\A,\X_i\rangle$, where $\{\X_i\}$ is a sequence of sensing matrices.
\end{problem}

\begin{definition}[Ruzsa-Szemer$\acute{\text{e}}$di Graph]
A graph $G$ is an $(r,t)$-\emph{Ruzsa-Szemer$\acute{\text{e}}$di graph (RS graph for short)}, if and only if the set of edges of $G$ consists of $t$ pairwise disjoint induced matchings $M_1,\dots,M_t$, each of which is of size $r$.
\end{definition}

\begin{definition}[Boolean Hidden Hypermatching, $\BHH_{n,p}$]
The \emph{Boolean Hidden Hypermatching} problem is a one-way communication problem where Alice is given a boolean vector $\x\in\{0,1\}^n$ such that $n=2kp$ for some integer $k\ge 1$, and Bob is given a boolean vector $\w$ of length $n/p$ and a perfect $p$-hypermatching $\cM$ on $n$ vertices such that each hyperedge contains $p$ vertices. Denote by $\cM\x$ the length $n/p$ boolean vector $(\bigoplus_{1\le i\le p}\x_{\cM_{1,i}},\bigoplus_{1\le i\le p}\x_{\cM_{2,i}},\dots,\bigoplus_{1\le i\le p}\x_{\cM_{n/p,i}})$ where $\{\cM_{1,1},\dots,\cM_{1,p}\},\dots,\{\cM_{n/p,1},\dots,\cM_{n/p,p}\}$ are the hyperedges of $\cM$. It is promised that either $\cM\x=\w$ or $\cM\x=\overline\w$. The goal of the problem is for Bob to output \textup{\textbf{YES}} when $\cM\x=\w$ and \textup{\textbf{NO}} when $\cM\x=\overline\w$ ($\oplus$ stands for addition modulo 2).
\end{definition}

For our purpose, it is more convenient to focus on a special case of Boolean Hidden Hypermatching problem, namely, $\BHH_{n,p}^0$ where the vector $\w=\0^{n/p}$ ($p$ is an even integer) and Bob's task is to output \textbf{YES} if $\cM\x=\0^{n/p}$ and output \textbf{NO} if $\cM\x=\1^{n/p}$. It is known that we can reduce any instance of $\BHH_{n,p}$ to an instance of $\BHH_{2n,p}^0$ deterministically without any communication between Alice and Bob~\cite{bury2015sublinear,li2016approximating,verbin2011streaming}, by the following reduction.

\medskip
\noindent{\textbf{Reduction from $\BHH_{n,p}$ to $\BHH_{2n,p}^0$.}}
We reduce any instance of $\BHH_{n,p}$ to an instance of $\BHH_{2n,p}^0$ ($n=2kp$ for some integer $k$). Let $\cM$ be a perfect $p$-hypermatching and $\x\in\{0,1\}^n$ in $\BHH_{n,p}$. Denote by $\x'=[\x;\overline{\x}]$ the concatenation of $\x$ and $\overline\x$, where $\overline\x$ is the bitwise negation of $\x$. Let $\cM'$ be the $p$-hypermatching in $\BHH_{2n,p}^0$.
Denote by $\{\x_1,\dots,\x_p\}\in\cM$ the $l$-th hyperedge of $\cM$ ($l\in[n/p]$). We add two hyperedges to $\cM'$ as follows. If $\w_l=0$, we add $\{\x_1,\x_2,\dots,\x_p\}$ and $\{\overline{\x_1},\overline{\x_2},\dots,\overline{\x_p}\}$ to $\cM'$; Otherwise, we add $\{\overline{\x_1},\x_2,\dots,\x_p\}$ and $\{\x_1,\overline{\x_2},\dots,\overline{\x_p}\}$ to $\cM'$. Note that we flip an even number of bits when $\w_l=0$ and an odd number of bits when $\w_l=1$. This does not change the parity of each set as $p$ is even. Thus $\cM\x=\w$ implies $\cM'\x'=\0^{2n/p}$, and $\cM\x=\overline{\w}$ implies $\cM'\x'=\1^{2n/p}$.

\medskip
Previous papers
\cite{assadi2017estimating,bury2015sublinear,esfandiari2014streaming} used the $\BHH_{n,p}^0$ problem to prove lower bounds for estimating matching size in the data stream: given an instance $(\x,\cM)$ in $\BHH_{n,p}^0$ (Denote by $\cD_{\BHH}$ the hard distribution of $\BHH_{n,p}^0$), we create a graph $G(V\cup W,E)$ with $|V|=|W|=n$ via the following algorithm.

\begin{algorithm}[ht]
\caption{Reduction from $\BHH_{n,p}^0$ to the problem of estimating matching size in the data stream}
\label{algorithm: reduction from BHH to matching size}
\begin{algorithmic}[1]
\INPUT An instance from $\BHH_{n,p}^0$.
\OUTPUT A graph $G=(V\cup W,E)$.
\State For any $\x_i=1$, Alice adds an edge between $v_i$ and $w_i$ to $E$.
\State Bob adds to $E$ a clique between the vertices $w_i$ that belongs to the same hyperedge $\e$ in the $p$-hypermatching $\cM$.
\end{algorithmic}
\end{algorithm}

We shall use the graph created by Algorithm \ref{algorithm: reduction from BHH to matching size} to build a hard distribution for our \emph{rank testing} problem. The following claim guarantees the correctness of this reduction from $\BHH_{n,p}^0$ to the problem of estimating matrix rank in a data stream.

\begin{lemma}
\label{lemma: reduction from BHH to rank testing}
Let $G(V\cup W,E)$ be the graph derived from an instance $(\x,\cM)$ of $\BHH_{n,p}^0$ (for even integers $p$ and $n$) with the property that $\|\x\|_0=n/2$ (see Algorithm \ref{algorithm: reduction from BHH to matching size}). Denote by $\A$ the $2n\times 2n$ adjacency matrix of $G$. Then with probability at least $1-e^{-n/p^4}$, we have
\begin{itemize}
\item
if $\cM\x=\0^{n/p}$ (i.e., \textup{$\mathsf{YES}$} case), then $\rank(\A)\ge \frac{3n}{2}-\frac{n}{2p^2}$;
\item
if $\cM\x=\1^{n/p}$ (i.e., \textup{$\mathsf{NO}$} case), then $\rank(\A)\le \frac{3n}{2}-\frac{3n}{2p^2}$.
\end{itemize}
\end{lemma}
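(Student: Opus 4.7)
The plan is to express $\rank(\A)$ explicitly in terms of a simple combinatorial parameter of $(\x,\cM)$ and then exploit parity to separate the YES and NO cases. Partition $[n] = I_1 \cup I_0$ with $I_a = \{i : x_i = a\}$, so $|I_1| = |I_0| = n/2$. Ordering rows and columns as $V$ followed by $W$, we have
\[
\A = \begin{bmatrix} 0 & \B \\ \B & \C \end{bmatrix},
\]
where $\B = \diag(\x)$ and $\C$ is block diagonal with $n/p$ blocks of the form $J_p - I_p$, one per hyperedge. The rows and columns of $\A$ indexed by $\{v_i : i \in I_0\}$ are identically zero, and the identity block sitting inside $\B$ at positions $I_1\times I_1$ can be used to perform rank-preserving block elimination that zeros out every entry of $\C$ whose row or column index lies in $I_1$. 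After this reduction $\A$ becomes block diagonal with a block $\left[\begin{smallmatrix} 0 & I \\ I & 0 \end{smallmatrix}\right]$ of rank $2|I_1| = n$ and a residual block equal to $\C_{I_0,I_0}$, whence $\rank(\A) = n + \rank(\C_{I_0,I_0})$.

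The restriction $\C_{I_0,I_0}$ inherits the block-diagonal structure of $\C$: the block corresponding to hyperedge $e$ is $J_{k_e} - I_{k_e}$ with $k_e := |e \cap I_0|$. Since $\rank(J_k - I_k) = k$ for $k \ge 2$ and $0$ otherwise, setting $a_j := |\{e : k_e = j\}|$ gives
\[
\rank(\C_{I_0,I_0}) = \sum_{j \ge 2} j\,a_j = \sum_j j\,a_j - a_1 = |I_0| - a_1 = \frac{n}{2} - a_1,
\]
so $\rank(\A) = \tfrac{3n}{2} - a_1$. The lemma therefore reduces entirely to bounding $a_1$, the number of hyperedges with exactly one coordinate in $I_0$. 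In the YES case $\cM\x = \0$, each $|e \cap I_1|$ is even; since $p$ is even each $k_e = p - |e \cap I_1|$ is also even, so $a_1 = 0$ deterministically and $\rank(\A) = 3n/2$, which trivially satisfies the claimed lower bound.

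In the NO case $\cM\x = \1$, every $k_e$ is odd, and the task is to show $a_1 \ge 3n/(2p^2)$ with probability at least $1 - e^{-n/p^4}$ over the conditional distribution of $(\x,\cM)$. A hypergeometric calculation with $|I_1|=|I_0|=n/2$ shows that, after conditioning on odd parity of $|e \cap I_1|$, the marginal probability of $k_e = 1$ for a fixed hyperedge is $\Theta(p/2^{p-1})$ up to $O(1/n)$ corrections, giving $\mathbb{E}[a_1] = \Theta(n/2^p)$; for the $p$ of interest this comfortably exceeds $3n/(2p^2)$. Because distinct hyperedges of a matching involve disjoint coordinate blocks, the indicators $\{\mathbb{1}[k_e=1]\}_{e\in\cM}$ are negatively associated after one fixes $\cM$ and samples $\x$ uniformly from the feasible set, so a Chernoff-type tail bound yields $a_1 \ge \tfrac{1}{2}\mathbb{E}[a_1] \ge 3n/(2p^2)$ except with probability $\exp(-\Omega(\mathbb{E}[a_1])) \le e^{-n/p^4}$.

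The main obstacle is the concentration step: the joint conditioning on $|\x|_0 = n/2$ together with $\cM\x = \1$ globally couples the $k_e$'s, so one must either verify negative association of the indicators $\mathbb{1}[k_e=1]$ directly under this coupling, or decouple by first fixing $\cM$, passing to a product distribution on $\x$, and absorbing the effect of the two conditioning constraints into a lower-order additive term. Tuning the constants so the concentration bound delivers precisely the threshold $3n/(2p^2)$ with failure probability $e^{-n/p^4}$ requires careful bookkeeping of the hypergeometric factors and of the Chernoff exponent, but conceptually the proof is driven entirely by the identity $\rank(\A) = \tfrac{3n}{2} - a_1$ established above.
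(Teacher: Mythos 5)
Your reduction to the single quantity $a_1 = |\{e : |e\cap I_0| = 1\}|$ and the identity $\rank(\A) = \tfrac{3n}{2} - a_1$ are correct, and they package more cleanly the block-diagonal bookkeeping that also drives the paper's proof (which works block-by-block with the rank $r_{q_e}$ of each clique-plus-tentacles block and implicitly arrives at the same count). Your observation that $a_1 = 0$ deterministically in the $\mathsf{YES}$ case, so that $\rank(\A) = 3n/2$ exactly, is correct and slightly stronger than what the paper states.

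The $\mathsf{NO}$ case, however, contains a quantitative error that kills the argument. You compute, for a uniformly random balanced $\x$, that $\Pr[k_e = 1 \mid k_e\ \text{odd}] = \Theta(p/2^{p-1})$ and hence $\mathbb{E}[a_1] = \Theta(n/2^p)$, and then you claim this ``comfortably exceeds'' $3n/(2p^2)$. The inequality goes the wrong way: already for $p\geq 5$ one has $n/2^p < n/p^2$, and in the regime where this lemma is invoked the paper takes $p = \Theta(\log n)$, where $n/2^p = n^{1-\Theta(1)}$ is polynomially smaller than $n/p^2$. So even if the negative-association concentration step you flag as the ``main obstacle'' were resolved, the expected value you computed lies far below, not above, the threshold you need to beat.

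The source of the discrepancy is the law you place on the $k_e$'s. The paper's calculation $B = 3p - 4/p$ corresponds to the per-pair tentacle count $q_i$ being \emph{uniform} over the odd values $\{1,3,\dots,p-1\}$, under which $\Pr[q_i\in\{1,p-1\}] = 4/p$, hence $\mathbb{E}[a_1] = \tfrac{n}{2p}\cdot\tfrac{4}{p} = 2n/p^2 > 3n/(2p^2)$, and a Chernoff bound over the $n/(2p)$ independent pairs produced by the $\BHH_{n/2,p}\to\BHH_{n,p}^0$ reduction gives the $e^{-\Omega(n/p^4)}$ tail. Your hypergeometric calculation instead treats $|e\cap I_0|$ as binomial conditioned on parity---what a uniformly random balanced $\x$ would give---and this yields an exponentially smaller $\mathbb{E}[a_1]$. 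Before running concentration you need to pin down the hard distribution on $(\x,\cM)$ the lemma is actually stated over; under your binomial model the stated conclusion is out of reach by this route, and under the paper's uniform-$q_i$ model your value of $\mathbb{E}[a_1]$ and the resulting exponent must be recomputed from scratch.
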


\begin{proof}
According to Algorithm \ref{algorithm: reduction from BHH to matching size}, the graph consists of $n$ vertices $v_1,\dots,v_n$ and $n/p$ cliques, together with edges which connect $v_i$'s with the cliques according to $\x\in\{0,1\}^n$. We call these latter edges `tentacles'.

Let $\A$ be the adjacency matrix of $G$ where both the rows and columns are indexed by the nodes in $G$. The diagonals of $\A$ are all zeros. For each pair $w$, $u$ of clique nodes in $G$, we have $\A_{w,u}=1$. For each `tentacle' pair $(v,w)$, $\A_{v,w}=1$. All other entries of $\A$ are zeros. Then $\A$ is an $n\times n$ block diagonal matrix, where each block $\A_{q_i}$ ($\A_{q_i}$ represents the block with $q_i$ `tentacles') is of the following form modulo permutations of rows and columns (The red rows and columns represent `tentacles'):
\begin{equation*}
\A_{q_i}=
\begin{bmatrix}
0 & 1 & \cdots & 1 & 1 & \red{1} & \red{0} & \red{\cdots}\\
1 & 0 & \cdots & 1 & 1 & \red{0} & \red{1} & \red{\cdots}\\
\vdots & \vdots & & \vdots & \vdots & \red{\vdots} & \red{\vdots}\\
1 & 1 & \cdots & 0 & 1 & \red{0} & \red{0} & \red{\cdots}\\
1 & 1 & \cdots & 1 & 0 & \red{0} & \red{0} & \red{\cdots}\\
\red{1} & \red{0} & \red{\cdots} & \red{0} & \red{0} & \red{0} & \red{0} & \red{\cdots}\\
\red{0} & \red{1} & \red{\cdots} & \red{0} & \red{0} & \red{0} & \red{0} & \red{\cdots}\\
\red{\vdots} & \red{\vdots} &  & \red{\vdots} & \red{\vdots} & \red{\vdots} & \red{\vdots}\\
\end{bmatrix}
\end{equation*}

According to the reduction from $\BHH_{n/2,p}$ to $\BHH_{n,p}^0$, the hypermatching in the hard distribution of $\BHH_{n,p}^0$ can be divided into $n/(2p)$ groups. Each group consists of two hyperedges such that the sum of the number of `tentacles' connecting to these two hyperedges is $p$ for every group, i.e., $(q_i,p-q_i)$ where $q_i$ is the number of `tentacles' connecting to one of hyperedges, which is either even ($\mathsf{YES}$ case) or odd ($\mathsf{NO}$ case) according to the promise. Moreover, the $q_i$'s are independent across the $n/p$ groups, because we can process each group one by one and after processing each group, the number of remaining `tentacles' decreases by $p$.

Let $r_{q_i}=\rank(\A_{q_i})$. Denote by $A=\mathbb{E}_{\mathsf{YES}}(r_{q_i}+r_{p-q_i})$ and $B=\mathbb{E}_{\mathsf{NO}}(r_{q_i}+r_{p-q_i})$, where $A$ and $B$ will be calculated later. Summing up $n/(2p)$ \emph{independent} groups and by the Chernoff bound, with probability at least $1-e^{-\delta^2\frac{n}{2p}A/2}$ and $1-e^{-\delta^2\frac{n}{2p}B/3}$, respectively, $\rank(\A)\ge (1-\delta)\frac{n}{2p}A$ in the even case and $\rank(\A)\le (1+\delta)\frac{n}{2p}B$ in the odd case, where $\delta>0$ is an absolute constant. We note that $A=3p$ and $B=3p-4/p$. Therefore, $\rank(\A)\ge (1-\delta)3n/2$ in the even case and $\rank(\A)\le (1+\delta)(3n/2-2n/p^2)$ in the odd case. Choosing $\delta=\frac{1}{3p^2}$ finishes the proof.
\end{proof}

In the following we shall set $\epsilon=\Theta(1/\log n)$ and $p=\Theta(\log n)$. Denote by $\mathsf{Matching}_{n,k,\epsilon}$ the \emph{$k$-player simultaneous communication} problem of estimating the size of maximum matching up to a factor of $(1\pm\epsilon)$, where the edges of an $n$-vertex input graph are partitioned across the $k$ players and the referee. For our purpose, we reduce from the problem of $\mathsf{Matching}_{n,k,\epsilon}$ to our problem of \emph{rank testing}. We use the hard distribution $\cD_{\mathsf{M}}$ in Algorithm \ref{algorithm: hard distribution for matching} for $\mathsf{Matching}_{n,k,\epsilon}$. Notice that the hard instance of $\BHH_{r,p}^0$ in Step \ref{alg: BHH^0} is reduced from that of $\BHH_{r/2,p}$ as we did before in this section.

\begin{algorithm}[ht]
\caption{A construction of a hard distribution $\cD_{\mathsf{M}}$ for $\mathsf{Matching}_{n,k,\epsilon}$}
\label{algorithm: hard distribution for matching}
\begin{algorithmic}[1]
\INPUT $r=N^{1-o(1)}$, $t=\frac{\binom{N}{2}-o(N^2)}{r}$, $k=\frac{N}{\epsilon r}$, $n=N+2r(k-1)$, and $p=\lfloor\frac{1}{8\epsilon}\rfloor$.
\State Fix an $(r,t)$-RS graph $G^{\RS}$ on $N$ vertices.
\State Pick $j^*\in[t]$ uniformly at random and draw a $\BHH_{r,p}^0$ instance $(\x^{(j^*)},\cM)$ from the distribution $\cD_{\BHH}$. \label{alg: BHH^0}
\For{each player $P^{(i)}$ independently}
	\State (a) Let $G_i$ be the input graph of $P^{(i)}$, initialized by a copy of $G^{\RS}$ with vertices $V_i=[N]$.
	\State (b) Let $V_i^*$ be the set of vertices matched in the $j^*$-th induced matching of $G_i$. Change the induced matching $M_{J^*}^{\RS}$ of $G_i$ to $M_{j^*}:=M_{j^*}^{\RS}|_{\x^{(j^*)}}$.
	\State (c) For any $j\in[t]\backslash\{j^*\}$, draw a vector $\x^{(i,j)}\in\{0,1\}^r$ from the distribution $\cD_{\BHH}$ for $\BHH_{r,p}^0$, and change the induced matching $M_j^{\RS}$ of $G_i$ to $M_j:=M_j^{\RS}|_{\x^{(i,j)}}$.
	\State (d) Create the family of $p$-cliques of $\cM$ on the vertices $R(M_{j^*}^{\RS})$, and give the edges of the $p$-clique family to the referee.
\EndFor
\State Choose a random permutation $\sigma$ of $[n]$. For each player $P^{(i)}$, relabel $v$ to $\sigma(j)$ for each vertex $v$ in $V_i\backslash V_i^*$ with label $j\in [N-2r]$. Enumerate the vertices in $V_i^*$ from the one with the smallest label to the one with the largest label, and relabel the $j$-th vertex to $\sigma(N+(i-2)2r+j)$. Finally, let the vertices with the same label correspond to the same vertex.
\end{algorithmic}
\end{algorithm}

\begin{claim}
\label{claim: BHH and rank}
Let $I_{\BHH}$ be the embedded $\BHH_{r,p}^0$ instance $(\x^{(i)},\cM)$ in Algorithm \ref{algorithm: hard distribution for matching}. The adjacency matrix $\A\in\mathbb{F}^{n\times n}$ of the graph that is drawn from distribution $\cD_{\mathsf{M}}$ (Algorithm \ref{algorithm: hard distribution for matching}) obeys
\begin{itemize}
\item[1.]
If $I_{\BHH}$ is a $\mathsf{YES}$ instance, then $\rank(\A)\ge k(\frac{3r}{2}-\frac{r}{2p^2})$;
\item[2.]
If $I_{\BHH}$ is a $\mathsf{NO}$ instance, then $\rank(\A)\le k(\frac{3r}{2}-\frac{3r}{2p^2})+N-2r$,
\end{itemize}
with probability at least $1-ke^{-n/p^4}$.
\end{claim}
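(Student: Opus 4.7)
The plan is to apply Lemma~\ref{lemma: reduction from BHH to rank testing} independently to each of the $k$ players' $\BHH_{r,p}^0$ gadgets produced by Algorithm~\ref{algorithm: hard distribution for matching}, then combine the resulting per-player rank bounds via a block decomposition of $\A$.

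First I would exploit the vertex relabeling in the last step of Algorithm~\ref{algorithm: hard distribution for matching}: the private vertex sets $V_i^*$ (of size $2r$) are mapped by $\sigma$ to pairwise disjoint blocks of $[n]$, while the public vertices $V_i\setminus V_i^*$ are identified across players through $\sigma$. Writing $V^*=\bigsqcup_{i\in[k]}\sigma(V_i^*)$, the subgraph induced on $\sigma(V_i^*)$ is precisely the graph produced by the reduction in Algorithm~\ref{algorithm: reduction from BHH to matching size} on input $(\x^{(j^*)},\cM)$; let $\A_i^{BHH}\in\F^{2r\times 2r}$ denote its adjacency matrix. Crucially, every edge of player $i$ touching $\sigma(V_i^*)$ either stays inside $\sigma(V_i^*)$ or crosses into the shared pool $V\setminus V^*$, and never reaches $\sigma(V_{i'}^*)$ for $i'\ne i$. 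Consequently, the principal submatrix $\A[V^*,V^*]$ is block-diagonal with diagonal blocks $\A_1^{BHH},\ldots,\A_k^{BHH}$.

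Next I would apply Lemma~\ref{lemma: reduction from BHH to rank testing} to each $\A_i^{BHH}$ separately (with the lemma's parameter $n$ replaced by $r$): with probability at least $1-e^{-r/p^4}$, the YES case yields $\rank(\A_i^{BHH})\ge \tfrac{3r}{2}-\tfrac{r}{2p^2}$ and the NO case yields $\rank(\A_i^{BHH})\le\tfrac{3r}{2}-\tfrac{3r}{2p^2}$. Union-bounding over $i\in[k]$ produces the overall failure probability $ke^{-r/p^4}\le ke^{-n/p^4}$. The YES direction then follows from monotonicity of rank under principal submatrices,
\[
\rank(\A)\ge \rank(\A[V^*,V^*])=\sum_{i=1}^k\rank(\A_i^{BHH})\ge k\!\left(\tfrac{3r}{2}-\tfrac{r}{2p^2}\right).
\]
For the NO direction I would decompose $\A=\A^{(1)}+\A^{(2)}$, where $\A^{(1)}$ supports the entries with both endpoints in $V^*$ (block-diagonal with the $\A_i^{BHH}$) and $\A^{(2)}$ supports every entry with at least one endpoint in $V\setminus V^*$. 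Subadditivity of rank gives $\rank(\A)\le \rank(\A^{(1)})+\rank(\A^{(2)})$, where $\rank(\A^{(1)})\le k(\tfrac{3r}{2}-\tfrac{3r}{2p^2})$ and $\rank(\A^{(2)})\le N-2r$, the latter following because every nonzero row or column of $\A^{(2)}$ is indexed by $V\setminus V^*$ or has its support confined to those coordinates; after using the symmetry of $\A^{(2)}$, its entire row-span is controlled by the $|V\setminus V^*|=N-2r$ rows indexed by $V\setminus V^*$.

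\textbf{Main Obstacle.} I expect the trickiest step to be obtaining the sharp bound $\rank(\A^{(2)})\le N-2r$ rather than the looser $2(N-2r)$ that drops out of a direct application of subadditivity to the two column blocks (indexed by $V^*$ and $V\setminus V^*$). This requires carefully exploiting the symmetric sparsity pattern of $\A^{(2)}$ --- in particular, that rows in $V^*$ of $\A^{(2)}$ have support only on columns in $V\setminus V^*$ and, by symmetry, are determined by the ``tall-skinny'' block $\A^{(2)}[V\setminus V^*,V^*]$ --- so that the $V^*$ rows contribute no new dimensions to the row space generated by the $V\setminus V^*$ rows. A secondary structural check is verifying that the non-$j^*$ induced matchings $M_j$ ($j\ne j^*$) really do stay confined to $\sigma(V_i^*)\cup (V\setminus V^*)$, so that the ``no edges between distinct $\sigma(V_{i'}^*)$'' observation underpinning the block-diagonal decomposition is valid. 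Once these structural facts are in place, the probabilistic part reduces to the $k$-fold union bound over independent invocations of Lemma~\ref{lemma: reduction from BHH to rank testing}.
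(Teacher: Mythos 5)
Your overall strategy matches the paper's: the relabeling in Algorithm~\ref{algorithm: hard distribution for matching} makes $\A[V^*,V^*]$ block-diagonal with $k$ independent copies of the adjacency matrix from Lemma~\ref{lemma: reduction from BHH to rank testing}, and the claim then follows by applying that lemma per block with a union bound. The YES direction via rank monotonicity of principal submatrices, and the NO direction via rank subadditivity with $\A = \A^{(1)} + \A^{(2)}$, are exactly the right moves.

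However, the step you flag as the ``main obstacle'' is a genuine gap, and the symmetry argument you propose to bridge it is incorrect. Write $\A^{(2)}$ in block form as $\left[\begin{smallmatrix} 0 & B \\ B^\top & C \end{smallmatrix}\right]$ with $B = \A^{(2)}[V^*,S]$, $C = \A^{(2)}[S,S]$, $S = V\setminus V^*$. You claim the $V^*$ rows $(0,\, B_{i,:})$ ``contribute no new dimensions'' to the span of the $S$ rows $(B^\top_{j,:},\, C_{j,:})$ because $B$ is determined by $B^\top$. That implication does not hold: any linear combination of the $S$ rows whose $V^*$-coordinates vanish must use coefficients in the left kernel of $B^\top$, and such combinations need not span the row space of $B$. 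The $2\times 2$ example $\left[\begin{smallmatrix} 0 & 1 \\ 1 & 0 \end{smallmatrix}\right]$ already has rank $2 = 2|S|$, not $|S|=1$; more generally the rank of $\left[\begin{smallmatrix} 0 & B \\ B^\top & C \end{smallmatrix}\right]$ is at least $2\,\rank(B)$ and can reach $2|S|$. So what your decomposition actually delivers is $\rank(\A^{(2)})\le 2(N-2r)$, not $N-2r$. (The paper's terse ``junk of size $(N-2r)\times n$ union $n\times(N-2r)$'' plus subadditivity gives the same $2(N-2r)$; the constant $1$ in the stated claim appears to be loose, which is harmless for the downstream reduction in Theorem~\ref{theorem: lower bound over finite fields in general basis} since only the order $\Theta(N)$ of the junk term matters. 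You should not try to prove the sharp constant --- there is no structural reason it holds.)

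One minor further point: your union bound gives success probability $1-ke^{-r/p^4}$, and since $r<n$ we have $e^{-r/p^4} > e^{-n/p^4}$, so the inequality $ke^{-r/p^4}\le ke^{-n/p^4}$ you invoke runs the wrong way. The bound you actually obtain is the (slightly weaker) $1-ke^{-r/p^4}$; the exponent $n$ in the claim's statement looks like a typo for $r$ carried over from Lemma~\ref{lemma: reduction from BHH to rank testing}.
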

\begin{proof}
Note that by construction, the adjacency matrix of the graph drawn from $\cD_{\mathsf{M}}$ is a $k$-block-diagonal matrix together with some `junk' (area of size $(N-2r)\times n$ union $n\times (N-2r)$) outside the block area such that each block is an independent sample of the matrix $\A$ in Lemma \ref{lemma: reduction from BHH to rank testing}. The claim then is a straightforward result of Lemma \ref{lemma: reduction from BHH to rank testing}.
\end{proof}

\medskip
\noindent{\textbf{Reduction from $\mathsf{Matching}_{n,k,\epsilon}$ to Problem \ref{problem: problem of property testing of rank in the sensing model}.}} Given a hard graph instance $G$ of $\mathsf{Matching}_{n,k,\epsilon}$, we can estimate the maximum matching size of $G$ by testing the rank of the adjacency matrix $\A_G$ of $G$: If we can distinguish out $\rank(\A_G)\ge k(\frac{3r}{2}-\frac{r}{2p^2})$, we ouput that the matching size is strictly larger than $\frac{3N}{\epsilon}$; If we can distinguish out $\rank(\A_G)\le k(\frac{3r}{2}-\frac{3r}{2p^2})+N-2r$, we output that the matching size is smaller than $\frac{3N}{\epsilon}-3N$. The correctness for the reduction follows from Claim \ref{claim: BHH and rank}, the construction that the hard distributions of $\mathsf{Matching}_{n,k,\epsilon}$ and Problem \ref{problem: problem of property testing of rank in the sensing model} are derived from the same graph, and the fact that the matching size is strictly larger than $\frac{3N}{\epsilon}$ when $I_\BHH$ is a $\YES$ instance and is smaller than $\frac{3N}{\epsilon}-3N$ when $I_\BHH$ is a $\NO$ instance (see Claim 6.3, \cite{assadi2017estimating}).

\medskip
The hardness of $\mathsf{Matching}_{n,k,\epsilon}$ by the construction in Algorithm \ref{algorithm: hard distribution for matching} was proved in \cite{assadi2017estimating}.
\begin{theorem}[Theorem 10, \cite{assadi2017estimating}]
\label{theorem: hardness of maximum matching}
For any sufficiently large $n$ and sufficiently small $\epsilon<\frac{1}{2}$, there exists some $k=n^{o(1)}$ such that the distribution $\cD_{\mathsf{M}}$ for $\mathsf{Matching}_{n,k,\epsilon}$ in Algorithm \ref{algorithm: hard distribution for matching} satisfies
\begin{equation*}
\mathsf{IC}_{\mathsf{SMP},\cD_{\mathsf{M}}}^\delta(\mathsf{Matching}_{n,k,\epsilon})=n^{2-\cO(\epsilon)},
\end{equation*}
where $\mathsf{IC}_{\mathsf{SMP},\cD_{\mathsf{M}}}^\delta(\mathsf{Matching}_{n,k,\epsilon})$ is the information complexity of $\mathsf{Matching_{n,k,\epsilon}}$ in the multi-party number-in-hand simultaneous message passing model (SMP).
\end{theorem}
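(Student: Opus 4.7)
The statement is a direct quotation from Assadi--Khanna--Li--Yaroslavtsev, so the plan is to sketch the proof strategy from that paper: a direct-sum style argument that plants $t$ independent Boolean Hidden Hypermatching instances inside a single Ruzsa--Szemer\'edi (RS) graph and then reduces an $n^{2-O(\epsilon)}$ lower bound for matching to the (known) hardness of a single $\BHH_{r,p}^0$ instance.

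First I would recall the existing communication/information lower bound for $\BHH_{r,p}^0$: any one-way protocol (and in particular an SMP-style one) for a single copy requires $\Omega(r^{1-1/p})$ bits, and when $p=\Theta(1/\epsilon)$ this is essentially $r^{1-O(\epsilon)}$. Next, observe from the construction of $\cD_{\mathsf{M}}$ in Algorithm~\ref{algorithm: hard distribution for matching} that every one of the $t$ induced matchings of the fixed $(r,t)$-RS graph $G^{\RS}$ carries an \emph{independent} $\BHH_{r,p}^0$ instance across the $k$ players: the hidden index $j^*$ selects which copy determines the matching size, while the remaining $t-1$ copies are i.i.d.\ noise that, crucially, are indistinguishable to the referee from the planted one because of the random permutation $\sigma$ and the symmetric RS structure. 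This symmetrization is the standard mechanism that makes the direct-sum step go through.

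The core of the proof is then an information-theoretic direct-sum argument. Let $\Pi$ be any $\delta$-error SMP protocol for $\mathsf{Matching}_{n,k,\epsilon}$ under $\cD_{\mathsf{M}}$. By the embedding of the single BHH instance into coordinate $j^*$, $\Pi$ can be used to solve $\BHH_{r,p}^0$ with the same error, using only the transcript restricted to the information about matching $j^*$. Averaging over the uniformly random choice of $j^*$, the total information cost of $\Pi$ must dominate $t$ times the single-copy information cost, giving
\begin{equation*}
\mathsf{IC}_{\mathsf{SMP},\cD_{\mathsf{M}}}^{\delta}(\mathsf{Matching}_{n,k,\epsilon}) \;\gtrsim\; t \cdot \mathsf{IC}^{\delta}(\BHH_{r,p}^0) \;\gtrsim\; t \cdot r^{1-O(\epsilon)}.
\end{equation*}
Plugging in the parameters $r = N^{1-o(1)}$, $t = (\binom{N}{2}-o(N^2))/r$, and $k = N/(\epsilon r)$ with $p = \lfloor 1/(8\epsilon)\rfloor$, and using $n = N + 2r(k-1) = \Theta(N/\epsilon)$ which is $N^{1+o(1)}$, yields $t\cdot r^{1-O(\epsilon)} = N^{2-O(\epsilon)} = n^{2-O(\epsilon)}$, as required.

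The main obstacle, and the technically delicate part of the original proof, is the direct-sum reduction itself: one must verify that conditioned on $j^*$ the distribution of the players' inputs on the remaining $t-1$ matchings can be sampled \emph{without communication} (publicly, using shared randomness), so that an SMP protocol on $\cD_{\mathsf{M}}$ truly induces an SMP protocol on a single $\BHH_{r,p}^0$ instance with no blow-up in information cost. This in turn relies on (i) the product structure of the non-$j^*$ BHH instances across the $k$ players, (ii) the fact that the $p$-clique gadgets given to the referee depend only on $\cM$ and on the RS vertex sets (which are public), and (iii) a standard cut-and-paste / conditional information decomposition lemma to convert the external information about $\mathsf{Matching}_{n,k,\epsilon}$ into internal information about $\BHH_{r,p}^0$. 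All remaining steps—tuning $p,r,t,k$ and translating bits into the claimed $n^{2-O(\epsilon)}$ bound—are routine algebra.
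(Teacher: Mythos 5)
This statement is not proved in the paper: it is quoted verbatim as Theorem~10 of \cite{assadi2017estimating}, and the surrounding text explicitly says ``The hardness of $\mathsf{Matching}_{n,k,\epsilon}$ by the construction in Algorithm~\ref{algorithm: hard distribution for matching} was proved in \cite{assadi2017estimating}.'' There is therefore no in-paper proof to compare your attempt against, and the correct thing to ``prove'' here is simply a citation. You recognize this in your opening sentence, which is good, but you should be explicit that reproducing a full proof is out of scope for this paper and would amount to rederiving the main technical result of \cite{assadi2017estimating}.

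As a sketch of the external argument, your outline is directionally right: the hard distribution plants $t$ independent $\BHH^0_{r,p}$ instances on the $t$ induced matchings of a fixed Ruzsa--Szemer\'edi graph, the uniformly random index $j^*$ is the one whose hidden bit determines the matching size, the random relabeling $\sigma$ and the RS symmetry make $j^*$ invisible to any individual player, and a direct-sum/information-decomposition argument charges the protocol $t$ times the single-copy information cost $r^{1-O(\epsilon)}$, with the parameter choices giving $t\cdot r^{1-O(\epsilon)}=n^{2-O(\epsilon)}$. Two points deserve a caveat, though. First, the single-copy lower bound you invoke is for \emph{one-way communication} of $\BHH$, whereas the theorem is stated for \emph{information complexity in the $k$-player SMP model}; passing from one to the other is not automatic and is itself a substantial piece of \cite{assadi2017estimating} (they need a careful conditional-information decomposition that accounts for the $k$-player structure and the referee's extra input, not just a generic cut-and-paste lemma). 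Second, your ``averaging over $j^*$ gives a factor $t$'' step is the heart of the matter and cannot be asserted without the embedding argument that shows a protocol for $\cD_{\mathsf M}$ yields, for a \emph{uniformly random} $j$, a protocol that reveals $\Omega(r^{1-O(\epsilon)})$ bits of information about the $j$-th BHH instance; merely being able to \emph{solve} the $j^*$-th copy is not enough for an information lower bound. If you intend this as a pointer to the literature, phrase it as such rather than as a self-contained proof.

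Bottom line: for the purposes of this paper, ``proof'' $=$ ``cite \cite{assadi2017estimating}, Theorem~10,'' and your added sketch, while a reasonable road map, glosses over the one-way-to-SMP conversion and the precise form of the direct-sum step, both of which are where the actual work in the cited paper lives.
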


The following theorem summarizes the results in this section, providing a lower bound for Problem \ref{problem: problem of property testing of rank in the sensing model}.

\begin{theorem}
\label{theorem: lower bound over finite fields in general basis}
Any non-adaptive algorithm for Problem \ref{problem: problem of property testing of rank in the sensing model} over $\mathsf{GF}(p)$ requires $\Omega(d^2/\log p)$ queries.
\end{theorem}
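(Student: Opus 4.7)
The plan is to reduce from the $\mathsf{Matching}_{n,k,\epsilon}$ problem in the simultaneous message passing (SMP) model to the sensing rank testing problem, using the hard distribution $\cD_{\mathsf{M}}$ already constructed in Algorithm \ref{algorithm: hard distribution for matching}. Given a putative non-adaptive sensing tester $\mathcal{A}$ that uses $q$ queries with sensing matrices $\X_1,\dots,\X_q \in \mathsf{GF}(p)^{n \times n}$ (fixed in advance by public randomness), the $k$ players and the referee will agree on $\{\X_i\}$ ahead of time. Writing the adjacency matrix of the graph drawn from $\cD_{\mathsf{M}}$ as $\A_G = \A_R + \sum_{i=1}^k \A^{(i)}$, where $\A_R$ records the $p$-clique edges held by the referee (step (d) of Algorithm \ref{algorithm: hard distribution for matching}) and $\A^{(i)}$ records the edges introduced by player $P^{(i)}$ after the relabeling step, linearity of the inner product gives $\langle \A_G,\X_j\rangle = \langle \A_R,\X_j\rangle + \sum_{i=1}^k \langle \A^{(i)},\X_j\rangle$. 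Each player $P^{(i)}$ therefore sends the $q$ field elements $\langle \A^{(i)},\X_j\rangle \in \mathsf{GF}(p)$ to the referee, who adds in $\langle \A_R,\X_j\rangle$, simulates $\mathcal{A}$ on the recovered queries, and by Claim \ref{claim: BHH and rank} outputs a correct decision for the matching problem.

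The total number of bits communicated to the referee is $q \cdot k \cdot \lceil\log_2 p\rceil$. Because communication is an upper bound on information complexity, Theorem \ref{theorem: hardness of maximum matching} yields
\begin{equation*}
q \cdot k \cdot \log_2 p \;\ge\; n^{2 - \cO(\epsilon)}.
\end{equation*}
Choosing $\epsilon = o(1)$ as permitted by the theorem and using $k = n^{o(1)}$ gives $q = n^{2 - o(1)}/\log p$. To convert $n$ to the rank parameter $d$, observe from Claim \ref{claim: BHH and rank} that both the $\mathsf{YES}$ and $\mathsf{NO}$ instances have rank $\Theta(kr) = \Theta(N/\epsilon) = \Theta(n)$, while the rank gap between them is $\Theta(kr/p^2)$, so setting the rank threshold $d$ anywhere in this gap yields $d = \Theta(n)$, and thus $d^2 = \Theta(n^2)$ up to $n^{o(1)}$ factors absorbed in the $\Omega$. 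This gives the claimed $q = \Omega(d^2/\log p)$.

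The main obstacle I anticipate is two-fold. First, one must partition the edges of the graph drawn from $\cD_{\mathsf{M}}$ so that each edge is associated with a unique owner (a player or the referee); since step (a) of Algorithm \ref{algorithm: hard distribution for matching} initializes every player's graph with the same copy of $G^{\RS}$, a naive sum $\sum_i \A^{(i)}$ would double-count the shared RS-structure edges. The fix is to have the referee hold the deterministic portion of $G^{\RS}$ (equivalently, the players' contributions consist only of the edges inside their modified induced matchings and the relabeled bipartite portion unique to player $i$), which is consistent with the reduction since the graph structure common to all players is public. Second, to invoke Problem \ref{problem: problem of property testing of rank in the sensing model} rather than a plain rank decision problem, one must verify that the $\mathsf{NO}$-case matrix is $\epsilon$-far from rank $d$, not merely of rank less than $d$. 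This requires a rigidity-style lemma, analogous to the reduction used in Theorem \ref{theorem: lower bound of rank testing over reals}: since the $\mathsf{NO}$-case graph contains $\Omega(n)$ disjoint clique/tentacle blocks that individually have full rank, flipping $o(\epsilon n^2)$ entries cannot simultaneously reduce enough of them to push the rank down to $d$, by a counting argument on how many blocks each altered entry can affect.
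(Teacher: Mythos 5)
Your high-level reduction is on the right track: simulating a non-adaptive sensing tester in the SMP model via linearity of $\langle\cdot,\X_j\rangle$, charging $q\log p$ bits per player, and invoking Theorem~\ref{theorem: hardness of maximum matching} is exactly what the paper does. You also correctly note the need for a rigidity step and for translating between $n$ and $d$. However, there are two substantive gaps.

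First, the rigidity argument you sketch would not work. You propose to show that the hard instance is $\epsilon$-far from rank $d$ by a counting argument ``since the graph contains $\Omega(n)$ disjoint clique/tentacle blocks.'' But the adjacency matrix $\A_G$ drawn from $\cD_{\mathsf{M}}$ is a \emph{sparse} matrix: each vertex has degree at most $p+t = n^{o(1)}$, so $\A_G$ has only $n^{1+o(1)}$ nonzero entries, far fewer than the $\epsilon n^2 = n^2/\operatorname{polylog}(n)$ entries the adversary is allowed to change. One can zero out $\A_G$ entirely within budget, so $\A_G$ is trivially not $\epsilon$-far from rank $0$, let alone from rank $d$. (Also a small slip: you assert it is the $\mathsf{NO}$-case matrix that must be $\epsilon$-far, but the $\mathsf{YES}$ case is the high-rank one in Claim~\ref{claim: BHH and rank}, so it is the $\mathsf{YES}$-case matrix that needs to be far from rank $\le d$.) The paper circumvents this by \emph{left-multiplying} by a uniformly random $\H'\in\mathsf{GF}(p)^{n\times n}$, chosen so that $\rank(\H'\A)=\rank(\A)$ with high probability (via a subspace-counting lemma, Lemma~5.3 of~\cite{clarkson2009numerical}) while $\H'\A$ is \emph{uniformly distributed}, which makes it rigid by Valiant's result (Lemma~\ref{lemma: matrix rigidity over GF(p)}): $\cR^{\mathsf{GF}(p)}(r)=\Omega((n-r)^2/\log_p n)$ for a $0.99$-fraction of matrices. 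Note also that the referee, who knows $\H'$, can still evaluate any sensing query $\langle\H'\A,\X_j\rangle = \langle\A,\H'^\top\X_j\rangle$ from the players' contributions by using the transformed sensing matrices $\H'^\top\X_j$; the transformation preserves non-adaptivity and costs nothing in communication, and this is the step that justifies moving to $\H'\A$ without losing the reduction.

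Second, your argument only establishes the bound for $d=\Theta(n)$ (``$d^2=\Theta(n^2)$''). To obtain $\Omega(d^2/\log p)$ for general $d$, the paper embeds the $\Theta(d)\times\Theta(d)$ hard instance $\A$ into an $n\times n$ matrix via the Kronecker product $\C = \1\otimes\A$ with an all-ones matrix, which preserves the rank of $\A$ and lets any tester for the $n\times n$ instance be used on $\A$; without this step the theorem as stated (which holds for every $d\le n$) is not proved. You should carry out both the $\H'$-randomization (with the accompanying rank-preservation and rigidity claims) and the Kronecker lift explicitly for the argument to close.
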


\begin{proof}
We first discuss the case when $d=\Omega(n)$, where we will give an $\Omega(n^2)$ lower bound. Let $\A_G$ be the hard instance given by Algorithm \ref{algorithm: hard distribution for matching}. We want to find an $n\times n$ random matrix $\H'$ such that: (1) $\rank(\H'\A)=\rank(\A)$ (Multiplying $\H'$ does not change the rank of $\A$ so that testing $\A$ is equivalent to testing $\H'\A$); (2) $\M=\H'\A$ is rigid (Multiplying $\H'$ makes matrix $\A$ rigid). We now show how to do this. Let $\B$ be a random matrix such that we want to distinguish rank $n$ v.s. rank $n-n/\log^2 n$ for matrix $\A:=\A_G+\B$. Let $k = \rank(\A)$, $\H$ be a $3nk/\delta\times n$ uniformly sampled matrix over $\mathsf{GF}(p)^{3nk/\delta\times n}$ and $\H'$ be the first $n$ rows of $\H$. One can see that any subset of at most $n$ rows of $\H$ has full rank with a large probability.

\medskip
\noindent{\textbf{Proof of (1).}} We note that $\rank(\H'\A)\le k$. We will show that $\H'\A$ has rank $k$ with probability at least $1-\delta$. We will use the following lemma.
\begin{lemma}[Lemma 5.3, \cite{clarkson2009numerical}]
If $\cL\subseteq \mathsf{GF}(p)^n$ is a $j$-dimensional linear subspace, and $\A$ has rank $k\ge j$, then the dimension of $\cL_{\A}:=\{\w\in\mathsf{GF}(p)^n\mid \w^\top\A\in\cL\}$ is at most $n-k+j$.
\end{lemma}
For $j<n$, consider the linear subspace $\cL_j$ spanned by the first $j$ rows of $\H\A$. By the above lemma, the dimension of the subspace $\cL_j':=\{\w\in\R^n\mid \w^\top\A\in\cL_j\}$ is at most $n-k+j$. Given that the rows of $\H$ are linearly independent with high probability, at most $n-k+j$ of them can be in $\cL_j'$. Thus the probability that $\H_{(j+1),:}'\A$ is not in $\cL_j$ is at least $1-(n-k+j)/(3nk/\delta-j)$, and the probability that all such events hold, for $j=0,\dots,k-1$, is at least
\begin{equation*}
\left(1-\frac{n}{3nk/\delta-k}\right)^k= \left(1-\frac{1}{k}\frac{\delta/3}{1-\delta/(3n)}\right)^k\ge 1-\frac{\delta}{2}
\end{equation*}
for small $\delta$. All such independence events occur if and only if $\H'_{1:k,:}\A$ has rank $k$. Therefore, the probability that $\H'\A$ is of rank $k$ is at least $1-\delta/2$.

\medskip
\noindent{\textbf{Proof of (2).}} We need the following result on matrix rigidity.

\begin{lemma}[Matrix Rigidity, Theorem 6.4, \cite{valiant1977graph}]
\label{lemma: matrix rigidity over GF(p)}
The fraction of matrices over $\mathsf{GF}(p)^{n\times n}$ with matrix rigidity $\cR^{\mathsf{GF}(p)}(r)=\Omega((n-r)^2/\log_p n)$ is at least $0.99$, for $r<n-\sqrt{2n\log_p 2+\log n}$.
\end{lemma}

For uniform matrix $\H'$, we note that $\H'\A$ is uniform as well: for any given matrix $\T$ in $\mathsf{GF}(p)^{n\times n}$
\begin{equation*}
\Pr_{\H'\sim\mathsf{Unif}}[\H'\A=\T]=\Pr_{\H'\sim\mathsf{Unif}}[\H'=\T\A^{-1}]=\left(\frac{1}{p}\right)^{kn}.
\end{equation*}
Then by Lemma \ref{lemma: matrix rigidity over GF(p)}, $\cR_{\H'\A}^{\mathsf{GF}(p)}(n-n/\log n)=\Omega(n^2/\log^2 n)$ with high probability.

Now we are ready to prove the hardness of Problem \ref{problem: problem of property testing of rank in the sensing model} with parameter $(n,n-\frac{n}{\log^2(n)},\frac{1}{\log^4 (n)\log_p(n)})$. For any non-adaptive algorithm $\cA_{test}$ for Problem \ref{problem: problem of property testing of rank in the sensing model} with $\epsilon=1/\log n$ and $d=n-n/\log n$, assume that the required number of queries is $q$. We use such algorithms to estimate the maximum matching size by our reduction. Given a graph $G$ with maximum matching size $\ge 3N/\epsilon$ v.s. $\le 3N/\epsilon-3N$. We know that the rank of $\A:=\A_G+\B$ is of rank $n$ v.s. $n-n/\log^2 n$. By left multiplying matrix $\A$ with above-mentioned $\H'$, the rank of resulting matrix $\H'\A$ remains the same and is of rigidity $\cR_{\B}^{\mathsf{GF}(p)}(n-n/\log^2 n)=\Omega(n^2/(\log^4 (n)\log_p (n)))$ according to properties (1) and (2) that we have proven. By assumption, $\cA_{test}$ can distinguish rank $n$ from rank $n-n/\log^2 n$ for matrix $\H'\A$ in $q$ queries with high probability. So $\cA_{test}$ can be used to compute the maximum matching size with $(1\pm 1/\log n)$-approximation rate with $\cO(q\log p)$ bits of communication. By Theorem \ref{theorem: hardness of maximum matching}, we have $q\log p=\Omega(n^2)$, which implies that $q=\Omega(n^2/\log p)$.

We now prove the lower bound for arbitrary $d$. Let $\1\in\mathsf{GF}(p)^{\frac{n(1-1/\log d)}{d}\times \frac{n(1-1/\log d)}{d}}$ be the all-ones matrix and $\A\in\mathsf{GF}(p)^{\frac{d}{1-1/\log d}\times \frac{d}{1-1/\log d}}$ be the above hard instance. We do the Kronecker product to generate matrix $\C=\1\otimes\A\in\mathsf{GF}(p)^{n\times n}$. If there exists a non-adaptive algorithm $\cA_{test}$ that can correctly test whether $\C$ has rank at most $d$ or is far from having rank $d$ with $cd^2/\log p$ queries and high probability for an absolute constant $c$, the algorithm $\cA_{test}$ can also test whether $\A$ has rank at most $d$ or is far from having rank $d$ with $cd^2/\log p$ queries by outputting the same result as testing $\C$. This leads to a contradiction.
\end{proof}

Theorem \ref{theorem: lower bound over finite fields in general basis} is tight up to a logarithmic factor. Indeed, there is an $\cO(d^2)$ upper bound for every field, independent of $\epsilon$, as follows. If $\A$ is an (unknown) $n\times n$ matrix and has rank at least $d+1$, the matrix $\S\A\T$ will have rank at least $d+1$ with high probability for random $\S$ of $d+1$ rows and $\T$ of $d+1$ columns; furthermore, this matrix product can be computed in the matrix sensing model because $(\S\A\T)_{i,j}$ can be written as ${\langle \A, \S_{i,:}\T_{:,j}\rangle}_{i,j}$, which is in the form of matrix sensing. Computing $\S\A\T$ uses only $(d+1)^2$ measurements instead of the $d^2/\epsilon$ we need for reading entries.

\section{Non-Adaptive Stable Rank Testing}

In this section and onwards, we study the problem of non-adaptively testing numerical properties of real-valued matrices. They can be studied under a unified framework in this section.

Roughly, our analytical framework reduces the testing problem to a sequence of estimation problems \emph{without involving $\mathsf{poly}(n)$ in the sample complexity}. Our framework consists of two levels of estimation: (1) a constant-factor approximation to the statistic $X$ of interest (e.g., stable rank), and (2) a more accurate $(1\pm\tau)$-approximation to $X$.

\begin{definition}[Stable Rank]
The stable rank of $\A$ is defined by $\srank(\A)=\|\A\|_F^2/\|\A\|^2$, where $\|\A\|_F$ is the Frobenius norm and $\|\A\|$ the spectral norm (largest singular value).
\end{definition}

\begin{problem}[Stable Rank Testing in the Entry Model]
Let $\A\in\mathbb{R}^{n\times n}$ be a matrix which satisfies $\|\A\|_\infty\leq 1$ and has one of promised properties:
\begin{itemize}
\item[$\mathsf{H0.}$]
$\A$ has stable rank at most $d$;
\item[$\mathsf{H1.}$]
$\A$ is $\epsilon/d$-far from having stable rank at most $d$, meaning that $\A$ requires changing at least an $\epsilon/d$-fraction of its entries to have stable rank at most $d$.
\end{itemize}
The problem is to design a property testing algorithm that outputs $\mathsf{H0}$ with probability at least $0.99$ if $\A\in \mathsf{H0}$, and output $\mathsf{H1}$ with probability at least 0.99 if $\A\in\mathsf{H1}$, with the least number of queried entries.
\end{problem}

\subsection{Upper Bounds}

\comment{
\begin{lemma}
\label{lemma: f and 2 norm when far from stable rank d}
If matrix $\A$ is $\epsilon/d$-far from being stable rank at most $d$ and $\|\A\|_\infty=1$, then
\begin{equation*}
\|\A\|_F^2> \left(\frac{\epsilon n}{d}-1\right)(d-1),
\end{equation*}
and
\begin{equation*}
\|\A\|^2< \left[1+\left(1-\frac{1}{d}\right)\frac{1}{n}\left(\frac{\epsilon n}{d}-1\right)\right]\frac{\|\A\|_F^2}{d}+\left(\frac{n}{d}-1\right)\left(\frac{\epsilon n}{d}-1\right).
\end{equation*}
\end{lemma}

\begin{proof}
Suppose that $\A$ is $\epsilon/d$-far from being stable rank at most $d$. Let $\x\in \mathbb{S}^{n-1}$ be the unit vector such that $\|\A\|=\|\A\x\|_2$, i.e., $\x$ is the right singular vector corresponding to the largest singular value. Without loss of generality, we assume that $\langle\A_{1,:},\x\rangle^2\le \langle\A_{2,:},\x\rangle^2\le...\le \langle\A_{n,:},\x\rangle^2$. Let $m=\lceil\epsilon n/d\rceil-1$. Replacing each $A_{i,:}$, $i\in [m]$ with vector $\x^\top/\|\x\|_\infty$ forms a new matrix $\B$. Since $\A$ is $\epsilon/d$-far from being stable rank at most $d$ and $\|\A\|_\infty=1$, it must hold that $\srank(\B)>d$ and $\|\B\|_\infty=1$.

By our construction of $\B$, we have $\|\B\|^2\ge \frac{m}{\|\x\|_\infty^2}$ and $\|\B\|_F^2\le \|\A\|_F^2+\frac{m}{\|\x\|_\infty^2}$. Therefore,
\begin{equation*}
d<\srank(\B)=\frac{\|\B\|_F^2}{\|\B\|^2}\le \frac{\|\A\|_F^2+\frac{m}{\|\x\|_\infty^2}}{\frac{m}{\|\x\|_\infty^2}},
\end{equation*}
which implies that
\begin{equation*}
\|\A\|_F^2> \frac{m(d-1)}{\|\x\|_\infty^2}\ge \left(\frac{\epsilon n}{d}-1\right)(d-1).
\end{equation*}
Here the last inequality holds because $\|\x\|_\infty\le 1$.

We now prove the second conclusion. Notice that
\begin{equation*}
S_m:=\sum_{i=1}^m\langle\A_{i,:},\x\rangle^2 \le \frac{m}{n} \sum_{i=1}^n\langle\A_{i,:},\x\rangle^2=\frac{m}{n}\|\A\|^2< \frac{m}{n}\frac{\|\A\|_F^2}{d},
\end{equation*}
where the last inequality holds because $\srank(\A)=\frac{\|\A\|_F^2}{\|\A\|^2}>d$ by assumption. We also observe that
\begin{equation*}
\|\B\|_F^2\le \|\A\|_F^2-\sum_{i=1}^m\|\A_{i,:}\|_2^2+\frac{m}{\|\x\|_\infty^2}\le \|\A\|_F^2-S_m+\frac{m}{\|\x\|_\infty^2},
\end{equation*}
where the last inequality holds since $S_m=\sum_{i=1}^m\langle\A_{i,:},\x\rangle^2\le \sum_{i=1}^m \|\A_{i,:}\|_2^2$ by the Cauchy-Schwartz inequality.
Because
\begin{equation*}
\|\B\|^2\ge \|\B\x\|_2^2= \sum_{i=m+1}^n \langle\A_{i,:},\x\rangle^2+m\left\langle\frac{\x}{\|\x\|_\infty},\x\right\rangle^2=\|\A\|^2-S_m+\frac{m}{\|\x\|_\infty^2},
\end{equation*}
it follows that
\begin{equation*}
d<\srank(\B)=\frac{\|\B\|_F^2}{\|\B\|^2}\le \frac{\|\A\|_F^2-S_m+\frac{m}{\|\x\|_\infty^2}}{\|\A\|^2-S_m+\frac{m}{\|\x\|_\infty^2}}.
\end{equation*}
Thus
\begin{equation*}
\|\A\|^2< \left[1+\left(1-\frac{1}{d}\right)\frac{1}{n}\left(\frac{\epsilon n}{d}-1\right)\right]\frac{\|\A\|_F^2}{d}+\left(\frac{n}{d}-1\right)\left(\frac{\epsilon n}{d}-1\right).
\end{equation*}
\end{proof}
}

\comment{
\begin{lemma}
\label{lemma: concentration}
Let $\widetilde\A\in\R^{n\times n}$ be the matrix which uniformly samples $q$ rows of $\A$. Denote by $X=\frac{n}{q}\|\widetilde\A\|_F^2$. Then it holds that $|X-\|\A\|_F^2|\le \frac{1}{8}(1-\frac{1}{d})^2\epsilon n^2$ with probability at least $0.9$.
\end{lemma}

\begin{proof}
By the Chernoff bound, sampling $q$ rows uniformly gives
\begin{equation*}
\Pr\left[\left|X-\|\A\|_F^2\right|\le \frac{1}{8}\left(1-\frac{1}{d}\right)^2\epsilon n^2\right]\ge \Pr\left[\left|X-\|\A\|_F^2\right|\le \frac{1}{8}\left(1-\frac{1}{d}\right)^2\epsilon n^2\right].
\end{equation*}
\end{proof}
}

\begin{lemma}[{\cite[Theorem 1.8]{rudelson2007sampling}}]
\label{Lemma: Vershynin}
Let $\A$ be an $n\times n$ matrix. Let $\cQ$ be a uniformly random subset of $\{1,2,\dots,n\}$ of expected cardinality $q$ with replacement.
Then
\begin{equation*}
\bbE\left\|\A|_{\cQ}\right\|\lesssim \sqrt{\frac{q}{n}}\|\A\|+\sqrt{\log q}\|\A\|_{(n/q)},
\end{equation*}
where $\A|_{\cQ}=(\A_{i,j})_{i\in\cQ,j\le n}$ is a random row-submatrix of $\A$, and $\|\A\|_{(n/q)}$ is the average of $n/q$ biggest Euclidean lengths of the columns of $\A$.
\end{lemma}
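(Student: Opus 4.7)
The plan is to follow the symmetrization-plus-noncommutative-Khintchine strategy of Rudelson and Vershynin, since the statement is essentially their Theorem 1.8. First I would realize the random set $\cQ$ by independent selectors $\xi_1,\ldots,\xi_n$ with $\bbE\xi_i = q/n$, so that the Gram matrix of the sampled rows admits the rank-one decomposition $\A|_{\cQ}^\top \A|_{\cQ} = \sum_i \xi_i \a_i \a_i^\top$, where $\a_i$ denotes the $i$-th row of $\A$ written as a column vector. Centering and using the triangle inequality for the spectral norm gives
\[
\bbE\|\A|_{\cQ}\|^2 \;\le\; \tfrac{q}{n}\|\A\|^2 \;+\; \bbE\Bigl\|\sum_{i=1}^n \bigl(\xi_i - \tfrac{q}{n}\bigr)\,\a_i\a_i^\top\Bigr\|,
\]
and the deterministic piece already accounts for the $\sqrt{q/n}\,\|\A\|$ term on the right-hand side of the lemma.

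Second, I would symmetrize the zero-mean fluctuation term by introducing independent Rademacher signs $\varepsilon_i$, which costs only a constant factor, and then apply Rudelson's lemma (the non-commutative Khintchine inequality specialized to rank-one summands) conditionally on $\xi$. This bounds the conditional Rademacher sum by $\lesssim \sqrt{\log q}\,R\,\bigl\|\sum_i \xi_i \a_i\a_i^\top\bigr\|^{1/2}$, where $R$ is a norm-type quantity measuring the sizes of the retained $\a_i$'s. Since $\bigl\|\sum_i \xi_i \a_i\a_i^\top\bigr\|^{1/2} = \|\A|_\cQ\|$, the resulting inequality is quadratic in $\bbE\|\A|_\cQ\|$, and the elementary implication $x^2 \lesssim a + b\,x \Rightarrow x \lesssim \sqrt{a} + b$ delivers a bound of the shape $\bbE\|\A|_\cQ\| \lesssim \sqrt{q/n}\,\|\A\| + \sqrt{\log q}\,R$. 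The logarithmic factor is $\log q$ rather than $\log n$ because only about $q$ summands are effectively active, which is a nontrivial but standard refinement.

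The main obstacle, and the genuine technical heart of the argument, is refining $R$ from a crude $\max_i\|\a_i\|_2$ to the stated average of the top $n/q$ Euclidean lengths of the \emph{columns} of $\A$. The key device is a two-scale truncation of $\A$: split its columns into a ``heavy'' part consisting of the roughly $n/q$ columns of largest Euclidean length and a ``light'' complement. The heavy part is handled deterministically using the fact that it is a matrix of rank at most $n/q$ whose operator norm is controlled by the sum of those column lengths; the light part has uniformly bounded column norms, so the Khintchine estimate applies to it with a tight bound. Optimizing the truncation threshold trades the two estimates against one another and converts the maximum-of-lengths bound into the claimed average-of-top-$(n/q)$-column-lengths form. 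I would carry out this truncation exactly as in Rudelson and Vershynin's original proof of their Theorem~1.8, since every step above the truncation is standard.
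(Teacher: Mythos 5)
The paper supplies no proof of this lemma; it is imported verbatim (modulo a transcription slip noted below) from Rudelson and Vershynin's 2007 paper, and the lemma appears in this work purely as a citation. So there is no in-paper argument to compare your sketch against. That said, your outline is the canonical Rudelson--Vershynin strategy: realize $\cQ$ by i.i.d.\ selectors $\xi_i$ with $\bbE\xi_i = q/n$, write $\A|_\cQ^\top\A|_\cQ = \sum_i \xi_i \a_i\a_i^\top$, peel off the mean $(q/n)\A^\top\A$ to account for the $\sqrt{q/n}\,\|\A\|$ term, symmetrize the fluctuation with Rademacher signs, apply Rudelson's lemma (the rank-one non-commutative Khintchine estimate) conditionally on $\xi$, and close the resulting quadratic inequality in $\bbE\|\A|_\cQ\|$. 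That architecture is exactly right, as is the observation that the $\sqrt{\log q}$ (rather than $\sqrt{\log n}$) factor comes from the effective number of active summands.

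There is, however, an internal inconsistency in your sketch that you inherit from the paper's own statement and that a reader would trip over at the truncation step. You correctly set up the Gram matrix $\A|_\cQ^\top\A|_\cQ = \sum_i \xi_i\a_i\a_i^\top$ with $\a_i$ the \emph{rows} of $\A$; the Khintchine step therefore produces the quantity $R = \max_{i:\,\xi_i=1}\|\a_i\|_2$, a maximum over sampled \emph{row} norms. The two-scale truncation that refines $R$ to an average of top-$\lceil n/q\rceil$ Euclidean lengths must consequently split the \emph{rows} of $\A$ by row norm, not the columns --- there is no a priori relation between the column norms and the row-based $R$ coming out of the Khintchine bound, so the heavy/light decomposition by columns does not interact with the quantity you need to control. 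In Rudelson--Vershynin's own formulation $\A|_Q$ is a random \emph{column}-submatrix, and they state the bound in terms of column norms; the row-sampling version used here is their theorem applied to $\A^\top$, so $\|\A\|_{(n/q)}$ should be read as the average of the top $n/q$ Euclidean lengths of the \emph{rows} of $\A$ (equivalently, the columns of $\A^\top$). Once you make that substitution consistently --- rows in the selector decomposition, rows in the Khintchine factor, rows in the truncation --- your sketch closes.
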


\comment{
\begin{proof}
Denote by $\delta_1,\delta_2,\dots,\delta_n$ the $\{0,1\}$-valued independent random variables with $\bbE\delta_i=p_i$. Thus our random set is $Q=\{j\mid \delta_j=1\}$.

Denote by $\x_1,\x_2,\dots,\x_n$ the columns of $\A$. So
\begin{equation*}
\A=\sum_{j=1}^n \e_j\otimes \x_j,\qquad \A|_Q=\sum_{j=1}^n \delta_j \e_j\otimes\x_j.
\end{equation*}
The spectral norm of $\A$ is
\begin{equation*}
\|\A\|=\|\A\A^\top\|^{1/2}=\left\|\sum_{j=1}^n \x_j\otimes\x_j\right\|^{1/2},
\end{equation*}
and the spectral norm of $\A|_Q$ is
\begin{equation*}
\|\A|_Q\|=\left\|\sum_{j=1}^n\delta_j\x_j\otimes \x_j\right\|^{1/2}.
\end{equation*}
Let
\begin{equation*}
E=\bbE\|\A|_Q\|.
\end{equation*}
The symmetrization argument yields
\begin{equation*}
\begin{split}
E&=\bbE\left\|\sum_{j=1}^n(\delta_j-\bbE\delta_j+\bbE\delta_j)\x_j\otimes \x_j\right\|^{1/2}\\
&\le\mathbb{E}\left\|\sum_{j=1}^n(\delta_j-\bbE\delta_j)\x_j\otimes \x_j\right\|^{1/2}+\left\|\sum_{j=1}^n \bbE\delta_j\x_j\otimes \x_j\right\|^{1/2}\\
&\le \bbE_\delta\left(\bbE_\epsilon\left\|\sum_{j=1}^n\epsilon_j\delta_j\x_j\otimes\x_j\right\|\right)^{1/2}+\frac{}{}
\end{split}
\end{equation*}
\end{proof}
}

\begin{lemma}
\label{Lemma: infty norm of uniform sampling from unit sphere}
Let $\x\sim\mathsf{Unif}(\mathbb{S}^{n-1})$. Then with probability at least $1-n^{-2}$, we have $\|\x\|_\infty\le \sqrt{\frac{2\log n}{n}}$.
\end{lemma}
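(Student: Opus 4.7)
The plan is to invoke the standard Gaussian representation of the uniform distribution on the sphere. Let $\g = (g_1, \ldots, g_n)$ be a vector of i.i.d.\ standard Gaussians; then $\x := \g/\|\g\|_2$ has the distribution $\mathsf{Unif}(\mathbb{S}^{n-1})$. Since $\|\x\|_\infty = \|\g\|_\infty/\|\g\|_2$, it suffices to upper bound $\|\g\|_\infty$ and lower bound $\|\g\|_2$ simultaneously with high probability.

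For the numerator, I would apply the standard sub-Gaussian tail bound $\Pr[|g_i| \ge t] \le 2e^{-t^2/2}$ coordinatewise and then a union bound over $i \in [n]$, giving $\Pr[\|\g\|_\infty \ge \sqrt{c_1 \log n}] \le 2 n^{1-c_1/2}$, which is at most $\tfrac{1}{2}n^{-2}$ once $c_1$ is chosen as a sufficiently large absolute constant. For the denominator, I would use a standard $\chi^2_n$ concentration inequality (e.g., Laurent--Massart) to conclude that $\|\g\|_2^2 \ge n/2$ (say) with probability at least $1 - e^{-\Omega(n)} \ge 1 - \tfrac{1}{2}n^{-2}$ for $n$ large. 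Taking a union bound over the two events, with probability at least $1 - n^{-2}$,
\[
\|\x\|_\infty \;=\; \frac{\|\g\|_\infty}{\|\g\|_2} \;\le\; \frac{\sqrt{c_1 \log n}}{\sqrt{n/2}} \;\lesssim\; \sqrt{\frac{\log n}{n}},
\]
which gives the claimed bound up to absolute constants (the precise constant $\sqrt{2}$ in the statement can be obtained by tightening the two concentration estimates, e.g., taking $\|\g\|_2^2 \ge n(1-\delta)$ for a small $\delta$ and working with the sharper Gaussian tail $\Pr[|g_i|\ge t]\le e^{-t^2/2}$ in the relevant regime).

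The argument is entirely routine; there is no serious obstacle. If anything, the only delicate point is matching the exact constant $\sqrt{2}$: the naive union-bound route yields $\sqrt{c\log n /n}$ for some constant $c>2$, and obtaining $c=2$ requires either absorbing the $\log$-factor slack into the $\|\g\|_2$ estimate (which is sharply concentrated at $\sqrt{n}$) or appealing directly to the Beta density of a single coordinate of $\x$, namely $\Pr[|x_1|\ge t] = \Pr[B \ge t^2]$ with $B \sim \text{Beta}(1/2,(n-1)/2)$, and estimating the tail as $\Pr[|x_1|\ge t]\lesssim (1-t^2)^{(n-1)/2}$ before a union bound over the $n$ coordinates.
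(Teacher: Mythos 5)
The paper states this lemma without a proof, so there is no argument of the paper's to compare yours against. Your high-level strategy --- realize $\x$ as $\g/\|\g\|_2$ for i.i.d.\ standard Gaussian $\g$, bound $\|\g\|_\infty$ by a union of sub-Gaussian tails, and bound $\|\g\|_2$ below by $\chi^2$ concentration --- is the standard and correct one, and it does prove a bound of the form $\|\x\|_\infty\le\sqrt{c\log n/n}$ with probability $1-n^{-2}$ for a suitable absolute constant $c$.

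However, your parenthetical claim at the end --- that ``the precise constant $\sqrt 2$ in the statement can be obtained by tightening the two concentration estimates'' --- is not correct, and in fact the lemma's constant $\sqrt 2$ cannot hold at failure level $n^{-2}$. Concretely, for i.i.d.\ standard Gaussians $g_1,\dots,g_n$ one has $\Pr[|g_i|\ge t]\sim\sqrt{2/\pi}\,t^{-1}e^{-t^2/2}$, so with $t=\sqrt{2\log n}$ each tail is $\approx\tfrac{1}{n\sqrt{\pi\log n}}$ and $\Pr[\max_i|g_i|\ge\sqrt{2\log n}]\approx 1-e^{-1/\sqrt{\pi\log n}}\approx\tfrac{1}{\sqrt{\pi\log n}}$, which is $\omega(n^{-2})$. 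Since $\|\g\|_2=(1+o(1))\sqrt n$ with overwhelming probability, $\sqrt{2\log n/n}$ is roughly the \emph{median} of $\|\x\|_\infty$, not a $1-n^{-2}$ upper bound. The Beta-density route you mention gives the same verdict: $\Pr[|x_1|\ge t]\lesssim(1-t^2)^{(n-1)/2}\approx e^{-nt^2/2}$, which at $t^2=2\log n/n$ is $\approx 1/n$ per coordinate; a union bound over $n$ coordinates then yields a trivial bound, not $n^{-2}$. To reach failure probability $n^{-2}$ one needs per-coordinate failure about $n^{-3}$, i.e.\ $e^{-nt^2/2}\lesssim n^{-3}$, which forces $t^2\gtrsim 6\log n/n$. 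So the statement is provable with constant $\approx\sqrt 6$ (or larger), and the paper's $\sqrt 2$ appears to be a slip. This is harmless downstream --- the lemma is invoked inside bounds that absorb absolute constants into $\widetilde\cO(\cdot)$ --- but your proof should not claim to recover $\sqrt 2$; your honest first version, giving $\sqrt{c\log n/n}$ with an unspecified constant $c$, is the correct one.
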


\comment{
\begin{theorem}
Suppose that $d/\epsilon\ge 2$. Then Algorithm \ref{algorithm: algorithm for the stable rank testing problem} is a correct algorithm for the stable rank testing problem with failure probability at most $\frac{1}{3}$ in the entry model. It reads $\widetilde\cO(\frac{d^6}{\epsilon^6})$ entries.
\end{theorem}

\begin{proof}
When $\A$ is far from being stable rank at most $d$, we claim that $\|\A\|_F^2\ge \epsilon n^2(1-\frac{1}{d})$. Otherwise, changing the first $\frac{\epsilon n}{d}$ rows of $\A$ to be all-one vectors $\1^\top$ leads to a new matrix $\B$ such that $\|\B\|^2\ge \frac{\epsilon n^2}{d}$ and $\|\B\|_F^2\le \epsilon n^2(1-\frac{1}{d})+\frac{\epsilon n^2}{d}=\epsilon n^2$, leading to $\srank(\B)\le d$, a contradiction. By the Chernoff bound, it holds that with probability at least $1-e^{-\tau^2\|\A\|_F^2/3}\ge 1-e^{-\tau^2\epsilon n^2(1-1/d)/3}$, by sampling $q$ rows from $\A$, the resulting matrix $\A_{\mathsf{row}}$ satisfies
\begin{equation}
\label{equ: F norm}
(1-\tau)\|\A\|_F^2\le \frac{n}{q}\|\A_{\mathsf{row}}\|_F^2\le (1+\tau)\|\A\|_F^2.
\end{equation}
Conditioning on this event, by sampling $q$ columns from $\A_{\mathsf{row}}$ and using Chernoff bound again, with probability at least $1-e^{-\tau^2\epsilon n^2(1-1/d)/3}-e^{-\tau^2\|\A_{\mathsf{row}}\|^2/3}\ge 1-2e^{-\tau^2\epsilon n^2(1-1/d)q(1-\tau)/(3n)}$ we have
\begin{equation}
\label{equ: F norm entry level}
(1-\tau)^2\|\A\|_F^2\le (1-\tau)\frac{n}{q}\|\A_{\mathsf{row}}\|_F^2\le X\le (1+\tau)\frac{n}{q}\|\A_{\mathsf{row}}\|_F^2\le (1+\tau)^2\|\A\|_F^2.
\end{equation}

\comment{
Conditioned on the event $|X-\|\A\|_F^2|\le \frac{1}{8}(1-\frac{1}{d})^2\epsilon n$ in Lemma \ref{lemma: concentration}, it holds that
\begin{equation*}
X\ge \|\A\|_F^2-\frac{1}{8}\left(1-\frac{1}{d}\right)^2\epsilon n\ge \frac{9}{10}\epsilon n^2\left(1-\frac{1}{d}\right).
\end{equation*}
Therefore, Algorithm \ref{algorithm: algorithm for the stable rank testing problem} is correct on Line 5.

To prove the correctness of Lines 8 and 10, we only need to discuss the case when $X\ge \frac{9}{10}\epsilon n^2\left(1-\frac{1}{d}\right)$. So
\begin{equation*}
\|\A\|_F^2\ge X-\frac{1}{8}\left(1-\frac{1}{d}\right)^2\epsilon n\ge \frac{8}{10}\epsilon n^2\left(1-\frac{1}{d}\right).
\end{equation*}
Let $\eta:=\frac{c\epsilon n}{\|\A\|_F^2}\le \frac{10cd}{8(d-1)}=:\eta'$. We have
\begin{equation}
\left(1-\frac{\tau\eta'}{c}\right)\|\A\|_F^2\le X\le \left(1+\frac{\tau\eta'}{c}\right)\|\A\|_F^2.
\end{equation}
}

Let $c_1>1$ be a value to be specified later. We discuss two separate cases.

\medskip
\noindent{\textbf{Case (i). $\srank(\A)> c_1 d$.}}

Let $\U$ be a uniformly random $n\times n$ orthogonal matrix. Note that $\|\A_{\mathsf{row}}\|=\|\A_{\mathsf{row}}\U\|$, and $(\A\U)_{i,:}=\A_{i,:}\U$ uniformly distributes on $\|\A_{i,:}\|_2\cdot\mathbb{S}^{n-1}$. So $\|\A_{i,:}\U\|_\infty^2\le 2\|\A_{i,:}\U\|_2^2\log (n)/n$ for all $i$ with probability at least $1-1/n^2$ by Lemma \ref{Lemma: infty norm of uniform sampling from unit sphere}. Therefore, with probability at least $1-1/n$ by union bound over all rows, $\|\A\U\|_{\mathsf{col}}^2\le 2\|\A\|_F^2\log (n)/n$, where $\|\A\|_{\mathsf{col}}$ represents the maximum norm among all columns of $\A$. By Lemma \ref{Lemma: Vershynin},
\begin{equation*}
\bbE\|\A_{\mathsf{row}}\|\le C_1\sqrt{\frac{q}{n}}\|\A\|+C_2\sqrt{\log q}\sqrt{\frac{\log n}{n}}\|\A\|_F
\end{equation*}
for absolute constants $C_1$ and $C_2$, and by the Markov bound, with probability at least $0.9$,
\begin{equation*}
\begin{split}
\|\A_{\mathsf{row}}\|&\le C_1\sqrt{\frac{q}{n}}\|\A\|+C_2\sqrt{\log q}\sqrt{\frac{\log n}{n}}\|\A\|_F\\
&\le C_1\sqrt{\frac{q}{n}}\frac{\|\A\|_F}{\sqrt{c_1 d}}+C_2\sqrt{\log q}\sqrt{\frac{\log n}{n}}\|\A\|_F\\
&\le \frac{1}{\sqrt{1-\tau}}\left(C_1\sqrt{\frac{q}{c_1d}}+C_2\sqrt{\log q\log n}\right)\sqrt{\frac{X}{n}}.
\end{split}
\end{equation*}
Conditioning on this event, by applying the same argument on the column sampling of $\A_{\mathsf{row}}$, we have
\begin{equation*}
\begin{split}
\|\widetilde\A\|&\le C_1\sqrt{\frac{q}{n}}\|\A_{\mathsf{row}}\|+C_2\sqrt{\log q}\sqrt{\frac{\log n}{n}}\|\A_{\mathsf{row}}\|_F\\
&\le C_1\sqrt{\frac{q}{n}} \frac{1}{\sqrt{1-\tau}}\left(C_1\sqrt{\frac{q}{c_1d}}+C_2\sqrt{\log q\log n}\right)\sqrt{\frac{X}{n}}+C_2\sqrt{\log q}\sqrt{\frac{\log n}{n}}\sqrt{\frac{q}{1-\tau}}\sqrt{\frac{X}{n}},
\end{split}
\end{equation*}
where $\widetilde\A$ is the matrix after column sampling of $\A_{\mathsf{row}}$.
On the other hand, when $\srank(\A)\le d$, by setting $\beta=\|\A\|_F^2/n^2$ and applying Corollary \ref{lemma: Ismail}, we have with probability at least $0.9$ that
\begin{equation}
\label{equ: row operator norm}
\|\A_{\mathsf{row}}\|\ge \sqrt{1-\tau}\sqrt{\frac{q}{n}}\|\A\|\ge \sqrt{1-\tau}\sqrt{\frac{q}{n}}\frac{\|\A\|_F}{\sqrt{d}}\ge \sqrt{\frac{1-\tau}{1+\tau}}\sqrt{\frac{q}{n}}\sqrt{\frac{X}{d}}.
\end{equation}
Conditioning on this event and applying Corollary \ref{lemma: Ismail} to column sampling of $\A_{\mathsf{row}}$, we have
\begin{equation*}
\|\widetilde\A\|\ge \sqrt{1-\tau}\sqrt{\frac{q}{n}}\|\A_{\mathsf{row}}\|\ge \frac{1-\tau}{\sqrt{1+\tau}}\frac{q}{n}\sqrt{\frac{X}{d}},
\end{equation*}
where we have used the fact that (by Eqns. \eqref{equ: F norm} and \eqref{equ: row operator norm})
\begin{equation}
\label{equ: stable rank of row sampling}
\srank(\A_{\mathsf{row}})=\frac{\|\A_{\mathsf{row}}\|_F^2}{\|\A_{\mathsf{row}}\|^2}\le \frac{1+\tau}{1-\tau}\srank(\A)\le \frac{1+\tau}{1-\tau}d.
\end{equation}
By setting $c_1$ larger than a constant times $(1+\tau)/(1-\tau)^3$, we have
\begin{equation*}
\frac{1-\tau}{\sqrt{1+\tau}}\frac{q}{n}\sqrt{\frac{X}{d}} > C_1\sqrt{\frac{q}{n}} \frac{1}{\sqrt{1-\tau}}\left(C_1\sqrt{\frac{q}{c_1d}}+C_2\sqrt{\log q\log n}\right)\sqrt{\frac{X}{n}}+C_2\sqrt{\log q}\sqrt{\frac{\log n}{n}}\sqrt{\frac{q}{1-\tau}}\sqrt{\frac{X}{n}}.
\end{equation*}
Thus we can distinguish the two cases of $\srank(\A)\le d$ and $\srank(\A)$ far from being at most $d$.

\medskip
\noindent{\textbf{Case (ii). $\srank(\A)\le c_1 d$.}}

Note that in this case, $\|\A\|^2\ge \frac{\|\A\|_F^2}{c_1 d}\ge \frac{\epsilon n^2(1-\frac{1}{d})}{c_1d}$. Let $\beta=\|\A\|_F^2/n^2$. So $\beta \|\A_{i,:}\|_2^2/\|\A\|_F^2\le 1/n$ for all $i$'s, namely, uniform sampling obeys the condition of Corollary \ref{lemma: Ismail}. It then follows from Corollary \ref{lemma: Ismail} that with probability at least $0.9$,
\begin{equation*}
(1-\tau)\|\A\|^2\le \frac{n}{q}\|\A_{\mathsf{row}}\|^2\le (1+\tau)\|\A\|^2.
\end{equation*}
Conditioning on this event, by applying Corollary \ref{lemma: Ismail} again to the column sampling of $\A_{\mathsf{row}}$, we have with high probability,
\begin{equation}
\label{equ: small stable rank, operator norm}
(1-\tau)^2\|\A\|^2\le (1-\tau)\frac{n}{q}\|\A_{\mathsf{row}}\|^2\le \frac{n^2}{q^2}\|\widetilde\A\|^2\le (1+\tau)\frac{n}{q}\|\A_{\mathsf{row}}\|^2\le (1+\tau)^2\|\A\|^2,
\end{equation}
where we have used the fact that $\srank(\A_{\mathsf{row}})\le \cO(d)$ by Eqn. \eqref{equ: stable rank of row sampling}. Combining Eqn. \eqref{equ: small stable rank, operator norm} with Eqn. \eqref{equ: F norm entry level}, we obtain an estimation of stable rank of $\A$ from $\srank(\widetilde\A)$:
\begin{equation}
\label{equ: stable rank estimator}
(1-\Theta(\tau))\cdot\srank(\A)\le \srank(\widetilde\A)\le (1+\Theta(\tau))\cdot\srank(\A).
\end{equation}

We now show that we can distinguish the two cases of $\srank(\A)\le d$ and $\srank(\A)$ far from being at most $d$ by the above estimator $\srank(\widetilde\A)$. Let $\x\in \mathbb{S}^{n-1}$ be the unit vector such that $\|\A\|=\|\A\x\|_2$, i.e., $\x$ is the right singular vector corresponding to the largest singular value. Decompose $\x$ as $\x_H+\x_L$, where $\x_H$ represents the ``heavy'' part of $\x$, i.e., zeroing out those entries whose values are smaller than a hard threshold $x_0=\Theta(\sqrt{\frac{d}{n}})$, and $\x_L$ represents the ``light'' part of $\x$ defined by $\x_L=\x-\x_H$. We note that $\x_H$ is sparse, i.e., $\|\x_H\|_0\le 1/x_0^2$, because $\|\x\|^2=1$. So
\begin{equation*}
\begin{split}
\|\A\|&=\|\A\x\|_2\\
&\le \|\A\x_H\|_2+\|\A\x_L\|_2\\
&\le \sqrt{\sum_{i\in H} \langle\A_{i,:},\x_H\rangle^2}+\|\A\x_L\|_2\\
&\le \frac{\sqrt{n}}{x_0}+\|\A\x_L\|_2\\
&\le \frac{\sqrt{n}}{x_0}+\|\A\|\|\x_L\|_2.
\end{split}
\end{equation*}
Therefore, $\|\x_L\|_2\ge 1-\frac{\sqrt{n}}{x_0\|\A\|}\ge 1-\frac{\sqrt{n}c_1d}{x_0\epsilon n^2(1-1/d)}=\Omega(1)$.
Suppose that $\A$ is $\epsilon/d$-far from being stable rank at most $d$. Without loss of generality, we assume that $\langle\A_{1,:},\x\rangle^2\le \langle\A_{2,:},\x\rangle^2\le...\le \langle\A_{n,:},\x\rangle^2$. Let $m=\lceil\epsilon n/d\rceil-1$. Replacing each $\A_{i,:}$, $i\in [m]$ with vector $\x_L^\top/x_0$ forms a new matrix $\B$. Since $\A$ is $\epsilon/d$-far from being stable rank at most $d$, it must hold that $\srank(\B)>d$ and $\|\B\|_\infty\le 1$.

Notice that
\begin{equation}
\label{equ: S_m}
S_m:=\sum_{i=1}^m\langle\A_{i,:},\x\rangle^2 \le \frac{m}{n} \sum_{i=1}^n\langle\A_{i,:},\x\rangle^2=\frac{m}{n}\|\A\|^2< \frac{m}{n}\frac{\|\A\|_F^2}{d}\le \frac{2\epsilon\|\A\|_F^2}{d^2},
\end{equation}
where the penultimate inequality holds because $\srank(\A)=\frac{\|\A\|_F^2}{\|\A\|^2}>d$ by assumption. We also observe that
\begin{equation}
\label{equ: F norm of B}
\|\B\|_F^2\le \|\A\|_F^2-\sum_{i=1}^m\|\A_{i,:}\|_2^2+\frac{m\|\x_L\|_2^2}{x_0^2}\le \|\A\|_F^2-S_m+\frac{m}{x_0^2},
\end{equation}
where the last inequality holds since $S_m=\sum_{i=1}^m\langle\A_{i,:},\x\rangle^2\le \sum_{i=1}^m \|\A_{i,:}\|_2^2$ by the Cauchy-Schwartz inequality.
Because
\begin{equation}
\label{equ: operator norm of B}
\|\B\|^2\ge \|\B\x\|_2^2= \sum_{i=m+1}^n \langle\A_{i,:},\x\rangle^2+m\left\langle\frac{\x_L}{x_0},\x\right\rangle^2= \|\A\|^2-S_m+\frac{m}{x_0^2}\|\x_L\|_2^2,
\end{equation}
it follows that
\begin{equation*}
\begin{split}
d&<\srank(\B)=\frac{\|\B\|_F^2}{\|\B\|^2}\\
&\le \frac{\|\A\|_F^2-S_m+\frac{m}{x_0^2}}{\|\A\|^2-S_m+\frac{m}{x_0^2}\|\x_L\|_2^2}\qquad (\text{by Eqns. } \eqref{equ: F norm of B} \text{ and } \eqref{equ: operator norm of B})\\
&\le \frac{\|\A\|_F^2+\frac{m}{x_0^2}}{\|\A\|^2-S_m+\frac{m}{x_0^2}\|\x_L\|_2^2}\\
&\le \frac{\|\A\|_F^2+\frac{2\epsilon n}{x_0^2d}}{\|\A\|^2-\frac{2\epsilon}{d}\frac{\|\A\|_F^2}{d}+\frac{2\epsilon n}{2x_0^2d}}.\qquad (\mbox{by Eqn. } \eqref{equ: S_m}\mbox{, }\|\x_L\|_2=\Omega(1) \mbox{ and } m=\Theta(\epsilon n/d))
\end{split}
\end{equation*}
Thus
\begin{equation*}
\begin{split}
\|\A\|^2&< \frac{\|\A\|_F^2}{d}+\frac{2\epsilon n}{x_0^2d^2}+\frac{2\epsilon}{d}\frac{\|\A\|_F^2}{d}-\frac{\epsilon n}{x_0^2d}\\
&\le \frac{\|\A\|_F^2}{d}+\frac{2\epsilon}{d}\frac{\|\A\|_F^2}{d}-\frac{\epsilon n}{x_0^2d}\left(1-\frac{2}{d}\right)\\
&\le \frac{\|\A\|_F^2}{d}+\frac{2\epsilon}{d}\frac{\|\A\|_F^2}{d}-\frac{\epsilon n}{2x_0^2d}\qquad \left(\mbox{since } \frac{1}{2}\le 1-\frac{2}{d}\right)\\
&\le \frac{\|\A\|_F^2}{d}+\frac{2\epsilon}{d}\frac{\|\A\|_F^2}{d}-\frac{\epsilon}{2x_0^2n}\frac{\|\A\|_F^2}{d}\qquad \left(\mbox{since }\frac{\|\A\|_F^2}{n}\le n\right)\\
&=\frac{\|\A\|_F^2}{d}\left(1-\left(\frac{\epsilon}{2x_0^2n}-\frac{2\epsilon}{d}\right)\right)\\
&=\frac{\|\A\|_F^2}{d}\left(1-\Theta\left(\frac{\epsilon}{d}\right)\right),\qquad \left(\mbox{since } x_0^2=\Theta\left(\frac{d}{n}\right)\right)
\end{split}
\end{equation*}
namely, $(1+\Theta(\epsilon/d))\cdot\srank(\A)>d$.
Setting $\tau$ in Eqn. \eqref{equ: stable rank estimator} as $\Theta(\frac{\epsilon}{d})$, we can distinguish the two cases of ``$\srank(\A)<d$'' from ``$\srank(\A)$ being far from stable rank $\le d$'' with sample complexity $\widetilde\cO(\frac{d^2}{\beta^2 \tau^4})=\widetilde\cO(\frac{d^6}{\epsilon^6})$.

\end{proof}

\red{Improved Proof:}
\begin{theorem}
Suppose that $d/\epsilon\ge 2$ and $d\ge 2$. Then Algorithm \ref{algorithm: algorithm for the stable rank testing problem} is a correct algorithm for the stable rank testing problem with failure probability at most $\frac{1}{3}$ in the entry model. It reads $\widetilde\cO(\frac{d^2}{\epsilon^6})$ entries.
\end{theorem}

\begin{proof}
When $\A$ is far from being stable rank at most $d$, we claim that $\|\A\|_F^2\ge \epsilon n^2(1-\frac{1}{d})$. Otherwise, changing the first $\frac{\epsilon n}{d}$ rows of $\A$ to be all-one vectors $\1^\top$ leads to a new matrix $\B$ such that $\|\B\|^2\ge \frac{\epsilon n^2}{d}$ and $\|\B\|_F^2\le \epsilon n^2(1-\frac{1}{d})+\frac{\epsilon n^2}{d}=\epsilon n^2$, leading to $\srank(\B)\le d$, a contradiction. By the Chernoff bound, it holds that with probability at least $1-e^{-\tau^2\|\A\|_F^2/3}\ge 1-e^{-\tau^2\epsilon n^2(1-1/d)/3}$, by sampling $q$ rows from $\A$, the resulting matrix $\A_{\mathsf{row}}$ satisfies
\begin{equation}
\label{equ: F norm}
(1-\tau)\|\A\|_F^2\le \frac{n}{q}\|\A_{\mathsf{row}}\|_F^2\le (1+\tau)\|\A\|_F^2.
\end{equation}
Conditioning on this event, by sampling $q$ columns from $\A_{\mathsf{row}}$ and using Chernoff bound again, with probability at least $1-e^{-\tau^2\epsilon n^2(1-1/d)/3}-e^{-\tau^2\|\A_{\mathsf{row}}\|^2/3}\ge 1-2e^{-\tau^2\epsilon n^2(1-1/d)q(1-\tau)/(3n)}$ we have
\begin{equation}
\label{equ: F norm entry level}
(1-\tau)^2\|\A\|_F^2\le (1-\tau)\frac{n}{q}\|\A_{\mathsf{row}}\|_F^2\le X\le (1+\tau)\frac{n}{q}\|\A_{\mathsf{row}}\|_F^2\le (1+\tau)^2\|\A\|_F^2.
\end{equation}

\comment{
Conditioned on the event $|X-\|\A\|_F^2|\le \frac{1}{8}(1-\frac{1}{d})^2\epsilon n$ in Lemma \ref{lemma: concentration}, it holds that
\begin{equation*}
X\ge \|\A\|_F^2-\frac{1}{8}\left(1-\frac{1}{d}\right)^2\epsilon n\ge \frac{9}{10}\epsilon n^2\left(1-\frac{1}{d}\right).
\end{equation*}
Therefore, Algorithm \ref{algorithm: algorithm for the stable rank testing problem} is correct on Line 5.

To prove the correctness of Lines 8 and 10, we only need to discuss the case when $X\ge \frac{9}{10}\epsilon n^2\left(1-\frac{1}{d}\right)$. So
\begin{equation*}
\|\A\|_F^2\ge X-\frac{1}{8}\left(1-\frac{1}{d}\right)^2\epsilon n\ge \frac{8}{10}\epsilon n^2\left(1-\frac{1}{d}\right).
\end{equation*}
Let $\eta:=\frac{c\epsilon n}{\|\A\|_F^2}\le \frac{10cd}{8(d-1)}=:\eta'$. We have
\begin{equation}
\left(1-\frac{\tau\eta'}{c}\right)\|\A\|_F^2\le X\le \left(1+\frac{\tau\eta'}{c}\right)\|\A\|_F^2.
\end{equation}
}

Let $c_1>1$ be a value to be specified later. We discuss two separate cases.

\medskip
\noindent{\textbf{Case (i). $\srank(\A)> c_1 d$.}}

Let $\U$ be a uniformly random $n\times n$ orthogonal matrix. Note that $\|\A_{\mathsf{row}}\|=\|\A_{\mathsf{row}}\U\|$, and $(\A\U)_{i,:}=\A_{i,:}\U$ uniformly distributes on $\|\A_{i,:}\|_2\cdot\mathbb{S}^{n-1}$. So $\|\A_{i,:}\U\|_\infty^2\le 2\|\A_{i,:}\U\|_2^2\log (n)/n$ for all $i$ with probability at least $1-1/n^2$ by Lemma \ref{Lemma: infty norm of uniform sampling from unit sphere}. Therefore, with probability at least $1-1/n$ by union bound over all rows, $\|\A\U\|_{\mathsf{col}}^2\le 2\|\A\|_F^2\log (n)/n$, where $\|\A\|_{\mathsf{col}}$ represents the maximum norm among all columns of $\A$. By Lemma \ref{Lemma: Vershynin},
\begin{equation*}
\bbE\|\A_{\mathsf{row}}\|\le C_1\sqrt{\frac{q}{n}}\|\A\|+C_2\sqrt{\log q}\sqrt{\frac{\log n}{n}}\|\A\|_F
\end{equation*}
for absolute constants $C_1$ and $C_2$, and by the Markov bound, with probability at least $0.9$,
\begin{equation*}
\begin{split}
\|\A_{\mathsf{row}}\|&\le C_1\sqrt{\frac{q}{n}}\|\A\|+C_2\sqrt{\log q}\sqrt{\frac{\log n}{n}}\|\A\|_F\\
&\le C_1\sqrt{\frac{q}{n}}\frac{\|\A\|_F}{\sqrt{c_1 d}}+C_2\sqrt{\log q}\sqrt{\frac{\log n}{n}}\|\A\|_F\\
&\le \frac{1}{\sqrt{1-\tau}}\left(C_1\sqrt{\frac{q}{c_1d}}+C_2\sqrt{\log q\log n}\right)\sqrt{\frac{X}{n}}.
\end{split}
\end{equation*}
Conditioning on this event, by applying the same argument on the column sampling of $\A_{\mathsf{row}}$, we have
\begin{equation*}
\begin{split}
\|\widetilde\A\|&\le C_1\sqrt{\frac{q}{n}}\|\A_{\mathsf{row}}\|+C_2\sqrt{\log q}\sqrt{\frac{\log n}{n}}\|\A_{\mathsf{row}}\|_F\\
&\le C_1\sqrt{\frac{q}{n}} \frac{1}{\sqrt{1-\tau}}\left(C_1\sqrt{\frac{q}{c_1d}}+C_2\sqrt{\log q\log n}\right)\sqrt{\frac{X}{n}}+C_2\sqrt{\log q}\sqrt{\frac{\log n}{n}}\sqrt{\frac{q}{1-\tau}}\sqrt{\frac{X}{n}},
\end{split}
\end{equation*}
where $\widetilde\A$ is the matrix after column sampling of $\A_{\mathsf{row}}$.
On the other hand, when $\srank(\A)\le d$, by setting $\beta=\|\A\|_F^2/n^2$ and applying Corollary \ref{lemma: Ismail}, we have with probability at least $0.9$ that
\begin{equation}
\label{equ: row operator norm}
\|\A_{\mathsf{row}}\|\ge \sqrt{1-\tau}\sqrt{\frac{q}{n}}\|\A\|\ge \sqrt{1-\tau}\sqrt{\frac{q}{n}}\frac{\|\A\|_F}{\sqrt{d}}\ge \sqrt{\frac{1-\tau}{1+\tau}}\sqrt{\frac{q}{n}}\sqrt{\frac{X}{d}}.
\end{equation}
Conditioning on this event and applying Corollary \ref{lemma: Ismail} to column sampling of $\A_{\mathsf{row}}$, we have
\begin{equation*}
\|\widetilde\A\|\ge \sqrt{1-\tau}\sqrt{\frac{q}{n}}\|\A_{\mathsf{row}}\|\ge \frac{1-\tau}{\sqrt{1+\tau}}\frac{q}{n}\sqrt{\frac{X}{d}},
\end{equation*}
where we have used the fact that (by Eqns. \eqref{equ: F norm} and \eqref{equ: row operator norm})
\begin{equation}
\label{equ: stable rank of row sampling}
\srank(\A_{\mathsf{row}})=\frac{\|\A_{\mathsf{row}}\|_F^2}{\|\A_{\mathsf{row}}\|^2}\le \frac{1+\tau}{1-\tau}\srank(\A)\le \frac{1+\tau}{1-\tau}d.
\end{equation}
By setting $c_1$ larger than a constant times $(1+\tau)/(1-\tau)^3$, we have
\begin{equation*}
\frac{1-\tau}{\sqrt{1+\tau}}\frac{q}{n}\sqrt{\frac{X}{d}} > C_1\sqrt{\frac{q}{n}} \frac{1}{\sqrt{1-\tau}}\left(C_1\sqrt{\frac{q}{c_1d}}+C_2\sqrt{\log q\log n}\right)\sqrt{\frac{X}{n}}+C_2\sqrt{\log q}\sqrt{\frac{\log n}{n}}\sqrt{\frac{q}{1-\tau}}\sqrt{\frac{X}{n}}.
\end{equation*}
Thus we can distinguish the two cases of $\srank(\A)\le d$ and $\srank(\A)$ far from being at most $d$.

\medskip
\noindent{\textbf{Case (ii). $\srank(\A)\le c_1 d$.}}

Let $\beta=\|\A\|_F^2/n^2$. So $\beta \|\A_{i,:}\|_2^2/\|\A\|_F^2\le 1/n$ for all $i$'s, namely, uniform sampling obeys the condition of Corollary \ref{lemma: Ismail}. It then follows from Corollary \ref{lemma: Ismail} that with probability at least $0.9$,
\begin{equation*}
(1-\tau)\|\A\|^2\le \frac{n}{q}\|\A_{\mathsf{row}}\|^2\le (1+\tau)\|\A\|^2.
\end{equation*}
Conditioning on this event, by applying Corollary \ref{lemma: Ismail} again to the column sampling of $\A_{\mathsf{row}}$, we have with high probability,
\begin{equation}
\label{equ: small stable rank, operator norm}
(1-\tau)^2\|\A\|^2\le (1-\tau)\frac{n}{q}\|\A_{\mathsf{row}}\|^2\le \frac{n^2}{q^2}\|\widetilde\A\|^2\le (1+\tau)\frac{n}{q}\|\A_{\mathsf{row}}\|^2\le (1+\tau)^2\|\A\|^2,
\end{equation}
where we have used the fact that $\srank(\A_{\mathsf{row}})\le \cO(d)$ by Eqn. \eqref{equ: stable rank of row sampling}. Combining Eqn. \eqref{equ: small stable rank, operator norm} with Eqn. \eqref{equ: F norm entry level}, we obtain an estimation of stable rank of $\A$ from $\srank(\widetilde\A)$:
\begin{equation}
\label{equ: stable rank estimator}
(1-\Theta(\tau))\cdot\srank(\A)\le \srank(\widetilde\A)\le (1+\Theta(\tau))\cdot\srank(\A).
\end{equation}

We now show that we can distinguish the two cases of $\srank(\A)\le d$ and $\srank(\A)$ far from being at most $d$ by the above estimator $\srank(\widetilde\A)$. Let $\x\in \mathbb{S}^{n-1}$ be the unit vector such that $\|\A\|=\|\A\x\|_2$, i.e., $\x$ is the right singular vector corresponding to the largest singular value. Let $\x^{(i)}$ be the projection of $\x$ on those coordinates with absolute value in $(1/2^i,1/2^{i-1})$, so that $\x=\sum_{i=0}^\infty\x^{(i)}$. Then $\|\A\x\|_2\le \sum_{i=0}^\infty\|\A\x^{(i)}\|_2$ by the triangle inequality. Also, for all $i$'s, we have
\begin{equation*}
\|\A\x^{(i)}\|_2\le \|\A\| \|\x^{(i)}\|_2\le \frac{\sqrt{n}\|\A\|}{2^{i-1}}\le \frac{\sqrt{n}\|\A\|_F}{2^{i-1}}\le \frac{n^{3/2}}{2^{i-1}},
\end{equation*}
which implies that
\begin{equation*}
\sum_{i>2\log n} \|\A\x^{(i)}\|_2\le \sum_{i>2\log n} \frac{n^{3/2}}{2^{i-1}}\le \frac{2}{\sqrt{n}}.
\end{equation*}
On the other hand, notice that $\|\A\|\ge \|\A\|_F/\sqrt{n}\ge \sqrt{\epsilon n}\sqrt{1-\frac{1}{d}}$. Therefore,
\begin{equation*}
\sum_{i=0}^{2\log n}\|\A\x^{(i)}\|_2 \ge \|\A\|-\sum_{i=2\log n}^{\infty}\|\A\x^{(i)}\|_2\ge \frac{\|\A\|}{2}.
\end{equation*}
So by averaging, there must exist a $k\in[2\log n]$ such that $\|\A\x^{(k)}\|_2\ge \|\A\|/(4\log n)$. Denote by $\x_L:=\x^{(k)}$ and $x_0:=1/2^{k-1}$, where $\|\x_L\|_\infty\le x_0$. Note that $\|\x_L\|_2\le \|\x\|_2\le 1$.

Suppose that $\A$ is $\epsilon/d$-far from being stable rank at most $d$. Without loss of generality, we assume that $\langle\A_{1,:},\x\rangle^2\le \langle\A_{2,:},\x\rangle^2\le...\le \langle\A_{n,:},\x\rangle^2$. Let $m=\frac{\epsilon' n^2}{d \|\x_L\|_0}$, where $\epsilon'=\frac{\epsilon(1-\frac{1}{d})}{512c_1 \log^4 (n)}<\epsilon$. Replacing each $\A_{i,\mathsf{supp}(\x_L)}$, $i\in [m]$ with $\x_L^\top/x_0$ forms a new matrix $\B$. Since $\A$ is $\epsilon/d$-far from being stable rank at most $d$, it must hold that $\srank(\B)>d$ and $\|\B\|_\infty\le 1$.
Let $\mathsf{supp}(\x_L)$ be the support set of $\x_L$. We also have
\begin{equation*}
\begin{split}
\|\A\|^2&\le 16\log^2 (n)\|\A\x_L\|_2^2\\
&=16\log^2(n) \sum_{i=1}^n \langle\A_{i,:},\x_L\rangle^2\\
&\le 16\log^2(n) \sum_{i=1}^n \|\A_{i,\mathsf{supp}(\x_L)}\|_2^2 \|\x_L\|_2^2\qquad (\text{by Cauchy-Schwartz})\\
&\le 16\log^2(n) n\|\x_L\|_0\qquad (\text{since } \|\x_L\|_2\le 1,\ \|\A_{i,\mathsf{supp}(\x_L)}\|_2^2\le \|\x_L\|_0),
\end{split}
\end{equation*}
which implies that
\begin{equation*}
c_1d\ge \srank(\A)=\frac{\|\A\|_F^2}{\|\A\|^2}\ge \frac{\epsilon n(1-\frac{1}{d})}{16\log^2(n) \|\x_L\|_0},
\end{equation*}
i.e., $\|\x_L\|_0\ge \frac{\epsilon n(1-\frac{1}{d})}{16c_1 d\log^2 (n)}=\frac{32\epsilon' n \log^2 (n)}{d}$.

Notice that
\begin{equation}
\label{equ: S_m}
S_m:=\sum_{i=1}^m\langle\A_{i,:},\x_L\rangle^2 \le \frac{m}{n} \sum_{i=1}^n\langle\A_{i,:},\x_L\rangle^2=\frac{m}{n}\|\A\x_L\|_2^2\le \frac{m}{n}\|\A\|^2,
\end{equation}
where the penultimate inequality holds because $\srank(\A)=\frac{\|\A\|_F^2}{\|\A\|^2}>d$ by assumption. We also observe that
\begin{equation}
\label{equ: F norm of B}
\|\B\|_F^2\le \|\A\|_F^2+\frac{m\|\x_L\|_2^2}{x_0^2}.
\end{equation}
Because
\begin{equation}
\label{equ: operator norm of B}
\begin{split}
\|\B\|^2&=\max_{\|\y\|_2\le 1} \|\B\y\|_2^2\\
&\ge \|\B\x_L\|_2^2\qquad (\text{because } \|\x_L\|_2\le 1)\\
&= \sum_{i=m+1}^n \langle\A_{i,:},\x_L\rangle^2+m\left\langle\frac{\x_L}{x_0},\x_L\right\rangle^2\\
&=\|\A\x_L\|_2^2-S_m+\frac{m}{x_0^2}\|\x_L\|_2^2\\
&\ge \frac{\|\A\|^2}{16\log^2 n}-S_m+\frac{m}{4}\|\x_L\|_0\quad \left(\text{since non-zero entries of } \x_L \text{ in }\left[\frac{x_0}{2},x_0\right] \text{ so } \|\x_L\|_2^2\ge \frac{\|\x_L\|_0x_0^2}{4}\right)\\
&\ge \frac{\|\A\|^2}{16\log^2 n}-\frac{m}{n}\|\A\|^2+\frac{m}{4}\|\x_L\|_0\qquad (\text{by Eqn. \eqref{equ: S_m}})\\
&\ge \frac{\|\A\|^2}{32\log^2 n}+\frac{m}{4}\|\x_L\|_0,\qquad \left(\text{since }\frac{\|\A\|^2}{32\log ^2n}\ge \frac{m}{n}\|\A\|^2\right)
\end{split}
\end{equation}
and the fact that the number of changed entries is $\frac{\epsilon' n^2}{d}<\frac{\epsilon n^2}{d}$, it follows from the definition of being $\epsilon/d$-far from having stable rank at most $d$ that
\begin{equation*}
\begin{split}
d&<\srank(\B)=\frac{\|\B\|_F^2}{\|\B\|^2}\\
&\le \frac{\|\A\|_F^2+\frac{m\|\x_L\|_2^2}{x_0^2}}{\frac{\|\A\|^2}{16\log^2 n}+\frac{m}{4}\|\x_L\|_0}\qquad (\text{by Eqns. } \eqref{equ: F norm of B} \text{ and } \eqref{equ: operator norm of B})\\
&\le \frac{\|\A\|_F^2+\frac{\epsilon' n^2\|\x_L\|_2^2}{x_0^2d\|\x_L\|_0}}{\frac{\|\A\|^2}{16\log^2 n}+\frac{\epsilon' n^2}{4d}}.\qquad (\mbox{by Eqn. } \eqref{equ: S_m} \mbox{ and } m=\epsilon' n^2/(\|\x_L\|_0d))
\end{split}
\end{equation*}
Thus
\begin{equation*}
\begin{split}
\frac{\|\A\|^2}{16\log^2 n}&< \frac{\|\A\|_F^2}{d}+\frac{\epsilon' n^2\|\x_L\|_2^2}{x_0^2d^2\|\x_L\|_0}-\frac{\epsilon' n^2}{4d}\\
&\le \frac{\|\A\|_F^2}{d}+\frac{\epsilon' n^2}{d^2}-\frac{\epsilon' n^2}{4d}\quad (\text{because } \|\x_L\|_2^2\le x_0^2\|\x_L\|_0)\\
&\le \frac{\|\A\|_F^2}{d}-\frac{\epsilon' n^2}{8d}\\
&\le \frac{\|\A\|_F^2}{d}-\frac{\epsilon' \|\A\|_F^2}{8d}\quad (\text{because } \|\A\|_F^2\le n^2)\\
&\le \frac{\|\A\|_F^2}{d}\left(1-\frac{\epsilon'}{8}\right).
\end{split}
\end{equation*}
That is, $(1+\Theta(\epsilon'))\cdot\srank(\A)>d$.
Setting $\tau$ in Eqn. \eqref{equ: stable rank estimator} as $\Theta(\epsilon')$, we can distinguish the two cases of $\srank(\A)<d$ and $\srank(\A)$ being far from stable rank $\le d$ with sample complexity $\widetilde\cO(\frac{d^2}{\beta^2 \tau^4})=\widetilde\cO(\frac{d^2}{\epsilon^6})$, where we have used the fact that $\beta=\|\A\|_F^2/n^2\ge \epsilon (1-\frac{1}{d})\ge \frac{\epsilon}{2}$.

\comment{
To get $d^2/\epsilon^2$, we need $\log^2(1/\epsilon)$ algorithms. In particular, we run Ismail with each possible $\beta$ a power of $2$ between $1$ and $1/\epsilon$, and do this for both the row and then the column sampling. Then the idea is that if $\beta$ is large, e.g., $1/\epsilon$ for rows and $1$ for columns, this means the operator norm is concentrated on an $\epsilon n \times n$ submatrix, and so you need to change a $1/d$ fraction of entries on the submatrix, not an $\epsilon/d$, and so the second part of our algorithm can be run with constant eps on that submatrix. So in this case we only get in total $d^2/\epsilon$. But if $\beta=1$ originally then the second part of the algorithm needs changing $\epsilon/d$ fraction of entries but since $\beta=1$ this gives $d^2/\epsilon^2$ in this case.
}
\end{proof}
}

\begin{algorithm}
\caption{Algorithm for stable rank testing under sampling/sensing model}
\label{algorithm: algorithm for the stable rank testing problem}
\begin{algorithmic}[1]
\LineComment Lines 1-2 estimate the Frobenius norm of $\A$.
\State Uniformly sample $q_0=\cO(\frac{\sqrt{d}}{\epsilon^{2.5}})$ entries $\A$, forming vector $\y$.
\State $X\leftarrow \frac{n^2}{q_0}\|\y\|_2^2$.\Comment{$X$ is an estimator of $\|\A\|_F^2$.}
\If{$X\le \frac{9}{10}(1-\frac{1}{d})\epsilon n^2$}
	\State Output ``stable rank $\le d$''.  \label{alg:F norm constant approximation}
\Else
	\State Uniformly sample a $q\times q$ submatrix $\widetilde\A'$ with $q=\cO(\frac{d\log n}{\epsilon})$.
	\If {$\|\widetilde\A'\|\le C_0\frac{\sqrt{X}}{\sqrt{c_1d}}\frac{q}{n}$}
		\State Output ``$\epsilon/d$-far from being stable rank $\le d$''.
	\Else
		\State Run Algorithm \ref{algorithm: the sampling algorithm for even p} (with $\tau=\Theta(\epsilon/d^{1/4})$) for the sampling model or Algorithm \ref{algorithm: the sketching algorithm for even p} (with $\tau=\Theta(\epsilon/(d^{1/4}\sqrt{\log n}))$) for the sensing model to obtain an operator norm estimate $Z$.
		\If{$Z^2\ge \frac{X}{d}$}
			\State Output ``stable rank $\le d$''.
		\Else
			\State Output ``$\epsilon/d$-far from being stable rank $\le d$''.
		\EndIf
	\EndIf
\EndIf
\end{algorithmic}
\end{algorithm}

\begin{theorem}
\label{theorem: stable rank testing upper bound}
Suppose that $d=\Omega((1/\epsilon)^{1/3})$. Then (a) Algorithm \ref{algorithm: algorithm for the stable rank testing problem} is a correct algorithm for the stable rank testing problem with failure probability at most $1/3$ under the sampling model, and it reads $\cO(\frac{d^3}{\epsilon^4}\log^2 n)$ entries; (b) Algorithm \ref{algorithm: algorithm for the stable rank testing problem} is a correct algorithm for the stable rank testing problem with failure probability at most $1/3$ under the sensing model, and it makes $\cO(\frac{d^{2.5}}{\epsilon^2}\log n)$ sensing queries.
\end{theorem}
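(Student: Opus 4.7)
\begin{proofoutline}
The plan is to analyze Algorithm~\ref{algorithm: algorithm for the stable rank testing problem} in three stages, each eliminating one possibility using progressively more expensive but more accurate estimates.

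\textbf{Stage 1 (Frobenius warm-up).} First, I would prove the ``always-large-Frobenius'' claim: if $\A$ is $\epsilon/d$-far from having stable rank at most $d$, then $\|\A\|_F^2 \ge \epsilon n^2(1-\tfrac{1}{d})$. The witness is to replace the first $\epsilon n/d$ rows of $\A$ by $\1^\top$ (allowed since $\|\A\|_\infty\le 1$), which would otherwise yield a matrix of Frobenius norm at most $\sqrt{\epsilon}n$ and spectral norm at least $\sqrt{\epsilon n^2/d}$, hence stable rank at most $d$, contradicting the far-from assumption. Since each sampled entry lies in $[-1,1]$, Hoeffding/Chernoff shows that $X=(n^2/q_0)\|\y\|_2^2$ is a constant-factor estimator of $\|\A\|_F^2$ with $q_0=\widetilde\cO(\sqrt{d}/\epsilon^{2.5})$ samples, which is enough to make line~\ref{alg:F norm constant approximation} correct and to guarantee henceforth $\|\A\|_F^2 \asymp X \gtrsim \epsilon n^2$, i.e.\ $\beta := \|\A\|_F^2/n^2 = \Omega(\epsilon)$.

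\textbf{Stage 2 (separating ``huge'' stable rank).} Next I would handle the case $\srank(\A) > c_1 d$ for a sufficiently large constant $c_1$, using the cheap $q\times q$ random submatrix $\widetilde\A'$ with $q=\widetilde\cO(d/\epsilon)$. Applying Lemma~\ref{Lemma: Vershynin} (Rudelson--Vershynin) to row-sampling and then column-sampling, together with a random rotation and Lemma~\ref{Lemma: infty norm of uniform sampling from unit sphere} to bound the ``largest column norm'' term by $\widetilde\cO(\|\A\|_F/\sqrt{n})$, gives $\|\widetilde\A'\|\lesssim (q/n)\|\A\|+\sqrt{\log n\cdot q/n}\cdot\|\A\|_F$. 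When $\srank(\A)>c_1 d$, the first term dominates and is at most $(q/n)\|\A\|_F/\sqrt{c_1 d}$. Conversely, if $\srank(\A)\le d$, Corollary~\ref{lemma: Ismail} (matrix Bernstein with stable-rank control) yields $\|\widetilde\A'\|\gtrsim (q/n)\|\A\|\ge (q/n)\|\A\|_F/\sqrt{d}$. Choosing $c_1$ large enough separates these two bounds by a constant factor, so the threshold test with $C_0\sqrt{X/(c_1 d)}\,q/n$ decides correctly.

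\textbf{Stage 3 (refined operator norm test).} Having ruled out $\srank(\A) > c_1 d$, I have the crucial prior $\srank(\A)\le c_1 d$, so I can invoke Theorem~\ref{theorem: operator norm estimator by Ismail} (sampling) or Theorem~\ref{theorem: operator norm estimator under sensing model} (sensing) with threshold $\beta=\Omega(\epsilon)$ to compute $Z$ satisfying $(1-\tau)\|\A\|\le Z\le(1+\tau)\|\A\|$. The sampling model uses $\widetilde\cO(d^2/(\beta^2\tau^4))$ entries, while the sensing model uses $\widetilde\cO(d^2/\tau^2)$ queries. Setting $\tau=\Theta(\epsilon/d^{1/4})$ for sampling gives $\widetilde\cO(d^3/\epsilon^4)$, and $\tau=\Theta(\epsilon/(d^{1/4}\sqrt{\log n}))$ for sensing gives $\widetilde\cO(d^{2.5}/\epsilon^2)$, matching the claimed complexities. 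The test $Z^2\gtrless X/d$ then decides whether $\srank(\A)$ is at most $d$ or meaningfully above $d$.

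\textbf{Main obstacle: the gap-amplification step.} The hardest part is showing that the mild multiplicative accuracy $\tau=\epsilon/d^{1/4}$ (rather than $\tau=\Theta(\epsilon/d)$ that a naive ``replace a single row by the top singular vector'' argument would demand) suffices. I would prove that being $\epsilon/d$-far from stable rank $\le d$ forces $\srank(\A)\ge (1+\Omega(\epsilon/\mathrm{polylog}(n)))\,d$. The idea: let $\x$ be the top right singular vector; decompose $\x=\sum_{i}\x^{(i)}$ by dyadic coordinate buckets $(1/2^{i},1/2^{i-1}]$, use $\|\A\|\le\sum_i \|\A\x^{(i)}\|_2$ and the trivial bounds to find an index $k\le 2\log n$ with $\|\A\x^{(k)}\|_2\ge \|\A\|/(4\log n)$, and set $\x_L=\x^{(k)}$, $x_0=1/2^{k-1}$. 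Then the restriction $\|\x_L\|_\infty\le x_0$ ensures that overwriting $m=\Theta(\epsilon n^2/(d\|\x_L\|_0))$ rows on $\mathrm{supp}(\x_L)$ by $\x_L^\top/x_0$ respects $\|\B\|_\infty\le 1$. A careful comparison of $\|\B\|_F^2$ and $\|\B\x_L\|_2^2$, using $\|\x_L\|_0\gtrsim \epsilon n/(d\log^2 n)$ (which follows from $\srank(\A)\le c_1 d$), yields the desired $(1+\widetilde\Omega(\epsilon/d^{1/2}))$ multiplicative gap on $\srank(\A)$, making the $\tau$ above sufficient. Combining the three stages by a union bound over their $O(1)$ failure probabilities concludes the proof.
\end{proofoutline}
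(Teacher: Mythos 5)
Your Stages 1 and 2 match the paper: the Frobenius lower bound $\|\A\|_F^2\ge\epsilon n^2(1-\tfrac1d)$ via the all-ones row replacement, the Chebyshev estimate of $\|\A\|_F^2$, and the Rudelson--Vershynin / random-rotation argument that separates $\srank(\A)>c_1 d$ from $\srank(\A)\le d$ by the operator norm of a single $\widetilde\cO(d/\epsilon)\times\widetilde\cO(d/\epsilon)$ submatrix.

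The genuine gap is in Stage 3, the gap-amplification lemma. You propose a dyadic-bucket decomposition of the top right singular vector $\x=\sum_i\x^{(i)}$ and pick a bucket $\x_L=\x^{(k)}$ with $\|\A\x_L\|_2\ge\|\A\|/(4\log n)$. This route inherently pays $\mathrm{polylog}(n)$ factors twice: once in the averaging bound $\|\A\x^{(k)}\|_2\ge\|\A\|/(4\log n)$, and once in the support lower bound $\|\x_L\|_0\gtrsim\epsilon n/(d\log^2 n)$. Tracing the computation, this can only produce a multiplicative stable-rank gap of the form $1+\Omega(\epsilon/\mathrm{polylog}(n))$, not $1+\widetilde\Omega(\epsilon/d^{1/2})$ as you write, and the latter is anyway \emph{too small} to be detected with accuracy $\tau=\Theta(\epsilon/d^{1/4})$ (one needs $\tau$ at most the gap, and $\epsilon/d^{1/4}>\epsilon/d^{1/2}$), so your proposal is internally inconsistent. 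The paper avoids the polylog loss with a different decomposition: it zeroes out the \emph{large} coordinates (those $\ge\theta/\sqrt n$) of the top right singular vector $\u$ to get $\u''$, sets $\v=\A\u''/\|\A\u''\|_2$ and similarly strips large coordinates to get $\v''$, then uses the $\ell_1$--$\ell_\infty$ duality $\langle\A\u'',\v''\rangle\le\|\u''\|_1\|\v''\|_1$ with $\|\A\|_\infty\le1$ to conclude that at least one of $\|\u''\|_1,\|\v''\|_1$ is $\gtrsim\sqrt{\|\A\|_F}/d^{1/4}$. After also removing the few coordinates of size $\ge\kappa$ it replaces $m$ rows by $\pm\operatorname{sgn}(\hat\x_j)$ on $\operatorname{supp}(\hat\x)$ (the sign function is essential so that $\langle\B_{i,:},\u\rangle\ge\|\hat\x\|_1$ exactly, avoiding the $\|\x_L\|_2^2/x_0$ bookkeeping). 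This yields a clean gap $1+\Omega(\epsilon n^{3/2}/(d^{1/4}\|\A\|_F^{3/2}))\ge1+\Omega(\epsilon/d^{1/4})$, and it is precisely here that the hypothesis $d=\Omega((1/\epsilon)^{1/3})$ enters (to ensure $\|\A\|_F\ge n/(d^{3/2}\cdot\text{const})$). Your bucketing argument never uses this hypothesis, which is another signal that you are not following the path that produces the stated $\cO(d^3\log^2 n/\epsilon^4)$ and $\cO(d^{2.5}\log n/\epsilon^2)$ complexities.
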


\begin{proof}
When $\A$ is $\epsilon/d$-far from being stable rank at most $d$, we claim that $\|\A\|_F^2\ge \epsilon n^2(1-\frac{1}{d})$. Otherwise, replacing any $\frac{\epsilon n}{d}$ rows of $\A$ with all-one row vectors $\1^\top$'s results in a new matrix $\B$ such that $\|\B\|^2\ge \frac{\epsilon n^2}{d}$ and $\|\B\|_F^2= \|\A\|_F^2+\frac{\epsilon n^2}{d}\le \epsilon n^2(1-\frac{1}{d})+\frac{\epsilon n^2}{d}=\epsilon n^2$, leading to $\srank(\B)\le d$, a contradiction. We note that by sampling $q_0$ entries from $\A$ and stacking them as vector $\y$, the resulting estimator $X=\frac{n^2}{q_0^2}\|\y\|_2^2$ satisfies $\bbE[X]=\|\A\|_F^2$ and $\mathsf{Var}[X]\le n^2(n^4/q_0^2)(q_0/n^2)=n^4/q_0$. So by the Chebyshev's inequality, when $q_0=\cO(n^4/(\tau^2\|\A\|_F^4))$ we have
\begin{equation*}
\Pr\left[|X-\|\A\|_F^2|>\tau\|\A\|_F^2\right]\le \frac{n^4/q_0}{\tau^2\|\A\|_F^4}\le \frac{1}{3}.
\end{equation*}
Thus we have
\begin{equation}
\label{equ: F norm entry level}
(1-\tau)\|\A\|_F^2\le X\le (1+\tau)\|\A\|_F^2,
\end{equation}
where $\tau$ will be specified later multiple times.
So the algorithm is correct in Step \ref{alg:F norm constant approximation} with constant $\tau$ (although we over-sample entries here for later purpose).

\comment{
Conditioned on the event $|X-\|\A\|_F^2|\le \frac{1}{8}(1-\frac{1}{d})^2\epsilon n$ in Lemma \ref{lemma: concentration}, it holds that
\begin{equation*}
X\ge \|\A\|_F^2-\frac{1}{8}\left(1-\frac{1}{d}\right)^2\epsilon n\ge \frac{9}{10}\epsilon n^2\left(1-\frac{1}{d}\right).
\end{equation*}
Therefore, Algorithm \ref{algorithm: algorithm for the stable rank testing problem} is correct on Line 5.

To prove the correctness of Lines 8 and 10, we only need to discuss the case when $X\ge \frac{9}{10}\epsilon n^2\left(1-\frac{1}{d}\right)$. So
\begin{equation*}
\|\A\|_F^2\ge X-\frac{1}{8}\left(1-\frac{1}{d}\right)^2\epsilon n\ge \frac{8}{10}\epsilon n^2\left(1-\frac{1}{d}\right).
\end{equation*}
Let $\eta:=\frac{c\epsilon n}{\|\A\|_F^2}\le \frac{10cd}{8(d-1)}=:\eta'$. We have
\begin{equation}
\left(1-\frac{\tau\eta'}{c}\right)\|\A\|_F^2\le X\le \left(1+\frac{\tau\eta'}{c}\right)\|\A\|_F^2.
\end{equation}
}

Let $c_1>1$ be an absolute constant to be specified later. We discuss two separate cases.

\medskip
\noindent{\textbf{Case (i). $\srank(\A)> c_1 d$ when $\A$ is far from $\srank(\A)\le d$.}}

We first discuss the case when $\A$ is far from $\srank(\A)\le d$. Let $\U$ be a uniformly random $n\times n$ orthogonal matrix and let $\A_{\mathsf{row}}'$ be the matrix after uniform row sampling of $\A$ of expected cardinality $q$. Note that $\|\A_{\mathsf{row}}'\|=\|\A_{\mathsf{row}}'\U\|$, and $(\A\U)_{i,:}=\A_{i,:}\U$ uniformly distributes on $\|\A_{i,:}\|_2\cdot\mathbb{S}^{n-1}$. So $\|\A_{i,:}\U\|_\infty^2\le 2\|\A_{i,:}\U\|_2^2\log (n)/n$ for any fixed $i$ with probability at least $1-1/n^2$ by Lemma \ref{Lemma: infty norm of uniform sampling from unit sphere}. Therefore, with probability at least $1-1/n$ by a union bound over all rows, $\|\A\U\|_{\mathsf{col}}^2\le 2\|\A\|_F^2\log (n)/n$, where $\|\A\|_{\mathsf{col}}$ represents the maximum $\ell_2$ norm among all columns of $\A$. By Lemma \ref{Lemma: Vershynin},
\begin{equation*}
\bbE\|\A_{\mathsf{row}}'\|\le C_1'\sqrt{\frac{q}{n}}\|\A\|+C_2'\sqrt{\log q}\sqrt{\frac{\log n}{n}}\|\A\|_F
\end{equation*}
for absolute constants $C_1'$ and $C_2'$, and by the Markov bound, with probability at least $0.9$,
\begin{equation*}
\begin{split}
\|\A_{\mathsf{row}}'\|&\le C_1\sqrt{\frac{q}{n}}\|\A\|+C_2\sqrt{\log q}\sqrt{\frac{\log n}{n}}\|\A\|_F\\
&\le C_1\sqrt{\frac{q}{n}}\frac{\|\A\|_F}{\sqrt{c_1 d}}+C_2\sqrt{\log q}\sqrt{\frac{\log n}{n}}\|\A\|_F\quad(\text{since }\srank(\A)>c_1d)\\
&\le \frac{1}{\sqrt{1-\tau}}\left(C_1\sqrt{\frac{q}{c_1d}}+C_2\sqrt{\log q\log n}\right)\sqrt{\frac{X}{n}}\quad(\text{by Eqn. \eqref{equ: F norm entry level}})
\end{split}
\end{equation*}
for absolute constants $C_1$ and $C_2$. By the Markov bound, we also have with constant probability that
\begin{equation*}
\|\A_{\mathsf{row}}'\|_F^2\le c'\frac{q}{n}\|\A\|_F^2\le c\frac{q}{n}X.
\end{equation*}
Conditioning on this event, by applying the same argument on the column sampling of $\A_{\mathsf{row}}'$, we have
\begin{equation*}
\begin{split}
\|\widetilde\A'\|&\le C_1\sqrt{\frac{q}{n}}\|\A_{\mathsf{row}}'\|+C_2\sqrt{\log q}\sqrt{\frac{\log n}{n}}\|\A_{\mathsf{row}}'\|_F\\
&\le C_1\sqrt{\frac{q}{n}} \frac{1}{\sqrt{1-\tau}}\left(C_1\sqrt{\frac{q}{c_1d}}+C_2\sqrt{\log q\log n}\right)\sqrt{\frac{X}{n}}+C_2\sqrt{\log q}\sqrt{\frac{\log n}{n}}\sqrt{qc}\sqrt{\frac{X}{n}}\\
&\le C_0\frac{1}{\sqrt{1-\tau}}\frac{\sqrt{X}}{\sqrt{c_1d}}\frac{q}{n}\qquad(\text{because the first term dominates as } q\gg d)\\
&\le C_0'\sqrt{\frac{1+\tau}{1-\tau}}\frac{\|\A\|_F}{\sqrt{c_1d}}\frac{q}{n},\qquad(\text{by Eqn. }\eqref{equ: F norm entry level})
\end{split}
\end{equation*}
where $\widetilde\A'$ is the matrix after the column sampling of $\A_{\mathsf{row}}'$, and $C_0,C_0'$ are absolute constants.

On the other hand, when $\srank(\A)\le d$ and $q=\cO(\frac{d\log n}{\epsilon})$, we have with high probability that
\begin{equation*}
\|\widetilde\A'\|\ge C\frac{q}{n}\|\A\|= C\frac{q}{n}\frac{\|\A\|_F}{\sqrt{\srank(\A)}}\ge C\frac{q}{n}\frac{\|\A\|_F}{\sqrt{d}},
\end{equation*}
where the first inequality holds by applying Lemma \ref{lemma: Ismail} twice on row and column sampling (set $\beta=\Theta(\epsilon)$ and $\tau=\Theta(1)$ there).
\comment{
by setting $\beta=\|\A\|_F^2/n^2$ and applying Corollary \ref{lemma: Ismail}, we have with probability at least $0.9$ that
\begin{equation}
\label{equ: row operator norm}
\|\A_{\mathsf{row}}\|\ge \sqrt{1-\tau}\sqrt{\frac{q}{n}}\|\A\|\ge \sqrt{1-\tau}\sqrt{\frac{q}{n}}\frac{\|\A\|_F}{\sqrt{d}}\ge \sqrt{\frac{1-\tau}{1+\tau}}\sqrt{\frac{q}{n}}\sqrt{\frac{X}{d}}.
\end{equation}
Conditioning on this event and applying Corollary \ref{lemma: Ismail} to the column sampling of $\A_{\mathsf{row}}$, we have
\begin{equation*}
\|\widetilde\A\|\ge \sqrt{1-\tau}\sqrt{\frac{q}{n}}\|\A_{\mathsf{row}}\|\ge \frac{1-\tau}{\sqrt{1+\tau}}\frac{q}{n}\sqrt{\frac{X}{d}},
\end{equation*}
where we have used the fact that (by Eqns. \eqref{equ: F norm} and \eqref{equ: row operator norm})
\begin{equation}
\label{equ: stable rank of row sampling}
\srank(\A_{\mathsf{row}})=\frac{\|\A_{\mathsf{row}}\|_F^2}{\|\A_{\mathsf{row}}\|^2}\le \frac{1+\tau}{1-\tau}\srank(\A)\le \frac{1+\tau}{1-\tau}d.
\end{equation}
}
By setting $c_1$ as a large absolute constant, we have
\begin{equation*}
C_0'\sqrt{\frac{1+\tau}{1-\tau}}\frac{\|\A\|_F}{\sqrt{c_1d}}\frac{q}{n} < C\frac{q}{n}\frac{\|\A\|_F}{\sqrt{d}}.
\end{equation*}
Thus we can distinguish (a) $\srank(\A)\le d$ from (b) $\srank(\A)$ $\epsilon/d$-far from being at most $d$ by checking $\|\widetilde\A'\|$ in Case (i).

\medskip
\noindent{\textbf{Case (ii). $\srank(\A)\le c_1 d$ when $\A$ is far from $\srank(\A)\le d$.}}

We now show that we can distinguish the two cases of $\srank(\A)\le d$ from $\srank(\A)$ being $\epsilon/d$-far from at most $d$, suppose we have an accurate estimator to estimate the stable rank.

Let $\u\in \mathbb{S}^{n-1}$ be a unit vector such that $\|\A\|=\|\A \u\|_2$, i.e., $\u$ is a right singular vector corresponding to the largest singular value. First we claim that we can drop off coordinates in $\u$ that are at most $\theta/\sqrt{n}$ for some small constant $\theta$ without affecting $\|\A \u\|_2$ by too much.

Let $\u'$ be the vector obtained from $\u$ by zeroing out the coordinates of $\u$ which are \textit{at least} $\theta/\sqrt{n}$, then
\[
\|\A \u'\|_2^2 \leq \|\A\|^2 \|\u'\|_2^2 \leq \|\A\|^2 n \left(\frac{\theta}{\sqrt n}\right)^2 \leq \theta^2\|\A\|^2,
\]
and thus
\[
\|\A(\u-\u')\|_2 \geq \|\A \u\|_2 - \|\A \u'\|_2 \geq (1-\theta)\|\A\|.
\]
Let $\u'' = \u-\u'$ and $\v = \A\u''/\|\A\u''\|_2$, then $(1-\theta)\|\A\| \leq \langle \A\u'',\v\rangle$. Next we show similarly that we can drop off coordinates in $\v$ that are at most $\theta/\sqrt{n}$. Similarly we let $\v'$ be the vector obtained from $\v$ by zeroing out the coordinates of $\v$ which are at least $\theta/\sqrt{n}$, then
$\|\v'\|_2 \leq \theta$, hence
\[
\langle \A\u'',\v-\v'\rangle \geq (1-\theta)\|\A\| - \langle \A\u'',\v'\rangle
\geq (1-\theta)\|\A\| - \|\A\u''\|_2\|\v'\|_2
\geq (1-2\theta)\|\A\|.
\]
Let $\v'' = \v-\v'$. Observe that
\[
(1-2\theta)\frac{\|\A\|_F}{\sqrt{c_1 d}}\leq (1-2\theta)\|\A\| \leq \langle \A\u'',\v''\rangle \leq \|\A\u''\|_\infty \|\v''\|_1 \leq \|\u''\|_1\|\v''\|_1,
\]
where we used the fact that $|\A_{ij}|\leq 1$ in the last inequality. This implies that at least one of $\|\u''\|_1$ and $\|\v''\|_1$ is at least $\sqrt{(1-2\theta)\|\A\|_F}/(c_1 d)^{1/4}= c\sqrt{\|\A\|_F}/d^{1/4}$ for some constant $c = \sqrt{1-2\theta}/c_1^{1/4}$.

Without loss of generality, assume that $\|\u''\|_1 \geq c\sqrt{\|\A\|_F}/d^{1/4}$. Next we shall argue that we can drop large coordinates from $\u''$ by affecting $\|\u''\|_1$ by at most a constant factor. To see this, let $I = \{i: |\u''_i|\geq \kappa\}$ for some $\kappa$ to be determined later. It follows that $|I| \leq 1/\kappa^2$ and
\[
\|\u''_I\|_1 \leq \sqrt{|I|}\ \|\u''_I\|_2 \leq \frac{1}{\kappa} = \frac{c}{2} \frac{\sqrt{\|\A\|_F}}{d^{1/4}},
\]
provided that
\[
\kappa = \frac{2d^{1/4}}{c\sqrt{\|\A\|_F}}.
\]

Let $\hat \x = \u'' - \u''_I$, we see that $\|\hat \x\|_1\geq \frac{1}{2}\|\u''\|_1$. For notational simplicity let $S = \operatorname{supp}(\hat \x)$.
%
%

Suppose that $\A$ is $\epsilon/d$-far from being stable rank at most $d$, and we reorder the rows of $\A$ such that $|\langle \A_{1,:},\u\rangle| \leq |\langle \A_{2,:},\u\rangle| \leq \cdots \leq |\langle \A_{n,:},\u\rangle|$. Let $m=\frac{\epsilon n^2}{d |S|}$ (we shall verify that $m\leq n$ later). For $i=1,\dots, m$, change $\A_{i,j}$ to $\operatorname{sgn}(\hat\x_j)$ for all $j\in S$ if $\langle \A_{i,S^c}, \u\rangle\geq 0$, and change $\A_{i,j}$ to $-\operatorname{sgn}(\hat\x_j)$ for all $j\in S$ if $\langle \A_{i,S^c}, \u\rangle < 0$, yielding a matrix $\B$ and we know that $\operatorname{srank}(\B) > d$.


Now we verify that $m\leq n$ so that the aforementioned change is valid. It is clear that $|S|\geq \|\hat \x\|_1/\kappa$, and so
\[
m\leq \frac{\epsilon n^2}{d\cdot \|\hat\x\|_1/\kappa} \leq \frac{\epsilon n^2}{d}\cdot \frac{4 \sqrt{d}}{ c^2 \|\A\|_F} \leq \frac{8\sqrt{\epsilon}}{c^2\sqrt d} n < n,
\]
provided that $\epsilon\leq \epsilon_0$ for some absolute constant $\epsilon_0$ small enough.

We observe that
\begin{equation}
\label{equ: F norm of B}
\|\B\|_F^2\le \|\A\|_F^2 + m|S|,
\end{equation}
and
\begin{equation}
\label{equ: operator norm of B}
\begin{split}
\|\B\|^2  \ge \|\B\u\|_2^2 &\geq \sum_{i=m+1}^n \langle\A_{i,:},\u\rangle^2+ m\|\hat\x\|_1^2\\
&\geq \left(1-\frac{m}{n}\right)\|\A\u\|_2^2 + m\|\hat\x\|_1^2\\
&= \left(1-\frac{m}{n}\right)\|\A\|^2 + m\|\hat\x\|_1^2.
\end{split}
\end{equation}
It follows from $\srank(\B) > d$ that
\[
d < \srank(\B)=\frac{\|\B\|_F^2}{\|\B\|^2} \le \frac{\|\A\|_F^2 + m|S|}{(1-\frac{m}{n})\|\A\|^2+ m\|\hat \x\|_1^2},
\]
or,
\begin{equation}\label{eqn:B_stable_rank}
d\left(1-\frac{m}{n}\right)\|\A\|^2 < \|\A\|_F^2 \left( 1 - \frac{m (d\|\hat\x\|_1^2-|S|)}{\|\A\|_F^2}\right).
\end{equation}

Next we claim that it holds under certain assumptions
\begin{equation}\label{eqn:subclaim1}
d \|\hat \x\|_1^2 \geq \frac{1}{\eta_1}|S|.
\end{equation}
Observe that
\begin{equation}\label{eqn:subclaim1_finer}
\frac{d \|\hat \x\|_1^2}{|S|} = d \|\hat \x\|_1 \frac{\|\hat \x\|_1}{|S|}\geq d\cdot\frac{\|\u''\|_1}{2}\cdot \frac{\theta}{\sqrt n} \geq d^{\frac{3}{4}}c\theta\sqrt{\frac{\|\A\|_F}{n}},
\end{equation}
which is at least $1/\eta_1$, provided that
\begin{equation}\label{eqn:condition_on_|A|_F}
\|\A\|_F \geq \frac{n}{d^{\frac{3}{2}}c^2\theta^2\eta_1^2}.
\end{equation}
This holds when $d=\Omega((1/\epsilon)^{1/3})$ since we know that $\|\A\|_F^2=\Omega(\epsilon n^2)$.

Hence under the assumption~\eqref{eqn:condition_on_|A|_F} it follows from~\eqref{eqn:B_stable_rank} that
\begin{equation}\label{eqn:B_stable_rank_2}
d\left(1-\frac{m}{n}\right)\|\A\|^2 < \|\A\|_F^2 \left( 1 - (1-\eta_1)\frac{m d\|\hat\x\|_1^2}{\|\A\|_F^2}\right).
\end{equation}
Note that
\[
(1-\eta_1)\frac{m d\|\hat\x\|_1^2}{\|\A\|_F^2} \geq (1-\eta_1)m d\cdot \frac{\frac{c^2}{4}\cdot \frac{\|\A\|_F}{\sqrt d}}{\|\A\|_F^2} = \frac{(1-\eta_1)c^2}{4}m \cdot \frac{\sqrt{d}}{\|\A\|_F} \geq \frac{1}{\eta_2}\cdot \frac{m}{n},
\]
provided that
\begin{equation}\label{eqn:condition2_on_|A|_F}
\|\A\|_F \leq \frac{\eta_2(1-\eta_1)c^2}{4}n\sqrt{d}.
\end{equation}
Combining \eqref{eqn:condition_on_|A|_F} and \eqref{eqn:condition2_on_|A|_F} leads to that
\begin{equation}\label{eqn:condition_on_d}
d \geq \frac{2}{c^2\theta\eta_1\sqrt{(1-\eta_1)\eta_2}} = \frac{2\sqrt{c_1}}{\theta(1-2\theta)\eta_1\sqrt{(1-\eta_1)\eta_2}}.
\end{equation}

Now, under both assumptions~\eqref{eqn:condition_on_|A|_F} and~\eqref{eqn:condition_on_d}, it follows from~\eqref{eqn:B_stable_rank_2} that
\begin{align*}
\frac{\|\A\|_F^2}{d\|\A\|^2} &\geq 1 + (1-\eta_1-\eta_2)\frac{m d\|\hat\x\|_1^2}{\|\A\|_F^2}
\\
&\geq 1 + (1-\eta_1-\eta_2)\frac{\epsilon n^2}{d\|\A\|_F^2}\cdot \frac{d\|\hat\x\|_1^2}{|S|}\\
&\geq 1 + (1-\eta_1-\eta_2)c\theta \frac{\epsilon n^{3/2}}{d^{1/4}\|\A\|_F^{3/2}},\qquad (\text{by }\eqref{eqn:subclaim1_finer})
\end{align*}
Choosing $\theta=1/4$, $\eta_1+\eta_2<1$, we see from \eqref{eqn:condition_on_d} that we shall need $d = \Omega(\sqrt{c_1})$. It is also easy to verify that \eqref{eqn:condition2_on_|A|_F} is satisfied for such $d$. Overall, we see that we shall need $\tau=\Theta\left(\frac{\epsilon n^{3/2}}{d^{1/4}\|\A\|_F^{3/2}}\right)$ in \eqref{equ: F norm entry level}.

\medskip
We are now ready to prove Theorem \ref{theorem: stable rank testing upper bound}.

\noindent{\textbf{Result (a):}} In fact, We have an accurate estimator to estimate the stable rank by reading an $\cO(d^{1.5}\log (n)/\epsilon^2)\times \cO(d^{1.5}\log (n)/\epsilon^2)$ submatrix:
combining Theorem \ref{theorem: operator norm estimator by Ismail} with Eqn. \eqref{equ: F norm entry level} yields an accurate estimator of the stable rank of $\A$:
\begin{equation*}
\label{equ: stable rank estimator}
(1-\Theta(\tau))\cdot\srank(\A)\le \frac{X}{\|\widetilde\A\|^2}\le (1+\Theta(\tau))\cdot\srank(\A).
\end{equation*}
Setting $\tau$ as $\Theta(\frac{\epsilon}{d^{1/4}})$ gives the claimed result immediately.

\medskip
\noindent{\textbf{Result (b):}} It follows from setting $\tau = \Theta(\frac{\epsilon}{d^{1/4}})$ in Theorem \ref{theorem: operator norm estimator under sensing model} on sketching complexity.
\end{proof}

\subsection{Lower Bounds}

\begin{lemma}[Corollary 5.35, \cite{vershynin2010introduction}]
\label{lemma: vershynin concentration of Gaussian operator norm}
Let $\A$ be an $m\times n$ ($m>n$) matrix whose entries are independent standard normal random variables. Then for every $t\ge 0$ and fixed $\v\in\mathbb{R}^n$, it holds with probability at least $1-2\exp(-t^2/2)$ that
\[
\sqrt{m}-\sqrt{n}-t\le \sigma_{\min}(\A) \leq \sigma_{\max}(\A) \le \sqrt{m}+\sqrt{n}+t.
\]
\end{lemma}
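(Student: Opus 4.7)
The plan is to combine two classical ingredients: Gordon's comparison inequality, which will control the expected extreme singular values, and Gaussian concentration of measure, which will give the stated exponential tail.

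First, I would observe that, viewing $\A$ as a vector in $\R^{mn}$, both functionals $\A\mapsto\sigma_{\max}(\A)$ and $\A\mapsto\sigma_{\min}(\A)$ are $1$-Lipschitz with respect to the Frobenius norm. Indeed, from the variational formula $\sigma_{\max}(\A)=\max_{\|\u\|_2=\|\v\|_2=1}\u^\top\A\v$ and the bound $|\u^\top(\A-\B)\v|\le\|\A-\B\|_F\|\u\|_2\|\v\|_2$, the Lipschitz property for $\sigma_{\max}$ is immediate; the one for $\sigma_{\min}$ follows from $\sigma_{\min}(\A)=\min_{\|\v\|_2=1}\|\A\v\|_2$ and the reverse triangle inequality.

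Next, the Borell--Tsirelson--Ibragimov--Sudakov Gaussian concentration inequality applied to the $mn$-dimensional standard Gaussian vector underlying $\A$ yields, for any $1$-Lipschitz $f$,
\[
\Pr\bigl\{|f(\A)-\bbE f(\A)|>t\bigr\}\le 2e^{-t^2/2},
\]
which will directly supply the required tail after I have pinned down the means.

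Finally, I would bound the means via Gordon's min-max inequality. Comparing the centered Gaussian processes $X_{\u,\v}=\u^\top\A\v$ and $Y_{\u,\v}=\langle\g,\u\rangle+\langle\mathbf{h},\v\rangle$ on $S^{m-1}\times S^{n-1}$, where $\g\in\R^m$ and $\mathbf{h}\in\R^n$ are independent standard Gaussians, one checks the Slepian-type increment inequality $\bbE(X_{\u,\v}-X_{\u',\v'})^2\le \bbE(Y_{\u,\v}-Y_{\u',\v'})^2$ with equality when $(\u,\v)=(\u',\v')$. Gordon's min-max theorem then gives
\[
\bbE\,\sigma_{\max}(\A)\le\sqrt{m}+\sqrt{n},\qquad \bbE\,\sigma_{\min}(\A)\ge\sqrt{m}-\sqrt{n},
\]
the upper bound using the $\max\max$ version and $\bbE\|\g\|_2\le\sqrt{m}$, $\bbE\|\mathbf{h}\|_2\le\sqrt{n}$ via Jensen, and the lower bound using the $\min\max$ version of Gordon together with the same moment estimates (arranged so that no constant loss is incurred). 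Adding $\pm t$ through the concentration inequality above then yields the two-sided bound $\sqrt{m}-\sqrt{n}-t\le\sigma_{\min}(\A)\le\sigma_{\max}(\A)\le\sqrt{m}+\sqrt{n}+t$ with probability at least $1-2e^{-t^2/2}$.

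The main technical obstacle is Gordon's comparison inequality itself, which is a nontrivial min-max strengthening of Slepian's lemma; however, since this is a standard and well-documented result (as the citation to Vershynin indicates), it can be invoked as a black box. The remaining steps---Lipschitz verification, Borell's concentration, and the Gaussian moment bounds---are routine.
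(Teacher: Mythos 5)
The paper does not give a proof of this statement; it simply cites it as Corollary 5.35 of Vershynin's survey. Your plan reconstructs exactly the argument used there (Gordon's comparison for the expectations, Gaussian concentration via Lipschitzness for the tails), so the route is the standard one and the approach is sound.

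One small bookkeeping issue: you state the concentration step in its two-sided form $\Pr\{|f(\A)-\bbE f(\A)|>t\}\le 2e^{-t^2/2}$, and then want both $\sigma_{\max}\le\sqrt{m}+\sqrt{n}+t$ and $\sigma_{\min}\ge\sqrt{m}-\sqrt{n}-t$ simultaneously. Union-bounding two applications of the two-sided inequality would yield failure probability $4e^{-t^2/2}$, not the claimed $2e^{-t^2/2}$. The fix is immediate and matches what Vershynin actually does: you only need the upper deviation for $\sigma_{\max}$ and the lower deviation for $\sigma_{\min}$, each of which holds except with probability $e^{-t^2/2}$ by the \emph{one-sided} Gaussian concentration inequality, and the union bound then gives the advertised $2e^{-t^2/2}$. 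With that adjustment the proof is complete. (The phrase ``and fixed $\v\in\R^n$'' in the lemma statement appears to be a vestigial typo in the paper --- $\v$ never occurs in the conclusion --- so you are right to ignore it.)
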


\begin{lemma}[Lemma 1, \cite{laurent2000adaptive}]
\label{lemma: tail bound of chi-square}
Let $X\sim \chi^2(k)$. Then we have the tail bound
\begin{equation*}
\Pr[k-2\sqrt{kx}\le X\le k+2\sqrt{kx}+2x] \ge 1-2e^{-x}.
\end{equation*}
\end{lemma}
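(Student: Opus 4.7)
The plan is to use the Cram\'er--Chernoff method on the moment generating function of $X$. Since $X = \sum_{i=1}^k Z_i^2$ with the $Z_i \sim N(0,1)$ independent, a direct Gaussian integral gives $\mathbb{E}[e^{\lambda Z_i^2}] = (1-2\lambda)^{-1/2}$ for $\lambda < 1/2$, and hence $\mathbb{E}[e^{\lambda X}] = (1-2\lambda)^{-k/2}$. I would first establish two clean centered log-MGF bounds using the Taylor expansion $-\tfrac12\log(1-u) - \tfrac{u}{2} = \tfrac12\sum_{j\ge 2}\tfrac{u^j}{j}$. Bounding this tail series by $\tfrac{u^2}{4(1-u)}$ for $u\in(0,1)$ and substituting $u = 2\lambda$ yields
\[
\log \mathbb{E}[e^{\lambda(X-k)}] \le \frac{\lambda^2 k}{1-2\lambda}, \qquad 0 < \lambda < \tfrac12,
\]
and the analogous expansion $-\tfrac12\log(1+u) + \tfrac{u}{2} \le \tfrac{u^2}{4}$ for $u\ge 0$ yields $\log \mathbb{E}[e^{-\lambda(X-k)}] \le \lambda^2 k$ for all $\lambda > 0$.

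For the upper tail, Markov's inequality gives $\Pr[X - k \ge t] \le \exp\bigl(\tfrac{\lambda^2 k}{1-2\lambda} - \lambda t\bigr)$ for any $\lambda \in (0,1/2)$. The key step is to observe that the condition $\tfrac{\lambda^2 k}{1-2\lambda} - \lambda t \le -x$ is equivalent, after clearing the denominator, to the quadratic inequality $(k+2t)\lambda^2 - (t+2x)\lambda + x \le 0$, whose discriminant equals $(t-2x)^2 - 4kx$. The choice $t = 2\sqrt{kx} + 2x$ makes this discriminant exactly zero, so the quadratic has a unique root $\lambda^\star = (t+2x)/(2(k+2t))$, which one checks lies in $(0,1/2)$. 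Substituting $\lambda^\star$ into the Chernoff bound then produces $\Pr[X \ge k + 2\sqrt{kx} + 2x] \le e^{-x}$.

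For the lower tail, the same procedure applied to $e^{-\lambda X}$ together with the second MGF bound yields $\Pr[k - X \ge t] \le \exp(\lambda^2 k - \lambda t)$; optimizing at $\lambda = t/(2k)$ gives the sub-Gaussian estimate $\Pr[X \le k - t] \le e^{-t^2/(4k)}$, and picking $t = 2\sqrt{kx}$ produces $\Pr[X \le k - 2\sqrt{kx}] \le e^{-x}$. A union bound over the two tails concludes the proof with total failure probability at most $2e^{-x}$.

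The main obstacle is the algebraic matching in the upper tail: the form $2\sqrt{kx}+2x$ is not what naive optimization first suggests, and it is precisely engineered so that the discriminant of the governing quadratic vanishes, allowing the Chernoff bound to hold with equality at a single $\lambda^\star$. Verifying this tightness is a short but delicate calculation, and it is the source of the ``non-obvious'' shape of the tail. The lower tail, by contrast, reduces to the standard sub-Gaussian concentration of a centered $\chi^2$ and presents no difficulty.
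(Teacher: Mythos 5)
Your argument is correct and is essentially the original Laurent--Massart Chernoff-bound proof: the sub-gamma bound $\log\mathbb{E}[e^{\lambda(X-k)}]\le \lambda^2 k/(1-2\lambda)$ together with the observation that $t=2\sqrt{kx}+2x$ is exactly the choice making the governing quadratic $(k+2t)\lambda^2-(t+2x)\lambda+x$ have a double root $\lambda^\star=(t+2x)/(2(k+2t))\in(0,1/2)$, and the elementary sub-Gaussian lower tail. Since the paper simply cites this as Lemma~1 of Laurent--Massart without reproving it, there is no internal proof to compare against; your derivation is complete and fills that gap.
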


\begin{lemma}[Theorem 4, \cite{li2016tight}]
\label{lemma: total variation distance of L1 and L2}
Let $\u_1,...,\u_r$ be i.i.d. $\cN(\0,\I_m)$ vectors and $\v_1,...,\v_r$ be i.i.d. $\cN(\0,\I_n)$ vectors and further suppose that $\{\u_i\}$ and $\{\v_i\}$ are independent. Let $\cD_1=\cG(m,n)$ and $\cD_2=\cG(m,n)+\sum_{i=1}^r s_i\u_i\v_i^\top$, where $\s=[s_1,...,s_r]^\top$ and $\cG(m,n)$ represents $m\times n$ i.i.d. standard Gaussian matrix over $\R$. Denote by $\cL_1$ and $\cL_2$ the corresponding distribution of the linear sketch of size $k$ on $\cD_1$ and $\cD_2$. Then
there exists an absolute constant $c>0$ such that $d_{TV}(\cL_1,\cL_2)\le 1/10$ whenever $k\le c/\|\s\|_2^4$, where $d_{TV}(\cdot,\cdot)$ represents the total variation distance between two distributions.
\end{lemma}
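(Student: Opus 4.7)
\medskip
\noindent The plan is to bound $d_{TV}(\cL_1,\cL_2)$ via a chi-square computation and then reduce the remaining estimate to controlling the moment generating function of a low-degree Gaussian chaos. First, by absorbing an invertible linear transformation on $\R^k$ (which preserves total variation), I may assume that the $k$ sketching matrices $X_1,\dots,X_k\in\R^{m\times n}$ are orthonormal in the Frobenius inner product, so that the map $T\colon M\mapsto(\langle X_i,M\rangle_F)_{i=1}^k$ is a partial isometry. Under $\cL_1$ the sketch $Y=T(\G)$ is $\cN(\0,\I_k)$; under $\cL_2$ it decomposes as $Y=T(\G)+W$ where $W:=T\bigl(\sum_{i=1}^r s_i\u_i\v_i^\top\bigr)$ is independent of $T(\G)$, so $\cL_2=\cL_1\ast\mu_W$ for $\mu_W$ the law of $W$.

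Second, I would combine the standard inequality $d_{TV}(\cL_1,\cL_2)\le\tfrac12\sqrt{\chi^2(\cL_2\,\|\,\cL_1)}$ (which follows from Cauchy--Schwarz) with the Gaussian identity
\[
\chi^2(\cL_1\ast\mu_W\,\|\,\cL_1)=\bbE_{W,W'}\bigl[\exp(\langle W,W'\rangle)\bigr]-1,
\]
for independent copies $W,W'$, obtained by integrating out $y$ in the ratio of convolved to standard Gaussian densities. Expanding, $\langle W,W'\rangle=\sum_{i,i'}s_is_{i'}\langle T(\u_i\v_i^\top),T(\u_{i'}'\v_{i'}'^\top)\rangle$ is a mean-zero polynomial of degree four in the independent standard Gaussians. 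A short Wick-type calculation using the orthonormality of the $X_l$'s gives $\bbE[\langle T(\u_i\v_i^\top),T(\u_{i'}'\v_{i'}'^\top)\rangle^2]=\sum_l\|X_l\|_F^2=k$ for each $(i,i')$, and upon squaring $\langle W,W'\rangle$ the cross terms vanish --- each such term is linear in some single Gaussian vector and hence has zero mean --- leaving $\bbE[\langle W,W'\rangle^2]=k\|\s\|_2^4$.

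Third, the main obstacle is upgrading this second-moment estimate to a uniform moment generating function bound of the form $\bbE[\exp(\langle W,W'\rangle)]\le 1+Ck\|\s\|_2^4$ whenever $k\|\s\|_2^4$ is below an absolute constant, since $\langle W,W'\rangle$ is a degree-four Gaussian chaos and therefore only sub-Weibull, not sub-Gaussian. I would handle this by a truncation argument: Gaussian hypercontractivity $\|\langle W,W'\rangle\|_{L^p}\le(p-1)^2\sqrt{k}\,\|\s\|_2^2$ controls $\bbE[\exp(\langle W,W'\rangle)\mathbf{1}_{\{|\langle W,W'\rangle|\le 1\}}]$ via a Taylor expansion, while a Lata\l{}a-type chaos tail bound $\Pr[|\langle W,W'\rangle|>t]\le 2\exp(-c(t/\sigma)^{1/2})$ with $\sigma^2\asymp k\|\s\|_2^4$ bounds the contribution from the complementary event. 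Both pieces are $\cO(k\|\s\|_2^4)$ in the stated regime, giving $\chi^2(\cL_2\,\|\,\cL_1)\le 4/100$ and hence $d_{TV}(\cL_1,\cL_2)\le 1/10$ whenever $k\le c/\|\s\|_2^4$ for a sufficiently small absolute constant $c$.
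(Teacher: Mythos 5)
Since the paper imports this lemma verbatim from \cite{li2016tight} and gives no internal proof, I am assessing your attempt on its own terms. Steps 1--5 are correct and standard: the reduction to orthonormal sketches, the decomposition $\cL_2=\cL_1\ast\mu_W$, the bound $d_{TV}\le\frac12\sqrt{\chi^2}$, the Gaussian convolution identity $\chi^2(\cL_1\ast\mu_W\,\|\,\cL_1)=\bbE_{W,W'}[\exp\langle W,W'\rangle]-1$, and the Wick computation $\bbE[\langle W,W'\rangle^2]=k\|\s\|_2^4$ all check out.

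The argument breaks irreparably at Step 8: $\bbE[\exp\langle W,W'\rangle]$ is \emph{infinite} for every nonzero $\s$, so no truncation or hypercontractivity argument applied to the scalar $Z=\langle W,W'\rangle$ can produce the claimed bound. Concretely, fix $W$ and condition on the $\u'_{i'}$; then $Z=\sum_{i'}s_{i'}\u'_{i'}{}^\top M_W\v'_{i'}$ with $M_W:=\sum_l W_l X_l$, and integrating out $\v'$ and then $\u'$ yields
\[
\bbE_{W'}\bigl[\exp\langle W,W'\rangle\bigr]=\prod_{i'}\det\bigl(\I-s_{i'}^2 M_W M_W^\top\bigr)^{-1/2},
\]
which equals $+\infty$ whenever $\max_{i'} s_{i'}^2\|M_W\|^2\ge1$, an event that $W$ hits with positive probability since $\|M_W\|$ is unbounded. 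This is exactly the degree-four-chaos obstruction you name but then try to argue around: a variable with tails $\exp(-c(t/\sigma)^{1/2})$ has infinite exponential moment no matter how small $\sigma$ is, since $\int_1^\infty e^t e^{-c(t/\sigma)^{1/2}}\,dt=\infty$, so the piece $\bbE[e^Z\mathbf{1}_{\{Z>1\}}]$ that your split relegates to the chaos tail bound in fact diverges. The standard repair is to truncate the \emph{shift}, not the chaos scalar: condition the $\u_i,\v_i$ (hence $W$) on a high-probability event $E$ that controls $\|M_W\|$, use $d_{TV}(\cL_1,\cL_2)\le\Pr[E^c]+d_{TV}(\cL_1,\cL_1\ast\mu_{W|E})$, and only then run the $\chi^2$ computation, which is now finite because the conditional exponential moment closes in the regime $k\|\s\|_2^4\lesssim1$. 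Without that preliminary conditioning on $W$, your proof does not close.
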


\begin{theorem}
\label{theorem: stable rank lower bound}
Let $\epsilon\in (0,1/3)$ and let $d\ge 4$. For $\A\in\R^{(d/\epsilon^2)\times d}$, any algorithm that distinguishes ``$\srank(\A)\le d_0$'' from ``$\A$ being $\epsilon_0/d_0$-far from stable rank $\le d_0$'' with error probability at most $1/6$ requires measurements $\Omega(d^2/(\epsilon^2\log(d/\epsilon)))$ for any linear sketch, where $d_0=\frac{d}{1+\Theta(\epsilon)}$ and $\epsilon_0=\Theta(\frac{\epsilon}{\log^2(d/\epsilon)})$.
\end{theorem}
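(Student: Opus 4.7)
The plan is to reduce the testing task to the sketching hardness of Lemma~\ref{lemma: total variation distance of L1 and L2} by planting two distributions whose stable ranks straddle $d_0$, and to verify separately that the far side is rigid in Hamming distance.

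Set $m=d/\epsilon^2$, $n=d$, and $L=\Theta(\log(d/\epsilon))$. Define $\cL_1$ to be the law of $\cG(m,n)/\sqrt{L}$ and $\cL_2$ to be the law of $\cL_1+(s/\sqrt{L})\u\v^\top$, with $\u\sim\cN(\0,\I_m)$, $\v\sim\cN(\0,\I_n)$ independent and $s$ a scalar to be tuned. The $1/\sqrt{L}$ scaling keeps both distributions supported on $[-1,1]$-valued matrices with high probability, and since stable rank is scale-invariant we may analyse the unnormalized Gaussian. By Lemma~\ref{lemma: vershynin concentration of Gaussian operator norm} and Lemma~\ref{lemma: tail bound of chi-square}, $\|\cG\|^2\approx m(1+\epsilon)^2$ and $\|\cG\|_F^2\approx mn$, giving $\srank(\cL_1)\approx d(1-2\epsilon)$, which is strictly above any $d_0=d/(1+\Theta(\epsilon))$ chosen with a sufficiently large constant.

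The spike $s\u\v^\top$ has normalized strength $\tau=s\sqrt{n}$, and the BBP threshold for it to move the top singular value of $\cG$ is $\tau_c=(n/m)^{1/4}=\sqrt{\epsilon}$. Choosing $s^2=c\epsilon/d$ for a large constant $c>1$ places $\tau^2=c\epsilon$ just above $\tau_c^2$, and the spiked-model formula gives $\|\cG+s\u\v^\top\|^2\approx m(1+\tau^2)(1+\epsilon/\tau^2)=m(1+\epsilon(c+1/c)+O(\epsilon^2))$, which since $c+1/c>2$ is strictly larger than $m(1+\epsilon)^2$; the Frobenius norm is unchanged to leading order. Hence $\srank(\cL_2)\approx d(1-(c+1/c)\epsilon)$ lies below $d_0$ by a $\Theta(\epsilon d)$ margin, and $\|s\|_2^2=s^2=\Theta(\epsilon/d)$ is the key small parameter.

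The rigidity claim is that $\cL_1$ is $\epsilon_0/d_0$-far from $\srank\leq d_0$ with high probability. Suppose $\cC$ with $\|\cC\|_0\leq k=\epsilon_0 mn/d_0$ and $\|\cC\|_\infty\leq 2$ (so the entry bound is preserved) satisfies $\srank(\cL_1+\cC)\leq d_0$. Using $\|\cC\|\leq\|\cC\|_F\leq 2\sqrt{k}$, Weyl gives $\|\cL_1+\cC\|^2\leq\|\cL_1\|^2+O(\sqrt{k}\|\cL_1\|+k)$; pairing this with $\|\cL_1+\cC\|_F^2\geq\|\cL_1\|_F^2-k$ and the constraint $\|\cL_1+\cC\|_F^2\leq d_0\|\cL_1+\cC\|^2$ yields a necessary lower bound on $k$. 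Naive Weyl alone gives $k\gtrsim d/\log(d/\epsilon)$, which suffices for rigidity only when $\epsilon\gtrsim 1/\log$; to cover the full range and justify $\epsilon_0=\Theta(\epsilon/\log^2(d/\epsilon))$, I exploit the rotational invariance of the Gaussian top singular vectors $\phi,\psi$ to sharpen the bound to $\phi^\top\cC\psi\lesssim kL/\sqrt{mn}$ via the $\ell_\infty$-vs-$\ell_1$ duality mentioned in Section~\ref{section: Our Techniques}, giving the rigidity threshold $k\gtrsim d/(\epsilon\log^2(d/\epsilon))$. This refinement is the principal technical obstacle.

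Finally, since any tester distinguishing $\srank\leq d_0$ from being $\epsilon_0/d_0$-far must distinguish $\cL_1$ from $\cL_2$ with constant probability, Lemma~\ref{lemma: total variation distance of L1 and L2} (which applies to the unnormalized pair, since linear sketches are scale-equivariant) forces the measurement complexity to be $\Omega(1/\|s\|_2^4)=\Omega(d^2/\epsilon^2)$, from which the claimed $\Omega(d^2/(\epsilon^2\log(d/\epsilon)))$ bound follows after accounting for the single $\log$ factor arising from the bounded-entry normalization.
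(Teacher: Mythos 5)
Your overall structure matches the paper's: both use the same hard pair (a Gaussian matrix $\G$ versus a rank-one spiked Gaussian $\G_0 + s\u\v^\top$ with $s^2 = \Theta(\epsilon/d)$), reduce indistinguishability to Lemma~\ref{lemma: total variation distance of L1 and L2} to get the $\Omega(d^2/\epsilon^2)$ sketching bound, scale down by roughly $1/\log(d/\epsilon)$ to satisfy the bounded-entry constraint, and separately verify a rigidity claim for the un-spiked side. Two remarks on the two halves of the argument.

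For the stable-rank separation you invoke the BBP phase-transition formula, which is a valid (and cleaner) route than the paper's direct lower bound $\|\G_0 + s\u\v^\top\|^2 \geq \|\G_0\v/\|\v\|\|^2 + s^2\|\u\|^2\|\v\|^2 - O(\cdot)$ obtained by testing with the random vector $\v$. There is a small notational slip in your statement of the BBP formula: with $\gamma = n/m = \epsilon^2$, the second factor should be $(1 + \gamma/\tau^2) = (1+\epsilon^2/\tau^2)$, not $(1+\epsilon/\tau^2)$; with that correction your expansion $m\bigl(1+\epsilon(c+1/c)+O(\epsilon^2)\bigr)$ is right.

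The rigidity argument is where the real gap lies, and you are correct to flag it as the principal obstacle. The Weyl/Cauchy--Schwarz bound $\|\G+\cC\| \leq \|\G\| + \|\cC\|$ with $\|\cC\|\leq\|\cC\|_F\leq 2\sqrt{k}$ forces $k\lesssim d/\log(d/\epsilon)$, which indeed only covers $\epsilon \gtrsim 1/\log(d/\epsilon)$, whereas the theorem needs $k=\Theta\bigl(d/(\epsilon\log^2(d/\epsilon))\bigr)$. (This is also exactly where the paper's own inequality chain leading to \eqref{eqn:upper_bound_perturbed} is off by a factor of $\sqrt\epsilon$ in the cross term $\|\G\|\|\S\|_F$, so identifying the obstacle is genuinely valuable.) However, the fix you propose does not close the gap: $\phi^\top\cC\psi$ is the \emph{first-order} perturbation of $\sigma_1$, and the relation $\sigma_1(\G+\cC) \geq \phi^\top(\G+\cC)\psi = \sigma_1(\G) + \phi^\top\cC\psi$ goes in the wrong direction — it lower-bounds the change, whereas you need an upper bound. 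In the absence of a spectral gap (and Gaussian matrices have $\sigma_1-\sigma_2$ only of Tracy--Widom size $\Theta(m^{1/3})$, which can be smaller than $\|\cC\|$), the maximizer of $x^\top(\G+\cC)y$ need not be close to $(\phi,\psi)$, so bounding $\phi^\top\cC\psi$ alone says nothing about $\|\G+\cC\|$. A bound that actually works is to control the second-order term: write $\|\G+\cC\|^2 \leq \|\G\|^2 + 2\|\G^\top\cC\| + \|\cC\|^2$, observe that $\cC$ is supported on a set $R$ of at most $k$ rows so that $\G^\top\cC = \G_R^\top\cC_R$ and hence $\|\G^\top\cC\|\leq\|\G_R\|\,\|\cC\|$, and use the uniform submatrix-norm estimate $\|\G_R\| \lesssim \sqrt{n + |R|\log(em/|R|)}$ (valid for all $R$ simultaneously by a union bound against Gaussian operator-norm concentration). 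Plugging in $k = d/(\epsilon\log^2(d/\epsilon))$ one checks that $2\|\G^\top\cC\| + \|\cC\|^2 \lesssim d/\epsilon = \epsilon\|\G\|^2$, which is the bound you need. So the skeleton of your proof is correct, but step (b) needs the $\|\G_R\|$-based argument (or another rigorous replacement) rather than the unsubstantiated first-order expansion.
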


\begin{remark}
Theorem \ref{theorem: stable rank lower bound} can be generalized to the $(d/\epsilon^2)\times (d/\epsilon^2)$ matrix by concatenating the columns of two hard instances in Theorem \ref{theorem: stable rank lower bound} $(1/\epsilon^2)$ times. This scales up all singular values in our $(d/\epsilon^2)\times d$ hard instances by a factor of $1/\epsilon$, and thus the stable rank remains the same. Observe that the bounds on $\|\G\|$, $\|\G\|_F$ and $\|\S\|_F$ in \eqref{eqn:upper_bound_perturbed} and \eqref{eqn:lower_bound_perturbed} in the proof below are also scaled up by $1/\epsilon$. The concatenated matrix is therefore $\epsilon_0/d_0$-far from having stable rank at most $d_0$ following the same argument.
\end{remark}

\begin{proof}
Let $m=d/\epsilon^2$ and $n=d$. We will apply Lemma \ref{lemma: total variation distance of L1 and L2} with $r=1$. To this end, we need to justify that
\begin{equation*}
\frac{C}{\log(d/\epsilon)}\G \qquad \text{and} \qquad \frac{C}{\log(d/\epsilon)}(\G_0+s_1\u\v^\top)
\end{equation*}
differ in the stable rank (i.e., $\srank(\G)> d_0\ge \srank(\G_0+s_1\u\v^\top)$) and that $\G$ is rigid (i.e., changing $\epsilon_0/d_0$-fraction of entries of $\G$ would not change the stable rank of $\G$ to be less than $d_0$), where the multiplicative factor $C/\log(d/\epsilon)$ is to keep the maximum absolute value of entries in the two hard instances less than $1$, $\G,\G_0\sim\cG(m,n)$, $\u\sim\cN(\0,\I_m)$, $\v\sim\cN(\0,\I_n)$, and $s_1=3\sqrt{\epsilon/d}$. Note that by Lemma \ref{lemma: total variation distance of L1 and L2}, we cannot distinguish $\G$ from $\G':=\G_0+s_1\u\v^\top$ with $\Omega(d^2/\epsilon^2)$ samples. So if the stable ranks of $\G$ and $\G'$ have a gap, we cannot detect the gap either.

For the operator norm, on one hand, it follows from Lemma \ref{lemma: vershynin concentration of Gaussian operator norm} that with probability at least $1-2\exp(-d/2)$,
\begin{equation*}
(1-1.1\epsilon)\frac{\sqrt{d}}{\epsilon}\le \sigma_{\min}(\G) \leq \sigma_{\max}(\G)\le (1+1.1\epsilon)\frac{\sqrt{d}}{\epsilon},
\end{equation*}
for an absolute constant $C_0>0$.
On the other hand, with probability $\ge 1-\exp(-\Omega(d))$ we have
\begin{equation*}
\begin{split}
\left\|\G_0+s_1\u\v^\top\right\|^2 &=\sup_{\x\in \mathbb{S}^{n-1}} \left\|\G_0\x+s_1\u\v^\top\x\right\|_2^2\\
&\ge \frac{\left\|\G_0\v+s_1\u\v^\top\v\right\|_2^2}{\|\v\|_2^2}\\
&=\frac{\|\G_0\v\|_2^2}{\|\v\|_2^2}+s_1^2\|\u\|_2^2\|\v\|_2^2+2\langle\G_0\v,s_1\u\rangle\\
&\ge \left((1-1.1\epsilon)\sqrt{\frac{d}{\epsilon^2}}\right)^2+0.9^2 s_1^2 \frac{d^2}{\epsilon^2}-\cO\left(\frac{d}{\sqrt{\epsilon}}\right)\\
&\ge \left((1-1.1\epsilon)\sqrt{\frac{d}{\epsilon^2}}\right)^2+0.9^2 s_1^2 \frac{d^2}{\epsilon^2}-\cO\left(\frac{d}{\sqrt{\epsilon}}\right)\\
&\ge ((1-1.1\epsilon)^2+7.29\epsilon)\frac{d}{\epsilon^2}-\cO\left(\frac{d}{\sqrt{\epsilon}}\right)\\
&\ge ((1-1.1\epsilon)^2+7.29\epsilon)\frac{d}{\epsilon^2}-\cO\left(\frac{d}{\sqrt{\epsilon}}\right)\\
&\ge (1+2\epsilon)^2\frac{d}{\epsilon^2},
\end{split}
\end{equation*}
where the second inequality (line 4) follows from the concentration of the quadratic form (see, e.g., \cite{RV:hanson-wright})
\begin{equation}\label{equ: inner product}
\Pr_{\u,\v}\{|\v^T \G_0 \u|>t\}\leq 2\exp\left(-c\min\left\{ \frac{t}{\|\G_0\|}, \frac{t^2}{\|\G_0\|_F^2} \right\}\right)
\end{equation}
for fixed $\G_0$; since $\|\G_0\|\simeq \sqrt{d}/\epsilon$ and $\|\G_0\|_F^2\simeq d^2/\epsilon^2$ with high probability, we can take $t = \Theta(d^{3/2}/\epsilon)$.
For the Frobenius norm, we note that
\begin{equation*}
\begin{split}
\left\|\G_0+3\sqrt{\frac{\epsilon}{d}}\u\v^\top\right\|_F^2=\|\G_0\|_F^2+9\frac{\epsilon}{d}\|\u\v^\top\|_F^2+6\sqrt{\frac{\epsilon}{d}}\langle \G_0,\u\v^\top\rangle.
\end{split}
\end{equation*}
Observe that $\|\G_0\|_F^2\sim \chi^2(\frac{d^2}{\epsilon^2})$ so $\|\G_0\|_F^2=(1\pm \Theta(\frac{\epsilon}{d}))\frac{d^2}{\epsilon^2}$ with probability $\ge 0.9$ by Lemma \ref{lemma: tail bound of chi-square}, and $9\frac{\epsilon}{d}\|\u\v^\top\|_F^2=9\frac{\epsilon}{d}\|\u\|_2^2\|\v\|_2^2=\Theta(\frac{d}{\epsilon})$ with high probability. And also, setting $t = \Theta(d/\epsilon)$ in \eqref{equ: inner product}, we have with probability at least $0.9$ that $|\langle \G_0,\u\v^\top\rangle| = \cO(\frac{d}{\epsilon})$ and thus $6\sqrt{\frac{\epsilon}{d}}|\langle\G_0,\u\v^\top\rangle|=\cO(\sqrt{\frac{d}{\epsilon}})$.
Therefore,
\begin{equation*}
\left(1-\Theta\left(\frac{\epsilon}{d}\right)\right)\|\G_0\|_F^2\le \left\|\G_0+3\sqrt{\frac{\epsilon}{d}}\u\v^\top\right\|_F^2\le \left(1+\Theta\left(\frac{\epsilon}{d}\right)\right)\|\G_0\|_F^2.
\end{equation*}
As a result,
\begin{equation*}
\srank(\G)=\frac{\|\G\|_F^2}{\|\G\|^2}\ge \frac{d}{(1+1.2\epsilon)^2},
\end{equation*}
and
\begin{equation*}
\srank(\G')=\frac{\|\G'\|_F^2}{\|\G'\|^2}\le \frac{d}{(1+1.9\epsilon)^2}.
\end{equation*}
By Lemma \ref{lemma: total variation distance of L1 and L2}, it is therefore hard to distinguish
\begin{equation*}
\srank\left(\frac{C}{\log(d/\epsilon)}\G'\right)=\srank(\G')=\frac{\|\G'\|_F^2}{\|\G'\|^2}\le \frac{d}{(1+1.9\epsilon)^2}\triangleq d_0
\end{equation*}
from
\begin{equation}
\label{equ: stable rank of G}
\srank\left(\frac{C}{\log(d/\epsilon)}\G\right)=\srank(\G)=\frac{\|\G\|_F^2}{\|\G\|^2}\ge \frac{d}{(1+1.2\epsilon)^2}\geq (1+1.3\epsilon)d_0,
\end{equation}
with sample size $\cO(d^2/\epsilon^2)$, provided that $\epsilon$ is sufficiently small.

We now show that $\frac{C}{\log(d/\epsilon)}\G$ is rigid, i.e., changing $\epsilon_0/d_0$-fraction of entries of $\frac{C}{\log(d/\epsilon)}\G$ will not make $\srank\left(\frac{C}{\log(\frac{d}{\epsilon})}\G\right)\le d_0$. For any $\S\in\R^{(d/\epsilon^2)\times d}$ such that $\|\S\|_0=\frac{\theta d}{\epsilon\log^2(\frac{d}{\epsilon})}$ (where $0<\theta<1$) and $\left\|\frac{C}{\log(\frac{d}{\epsilon})}\G+\S\right\|_\infty\le 1$ (thus $\S$ has an $\frac{\epsilon_0}{d_0}$-fraction of non-zero entries and $\|\S\|_\infty\le 2$), we have
\begin{equation*}
\begin{split}
&\quad\ \left\|\frac{C}{\log(d/\epsilon)}\G+\S\right\|^2\\
&=\sup_{\|\u\|_2=1,\|\v\|_2=1} \left\langle\left(\frac{C}{\log(d/\epsilon)}\G+\S\right)\u,\v\right\rangle^2\\
&\le \frac{C^2}{\log^2(d/\epsilon)}\sup_{\substack{\|\u\|_2=1\\ \|\v\|_2=1}} \langle \G\u,\v\rangle^2+ \sup_{\substack{\|\u\|_2=1\\ \|\v\|_2=1}} \langle\S\u,\v\rangle^2+\frac{2C}{\log(d/\epsilon)}\sup_{\substack{\|\u\|_2=1\\ \|\v\|_2=1}} \langle\G^\top\S\u,\v\rangle\\
&= \frac{C^2}{\log^2(d/\epsilon)}\|\G\|^2+\|\S\|^2+\frac{2C}{\log(d/\epsilon)}\|\G^\top\S\|\\
&\le \frac{C^2}{\log^2(d/\epsilon)}\|\G\|^2+\|\S\|^2+\frac{2C}{\log(d/\epsilon)}\|\G^\top\| \|\S\|\\
&\le \frac{C^2}{\log^2(d/\epsilon)}\|\G\|^2+\|\S\|_F^2+\frac{2C}{\log(d/\epsilon)}\|\G^\top\| \|\S\|_F
\end{split}
\end{equation*}
Now, observe that
\[
\|\S\|_F^2\le \frac{4 \theta d}{\epsilon\log^2(d/\epsilon)}
\]
and that, by setting $t = O(\sqrt{d}/\epsilon)$ in Lemma~\ref{lemma: vershynin concentration of Gaussian operator norm},
\[
\|\G\| = \cO\left(\frac{\sqrt{d}}{\epsilon}\right)
\]
with probability at least $1-2\exp(-\Omega(d/\epsilon^2))$, we have that
\begin{equation}\label{eqn:upper_bound_perturbed}
\left\|\frac{C}{\log(d/\epsilon)}\G+\S\right\|^2\le \frac{C^2}{\log^2(d/\epsilon)}\|\G\|^2+\cO\left(\frac{(\theta+\sqrt{\theta})d}{\epsilon\log^2(d/\epsilon)}\right)\le \left(1+c_1\sqrt{\theta}\epsilon\right)\frac{C^2}{\log^2(d/\epsilon)}\|\G\|^2,
\end{equation}
where $c_1 > 0$ is an absolute constant.

On the other hand, with probability at least $1-c'\exp(-\Omega(d/\epsilon))$ it holds that
\begin{equation}\label{eqn:lower_bound_perturbed}
\begin{aligned}
\left\|\frac{C}{\log(d/\epsilon)}\G+\S\right\|_F^2
&\ge \left(\frac{C}{\log(d/\epsilon)}\|\G\|_F - \|\S\|_F\right)^2\\
&\ge \left(1-c_2\sqrt{\frac{\epsilon \theta}{d}}\right)\frac{C^2}{\log^2(d/\epsilon)}\|\G\|_F^2,
\end{aligned}
\end{equation}
where $c_2 > 0$ is an absolute constant.

Therefore,
\begin{equation*}
\srank\left(\frac{C}{\log(d/\epsilon)}\G+\S\right)\ge \left(1-c_3\sqrt{\theta}\epsilon\right)\srank\left(\frac{C}{\log(d/\epsilon)}\G\right)>d_0,
\end{equation*}
where $c_3 > 0$ is an absolute constant, and $\theta$ is small enough such that the last inequality holds.

We conclude that $\frac{C}{\log(d/\epsilon)}\G$ is $\epsilon_0/d_0$-far from having stable rank $\le d_0$. The proof is complete.
\end{proof}

\section{Non-Adaptive Testing of Schatten-$p$ Norm}

We study the problem of testing Schatten-$p$ norms in this section.

\subsection{Upper Bounds}

\begin{problem}[Schatten-$p$ Norm Testing in the Bounded Entry Model for $p>2$]
Let $p>2$ and $\A\in\mathbb{R}^{n\times n}$ be a matrix such that $\|\A\|_\infty\leq 1$. For an absolute constant $c$, the matrix $\A$ satisfies one of the promised properties:
\begin{itemize}
\item[$\mathsf{H0.}$]
$\|\A\|_{\cS_p}^p\geq cn^{p}$;
\item[$\mathsf{H1.}$]
$\A$ is $\epsilon$-far from $\|\A\|_{\cS_p}^p\geq cn^{p}$, meaning that it requires changing at least an $\epsilon$-fraction of the entries of $\A$ such that that $\|\A\|_{\cS_p}^p\geq cn^{p}$.
\end{itemize}
The problem is to design a non-adaptive property testing algorithm that outputs $\mathsf{H0}$ with probability at least $0.9$ if $\A\in \mathsf{H0}$, and output $\mathsf{H1}$ with probability at least $0.99$ if $\A\in\mathsf{H1}$, with the least number of queried entries.
\end{problem}

First we prove a lemma showing that $\mathsf{H0}$ and $\mathsf{H1}$ can be distinguished by the Schatten $p$-norm.
\begin{lemma}\label{lem:schatten_norm_gap}
Suppose that $p>2$ is a constant. There exist constants $c = c(p)$, $C = C(p)$ and $\epsilon_0 = \epsilon(p)$ such that for any $\epsilon \in [C/n, \epsilon_0]$, when $\A\in \mathsf{H1}$ it holds that $\|\A\|_{\cS_p}^p\leq (c-c'\epsilon)n^p$ for some small constant $c'$ that may depend on $p$.
\end{lemma}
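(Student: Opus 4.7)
I will prove the contrapositive: assuming $\|\A\|_{\cS_p}^p > (c-c'\epsilon)n^p$, I will exhibit a matrix $\A'$ with $\|\A-\A'\|_0 \le \epsilon n^2$, $\|\A'\|_\infty \le 1$, and $\|\A'\|_{\cS_p}^p \ge cn^p$, contradicting $\A \in \mathsf{H1}$. The construction is a sparse sign-alignment perturbation driven by the top singular vectors of $\A$, and the quantitative boost is tracked via the convexity of $t\mapsto t^p$ at $t \asymp n$.

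\textbf{Reduction to a spectral boost.} First, since $\|\A\|_\infty \le 1$ forces $\|\A\|_F^2 \le n^2$, the inequality $\sum_i \sigma_i^p \le \sigma_1^{p-2}\|\A\|_F^2$, valid for $p>2$, combined with the hypothesis yields $\sigma_1(\A) \ge \alpha n$ for a constant $\alpha=\alpha(c,p)>0$. Let $\u,\v$ be corresponding unit left/right singular vectors. Applying the level-set/truncation argument used in the proof of Theorem~\ref{theorem: stable rank testing upper bound} (split each of $\u,\v$ by the threshold $\theta/\sqrt n$ into heavy/light parts), I can pass to truncated vectors $\u'', \v''$ whose nonzero coordinates have magnitudes in $[\theta/\sqrt n,\Theta(1)]$ and which still satisfy $(\u'')^\top \A \v'' \ge (1-O(\theta))\sigma_1 \gtrsim \alpha n$. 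The bounded-entry constraint $|\A_{ij}|\le 1$ also forces $\|\u''\|_1\|\v''\|_1 \ge (\u'')^\top \A\v'' \gtrsim \alpha n$.

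\textbf{The perturbation.} I take $\A'$ to agree with $\A$ outside a carefully chosen set $S$ of $|S|=\epsilon n^2$ positions, and to satisfy $\A'_{ij} = \sgn(u''_i v''_j)$ on $S$ (this keeps $\|\A'\|_\infty \le 1$). Choosing $S$ greedily to maximize $\sum_{(i,j)\in S} |u''_i v''_j|(1-\sgn(u''_i v''_j)\A_{ij})$, I claim
\[
(\u'')^\top \A'\v'' \;\ge\; (\u'')^\top \A\v'' + \Omega(\epsilon n).
\]
The key arithmetic is that the total gain available over all $n^2$ positions equals $\|\u''\|_1\|\v''\|_1 - \langle|\u''||\v''|^\top, \sgn(\u''\v''^\top)\circ\A\rangle$; if this total is $\Omega(n)$, then selecting the top $\epsilon n^2$ contributions extracts an $\epsilon$-fraction, i.e.\ $\Omega(\epsilon n)$. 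Otherwise $\A$ is already entrywise almost equal to $\sgn(\u''\v''^\top)$, which implies $\sigma_1(\A) \ge (1-o(1))n$ and thus $\|\A\|_{\cS_p}^p \ge cn^p$ directly, giving a contradiction with no perturbation needed.

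\textbf{From the quadratic form to Schatten-$p$.} The spectral boost gives $\sigma_1(\A') \ge (\u'')^\top\A'\v''/(\|\u''\|_2\|\v''\|_2) \ge \sigma_1(\A) + \Omega(\epsilon n) - O(\theta)\sigma_1(\A)$. Using $(\sigma_1+\delta)^p \ge \sigma_1^p + p\sigma_1^{p-1}\delta$ with $\sigma_1 \ge \alpha n$ and $\delta = \Omega(\epsilon n)$,
\[
\|\A'\|_{\cS_p}^p \;\ge\; \sigma_1(\A')^p \;\ge\; \|\A\|_{\cS_p}^p + \Omega(\alpha^{p-1}\epsilon n^p) - O(\theta)\,n^p.
\]
Choosing $\theta$ small compared to the implicit constant and $c'$ correspondingly small, the right-hand side exceeds $cn^p$; the lower bound $\epsilon \ge C/n$ ensures $|S|=\epsilon n^2$ is an admissible integer, and $\epsilon_0$ is chosen small so that the Taylor remainder and truncation errors are negligible. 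The main obstacle is the dichotomy in Step~3 (the perturbation paragraph): carefully quantifying that either $\epsilon n^2$ positions each furnish $\Omega(1/n)$ gain in the bilinear form, or else $\A$ is already so well-aligned with $\sgn(\u''\v''^\top)$ that it itself meets the threshold. This requires a careful counting on the joint level sets of $|u''_i v''_j|$ and $\sgn(u''_i v''_j)\A_{ij}$.
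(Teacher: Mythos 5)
Your strategy (contrapositive: a sparse sign-aligned perturbation boosts the top singular value) is natural and shares some spirit with the paper's argument, but two of its central steps have real gaps.

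\textbf{The dichotomy in the perturbation step does not close.} If the total available gain $G = \|\u''\|_1\|\v''\|_1 - (\u'')^\top \A\v''$ is $o(n)$, you conclude ``$\A$ is already entrywise almost equal to $\sgn(\u''\v''^\top)$, so $\sigma_1(\A)\geq (1-o(1))n$.'' This is not implied: the quantity $\|\u''\|_1\|\v''\|_1$ need only be $\Theta(\alpha n)$ for the (small) constant $\alpha$ coming from $\sigma_1\gtrsim \alpha n$, and $G=o(n)$ then gives $\sigma_1(\A) \geq (\u'')^\top\A\v'' \geq \alpha n - o(n)$ — essentially nothing new. There is no reason for $\|\u''\|_1\|\v''\|_1$ to be close to $n$, so you cannot conclude $\sigma_1\geq (1-o(1))n$ and the ``otherwise'' branch falls away.

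\textbf{The passage from the quadratic-form boost to the Schatten-$p$ boost is the more serious gap.} Your chain
\[
\|\A'\|_{\cS_p}^p \ge \sigma_1(\A')^p \ge \|\A\|_{\cS_p}^p + \Omega(\alpha^{p-1}\epsilon n^p) - O(\theta)n^p
\]
requires $\sigma_1(\A')^p \ge \|\A\|_{\cS_p}^p + (\cdot)$, but what you actually establish is only $\sigma_1(\A')^p \ge \sigma_1(\A)^p + (\cdot)$, and $\sigma_1(\A)^p$ can be a bounded-away-from-$1$ fraction of $\|\A\|_{\cS_p}^p$ (e.g.\ $\sigma_1=\sigma_2 = n/\sqrt 2$, all others $0$). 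In that regime a $\Theta(\epsilon n)$ boost to $\sigma_1$ alone cannot lift $\sigma_1^p$ above $\|\A\|_{\cS_p}^p$. One might try to repair this by arguing $\sum_{i\ge 2}\sigma_i(\A')^p$ cannot drop by much, but your scattered entrywise perturbation $\E$ offers no structural control on the lower spectrum: $\|\E\|_F^2 \le 4\epsilon n^2$, and by Hoffman--Wielandt the singular-value perturbations satisfy only $\sum_i(\sigma_i(\A')-\sigma_i(\A))^2 \le 4\epsilon n^2$, which permits $\sigma_2$ to drop by $\Theta(\sqrt\epsilon\,n)$, costing $\Theta(\sqrt\epsilon\, n^p)$ in Schatten-$p$ norm. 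This swamps the claimed $\Omega(\epsilon n^p)$ gain.

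The paper's construction is designed specifically to avoid both issues. It changes an $\epsilon$-fraction of \emph{entire rows} (a set $T$ chosen via symmetrization plus the Non-Commutative Khintchine Inequality so that discarding $T$ costs only a $(1-O(\epsilon))$ factor in $\|\cdot\|_{\cS_p}$), and replaces them all with $\sgn(\v)$, where $\v$ is the top right singular vector of $\A_{T^c,:}$. The resulting Gram matrix is $\B^\top\B = \A_{T^c,:}^\top\A_{T^c,:} + |T|\,\tilde\v\tilde\v^\top$, a rank-one PSD perturbation, so eigenvalue interlacing gives $\lambda_i(\B^\top\B)\ge \lambda_i(\A_{T^c,:}^\top\A_{T^c,:})$ for every $i$ — i.e.\ no singular value decreases — and the top eigenvalue gains a multiplicative $(1+\Theta(\epsilon))$ factor via the $\ell_1$-norm lower bound $\lambda_1(\A_{T^c,:}^\top\A_{T^c,:}) \le (n-|T|)\|\v\|_1^2$. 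That monotone structure is exactly what is missing in your scattered-entry perturbation. If you want to salvage the general direction, you would need a perturbation that acts as a PSD update to the Gram matrix, which essentially forces a row- (or column-) based change like the paper's.
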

\begin{proof}
Assume that $\|\A\|_{\cS_p}^p \geq c_1 n^p$ for some constant $c_1 < c$, otherwise there is already a constant-factor gap. Together with the assumption that $\|\A\|_\infty\leq 1$ and thus $\|\A\|_F^2\leq n^2$, it must hold that $\|\A\| \geq c_2 n$ for $c_2 = c_1^{1/(p-1)}$.

We claim that we can find a set $T$ of $\epsilon n$ rows such that $\|\A_{T^c,:}\|_{\cS_p}^p \geq (1-C'\epsilon)\|\A\|_{\cS_p}^p$ for some $C'$ (which may depend on $p$) and therefore $\|\A_{T^c,:}\| \geq c_2' n$, where $\A_{T^c,:}$ stands for a submatrix of $\A$ with rows restricted on the set $T^c$. Consider a random subset $T$ formed by including rows independently with probability $\epsilon$, that is, let $\delta_i$ be the indicator variable whether $i\in T$ and $\bbE \delta_i = \epsilon$. Denote by $\A_{i,:}\in\R^{1\times n}$ the $i$-th row of $\A$. Then we have, by the standard symmetrization trick (see, e.g.,~\cite[Lemma 6.3]{LT91}), that
\begin{align*}
\bbE_{\delta_i}\|\A_{T,:}\|_{\cS_p}^2 = \bbE_{\delta_i}\left\|\sum_i \delta_i \A_{i,:}^\top \A_{i,:}\right\|_{\cS_{p/2}}  &\le  \bbE_{\delta_i}\left(\left\|\sum_i (\delta_i-\epsilon) \A_{i,:}^\top \A_{i,:}\right\|_{\cS_{p/2}} + \left\| \sum_i \epsilon \A_{i,:}^\top\A_{i,:} \right\|_{\cS_{p/2}} \right)\\
&\leq 2\bbE_{\delta_i}\bbE_{\epsilon_i}\left\| \sum_i \epsilon_i\delta_i \A_{i,:}^\top\A_{i,:} \right\|_{\cS_{p/2}} + \epsilon\|\A\|_{\cS_p}^2,
\end{align*}
where $\epsilon_i$'s are i.i.d.\ $\{\pm 1\}$-valued Rademacher variables with $\Pr(\epsilon_i=+1)=\Pr(\epsilon_i=-1)=1/2$.
Applying the Non-Commutative Khintchine Inequality (abbreviated as NCKI)~\cite{LP86} yields that
\begin{align*}
\bbE_{\epsilon_i}\left\| \sum_i \epsilon_i\delta_i \A_{i,:}^\top\A_{i,:} \right\|_{\cS_{p/2}}
&\leq \left(\bbE_{\epsilon_i}\left\| \sum_i \epsilon_i\delta_i \A_{i,:}^\top\A_{i,:} \right\|_{\cS_{p/2}}^{p/2}\right)^{2/p}\quad (\text{by Jensen's inequality})\\
&\leq C_1\sqrt{\frac{p}{2}} \left\|\left(\sum_i \delta_i (\A_{i,:}^\top\A_{i,:})^2\right)^{\frac12}\right\|_{\cS_{p/2}}\quad (\text{by NCKI})\\
&\leq C_1\sqrt{\frac{p}{2}} \left\|\max_i \|\A_{i,:}\|_2 \cdot \left(\sum_i \delta_i\A_{i,:}^\top\A_{i,:}\right)^{\frac12}\right\|_{\cS_{p/2}}\\
&\leq C_1\sqrt{\frac{p}{2}} \sqrt{n} \|\A_{T,:}\|_{\cS_{p/2}}\\
&\leq C_1\sqrt{\frac{p}{2}} n^{\frac12}|T|^{\frac1p} \|\A_{T,:}\|_{\cS_p},\quad (\text{by H\"older's inequality})
\end{align*}
where the third inequality holds since $\sum_i \delta_i(\A_{i,:}^\top\A_{i,:})^2\preceq \max_i \|\A_{i,:}\|_2^2\cdot \sum_i\delta_i \A_{i,:}^\top\A_{i,:}$.
Hence, taking expectation on both sides w.r.t.\ $\delta_i$,
\begin{align*}
\bbE\|\A_{T,:}\|_{\cS_p}^2 &\leq C_1\sqrt{\frac{p}{2}}n^{\frac{1}{2}} (\bbE|T|^{\frac 2p})^{\frac{1}{2}} (\bbE\|\A_{T,:}\|_{\cS_p}^2)^{\frac 12} + \epsilon\|\A\|_{\cS_p}^2 \quad (\text{by Cauchy-Schwarz inequality})\\
&\leq C_1\sqrt{\frac{p}{2}}n^{\frac{1}{2}} (\bbE|T|)^{\frac{1}{p}} (\bbE\|\A_{T,:}\|_{\cS_p}^2)^{\frac 12} + \epsilon\|\A\|_{\cS_p}^2\quad (\text{by Jensen's inequality})\\
&\leq C_1\sqrt{\frac{p}{2}} n^{\frac12}(\epsilon n)^{\frac 1p} (\bbE\|\A_{T,:}\|_{\cS_p}^2)^{\frac 12} + \epsilon\|\A\|_{\cS_p}^2,
\end{align*}
whence we can solve that
\[
\bbE\|\A_{T,:}\|_{\cS_p}^2 \leq C_1^2 \frac{p}{2}\epsilon^{\frac2p}n^{1+\frac2p} + 4\epsilon\|\A\|_{\cS_p}^2 \leq C_2\epsilon\|\A\|_{\cS_p}^2.
\]
That is, we can find $T$ such that $\|\A_{T,:}\|_{\cS_p}^2\leq C_2\epsilon\|\A\|_{\cS_p}^2$ and thus
\[
\|\A_{T^c,:}\|_{\cS_p}^2 = \|\A_{T^c,:}^\top\A_{T^c,:}\|_{\cS_{p/2}} = \|\A^\top\A - \A_{T,:}^\top\A_{T,:}\|_{\cS_{p/2}}\geq (1-C_2\epsilon)\|\A^\top\A\|_{\cS_{p/2}} = (1-C_2\epsilon)\|\A\|_{\cS_p}^2
\]
as desired. A Chernoff bound shows that $|T|\geq 0.9\epsilon n$ with at least a high constant probability. We can assume that it happens, since conditioning on this event will increase $\bbE\|\A_{T,:}\|_{\cS_p}^2$ just by a constant factor. When $|T| > \epsilon n$ we can just remove rows from $T$, which only decreases $\|\A_{T,:}\|_{\cS_p}$.

Let $\v$ be the normalized eigenvector associated with the largest eigenvalue of $\A_{T^c,:}^\top \A_{T^c,:}$. We shall change the rows of $T$ all to $\tilde\v := \operatorname{sgn}(\v)$, obtaining a matrix $\B$. Note that $\B^\top \B = \A_{T^c,:}^\top \A_{T^c,:} + |T| \v\v^\top\succeq \A_{T^c,:}^\top \A_{T^c,:}$ is a rank-$1$ PSD perturbation of $\A_{T^c,:}^\top \A_{T^c,:}$ and $\v$ is the leading eigenvector of $\A_{T^c,:}^\top \A_{T^c,:}$, we have that for the $i$-th eigenvalue $\lambda_i(\cdot)$,
\[
\lambda_i(\B^\top\B) \geq \lambda_i(\A_{T^c,:}^\top \A_{T^c,:}),\quad i\geq 2.
\]
and the largest eigenvalue
\begin{align*}
\lambda_1(\B^\top\B) &= \sup_{\x: \|\x\|_2 = 1} \x^\top (\A_{T^c,:}^\top \A_{T^c,:} + \epsilon n \tilde{\v} \tilde{\v}^\top)\x \\
&\geq \v^\top (\A_{T^c,:}^\top \A_{T^c,:} + |T| \tilde{\v} \tilde{\v}^\top)\v \\
&= \lambda_1(\A_{T^c,:}^\top \A_{T^c,:}) + |T| \|\v\|_1^2.
\end{align*}
Observe that
\[
\lambda_1(\A_{T^c,:}^\top \A_{T^c,:}) = \|\A_{T^c,:} \v\|_2^2 = \sum_{i\in T^c} \langle \A_{i,:},\v\rangle^2 \leq \sum_{i\in T^c} \|\A_{i,:}\|_\infty^2 \|\v\|_1^2 \leq (n-|T|) \|\v\|_1^2.
\]
Then
\[
\lambda_1(\B^\top\B) \geq \left(1+\frac{0.9\epsilon}{1-\epsilon}\right)\lambda_1(\A_{T^c,:}^\top \A_{T^c,:})
\]
and so
\begin{align*}
c n^p > \|\B\|_{\cS_p}^p &\geq \left(1+\frac{0.9\epsilon}{1-\epsilon}\right)^{\frac p2}\lambda_1^{\frac{p}{2}}(\A_{T^c,:}^\top \A_{T^c,:}) + \sum_{i\geq 2} \lambda_i^{\frac p2}(\A_{T^c,:}^\top \A_{T^c,:})\\
&\geq \left(\left(1+\frac{0.9\epsilon}{1-\epsilon}\right)^{\frac p2}-1\right)(c_2' n)^p + \|\A_{T^c,:}\|_{\cS_p}^p\\
&\geq c_3 p \epsilon n^p + (1-C'\epsilon)\|\A\|_{\cS_p}^p,
\end{align*}
whence it follows that
\[
\|\A\|_{\cS_p}^p \leq (c - (c_3 p - c C') \epsilon) n^p,
\]
provided that $c$ and $\epsilon$ are sufficiently small.
\end{proof}

\begin{algorithm}
\caption{Algorithm for Schatten-$p$ norm testing ($p>2$)}
\label{algorithm: algorithm for the Schatten-$p$ norm testing problem}
\begin{algorithmic}[1]
\LineComment Lines 1-2 estimate the Frobenius norm of $\A$.
\State Uniformly sample $q_0=\cO(\frac{1}{\epsilon^2})$ entries $\A$, forming vector $\y$.
\State $X\leftarrow \frac{n^2}{q_0}\|\y\|_2^2$.  \Comment{$X$ is an estimator of $\|\A\|_F^2$}.
\State Uniformly sample a $q\times q$ submatrix $\widetilde\A'$ with $q=\cO(\frac{\log n}{\epsilon})$.
\If {$\|\widetilde\A'\|\le C_0\sqrt{X}\frac{q}{n}$}
	\State Output ``$\A\in\mathsf{H1}$''.
\Else
	\State Run Algorithm \ref{algorithm: the sampling algorithm for even p} with $\tau=\Theta(\epsilon^{p/(p-2)}/p)$ and obtain indices $\cI_{\mathsf{row}}$ and $\cI_{\mathsf{col}}$. \label{alg: step 7}
	\State $\A_0\gets \A_{\cI_{\mathsf{row}},\cI_{\mathsf{col}}}$. \label{alg: step 8}
	\State $\mathcal{I}\gets \{i\mid \sigma_i(\A_0) > (1+\epsilon/(3p))n(c\epsilon/3)^{1/(p-2)}\}$. \label{alg: step 9}
	\If {$\sum_{i\in\cI}\sigma_i^p(\A_0)\ge cn^p$}  \label{alg: step 10}
		\State Output ``$\A\in\mathsf{H0}$''.
	\Else
		\State Output ``$\A\in\mathsf{H1}$''.
	\EndIf
\EndIf
\end{algorithmic}
\end{algorithm}

\begin{theorem}
\label{theorem: Schatten-p norm upper bound}
Let $p>2$ be a constant, and $c$ and $\epsilon$ be as in Lemma~\ref{lem:schatten_norm_gap}. Then Algorithm \ref{algorithm: algorithm for the Schatten-$p$ norm testing problem} is a correct algorithm for the Schatten-$p$ norm testing problem under the sampling model with probability at least $0.99$. It reads $\cO\left(\frac{\log^2 n}{\epsilon^{4p/(p-2)}}\right)$ entries.
\end{theorem}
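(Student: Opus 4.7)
My plan is to use Lemma~\ref{lem:schatten_norm_gap} as the engine: it guarantees that when $\A\in\mathsf{H1}$, the Schatten-$p$ norm drops by an additive $\Theta(\epsilon n^p)$ gap compared with $\mathsf{H0}$. The algorithm must therefore estimate $\|\A\|_{\cS_p}^p$ within error $\cO(\epsilon n^p)$. The key structural fact is that $\mathsf{H0}$ forces low stable rank: since $\|\A\|_\infty\le 1$ we have $\|\A\|_F^2\le n^2$, so the inequality $\|\A\|_{\cS_p}^p\le\|\A\|^{p-2}\|\A\|_F^2$ yields $\|\A\|\ge c^{1/(p-2)}n$, whence $\srank(\A)=\cO(1)$. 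This is what allows the various uniform-sampling estimators to converge.

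In the first stage, Steps~1--2 produce a constant-factor approximation $X\asymp\|\A\|_F^2$ from $\cO(1/\epsilon^2)$ samples by Chebyshev applied to the unbiased estimator $(n^2/q_0)\|\y\|_2^2$ (using $\|\A\|_F^2=\Theta(n^2)$ in the only regime we care about). For Steps~3--6, I invoke Lemma~\ref{Lemma: Vershynin} and Theorem~\ref{theorem: operator norm estimator by Ismail} to show that a uniform $q\times q$ submatrix with $q=\cO(\log n/\epsilon)$ satisfies $\|\widetilde\A'\|=\Theta((q/n)\|\A\|)$ whenever $\srank(\A)=\cO(1)$. Thus if the threshold test fails, then $\|\A\|=o(n)$ and hence $\|\A\|_{\cS_p}^p<cn^p$, so outputting $\mathsf{H1}$ is correct; if the test passes, we may assume $\|\A\|=\Omega(n)$ and proceed to the main stage.

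The core of the argument is Steps~7--10. Set $\tau':=n(c\epsilon/3)^{1/(p-2)}$; singular values of $\A$ below $\tau'$ contribute negligibly since
\begin{equation*}
\sum_{i:\sigma_i(\A)\le\tau'}\sigma_i^p(\A)\le(\tau')^{p-2}\sum_i\sigma_i^2(\A)\le(\tau')^{p-2}n^2=\tfrac{c\epsilon}{3}n^p,
\end{equation*}
which is absorbed into the gap from Lemma~\ref{lem:schatten_norm_gap}. Meanwhile, the number of singular values above $\tau'$ is at most $n^2/(\tau')^2=\cO(\epsilon^{-2/(p-2)})$. I would apply Algorithm~\ref{algorithm: the sampling algorithm for even p} with accuracy $\tau=\Theta(\epsilon^{p/(p-2)}/p)$ to obtain $\A_0$ whose rescaled singular values approximate each $\sigma_i(\A)\ge\tau'$ additively to $\tau n$; since $\tau n/\tau'=\Theta(\epsilon/p)$, this is a $(1\pm\epsilon/(3p))$ multiplicative approximation, which raises to a $(1\pm\epsilon/3)$ multiplicative approximation of each $\sigma_i^p$. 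Summing over the at most $\cO(\epsilon^{-2/(p-2)})$ large singular values yields $\sum_{i\in\cI}\sigma_i^p(\A_0)=(1\pm\epsilon/3)\|\A\|_{\cS_p}^p\pm(c\epsilon/3)n^p$, which is tighter than the $c'\epsilon n^p$ gap of Lemma~\ref{lem:schatten_norm_gap}, so the threshold test at $cn^p$ is correct.

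The query complexity is dominated by Step~\ref{alg: step 7}: Theorem~\ref{theorem: operator norm estimator by Ismail} gives $\cO(\log^2 n/\tau^4)=\cO(\log^2 n/\epsilon^{4p/(p-2)})$ entries, absorbing the fixed $p$-dependence into the constant. The main obstacle I foresee is justifying the \emph{simultaneous} $(1\pm\tau)$ approximation of \emph{all} singular values above $\tau'$ from a single run of Algorithm~\ref{algorithm: the sampling algorithm for even p}, rather than just the top one. I expect to handle this by observing that after projecting out the top $k$ singular directions the residual matrix still has stable rank $\cO(\epsilon^{-2/(p-2)})$ and Frobenius mass $\Omega(\epsilon n^2)$, so the operator-norm estimator applies to it as well; combined with a Weyl interlacing argument on the uniform submatrix, this should yield the required simultaneous spectral approximation.
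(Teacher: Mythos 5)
Your proposal matches the paper's argument closely: the $\srank(\A)=\cO(1)$ reduction, the Chebyshev-based Frobenius estimate, the Vershynin/Magdon-Ismail stable-rank test, the threshold $\tau'=n(c\epsilon/3)^{1/(p-2)}$, the contribution bound for small singular values, and the accuracy setting $\tau=\Theta(\epsilon^{p/(p-2)}/p)$ with sample count $\cO(\log^2 n/\tau^4)$ are all exactly what the paper does.

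The one place you went down a harder road than necessary is the worry about simultaneous approximation of all large singular values. You don't need deflation or interlacing on residuals: Lemma~\ref{lemma: Ismail} (which Theorem~\ref{theorem: operator norm estimator by Ismail} is built on) already gives the stronger Gram-matrix guarantee $\|\A^\top\A-\A_0^\top\A_0\|\le\tau\|\A\|^2$. Weyl's inequality applied once to this single operator-norm bound yields $|\sigma_i^2(\A)-\sigma_i^2(\A_0)|\le\tau\|\A\|^2\le\tau n^2$ \emph{for every $i$ simultaneously}, with no per-direction argument. Dividing by $(\tau')^2$ gives the multiplicative $(1\pm\epsilon/(3p))$ control on each $\sigma_i^2$ with $\sigma_i(\A)>\tau'$, hence $(1\pm\epsilon/3)$ control on each $\sigma_i^p$. (The two-sided sampling in Algorithm~\ref{algorithm: the sampling algorithm for even p} requires applying the Gram bound once for rows and once for columns and taking a triangle inequality, but this only doubles $\tau$.) One small imprecision in your write-up: what you get is additive control on $\sigma_i^2$, not on $\sigma_i$ itself, so the right phrasing is $|\sigma_i^2(\A)-\sigma_i^2(\A_0)|\lesssim\tau n^2$, from which the multiplicative claim for $\sigma_i>\tau'$ follows by dividing; the conclusion you draw is correct. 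Finally, the algorithm's Step~\ref{alg: step 9} thresholds on $\sigma_i(\A_0)$ rather than on $\sigma_i(\A)$, so a short argument that the singular values of $\A_0$ in the gap between the two thresholds also contribute $\cO(\epsilon n^p)$ is needed to close the proof, as the paper does with the sets $\cI\subseteq\cJ$.
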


\begin{proof}

When $\A\in \mathsf{H0}$, we claim that $\srank(\A)=\cO(1)$ which is independent of $n$ and $1/\epsilon$. Otherwise, suppose $\srank(\A)=f(n,1/\epsilon)$. Then $\|\A\|=\|\A\|_F/\sqrt{\srank(\A)}\le n/f(n,1/\epsilon)=o(n)$. In this case, $\|\A\|_{\cS_p}^p$ is maximized when the first $r$ singular values are equal to $\|\A\|$, where $r\le n^2/\|\A\|^2$ in order to satisfy $\|\A\|_F\le n$. So the maximal $\|\A\|_{\cS_p}^p$ is $r\|\A\|^p\le n^2\|\A\|^{p-2}=o(n^p)$, which leads to a contradiction with $\A\in\mathsf{H0}$. That is, $\srank(\A)$ is an absolute constant which is independent of $n$ and $1/\epsilon$, say $4e^2$. Thus, when $\A\in\mathsf{H0}$ we have $\|\A\|=\Theta(n)$ and $\|\A\|_F=\Theta(n)$, because
$
n\ge \|\A\|_F\ge \|\A\|_{\cS_p}\ge c^{1/p}n.
$

We note that by sampling $q_0$ entries from $\A$ and stacking them as vector $\y$, the resulting estimator $X=\frac{n^2}{q_0^2}\|\y\|_2^2$ satisfies $\mathbb{E}[X]=\|\A\|_F^2$ and $\mathsf{Var}[X]\le n^2(n^4/q_0^2)(q_0/n^2)=n^4/q_0$. Taking $q_0=\cO(1/\epsilon^2)$,  we have, by Chebyshev's inequality, that
\begin{equation*}
\Pr\left[|X-\|\A\|_F^2|>\epsilon n^2\right]\le \frac{n^4/q_0}{\epsilon^2 n^4}\le 0.999.
\end{equation*}
Thus with constant probability,
$
\left|X-\|\A\|_F^2\right| \le \epsilon n^2.
$

We argue that uniformly sampling an $\cO(\log (n)/\epsilon)\times \cO(\log(n)/\epsilon)$ submatrix of $\A$ suffices to distinguish $\srank(\A)\le 4e^2$ v.s. $\srank(\A)> 4c_1e^2$ for a large absolute constant $c_1$ with a constant probability. To see this, when $\srank(\A)> 4c_1e^2$, let $\U$ be a uniformly random $n\times n$ orthogonal matrix and let $\A_{\mathsf{row}}'$ be the matrix after uniform row sampling of $\A$ of expected cardinality $q$. Note that $\|\A_{\mathsf{row}}'\|=\|\A_{\mathsf{row}}'\U\|$, and $(\A\U)_{i,:}=\A_{i,:}\U$ is uniform on $\|\A_{i,:}\|_2\cdot\mathbb{S}^{n-1}$. So $\|\A_{i,:}\U\|_\infty^2\le 2\|\A_{i,:}\U\|_2^2\log (n)/n$ for any fixed $i$ with probability at least $1-1/n^2$ by Lemma \ref{Lemma: infty norm of uniform sampling from unit sphere}. Therefore, with probability at least $1-1/n$ by union bound over all rows, $\|\A\U\|_{\mathsf{col}}^2\le 2\|\A\|_F^2\log (n)/n$, where $\|\A\|_{\mathsf{col}}$ represents the maximum $\ell_2$ norm among all columns of $\A$. By Lemma \ref{Lemma: Vershynin},
\begin{equation*}
\mathbb{E}\|\A_{\mathsf{row}}'\|\le C_1'\sqrt{\frac{q}{n}}\|\A\|+C_2'\sqrt{\log q}\sqrt{\frac{\log n}{n}}\|\A\|_F
\end{equation*}
for absolute constants $C_1'$ and $C_2'$, and by a Markov bound, with probability at least $0.999$,
\begin{equation*}
\begin{split}
\|\A_{\mathsf{row}}'\|&\le C_1\sqrt{\frac{q}{n}}\|\A\|+C_2\sqrt{\log q}\sqrt{\frac{\log n}{n}}\|\A\|_F\\
&\le C_1\sqrt{\frac{q}{n}}\frac{\|\A\|_F}{\sqrt{4c_1 e^2}}+C_2\sqrt{\log q}\sqrt{\frac{\log n}{n}}\|\A\|_F\quad(\text{since }\srank(\A)>4c_1 e^2)
\end{split}
\end{equation*}
for absolute constants $C_1$ and $C_2$. By a Markov bound, we also have with constant probability that
\begin{equation*}
\|\A_{\mathsf{row}}'\|_F^2\le c\frac{q}{n}\|\A\|_F^2.
\end{equation*}
Conditioning on this event, by applying the same argument on the column sampling of $\A_{\mathsf{row}}'$, we have
\begin{equation*}
\begin{split}
\|\widetilde\A'\|&\le C_1\sqrt{\frac{q}{n}}\|\A_{\mathsf{row}}'\|+C_2\sqrt{\log q}\sqrt{\frac{\log n}{n}}\|\A_{\mathsf{row}}'\|_F\\
&\le C_1\sqrt{\frac{q}{n}} \left(C_1\sqrt{\frac{q}{n}}\frac{\|\A\|_F}{\sqrt{4c_1 e^2}}+C_2\sqrt{\log q}\sqrt{\frac{\log n}{n}}\|\A\|_F\right)+C_2\sqrt{\log q}\sqrt{\frac{\log n}{n}}\sqrt{\frac{cq}{n}}\|\A\|_F\\
&\le C_0\frac{\|\A\|_F}{\sqrt{4c_1e^2}}\frac{q}{n},\qquad(\text{because the first term dominates as } q\gg \text{a constant})
\end{split}
\end{equation*}
where $\widetilde\A'$ is the matrix after the column sampling of $\A_{\mathsf{row}}'$, and $C_0$ is an absolute constant.
On the other hand, when $\srank(\A)\le 4e^2$ and $q=\cO(\frac{\log n}{\epsilon})$, we have with high probability that
\begin{equation*}
\|\widetilde\A'\|\ge C\frac{q}{n}\|\A\|= C\frac{q}{n}\frac{\|\A\|_F}{\sqrt{\srank(\A)}}\ge C\frac{q}{n}\frac{\|\A\|_F}{\sqrt{4e^2}},
\end{equation*}
where the first inequality holds by applying Lemma \ref{lemma: Ismail} twice on row and column sampling (set $\beta=\Theta(\epsilon)$ and $\tau=\Theta(1)$ there). By setting $c_1$ as a large absolute constant, we have
\begin{equation*}
C_0\frac{\|\A\|_F}{\sqrt{4c_1e^2}}\frac{q}{n} < C\frac{q}{n}\frac{\|\A\|_F}{\sqrt{4e^2}}.
\end{equation*}
Thus we can distinguish (a) $\srank(\A)\le 4e^2$ and (b) $\srank(\A)>4c_1e^2$ by checking $\|\widetilde\A'\|$. If we find $\srank(\A)> 4c_1e^2$, then we can safely output ``$\A\in \mathsf{H1}$''. Therefore, in the following we can assume $\srank(\A)\le 4c_1e^2$.

Recall that, according to Lemma~\ref{lem:schatten_norm_gap}, there is a multiplicative gap in the Schatten-$p$ norm between case $\mathsf{H0}$ and case $\mathsf{H1}$. Without loss of generality, we assume $\|\A\|_{\cS_p}^p=(1\pm\epsilon)cn^p$ in the following, which represents the hardest case to distinguish $\mathsf{H0}$ from $\mathsf{H1}$. In that case, $\srank(\A)=\cO(1)$ and $\|\A\|_F^2=\Theta(n^2)$, as we have shown in the beginning of the proof.

We now show that with $\mathsf{poly}(1/\epsilon)$ sampled entries, we can have an estimator which approximates $\|\A\|_{\cS_p}^p$ up to $(1\pm\epsilon)$ factor; therefore, we can distinguish $\mathsf{H0}$ from $\mathsf{H1}$ due to the gap of $\|\A\|_{\cS_p}^p$ in the two cases.
Consider all singular values of $\A$ which are at most $n/\sqrt{r}$ and consider $\|\A\|_{\cS_p}^p$. This is maximized when there are as many singular values as possible that are equal to $n/\sqrt{r}$. Note that there can be at most $r$ singular values of value $n/\sqrt{r}$, since $\|\A\|_F^2 \leq n^2$ for $\|\A\|_\infty\leq 1$. Therefore, the total contribution of singular values which are no larger than $n/\sqrt{r}$ is at most $n^p r^{1-p/2}$. So if $r=(c\epsilon/3)^{\frac{2}{2-p}}$, this quantity is at most $c\epsilon n^p/3$. Thus all singular values less than $n(c\epsilon/3)^{\frac{1}{p-2}}$ contribute not too much, at most $c\epsilon n^p/3$. For the remaining singular values (i.e., $\sigma_i(\A)>n(c\epsilon/3)^{\frac{1}{p-2}}$), by Theorem \ref{theorem: operator norm estimator by Ismail}, with $\cO\left(\frac{p^4}{\epsilon^{4p/(p-2)}}\log^2 n\right)$ samples we have
\begin{equation*}
\begin{split}
\left|\sigma_i^2(\A)-\sigma_i^2(\A_0)\right|&=\left|\sigma_i(\A^\top\A)-\sigma_i(\A_0^\top\A_0)\right|\le \|\A^\top\A-\A_0^\top\A_0\|\le \frac{2\epsilon}{3p}\left(\frac{c\epsilon}{3}\right)^{\frac{2}{p-2}}\|\A\|^2 \\
&\le \frac{2\epsilon}{3p}\left(\frac{c\epsilon}{3}\right)^{\frac{2}{p-2}} n^2<\frac{2\epsilon}{3p}\sigma_i^2(\A),
\end{split}
\end{equation*}
namely, $\sigma_i^p(\A_0) = (1\pm\epsilon/3)\sigma_i^p(\A)$. Therefore,
\begin{equation*}
\sum_{i:\ \sigma_i(\A) > n(c\epsilon/3)^{1/(p-2)}} \sigma_i^p(\A_0)= (1\pm \epsilon/3)\sum_{i:\ \sigma_i(\A) > n(c\epsilon/3)^{1/(p-2)}} \sigma_i^p(\A)= (1\pm 2\epsilon/3)\|\A\|_{\cS_p}^p,
\end{equation*}
where the last $\subseteq$ holds because all singular values less than $n(c\epsilon/3)^{\frac{1}{p-2}}$ contribute at most $c\epsilon n^p/3$. Let $\mathcal{I}=\{i\mid \sigma_i(\A_0) > (1+\epsilon/(3p))n(c\epsilon/3)^{1/(p-2)}\}$ and $\mathcal{J}=\{i\mid \sigma_i(\A) > n(c\epsilon/3)^{1/(p-2)}\}$. We note that $\mathcal{I}\subseteq \mathcal{J}$, and that all singular values of $\A$ less than $(1+\epsilon/(3p))n(c\epsilon/3)^{\frac{1}{p-2}}$ contribute not too much, at most $c\epsilon n^p/2$, by a similar analysis as above. Therefore, those singular values of $\A$ that lie in $\mathcal{J}\backslash\cI$ contribute at most $c\epsilon n^p/6$, and by the relation $\sigma_i^p(\A_0) = (1\pm\epsilon/3)\sigma_i^p(\A)$ for all $i\in\mathcal{J}$, those singular values of $\A_0$ that lie in $\mathcal{J}\backslash\cI$ contribute at most $c\epsilon n^p/5$.
Therefore
\begin{equation*}
\sum_{i\in\cI} \sigma_i^p(\A_0)=\sum_{i\in\mathcal{J}} \sigma_i^p(\A_0)\pm \frac{c\epsilon n^p}{5} =(1\pm \epsilon)\|\A\|_{\cS_p}^p,
\end{equation*}
as desired.
\end{proof}

\subsection{Lower Bounds}

\begin{problem}[Schatten-$p$ Norm Testing in the Bounded Entry Model for $p\in [1,2)$]
Let $p\in [1,2)$ and $\A\in\mathbb{R}^{n\times n}$ with $\|\A\|_\infty\leq 1$. For a constant $c$, the matrix $\A$ satisfies one of the promised properties:
\begin{itemize}
\item[$\mathsf{H0.}$]
$\|\A\|_{\cS_p}^p\geq cn^{1+p/2}$;
\item[$\mathsf{H1.}$]
$\A$ is $\epsilon$-far from $\|\A\|_{\cS_p}^p\geq cn^{1+p/2}$, meaning that it requires changing at least an $\epsilon$-fraction of entries of $\A$ such that $\|\A\|_{\cS_p}^p\geq cn^{1+p/2}$.
\end{itemize}
The problem is to design a non-adaptive property testing algorithm that outputs $\mathsf{H0}$ with probability at least $0.9$ if $\A\in \mathsf{H0}$, and output $\mathsf{H1}$ with probability at least $0.9$ if $\A\in\mathsf{H1}$, with the least number of queried entries.
\end{problem}

Suppose that $\G\sim \cG(n,n)$ and $\bO$ is a random $n\times n$ orthogonal matrix. Consider two distributions $\mathcal{D}_1 = \frac{1+\eta}{\sqrt n}\G$ and $\mathcal{D}_2 = \bO + \frac{\eta}{\sqrt n}\G$, where $\eta > 0$ is a small absolute constant. The following lemma comes from a manuscript of Li et al.~\cite{LNW:schatten_unpublished}.

\begin{lemma}[\cite{LNW:schatten_unpublished}]
\label{lem:schatten_p_lb_hard_instance}
Consider a linear sketch of length $m$ for random matrices drawn from $\cD_1$ or $\cD_2$. Let $\cL_1$ and $\cL_2$ be the induced distribution of the linear sketch of $\cD_1$ and $\cD_2$, respectively. There exists $\alpha = \alpha(\eta)\in (0,1)$ such that whenever $m\leq \alpha n$, it holds that $d_{TV}(\cL_1,\cL_2) < 1/10$.
\end{lemma}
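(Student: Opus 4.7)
The plan is to exploit the rotational invariance of the Gaussian, reduce to an orthonormal sketch, and bound the total variation via the chi-squared divergence. First, without loss of generality, the sensing matrices $X_1,\dots,X_m$ are orthonormal in the Frobenius inner product; equivalently, the sketch operator $\Phi:\R^{n\times n}\to\R^m$ with $\Phi(\A)_i=\langle X_i,\A\rangle$ satisfies $\Phi\Phi^\top=I_m$. This is valid because post-composition by any invertible linear map on $\R^m$ preserves total variation, so we may replace $\Phi$ by its row-orthogonalization. Under this normalization, orthogonal invariance of the Gaussian yields $\cL_1=\cN\!\left(0,\tfrac{(1+\eta)^2}{n}I_m\right)$ exactly, while $\cL_2$ is the law of $Y+Z$ with $Y:=\Phi(\bO)$ and $Z\sim\cN\!\left(0,\tfrac{\eta^2}{n}I_m\right)$ independent of $Y$. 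The Haar identity $\bbE[\bO_{ij}\bO_{kl}]=\frac{1}{n}\delta_{ik}\delta_{jl}$ gives $\bbE Y=0$ and $\mathrm{Cov}(Y)=\frac{1}{n}I_m$, and deterministically $\|Y\|^2\le\|\bO\|_F^2=n$.

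Next, I would write
\begin{equation*}
\chi^2(\cL_2\|\cL_1)+1=\bbE_{Y,Y'}\!\left[\int_{\R^m}\frac{\phi_{\eta/\sqrt n}(x-Y)\,\phi_{\eta/\sqrt n}(x-Y')}{\phi_{(1+\eta)/\sqrt n}(x)}\,dx\right],
\end{equation*}
where $Y'$ is an independent copy of $Y$ and $\phi_\sigma$ denotes the density of $\cN(0,\sigma^2 I_m)$, and carry out the Gaussian integral by completing the square in $x$. This reduces the right-hand side to $K(m,\eta)\cdot\bbE_{Y,Y'}\exp\!\bigl(s\langle Y,Y'\rangle-t(\|Y\|^2+\|Y'\|^2)\bigr)$ for explicit positive $s=s(\eta)$, $t=t(\eta)$, and a deterministic normalizing constant $K(m,\eta)$ chosen so that the expression equals $1$ when $Y=Y'=0$. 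Note that $s,t$ scale like $n/\eta^2$, and the quadratic-in-$Y$ term has the correct sign for integrability.

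The decisive and most delicate step is to bound this expectation by $1+o(1)$ whenever $m\le\alpha(\eta)n$. This requires exploiting the structure of the Haar measure beyond its first two moments. I would use a Weingarten expansion of moments of $\bO$ (or, alternatively, the Gromov--Milman concentration inequality on $O(n)$) to control the conditional moment generating function of $\langle Y,Y'\rangle$ given $Y'$, and combine with the tight concentration of $\|Y\|^2$ around $m/n$, using the deterministic bound $\|Y\|^2\le n$ to absorb any tail contributions where the MGF bound might blow up. Pinsker's inequality $d_{TV}\le\sqrt{\chi^2/2}$ then delivers $d_{TV}(\cL_1,\cL_2)<1/10$. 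The main obstacle is this Haar moment control: a naive Gaussian approximation to $Y$ would only give $m=O(1/\eta^2)$, whereas leveraging both the deterministic constraint $\|Y\|^2\le n$ and sharp Haar moment estimates is what unlocks the claimed $m=O(n)$ bound and the functional form $\alpha(\eta)$ of the constant.
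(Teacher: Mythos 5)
The paper does not prove this lemma; it is cited from an unpublished manuscript \cite{LNW:schatten_unpublished} and used as a black box. So there is no in-paper proof to compare against, and I evaluate your sketch on its own terms.

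Your high-level strategy (orthogonalize the sketch, pass to $\chi^2(\cL_2\|\cL_1)$, complete the square, control the Haar moment generating function, finish with Pinsker) is a reasonable template for this kind of result. However, there is a gap that is fatal for the statement as literally written, and your proposal does not catch it. You correctly compute $\cL_1 = \cN\bigl(0,\tfrac{(1+\eta)^2}{n}I_m\bigr)$, $\mathrm{Cov}(Y) = \tfrac{1}{n}I_m$, and $Z\sim\cN\bigl(0,\tfrac{\eta^2}{n}I_m\bigr)$; put together, $\mathrm{Cov}(\cL_2) = \tfrac{1+\eta^2}{n}I_m$, which does \emph{not} equal $\mathrm{Cov}(\cL_1) = \tfrac{(1+\eta)^2}{n}I_m$. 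Because $(1+\eta)^2 - (1+\eta^2) = 2\eta$, the scalar statistic $W = \|\text{sketch}\|_2^2$ has means differing by $2m\eta/n$ under the two distributions while its fluctuation is $\Theta(\sqrt m/n)$, so the two sketches are distinguished (hence $d_{TV}\to 1$) already at $m = \Theta(1/\eta^2) \ll \alpha n$. Equivalently, in your $\chi^2$ calculation the Gaussian approximation for $Y$ gives $K(m,\eta)\,\bbE\exp(s\langle Y,Y'\rangle - t(\|Y\|^2+\|Y'\|^2)) = \bigl(1+4\eta^2 + O(\eta^3)\bigr)^{m/2}$, which blows up unless $m = O(1/\eta^2)$; the Haar structure of $Y$ only tightens the $\|Y\|^2$ concentration, and cannot repair a second-moment mismatch. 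The lemma must intend $\cD_1$ to be variance-matched (e.g.\ $\cD_1 = \tfrac{\sqrt{1+\eta^2}}{\sqrt n}\G$, or equivalently $\tfrac{1}{\sqrt n}(\G+\eta\G')$ with independent $\G'$); under that normalization the same algebra gives $U = t - s^2/(2(n+2t)) = 0$ exactly, $K\cdot(\text{Gaussian approximation of the expectation}) = 1$, and the entire $\chi^2$ comes from the non-Gaussianity of $\Phi(\bO)$, which is where the bound $m\le\alpha(\eta)n$ can come from. A proof must make this moment-matching explicit; without it your step 3 already loses the game.

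Two smaller points: (i) your claim that $K(m,\eta)$ is ``chosen so that the expression equals $1$ when $Y=Y'=0$'' is not right — at $Y=Y'=0$ the integral evaluates to $\bigl(\sigma_2^2/(\sigma_1\sqrt{2\sigma_2^2-\sigma_1^2})\bigr)^m$, which is exponentially large in $m$ unless $\sigma_1=\sigma_2$; $K$ is this quantity, not $1$. (ii) The ``decisive'' step, controlling $\bbE_{Y,Y'}\exp\bigl(s\langle Y,Y'\rangle - t(\|Y\|^2 + \|Y'\|^2)\bigr)$ via Weingarten calculus or Gromov--Milman concentration and showing it cancels $K$ to $1+o(1)$ for $m\le\alpha n$, is not carried out; since this is the entire substance of the lemma, the proposal is a plan rather than a proof.
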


\begin{theorem}
\label{thm:schatten_p_lb}
Let $p\in [1,2)$ be a constant. There exist constants $c = c(p)$ and $\epsilon_0 = \epsilon_0(p)$ such that for any $\epsilon \leq \epsilon_0$ and $\A\in\R^{n\times n}$, any non-adaptive algorithm that correctly tests $\mathsf{H0}$ against $\mathsf{H1}$ with probability at least $0.99$ must make $\Omega(n)$ queries (i.e., the sketch size is $\Omega(n)$). 
\end{theorem}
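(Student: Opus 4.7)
The plan is to apply Lemma~\ref{lem:schatten_p_lb_hard_instance} as a black box. Any non-adaptive entry query can be written as $\A_{ij}=\langle\A,\e_i\e_j^\top\rangle$, a linear measurement of $\A$, so a tester making $m$ entry queries is a linear sketch of length $m$ in the sense of the lemma; it therefore suffices to produce rescalings of the two distributions $\cD_1=\frac{1+\eta}{\sqrt n}\G$ and $\cD_2=\bO+\frac{\eta}{\sqrt n}\G$ that land in the bounded-entry model and separate the two promise cases. I will scale by $\lambda=c_0\sqrt{n}$ for a sufficiently small constant $c_0=c_0(p,\eta)$. Each entry of $\lambda\cD_i$ is approximately centered Gaussian with standard deviation $O(c_0)$, so the probability that any fixed entry exceeds $1$ in absolute value is at most $2e^{-1/(2c_0^2)}$; choosing $c_0$ small enough that this is $o(\epsilon_0)$ and conditioning on the event $E_i=\{\|\lambda\cD_i\|_\infty\le 1\}$ gives bounded matrices $\widetilde\cD_i$ with $\Pr[E_i]\ge 1-o(1)$, so the post-processing inequality preserves the total variation distance between the induced sketch distributions up to an additive $o(1)$.

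The singular values of $\bO$ are all equal to $1$ and those of $\frac{1}{\sqrt n}\G$ converge to the Marchenko--Pastur quarter-circle density on $[0,2]$, so with high probability
\[
\|\widetilde\cD_2\|_{\cS_p}^p = (1\pm o(1))c_0^p n^{1+p/2},\qquad \|\widetilde\cD_1\|_{\cS_p}^p = (1\pm o(1))\alpha_p(1+\eta)^p c_0^p n^{1+p/2},
\]
where $\alpha_p:=\int_0^2 x^p\,d\mu_{MP}(x)$. Because $\mu_{MP}$ has second moment $1$, Jensen's inequality on the strictly concave map $t\mapsto t^{p/2}$ gives $\alpha_p<1$ for every $p\in[1,2)$. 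Choosing $\eta$ so that $(1+\eta)^p\alpha_p<1$ and fixing $c$ strictly between those two constants places $\widetilde\cD_2$ inside $\mathsf{H0}$ while $\widetilde\cD_1$ has Schatten-$p$ norm strictly below the threshold. For the farness requirement, suppose $\widetilde\cD_1+\Delta\in\mathsf{H0}$ with $\|\Delta\|_0\le\epsilon n^2$ and $\|\Delta\|_\infty\le 2$, so $\|\Delta\|_F\le 2\sqrt\epsilon\,n$; H\"older on the singular values gives $\|\Delta\|_{\cS_p}^p\le n^{1-p/2}\|\Delta\|_F^p=O(\epsilon^{p/2})n^{1+p/2}$. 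The triangle inequality for the Schatten-$p$ norm (valid for $p\ge1$) then yields
\[
\|\widetilde\cD_1+\Delta\|_{\cS_p} \;\le\; \bigl(\alpha_p^{1/p}(1+\eta)c_0+O(\sqrt\epsilon)\bigr)\,n^{1/p+1/2},
\]
which for $\epsilon\le\epsilon_0(p)$ sufficiently small is strictly less than $c^{1/p}n^{1/p+1/2}$, contradicting $\mathsf{H0}$.

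The main obstacle I foresee is that truncation (or conditioning) to enforce the bounded-entry promise is non-linear, so entry queries on $\widetilde\cD_i$ are not literally linear measurements of a Gaussian and Lemma~\ref{lem:schatten_p_lb_hard_instance} does not apply on the nose. The workaround is the conditioning argument above: when $c_0$ is a sufficiently small constant, both $\Pr[E_1^c]$ and $\Pr[E_2^c]$ are $o(\epsilon_0)$, so the induced sketch distributions of the conditional laws differ in total variation from those of the unconditional laws by $o(1)$ each, preserving the lemma's $1/10$ bound up to $o(1)$. All Schatten-$p$ norm estimates remain valid on $E_i$ since excluding the bad event perturbs the matrix in Frobenius norm by $o(n)$ and hence in Schatten-$p$ norm by $o(n^{1/p+1/2})$ via the same H\"older estimate used above; combining everything yields the desired $\Omega(n)$ query lower bound.
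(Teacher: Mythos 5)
You take the same route as the paper: the two hard distributions $\cD_1 = \frac{1+\eta}{\sqrt n}\G$ and $\cD_2 = \bO + \frac{\eta}{\sqrt n}\G$ from Lemma~\ref{lem:schatten_p_lb_hard_instance}, a $\Theta(\sqrt n)$-rescaling into the bounded-entry regime, a constant-factor Schatten-$p$ gap between the two ensembles (your Marchenko--Pastur/Jensen computation of $\alpha_p<1$ makes explicit what the paper cites as a well-known fact), and the H\"older-plus-triangle-inequality argument $\|\E\|_{\cS_p}\le n^{1/p-1/2}\|\E\|_F\lesssim\sqrt\epsilon\,n^{1/p+1/2}$ for $\epsilon$-farness. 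You also correctly isolate the real difficulty: the map forcing entries into $[-1,1]$ is non-linear, so the sketch lower bound cannot be applied on the nose.

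The gap is in your workaround. You condition on the event $E_i=\{\|\lambda\cD_i\|_\infty\le 1\}$ and assert $\Pr[E_i]\ge 1-o(1)$ when $c_0=c_0(p,\eta)$ is a fixed constant. That is false: each entry of $\lambda\cD_i$ is approximately $\cN(0,\Theta(c_0^2))$ and exceeds $1$ in absolute value with probability $p_0=e^{-\Theta(1/c_0^2)}$, a positive constant independent of $n$, so the expected number of exceedances over the $n^2$ entries is $n^2 p_0\to\infty$ and $\Pr[E_i]\approx(1-p_0)^{n^2}\to 0$. Moreover $d_{TV}(\cD_i,\ \cD_i\mid E_i)=\Pr[E_i^c]\to 1$, so the conditional law is not a small perturbation of $\cD_i$, and your claim that conditioning shifts the induced sketch law by $o(1)$ in total variation (and that excluding the bad event perturbs the matrix by $o(n)$ in Frobenius norm) fails. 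Forcing $\Pr[E_i]\to 1$ would need $c_0=o(1/\sqrt{\log n})$, which collapses the $\Theta(n^{1+p/2})$ Schatten-$p$ threshold by a polylogarithmic factor and invalidates the separation from $\mathsf{H0}$. The paper instead applies the deterministic entrywise truncation $x\mapsto\max\{\min\{x,C/\sqrt n\},-C/\sqrt n\}$, which maps \emph{every} sample (not just a vanishing-probability subset) into the bounded-entry set, writes the truncated matrix as the original plus an explicit error matrix $\B$ with i.i.d.\ truncated-Gaussian entries whose second moment $m_C$ decays subgaussianly in $C$, and uses this to make the error's Schatten-$p$ contribution an arbitrarily small constant multiple of $n^{1/p+1/2}$; the $\cD_2$ side is handled via the distributional approximation of $\sqrt n\,\bO_{ij}$ by a standard Gaussian and stochastic domination of the clipping error. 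Replace your conditioning step by this truncation-and-error-matrix accounting.
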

\begin{proof}
Consider the hard distributions $\cD_1$ and $\cD_2$ for Lemma~\ref{lem:schatten_p_lb_hard_instance}. For $p<2$, it is a well-known fact that with high probability over $\G\sim \cG(n,n)$, it holds that $\|\frac{1}{\sqrt n}\G\| \leq 2(1+o(1))$ and $\|\frac{1}{\sqrt n}\G\|_{\cS_p}^p \leq (1+o(1))c_p n$ for some constant $c_p < 1$ that depends only on $p$. Hence with high probability, when $\A\sim \cD_1$, it holds that $\|\A\|_{\cS_p}^p \leq (1+o(1))(1+\eta)c_p n$. On the other hand, with high probability, when $\A\sim \cD_2$, it follows from the triangle inequality that $\|\A\|_{\cS_p}^p\geq (1-(1+o(1))\eta c_p^{1/p})^p n$. Therefore, when $\eta$ is sufficiently small (depending on $p$ only), there is a constant-factor multiplicative gap in $\|\A\|_{\cS_p}^p$ between $\cD_1$ and $\cD_2$.

Let $C$ be a large constant to be determined. We truncate $\cD_1$ and $\cD_2$ by applying the map
\[
x\mapsto \max\left\{\min\left\{x, \frac{C}{\sqrt n}\right\}, -\frac{C}{\sqrt n}\right\}
\]
entrywise to the matrices, resulting in two new distributions $\widetilde \cD_1$ and $\widetilde \cD_2$. We claim that with high probability, there remains a constant-factor multiplicative gap in $\|\A\|_{\cS_p}^p$ between $\widetilde\cD_1$ and $\widetilde\cD_2$. It suffices to show that with high probability, truncation incurs only a change of $c n$ for some small constant $c>0$ in $\|\A\|_{\cS_p}^p$ for both $\cD_1$ and $\cD_2$.

Suppose that $\A\sim \cD_1$, and let $\widetilde \A$ be the truncated matrix. We can write $\widetilde\A = \A + \frac{1}{\sqrt n}\B$, where $\B$ is a random matrix with i.i.d.\ entries following a truncated Gaussian distribution $\tilde\cN_C(0,1)$ whose probability density function is
\[
f_C(t) = (1-p_C)\delta(t) + \frac{1}{\sqrt{2\pi}}\exp\left(-\frac{(|t|+C)^2}{2}\right),
\]
where $\delta(t)$ is the Dirac delta function and
\[
p_C = \frac{2}{\sqrt{2\pi}}\int_{C}^{\infty}\exp\left(-\frac{x^2}{2}\right)dx =: \erfc\left(\frac{C}{\sqrt 2}\right).
\]
One can also calculate that
\begin{equation}
\label{equ: m_C}
m_C := \bbE|\B_{ij}|^2
\end{equation}
has a subgaussian decay w.r.t.\ $C$. It follows from a Chernoff bound that $\|\B\|_F^2\leq 2m_C n^2$ with high probability. Since $\|\B\|_{\cS_p} \leq n^{\frac1p-\frac12}\|\B\|_F \leq \sqrt{2m_C}n^{\frac1p+\frac12}$, we see that $\|\frac{1}{\sqrt n}\B\|_{\cS_p} \leq \sqrt{2m_C}n^{\frac{1}{p}}$, where the constant factor can be made arbitrarily small by choosing $C$ large enough; that is, truncating $\A\sim \cD_1$ incurs only a constant factor loss (where the constant can be made arbitrarily small) in $\|\A\|_{\cS_p}^p$ with probability $\geq 0.999$.
%
%

Next, suppose that $\A'\sim \cD_2$ and we write the truncation as $\tilde \A' = \A' + \B'$. It is a classical result~\cite{borel} that
\begin{equation}\label{eqn:orthogonal_matrix_entry}
\lim_{n\to\infty} \Pr\{\sqrt{n}\bO_{ij} \leq t\} = \Pr_{g\sim\cN(0,1)}\{g \leq t\}.
\end{equation}
Observe that $\A'_{ij} \stackrel{\text{dist}}{=} \bO_{ij} + \frac{\eta}{\sqrt n}g'$ and $\A_{ij} \stackrel{\text{dist}}{=} \frac{1}{\sqrt{n}}g + \frac{\eta}{\sqrt n} g'$ with the same additive `noise' $\frac{\eta}{\sqrt n}g'$, where $g,g'\sim N(0,1)$ are independent, it follows that $\Pr\{\sqrt n\A'_{ij} \leq t\} \to \Pr\{\sqrt{n}\A_{ij} \leq t\}$ for any (fixed) $t$ as $n\to\infty$ (note that $\sqrt{n}\A_{ij} \stackrel{\text{dist}}{=} (1+\eta)g$ and does not depend on $n$). Hence each entry of $\B_{ij}'$ is stochastically dominated by $\frac{1}{\sqrt n}\tilde\cN_{C/2}(0,1)$. Similarly to before, $\bbE\|\B'\|_{\cS_p} \leq n^{\frac1p-\frac12}\bbE\|\B'\|_F \leq \sqrt{m_{C/2}}n^{\frac1p}$, and thus by Markov's inequality, with probability $\geq 0.999$ it holds that $\|\B'\|_{\cS_p} \leq 1000\sqrt{m_{C/2}}n^{\frac1p}$; that is, truncating $\A'\sim \cD_2$ incurs only a constant factor loss (where the constant can be made arbitrarily small) in $\|\A'\|_{\cS_p}^p$ with probability $\geq 0.999$.


Now, the matrices from $\frac{\sqrt n}{C}\widetilde{\cD_1}$ and $\frac{\sqrt n}{C}\widetilde{\cD_2}$ have entries bounded by $1$, and with high probability, $\|\A\|_{\cS_p}\leq c_1 n^{\frac{1}{2}+\frac{1}{p}}$ when $\A\sim \frac{\sqrt n}{C}\widetilde{\cD_1}$ and
$\|\A\|_{\cS_p}\geq c_2 n^{\frac{1}{2}+\frac{1}{p}}$ when $\A\sim \frac{\sqrt n}{C}\widetilde{\cD_2}$, for constants $c_1 < c_2$ (depending on $\eta$ and $C$). Our result of the theorem would follow immediately from Theorem~\ref{thm:schatten_p_lb} once we establish that with high probability, a random matrix from $\frac{\sqrt n}{C}\widetilde{\cD_1}$ is $\epsilon$-far from having Schatten $p$-norm at least $c_2 n^{\frac{1}{2}+\frac{1}{p}}$. Indeed, let $\E$ denote the pertubation to $\A$ such that $\|\E\|_0\leq \epsilon n^2$ and $\|\A+\E\|_\infty\leq 1$. Since $\|\A\|_\infty\leq 1$, it must hold that $\|\E\|_\infty\leq 2$. Thus $\|\E\|_{\cS_p} \leq n^{\frac1p-\frac12}\|\E\|_F \leq 2 \sqrt{\epsilon}  n^{\frac1p+\frac12}$. When $\epsilon$ is sufficiently small, it is easy to see via triangle inequality that there remains a constant-factor gap between $\|\A+\E\|_{\cS_p}$ and $c_2 n^{\frac12+\frac1p}$.
\end{proof}

\section{Non-adaptive Testing of Matrix Entropy}
\label{section: Entropy}

Recall that we defined in Eq.~\eqref{eqn:entropy_defn} the entropy of a matrix $\A\in \R^{n\times n}$ as
\[
H(\A) = \frac{-\sum_i \frac{\sigma_i^2(\A)}{n^2}\log\frac{\sigma_i^2(\A)}{n^2}}{\sum_i \frac{\sigma_i^2(\A)}{n^2}},
\]
with the convention that $0\cdot\infty = 0$. For matrices $\A$ satisfying $\|\A\|_\infty\leq 1$, it holds that $\sigma_i(\A)\leq n$ for all $i$ and the entropy above coincides with the usual Shannon entropy. Note that scaling only changes the entropy additively; that is, $H(\beta\A) = H(\A) - \log\beta^2$.

Our goal in this section is to show that the following testing problem on matrix entropy has a lower bound of $\Omega(n)$ queries. 

\begin{problem}[Entropy Testing in the Bounded Entry Model]
Let $\A\in\mathbb{R}^{n\times n}$ be a matrix with bounded absolute values of entries by $1$. For some absolute constant $c$, $\A$ has one of promised properties:
\begin{itemize}
\item[$\mathsf{H0.}$]
$H(\A)\leq \log n + \log \log \log n - c$;
\item[$\mathsf{H1.}$]
$\A$ is $\frac{\epsilon}{\log n \log\log n}$-far from having entropy at most $\log n + \log\log \log n - c$, meaning that $\A$ requires changing at least an $\frac{\epsilon}{\log n\log\log n}$-fraction of its entries so that $H(\A)\leq \log n + \log \log \log n - c$.
\end{itemize}
The problem is to design a non-adaptive property testing algorithm that outputs $\mathsf{H0}$ with probability at least $0.9$ if $\A\in \mathsf{H0}$, and output $\mathsf{H1}$ with probability at least $0.9$ if $\A\in\mathsf{H1}$, with the least number of queried entries.
\end{problem}

\begin{theorem}
\label{thm:entropy_lb}
There exist absolute constants $c > 0$ and $\epsilon_0 > 0$ such that for any $\epsilon \leq \epsilon_0$ and $\A\in\R^{n\times n}$, any non-adaptive algorithm that correctly tests $\mathsf{H0}$ against $\mathsf{H1}$ with probability at least $0.99$ must make $\Omega(n)$ queries (i.e., the sketch size is $\Omega(n)$). 
\end{theorem}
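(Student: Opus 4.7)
The plan mirrors the Schatten-$p$ lower bound of Theorem~\ref{thm:schatten_p_lb}. By Yao's minimax principle, it suffices to exhibit two distributions $\cD_0$ and $\cD_1$ on $n \times n$ matrices with $\|\A\|_\infty \leq 1$ such that (i) $\cD_0$ is supported on matrices satisfying $\mathsf{H0}$, (ii) $\cD_1$ is supported on matrices satisfying $\mathsf{H1}$, and (iii) no non-adaptive linear sketch of length $o(n)$ can distinguish $\cD_0$ from $\cD_1$ with total variation advantage above $1/10$. Indistinguishability (iii) will be inherited from Lemma~\ref{lem:schatten_p_lb_hard_instance}, applied to the pair $(1+\eta)\G/\sqrt n$ versus $\bO + \eta \G/\sqrt n$, which requires $\Omega(n)$ linear measurements to distinguish.

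The main new ingredient is a spectral-entropy gap between the two hard distributions, together with a calibration that places it at the stated threshold $T := \log n + \log\log\log n - c$. Writing $H(\A) = -\log(\|\A\|_F^2/n^2) + H(q)$ with $q_i = \sigma_i^2/\|\A\|_F^2$, and using the Marchenko--Pastur law of singular values of $\G/\sqrt n$ together with a Weyl-type bound on $\bO + \eta\G/\sqrt n$, direct computation shows that the Gaussian hard instance has entropy $\approx 2\log n - c_{\mathrm{MP}}$ for a positive absolute constant $c_{\mathrm{MP}} := \int_0^4 y\log y\, d\rho_{\mathrm{MP}}(y)$, while the noisy-orthogonal hard instance has entropy $\approx 2\log n + O(\eta^2)$. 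These differ by a positive constant independent of $n$. To translate this constant gap into a separation around $T$ rather than around $2\log n$, I plan to rescale both distributions by $\beta := \sqrt{n/\log\log n}$ and embed them in a $k \times k$ block with $k = n/\log\log n$ padded with zeros. Since $H(\beta \A) = H(\A) - 2\log\beta$, this shifts both entropies down by exactly $\log n - \log\log\log n$, placing them near $T$, while the entry magnitudes stay bounded by $\beta \cdot O(1/\sqrt k) = O(1)$ so that $\|\A\|_\infty \leq 1$ can be enforced by a final entrywise truncation.

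For the bounded-entry constraint I will follow the truncation argument of Theorem~\ref{thm:schatten_p_lb}: applying $x \mapsto \sgn(x)\min\{|x|,1\}$ entrywise perturbs the Frobenius norm by $o(1)$ with high probability (via truncated-Gaussian concentration and~\eqref{eqn:orthogonal_matrix_entry} for the entries of a Haar orthogonal matrix), and hence perturbs the spectral entropy by $o(1)$, preserving the constant gap. The rigidity claim that $\A \sim \cD_1$ is $\epsilon/(\log n \log\log n)$-far from any matrix with entropy at most $T$ will come from a perturbation analysis: for any $\E$ with $\|\E\|_0 \leq \epsilon n^2/(\log n \log\log n)$ and $\|\E\|_\infty \leq 2$, we have $\|\E\|_F = O(n\sqrt{\epsilon/(\log n \log\log n)})$, and Weyl's inequality together with the Lipschitz-continuity of $H$ in the normalized singular-value distribution shows that $H(\A+\E)$ can change by at most $o(1)$, smaller than the constant entropy gap.

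The chief obstacle is the rigidity step: the spectral entropy is only Lipschitz with constant $O(\log n)$ (owing to the $-p\log p$ divergence as $p\to 0$), so the specific normalization of the ``far'' parameter as $\epsilon/(\log n \log\log n)$ is tuned exactly to absorb this $\log n$ factor and leave an $o(1)$ perturbation in $H$. Concretely, I must verify that the small singular values of the noisy-orthogonal instance are not squashed to zero under $\E$; this should follow from $\|\E\| \leq \|\E\|_F = o(\sqrt n)$ together with the fact that the bulk of the spectrum of $\bO + \eta\G/\sqrt n$ stays bounded away from $0$ with high probability, so the $-q_i\log q_i$ contributions from the small-probability tail of $q$ are controllable.
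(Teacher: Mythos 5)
Your overall outline coincides with the paper's: reuse the Schatten-$p$ hard pair and the linear-sketch indistinguishability of Lemma~\ref{lem:schatten_p_lb_hard_instance}, derive an additive constant entropy gap between truncated Gaussian and truncated noisy-orthogonal instances via Marchenko--Pastur, and argue rigidity. The step that breaks is the calibration by ``rescale by $\beta=\sqrt{n/\log\log n}$ and embed into a $k\times k$ block with $k=n/\log\log n$, padded by zeros.'' First, once the hard distributions are supported on matrices vanishing outside a $k\times k$ block, any non-adaptive sketch of an $n\times n$ matrix from these distributions only interacts with that block, so Lemma~\ref{lem:schatten_p_lb_hard_instance} (applied at dimension $k$) yields only $\Omega(k)=\Omega(n/\log\log n)$ --- weaker than the claimed $\Omega(n)$. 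Second, your entropy bookkeeping omits the contribution from zero-padding: under the normalization by $n^2$ in~\eqref{eqn:entropy_defn}, embedding a $k\times k$ matrix into $n\times n$ shifts $H$ by $+2\log(n/k)=+2\log\log\log n$, so the combined shift from rescale-plus-embed is $-\log n + 3\log\log\log n$, not $-(\log n - \log\log\log n)$. The paper avoids both issues by keeping the hard instances full-dimensional on $\R^{n\times n}$ (preserving the $\Omega(n)$ indistinguishability) and obtaining the $\log\log\log n$ offset by choosing the truncation level $C=C_1\sqrt{\log\log n}$, which simultaneously makes the truncation second moment $m_{C/2}=1/\operatorname{poly}(\log n)$ and injects $2\log C=\log\log\log n+O(1)$ into the normalized entropy. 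This coupling between truncation scale and threshold is the idea your embedding does not reproduce.

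On the rigidity step, your appeal to ``$H$ is Lipschitz with constant $O(\log n)$, absorbed by the $\epsilon/(\log n\log\log n)$ far-parameter'' identifies the right order of magnitude but is not a proof. The paper instead uses Hoffman--Wielandt to reduce the problem to a clean constrained optimization: with $\delta_i=\sigma_i\bigl(\bO+\tfrac{C}{\sqrt n}\E\bigr)-1$ it shows $\sum_i\delta_i^2\le \|\tfrac{C}{\sqrt n}\E\|_F^2\le C_1^2\epsilon' n/\log n$, and then minimizes $-\sum_i(1+\delta_i)^2\log(1+\delta_i)^2\big/\sum_i(1+\delta_i)^2$ over this $\ell_2$-ball, finding the extremal is a single coordinate $\delta_1=C_1\sqrt{\epsilon' n/\log n}$ with the remaining $\delta_i=0$, giving an entropy change of only $-C_1^2\epsilon'+o(1)$. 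A worst-case Lipschitz constant argument would be too lossy unless you also establish where on the $\ell_2$-ball the extremum sits; that explicit optimization, not generic Lipschitz continuity, is what makes the bound tight enough to leave a constant entropy gap.
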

\begin{proof}
Consider the hard instances $\widetilde{\cD}_1$ of truncated Gaussian random matrices and $\widetilde{\cD}_2$ of truncated random orthogonal matrices in the proof of Theorem~\ref{thm:schatten_p_lb}. We first claim that with high probability, their entropies are different by an additive constant.

Suppose that $\G\sim\cG(n,n)$ and $\bO$ is a uniformly random orthogonal matrix. It is clear that $H(\sqrt n\bO) = \log n$. Let $\G' = \frac{1}{\sqrt n}\G$, we have that
\[
H(\G) = H(\sqrt n \G') = -\frac{\sum_i \frac{\sigma_i^2(\G')}{n}\log\frac{\sigma_i^2(\G')}{n}}{\sum_i \frac{\sigma_i^2(\G')}{n}} = -\frac{\sum_i \sigma_i^2(\G')\log \sigma_i^2(\G')}{\sum_i \sigma_i^2(\G')} + \log n.
\]
It follows from the Marchenko-Pastur law that, even with truncation of entries (this fact is actually used in the proof of the Marchenko-Pastur law, see, e.g.,~\cite[Section 3.1.3]{BS10}), the denominator and the numerator both converge to its expected value almost surely, with a rate of convergence of $\cO(1/\sqrt{n})$. Let
\[
p(x) = \frac{\sqrt{x(4-x)}}{2\pi x},\quad x\in[0,4]
\]
be the density function of the Marchenko-Pastur law, then the first term is concentrated around
\[
-\frac{\int_0^4 (x \log x)p(x) \,dx}{\int_0^4  x p(x)\, dx} = -\frac{1}{2}.
\]
This shows an additive constant difference between the entropy of truncated Gaussian random matrices and scaled random orthogonal matrices. Specifically, $H(\frac{1}{C}\G) \leq \log n + 2\log C - c$ with high probability for some absolute constant $c$ close to $1/2$, and $H(\frac{\sqrt n}{C}\bO) = \log n + 2\log C$.

Next we shall show that the perturbation $\frac{\eta}{\sqrt{n}}\G$, truncation and the change of at most $\epsilon n^2/(\log n\log\log n)$ entries altogether will affect the entropy of $\frac{\sqrt n}{C}\bO$ by at most an additive constant that can be made arbitrarily small (depending on $\eta$, the truncation threshold $C$ and $\epsilon$). Viewing all changes as a perturbation matrix $\E$, we have with probability $\geq 0.99$ that $\|\E\|_F^2\leq 3(\frac{2\eta}{n} + 100m_{C/2}+\frac{\epsilon}{\log n \log\log n})n^2$, where $\frac{2\eta}{n}n^2$ comes from $\frac{\eta}{\sqrt{n}}\G$, $100m_{C/2} n^2$ from the truncation of entries (recall the fact that an individual entry of $\bO$ can be approximated by $\cN(0,1/n)$; see the proof of Theorem~\ref{thm:schatten_p_lb}) and $\frac{\epsilon}{\log n \log \log n}n^2$ from the change of the entries. Setting $C = C_1\sqrt{\log\log n}$ for some large absolute value $C_1$, we have that
\[
\|\E\|_F^2 \leq 3\left(\frac{2\eta}{n} + \frac{1}{\operatorname{poly}(\log n)} + \frac{\epsilon_0}{\log n\log\log n}\right)n^2 \leq \frac{4\epsilon_0}{\log n \log\log n} n^2.
\]

For notational simplicity, let $\epsilon' = 4\epsilon_0$, $s_i = \sigma_i(\bO+\frac{C}{\sqrt n}\E)$ and $\delta_i = s_i - 1$. Since
\[
H\left(\frac{\sqrt{n}\bO+\E}{C}\right) = H\left(\bO + \frac{C}{\sqrt n}\E\right) - \log n + \log C^2 = - \frac{\sum_i s_i^2\log s_i^2}{\sum_i s_i^2} + \log n + \log C^2,
\]
we wish to show that the first term on the right-hand side is at most a constant. By the Hoffman-Wielandt inequality, we have that
\begin{equation}\label{eqn:constraint}
\sum_i \delta_i^2 \leq \left\|\frac{C}{\sqrt n}\E\right\|_{F}^2 \leq \frac{C_1^2\epsilon' n}{\log n}
\end{equation}
%
Consider the problem of minimizing
\[
f(\delta_1,\dots,\delta_n) = - \frac{\sum_i (1+\delta_i)^2\log (1+\delta_i)^2}{\sum_i (1+\delta_i)^2}
\]
subject to the constraint \eqref{eqn:constraint}. In the interior of the region, one can verify using calculus that the extremal value is attained when $\delta_i$'s are all equal for those $\delta_i\neq -1$. One can further verify that the extremal value in the interior of the region is the maximal value. On the boundary of the region, one can use Lagrange multipliers to obtain that the extremal value is attained when nonzero $\delta_i$'s are all equal. It then follows that the function attains the minimum value when $\delta_1 =  C_1\sqrt{\epsilon' n/\log n}$ and $\delta_2 = \cdots = \delta_n = 0$, in which case one can calculate
\[
f(\delta_1,\dots,\delta_n) = -C_1^2\epsilon' + o(1).
\]

We can take $\epsilon_0$ to be sufficiently small such that $C_1^2\epsilon'$ is smaller than the additive gap between $H(\frac{1}{C}\G)$ and $H(\frac{\sqrt{n}}{C}\bO)$. This shows that there would still be an additive constant gap between $\widetilde{D}_1$ and $\widetilde{D}_2$ after changing $\epsilon n^2/(\log n \log\log n)$ entries to a matrix drawn from $\widetilde{D}_2$. The lower bound follows from Theorem~\ref{lem:schatten_p_lb_hard_instance}, where we can ignore the additive Gaussian term in the two hard instances there by letting $\eta$ be arbitrarily small.
\end{proof}

\paragraph{Acknowledgements} 
Part of this work was done while D.\@ W.\@ and H.\@ Z.\@ were visiting the Simons Institute for the Theory of Computing. We thank Sepehr Assadi, Yury Polyanskiy, Ran Raz, Avishay Tal, Zhengyu Wang, and Yihong Wu for useful discussions. This work was supported in part by grants NSF CCF-1422910, NSF CCF-1535967,  NSF IIS-1618714, Office of Naval Research (ONR) grant N00014-18-1-2562, and a Microsoft Research Fellowship.

\bibliographystyle{plain}
\bibliography{reference}

\newpage

\newpage
\appendix

\section{Other Related Works}

\medskip
\noindent{\textbf{Property Testing of Low-Rank Matrices.}} Krauthgamer and Sasson~\cite{krauthgamer2003property} studied the problem of property testing of data dimensionality, building upon earlier work of Parnas and Ron \cite{parnas2003testing}. They presented algorithms for testing low dimensionality
of a set of vectors and for testing whether a matrix is of low rank. Their algorithm achieves $\cO(d^2/\epsilon^2)$ non-adaptive samples by uniformly sampling an $\cO(d/\epsilon)\times \cO(d/\epsilon)$ submatrix. Later Li et al.~\cite{li2014improved} studied the \emph{adaptive} testing of matrix rank with a sample complexity upper bound $\widetilde\cO(d^2/\epsilon)$. 
Despite a large amount of work on the positive results of rank testing, non-trivial negative results in this direction remain absent. Barman et al.~\cite{barman2016dictionary} studied a slightly different setting of the rank problem by testing whether $\mathsf{H0}$: $\rank(\A)\le d$ or $\mathsf{H1}$: $\epsilon$-far from $\rank(\A)\le 20d/\epsilon^2$ with a different definition of ``$\epsilon$-far'' in terms of $\epsilon$-approximate rank~\cite{alon2013approximate}. The $\epsilon$-approximate rank is defined as the minimum rank over matrices that approximate every entry of $\A$ to within an additive $\epsilon$. In contrast to these works, we provide the first $\widetilde\cO(d^2/\epsilon)$ sample complexity upper bound for the more traditional rank testing problem without any rank gap between $\mathsf{H0}$ and $\mathsf{H1}$. We complement this positive result with various matching negative results, showing that any algorithm requires at least $\widetilde\Omega(d^2/\epsilon)$ samples in order to succeed with constant probability over various fields. We also extend the results to sensing oracles and obtain an $\cO(d^2)$ upper bound and an $\widetilde\Omega(d^2)$ matching lower bound.

\medskip
\noindent{\textbf{Property Testing of Stable Rank.}} To the best of our knowledge, this is the first work that studies the stable rank (and the Schatten-$p$ norm) testing problem in the \emph{bounded entry model}. Perhaps the most related work to ours is \cite{li2014improved}, which studied non-adaptive testing of stable rank in the \emph{bounded row model}. In this model, the rows of $\A$ and of the matrix after change have Euclidean norm at most $1$. The algorithm determines if $\A$ has stable rank at most $d$, or requires changing an $\epsilon/d$-fraction of \emph{rows} to have stable rank at most $d$. For this problem, Li et al.~\cite{li2014improved} provided a tight $\Theta(d/\epsilon^2)$ bound. We argue that the bounded entry model is more challenging than the bounded row model, as our restrictions are in fact weaker, which allows for more flexible options of changing the matrix $\A$.

\medskip
\noindent{\textbf{Estimation of Rank.}} Estimating the matrix rank is a learning version of the rank testing problem. Balcan and Harvey~\cite{balcan2011learning} showed that the rank of a subsampled submatrix is highly concentrated around its expectation. Balcan and Zhang~\cite{balcan2016noise} proved that uniformly sampling an $\cO(\mu d\log d)\times \cO(\mu d\log d)$ submatrix suffices to preserve the rank of the original matrix $\A$, under the standard incoherence assumption that the underlying rank-$d$ matrix $\A$ admits a skinny SVD, i.e., $\A=\U\mathbf{\Sigma}\V^\top$ satisfies $\max\{\|\U^\top\e_i\|_2^2,\|\V^\top\e_i\|_2^2\}\le \frac{\mu d}{n}$ for all $i$. Unfortunately, in the worst case the incoherence parameter $\mu$ may be as large as $\poly(n)$, e.g., when $\A$ is a sparse matrix. In contrast, we show that it is possible to detect the rank inexpensively without polynomial dependence in $n$ in the sample complexity.

\medskip
\noindent{\textbf{Estimation of Schatten-$p$ Norm.}} The Schatten-$p$ norm has found many applications in differential privacy~\cite{hardt2012simple} and non-convex optimization~\cite{balcan2018matrix,deshpande2011algorithms} for $p=1$, and in numerical linear algebra~\cite{mahoney2011randomized} for $p\in\{2,\infty\}$. The paper of \cite{li2014sketching} studied the problem of sketching Schatten-$p$ norms for various $p$ under the bilinear sketch and general sketch models. Both the upper bounds and the lower bounds there depend polynomially on $n$. For even $p\ge 4$, they also proposed the first cycle estimator with a $(1\pm\tau)$ approximation in the sketching model. More recently, Kong and Valiant~\cite{kong2017spectrum} applied a similar cycle estimator to approximate the Schatten-$p$ norm of the covariance matrix with computationally efficient algorithms. Khetan and Oh~\cite{khetan2017spectrum} estimated the Schatten-$p$ norm in the sampling model by connecting the cycle estimator with the $p$-cyclic pseudograph, and showed that when $p\in\{3, 4, 5, 6, 7\}$, the estimator can be calculated in $\cO(n^\omega)$ time , where $\omega<2.373$ is the exponent
of matrix multiplication. For the special case of $p=\infty$ (i.e., estimating the largest singular value), to obtain a $(1\pm\epsilon)$ approximation, one would need to raise $p$ to as large as $\widetilde\Theta(1/\epsilon)$. However, the sample complexity in prior work blows up if $p$ goes beyond an absolute constant. Though Magdon-Ismail~\cite{magdon2010row} showed that non-uniform sampling of rows of matrix provides a $(1\pm\epsilon)$ approximation to the largest singular value with small samples, the sampling probability depends on the unknown $\ell_2$ norm of each row. In contrast, we provide the first non-adaptive algorithm to estimate the largest singular value up to $(1\pm\epsilon)$ relative error with sample complexity $\poly(d/\epsilon)$, under modest assumptions that the input matrix has stable rank $d$ and a large Frobenius norm. For constant-factor approximation to the largest singular value $\|\cdot\|$, Rudelson and Vershynin~\cite{rudelson2007sampling} showed that uniformly sampling $q$ rows of a matrix $\A$ gives a subsampled matrix $\A_{\mathsf{row}}$ such that $\|\A_{\mathsf{row}}\|\lesssim \sqrt{\frac{q}{n}}\|\A\|+\sqrt{\log q}\|\A\|_{(n/q)}$, where $\|\A\|_{(n/q)}$ is the average of $n/q$ biggest Euclidean lengths of the columns of $\A$.

\section{New Operator Norm Estimators}

In this section, we develop new $(1\pm\tau)$-approximation estimators to the operator norm in sampling and sensing models.

\subsection{Sampling Algorithms}

We first discuss the sampling algorithms which are only allowed to read the entries of a matrix.

\subsubsection{Estimation without Eigengap}
Before proceeding, we first cite the following result from \cite{magdon2010row}.
\begin{lemma}[Theorem 20, \cite{magdon2010row}]
\label{lemma: Ismail}
Let $\A\in\R^{n\times n}$ have rows $\{\A_{t,:}\}_{t=1}^n$. Independently sample $q$ rows $\A_{t_1,:},\dots,\A_{t_q,:}$ with replacement from $\A$ according to the probabilities:
\begin{equation*}
p_t\ge \beta \frac{\|\A_{t,:}\|_2^2}{\|\A\|_F^2}
\end{equation*}
for $\beta<1$.
Let
\begin{equation*}
\A_0=
\begin{bmatrix}
\frac{\A_{t_1,:}}{\sqrt{q p_{t_1}}}\\
\vdots\\
\frac{\A_{t_q,:}}{\sqrt{q p_{t_q}}}\\
\end{bmatrix}.
\end{equation*}
Then if $q\ge \frac{4\srank(\A)}{\beta \tau^2}\log\frac{2n}{\delta}$, with probability at least $1-\delta$, we have
\begin{equation*}
\|\A^\top\A-\A_0^\top\A_0\|\le \tau \|\A\|^2.
\end{equation*}
\end{lemma}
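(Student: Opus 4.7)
The plan is to express $\A_0^\top\A_0$ as an average of i.i.d.\ random positive semidefinite matrices whose expectation is $\A^\top\A$, and then to apply a matrix Bernstein (or matrix Chernoff) inequality. Concretely, write
\[
\A_0^\top\A_0 = \frac{1}{q}\sum_{i=1}^q \X_i, \qquad \X_i := \frac{\A_{t_i,:}^\top\A_{t_i,:}}{p_{t_i}},
\]
where $t_1,\dots,t_q$ are i.i.d.\ indices drawn with probabilities $\{p_t\}$. A direct calculation gives $\bbE[\X_i] = \sum_t p_t\cdot \A_{t,:}^\top\A_{t,:}/p_t = \A^\top\A$, so the task is to control the deviation of the empirical mean from its expectation in operator norm.

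First I would verify the two ingredients required by matrix Bernstein. For the uniform bound, the lower bound $p_t \ge \beta\|\A_{t,:}\|_2^2/\|\A\|_F^2$ on the sampling probabilities gives
\[
\|\X_i\| = \frac{\|\A_{t_i,:}\|_2^2}{p_{t_i}} \le \frac{\|\A\|_F^2}{\beta},
\]
so each centered summand has operator norm at most $R := 2\|\A\|_F^2/\beta$ almost surely. For the variance, since $\X_i$ is rank-one and PSD,
\[
\bbE[\X_i^2] = \sum_t p_t\cdot \frac{\|\A_{t,:}\|_2^2}{p_t^2}\A_{t,:}^\top\A_{t,:} \preceq \frac{\|\A\|_F^2}{\beta}\sum_t \A_{t,:}^\top\A_{t,:} = \frac{\|\A\|_F^2}{\beta}\A^\top\A,
\]
whence $\|\bbE[\X_i^2]\| \le \|\A\|_F^2\|\A\|^2/\beta$, and the variance proxy for $\sum_i(\X_i - \bbE\X_i)$ is at most $\sigma^2 \le q\|\A\|_F^2\|\A\|^2/\beta$.

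Feeding these bounds into matrix Bernstein for $n\times n$ matrices yields, with probability $\ge 1-\delta$,
\[
\left\|\A_0^\top\A_0 - \A^\top\A\right\| \lesssim \sqrt{\frac{\|\A\|_F^2\|\A\|^2\log(2n/\delta)}{q\beta}} + \frac{\|\A\|_F^2\log(2n/\delta)}{q\beta}.
\]
Requiring the right-hand side to be at most $\tau\|\A\|^2$ and using $\|\A\|_F^2/\|\A\|^2 = \srank(\A)$ immediately gives the sample-complexity condition $q \gtrsim \srank(\A)\log(2n/\delta)/(\beta\tau^2)$, matching the lemma (up to absorbing constants by choosing $4$ as in the statement).

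The one mild subtlety is that the sub-Gaussian (square-root) term dominates in the regime of interest, so one should verify $\tau\le 1$ (or similar) to ensure the linear term does not impose a stronger condition; otherwise the second term forces only $q\gtrsim \srank(\A)\log(2n/\delta)/(\beta\tau)$, which is weaker. The remaining work is purely bookkeeping to match the constant factor $4$ and the probability of failure, which is straightforward from the standard statement of matrix Bernstein.
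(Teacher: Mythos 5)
This lemma is cited by the paper (as Theorem 20 of Magdon-Ismail) rather than reproved, so there is no in-paper proof to compare against. Your reconstruction via matrix Bernstein is correct and is essentially the standard route. The key algebra checks out: writing $\A_0^\top\A_0 = \frac{1}{q}\sum_i \X_i$ with $\X_i = \A_{t_i,:}^\top\A_{t_i,:}/p_{t_i}$ gives an unbiased estimator of $\A^\top\A$; the lower bound on $p_t$ yields the a.s.\ bound $\|\X_i\|\le\|\A\|_F^2/\beta$; and since $\X_i^2 = (\|\A_{t_i,:}\|_2^2/p_{t_i})\X_i$, the second-moment matrix satisfies $\bbE\X_i^2 \preceq (\|\A\|_F^2/\beta)\A^\top\A$, so the variance proxy is $\sigma^2 \le q\|\A\|_F^2\|\A\|^2/\beta$. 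Plugging $L \le 2\|\A\|_F^2/\beta$ (centering costs only a factor 2 since $\|\bbE\X_i\|=\|\A\|^2\le\|\A\|_F^2/\beta$) and $\sigma^2$ into the two-sided matrix Bernstein bound for $n\times n$ Hermitian matrices gives an exponent $-\,\frac{q\tau^2\beta}{2\,\srank(\A)\,(1+2\tau/3)}$, and for $\tau\le 1$ the requirement $q\ge \frac{4\srank(\A)}{\beta\tau^2}\log\frac{2n}{\delta}$ is sufficient (since $2(1+2\tau/3)\le 10/3<4$), confirming the stated constant. Your observation about the linear versus square-root term is the right sanity check; the assumption $\tau\le 1$ implicit in the lemma's setting makes the square-root term dominant, exactly as you note.
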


\begin{remark}
Lemma \ref{lemma: Ismail} implies that
\begin{equation*}
(1-\tau)\|\A\|^2\le \|\A_0\|^2\le (1+\tau)\|\A\|^2,
\end{equation*}
because
\begin{equation*}
\left|\|\A\|^2-\|\A_0\|^2\right|=\left|\|\A^\top\A\|-\|\A_0^\top\A_0\|\right|\le \|\A^\top\A-\A_0^\top\A_0\|\le \tau\|\A\|^2.
\end{equation*}
\end{remark}

\begin{algorithm}
\caption{The sampling algorithm to estimate $\|\A\|$ up to $(1\pm\tau)$ relative error}
\label{algorithm: the sampling algorithm for even p}
\begin{algorithmic}[1]
\LineComment Lines 1-5 estimates the row norms of $\A$ and then sample rows non-uniformly.
\State Sample each row of $\A$ by Bernoulli distribution with probability $\cO(\frac{1}{n\tau})$. Denote by $\cS_{\mathsf{row}}$ the sampled set and $q=|\cS_{\mathsf{row}}|$.
\For {$i\gets 1$\textbf{ to }$q$}
	\State Uniformly sample $\cO(\frac{1}{\tau})$ entries from $\A_{\cS_{\mathsf{row}}(i),:}$, forming vector $\x$. \label{alg: r estimator 1}
	\State $r_i\leftarrow \max\{\tau n\|\x\|_2^2, \tau n\}$. \label{alg: r estimator 2}
\EndFor
\State Sample $q_{\mathsf{row}}=\cO(\frac{d\log n}{\tau^2})$ indices in $\cS_{\mathsf{row}}$ independently with replacement according to the probability $p_i=\frac{r_i}{r}$, where $r=\sum_{j=1}^{q} r_j$. Denote by $\cI_{\mathsf{row}}$ the sampled row indices. \label{alg: row index sampling by Ismail}
\Statex
\LineComment Lines 6-10 estimates the column norms of $\A$ and then sample columns non-uniformly.
\State Sample each row with probability $\cO(\frac{1}{n\tau})$. Repeat the procedure $n$ times with replacement. Denote the sampled set by $\cS_{\mathsf{col}}$ and $q'=|\cS_{\mathsf{col}}|$.
\For{$i\gets 1$\textbf{ to }$q'$}
	\State Uniformly sample $\cO(\frac{1}{\tau})$ entries from $\A_{\cI_{\mathsf{row}},\cS_{\mathsf{col}(i)}}$, forming vector $\x$.
	\State $r_i'\leftarrow \max\{\tau q\|\x\|_2^2, \tau q\}$.
\EndFor
\State Sample $q_{\mathsf{col}}=\cO(\frac{d\log n}{\tau^2})$ indices in $\cS_{\mathsf{col}}$ independently with replacement according to the probability $p_i'=\frac{r_i'}{r'}$, where $r'=\sum_{j=1}^{q'} r_j'$. Denote by $\cI_{\mathsf{col}}$ the sampled row indices.
\Statex
\State $\widetilde\A\leftarrow \A_{\cI_{\mathsf{row}},\cI_{\mathsf{col}}}$. Rescale the rows of $\widetilde\A$ by $\left\{\sqrt{\frac{q}{p_i q_{\mathsf{row}}}}\right\}$ and the columns of $\widetilde\A$ by $\left\{\sqrt{\frac{q'}{p_i'q_{\mathsf{col}}}}\right\}$.
\State \Return index sets $\cI_{\mathsf{row}}$, $\cI_{\mathsf{col}}$, scaling factors $\left\{\sqrt{\frac{q}{p_i q_{\mathsf{row}}}}\right\}$, $\left\{\sqrt{\frac{q'}{p_i'q_{\mathsf{col}}}}\right\}$, $\widetilde\A$, and $\|\widetilde\A\|$.
\end{algorithmic}
\end{algorithm}

\begin{theorem}
\label{theorem: operator norm estimator by Ismail}
Suppose that $\A$ is an $n\times n$ matrix satisfying that $\|\A\|_F^2=\Omega(\tau n^2)$, $\|\A\|_\infty \le 1$ and $\srank(\A) = \cO(d)$. Then with probability at least $0.9$, the output of Algorithm \ref{algorithm: the sampling algorithm for even p} satisfies $(1-\tau)\|\A\|\le \|\widetilde\A\|\le (1+\tau)\|\A\|$. The sample complexity is $\cO(d^2\log^2(n)/\tau^4)$.
\end{theorem}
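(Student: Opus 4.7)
The plan is to reduce the analysis of Algorithm \ref{algorithm: the sampling algorithm for even p} to two successive applications of Lemma \ref{lemma: Ismail}: once for the row-sampling stage (Lines 1--5) and once for the column-sampling stage applied to the row-subsampled matrix (Lines 6--10). For each application, I will verify that the estimated importance-sampling probabilities $p_i=r_i/r$ satisfy the hypothesis $p_i \ge \beta \|\A_{i,:}\|_2^2/\|\A\|_F^2$ for some absolute constant $\beta > 0$. The lemma then certifies that the resulting subsampled-and-reweighted matrix has operator norm within a $(1\pm\tau/3)$ factor of the input matrix's operator norm. Composing the two $(1\pm\tau/3)$ factors yields the claimed $(1\pm\tau)$-approximation, after adjusting constants.

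First I would analyze the row-norm estimator $r_i = \max\{\tau n\|\x\|_2^2,\,\tau n\}$, where $\x$ collects $s=\Theta(1/\tau)$ entries drawn uniformly at random from $\A_{i,:}$. Since $|\A_{ij}|\le 1$, each squared entry lies in $[0,1]$ and $\bbE[\tau n\|\x\|_2^2]=\|\A_{i,:}\|_2^2$; a Bernstein/Chernoff bound combined with the floor $\tau n$ delivers $r_i \ge c\max\{\|\A_{i,:}\|_2^2,\tau n\}$ with high constant probability per sampled row, amplified to simultaneously hold over $\cS_{\mathsf{row}}$ by a union bound (the algorithm's polylogarithmic slack in constants absorbs this). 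In parallel, linearity of expectation yields $\bbE[r]\lesssim (n\tau)^{-1}(\|\A\|_F^2+\tau n^2)$, and a Chebyshev argument shows $r$ concentrates around its mean up to a constant factor.

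Next I would argue that the two-stage row sampling (Bernoulli at rate $\Theta(1/(n\tau))$ to form $\cS_{\mathsf{row}}$, then $q_{\mathsf{row}}=\Theta(d\log(n)/\tau^2)$ importance-weighted draws with probabilities $r_i/r$) is distributionally equivalent to drawing $q_{\mathsf{row}}$ rows directly from $[n]$ with effective per-draw probabilities $p_i^{\mathrm{eff}}\asymp (r_i/(n\tau))/\bbE[r]$. Plugging in the previous step's bounds and crucially invoking the hypothesis $\|\A\|_F^2=\Omega(\tau n^2)$ (which guarantees that the aggregated floor mass $n\cdot\tau n$ is absorbed into a constant multiple of $\|\A\|_F^2$), I obtain $p_i^{\mathrm{eff}}\ge\beta\|\A_{i,:}\|_2^2/\|\A\|_F^2$ for an absolute constant $\beta$. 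Since $\srank(\A)=O(d)$, Lemma \ref{lemma: Ismail} then yields $\|\A_{\mathsf{row}}\|=(1\pm\tau/3)\|\A\|$ with high probability, where the prescribed rescalings $\sqrt{q/(p_i q_{\mathsf{row}})}$ match those in the lemma. An identical analysis runs for the column-sampling stage on $\A_{\mathsf{row}}$: its stable rank remains $O(d)$, its Frobenius norm is concentrated and satisfies an analogous lower bound with high probability, and the two-stage column sampling again reduces to Lemma \ref{lemma: Ismail}. Combining the two $(1\pm\tau/3)$ approximations yields $\|\widetilde\A\|=(1\pm\tau)\|\A\|$; the sample complexity aggregates to $O(d^2\log^2(n)/\tau^4)$, dominated by the $q_{\mathsf{row}}\cdot q_{\mathsf{col}}$ final submatrix read-out.

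The main obstacle will be the coarseness of the row-norm estimator, which uses only $\Theta(1/\tau)$ samples per row and therefore cannot deliver tight two-sided multiplicative concentration for rows whose squared norm sits near the floor $\tau n$. My workaround is to notice that Lemma \ref{lemma: Ismail} only requires a \emph{one-sided} lower bound on $p_i$, which in turn reduces to a one-sided lower bound on each $r_i$ (easy because the floor $\tau n$ gives a trivial lower bound when $\|\A_{i,:}\|_2^2$ is small, and Bernstein suffices otherwise) together with a one-sided upper bound on the aggregate $r$ (precisely where $\|\A\|_F^2=\Omega(\tau n^2)$ enters, preventing the aggregated floor contribution from swamping the true Frobenius mass). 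Neither the floor in $r_i$ nor the Frobenius-norm lower bound is negotiable; removing either would collapse the hypothesis $p_i\ge\beta\|\A_{i,:}\|_2^2/\|\A\|_F^2$ needed to invoke Lemma \ref{lemma: Ismail}.
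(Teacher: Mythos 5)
Your proposal is correct and follows essentially the same approach as the paper's own proof: reduce to two successive applications of Lemma \ref{lemma: Ismail}, verify the one-sided lower bound $p_i\ge\beta\|\A_{i,:}\|_2^2/\|\A\|_F^2$ by splitting on whether $\|\A_{i,:}\|_2^2\lessgtr\tau n$ (using the floor $\tau n$ in the small case), control the aggregate $r$ using $\|\A\|_F^2=\Omega(\tau n^2)$, and re-verify the stable-rank hypothesis for the column stage. The only cosmetic differences are Bernstein/Chernoff in place of the paper's Chebyshev bound for the row-norm estimator, and your invocation of a union bound over $\cS_{\mathsf{row}}$ --- note the algorithm uses only $\cO(1/\tau)$ samples per row, so a union bound over the $\cO(1/\tau)$ sampled rows would not come for free; the paper sidesteps this by keeping the per-row guarantee pointwise, and you can do the same without loss.
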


\begin{proof}[Proof of Theorem \ref{theorem: operator norm estimator by Ismail}]
We note that for any row $\A_{i,:}$ such that $|\A_{i,j}|\le 1$ and $\eta\le \|\A_{i,:}\|_2^2\le n$, uniformly sampling $\Theta(\frac{n}{\eta})$ entries of $\A_{i,:}$ suffices to estimate $\|\A_{i,:}\|_2^2$ within a constant multiplicative factor. To see this, we use Chebyshev's inequality. Let $s=\Theta(\frac{n}{\eta})$ be the number of sampled entries, $Z_j$ be the square of the $j$-th sampled entry $\A_{i,l(j)}$ of vector $\A_{i,:}$, and $Z=\frac{n}{s}\sum_{j=1}^s Z_j$. So $Z$ is an unbiased estimator:
\begin{equation*}
\bbE[Z]=\frac{n}{s}s\mathbb{E}[Z_1]=n \sum_{j=1}^n \frac{1}{n} \A_{i,l(j)}^2=\|\A_{i,:}\|_2^2.
\end{equation*}
For the variance, we have
\begin{equation*}
\begin{split}
\mathsf{Var}[Z]&=\frac{n^2}{s^2}\sum_{j=1}^s \mathsf{Var}[Z_j]\le \frac{n^2}{s^2} \sum_{j=1}^s \bbE[Z_j^2]=\frac{n^2}{s} \bbE[Z_1^2]=\frac{n^2}{s} \sum_{j=1}^n\frac{1}{n}\A_{i,j}^4\\
&\le \frac{n}{s} \sum_{j=1}^n\A_{i,j}^2\qquad(\mbox{since } |\A_{i,j}|\le 1)\\
&=\Theta(\eta)\|\A_{i,:}\|_2^2\\
&\le \Theta(\|\A_{i,:}\|_2^4).\qquad(\mbox{since }\eta\le \|\A_{i,:}\|_2^2)
\end{split}
\end{equation*}
Therefore, by Chebyshev's inequality, we have
\begin{equation*}
\Pr\left[\left|Z-\|\A_{i,:}\|_2^2\right|\ge 10\|\A_{i,:}\|_2^2\right]\le \frac{1}{3}.
\end{equation*}

Note that in Step \ref{alg: row index sampling by Ismail} of Algorithm \ref{algorithm: the sampling algorithm for even p}, in total we sample $q_{\mathsf{row}}=\cO(\frac{d\log n}{\tau^2})$ row indices, obeying the conditions in Lemma \ref{lemma: Ismail} for a constant $\beta$. By concentration, with high probability $r=\cO(\frac{\|\A\|_F^2}{\tau n})$ in Step \ref{alg: row index sampling by Ismail}, because in expectation we sample $\cO(\frac{1}{\tau^2})$ entries to estimate $r$ and we scale $\|\x\|_2^2$ by a $\tau n$ factor in Steps \ref{alg: r estimator 1} and \ref{alg: r estimator 2}, and that $\|\A\|_F^2$ is as large as $\Omega(\tau n^2)$. The probability that any given row $i$ is sampled is equal to $\frac{1}{n\tau}\times\frac{r_i}{r} = \Omega(\frac{ r_i}{\|\A\|_F^2})$.
Suppose first that $\|\A_{i,:}\|_2^2 \le \tau n$. Then we have $r_i = \tau n$. Consequently, for such $i$, the probability of sampling row $i$ is at least $\Omega(\frac{\tau n}{\|\A\|_F^2}) \ge \Theta(\frac{\|\A_{i,:}\|_2^2}{\|\A\|_F^2})$, just as in Lemma \ref{lemma: Ismail}. Suppose next that $\|\A_{i,:}\|_2^2 \ge \tau n$. Then we have $r_i = \Theta(\|\A_{i,:}\|_2^2)$. Consequently, for such $i$, the probability of sampling row $i$ is at least $\Omega(\frac{\|\A_{i,:}\|_2^2}{\|\A\|_F^2})$, just as in Lemma \ref{lemma: Ismail}. Therefore, in the followings we can set $\beta$ in Lemma \ref{lemma: Ismail} as an absolute constant.

It follows from Lemma \ref{lemma: Ismail} that with probability at least $0.9$,
\begin{equation*}
(1-\tau)\|\A\|^2\le \|\A_{\mathsf{row}}\|^2\le (1+\tau)\|\A\|^2,
\end{equation*}
where $\A_{\mathsf{row}}$ is the \emph{scaled} row sampling of $\A$ as in Lemma \ref{lemma: Ismail}.
Conditioning on this event, by applying Lemma \ref{lemma: Ismail} again to the column sampling of $\A_{\mathsf{row}}$, we have with high probability,
\begin{equation}
\label{equ: small stable rank, operator norm}
(1-\tau)^2\|\A\|^2\le (1-\tau)\|\A_{\mathsf{row}}\|^2\le \|\widetilde\A\|^2\le (1+\tau)\|\A_{\mathsf{row}}\|^2\le (1+\tau)^2\|\A\|^2,
\end{equation}
where we have used the fact that $\srank(\A_{\mathsf{row}})= \cO(d)$. The statement $\srank(\A_{\mathsf{row}})= \cO(d)$ holds because $\bbE \|\A_{\mathsf{row}}\|_F^2=\|\A\|_F^2$ and
by the Markov bound, we have with constant probability that
\begin{equation*}
\|\A_{\mathsf{row}}\|_F^2\le c\|\A\|_F^2,
\end{equation*}
so
\begin{equation*}
\srank(\A_{\mathsf{row}})=\frac{\|\A_{\mathsf{row}}\|_F^2}{\|\A_{\mathsf{row}}\|^2}\le \frac{c\|\A\|_F^2}{(1-\tau)\|\A\|^2}\le C\srank(\A)\le C'd.\qedhere
\end{equation*}
\end{proof}

\subsubsection{Estimation with Eigengap}
\label{section: Estimation with Eigengap}

Let $\A\in\R^{n\times n}$. Suppose that $p=2q$. We define a cycle $\sigma$ to be an ordered pair of a sequence of length $q$: $\lambda=((i_1,...,i_q),(j_1,...,j_q))$ such that $i_r,j_r\in[k]$ for all $r$. Now we associate with $\lambda$ a scalar
\begin{equation}
\label{equ: cycle estimator}
\A_\lambda=\prod_{\ell=1}^q\A_{i_\ell,j_\ell}\A_{i_{\ell+1},j_\ell},
\end{equation}
where for convention we define that $i_{q+1}=i_1$. Denote by
\begin{equation}
\label{equ: averaging}
Z=\frac{1}{N}\sum_{i=1}^N\A_{\lambda_i}.
\end{equation}
Our goal is to estimate $\sigma_1(\A)$ up to $(1\pm\tau)$ relative error, which is an $(1\pm\tau)$ approximation to $\|\A\|$.

\begin{algorithm}
\caption{Estimate $\|\A\|$ up to $(1\pm\tau)$ relative error}
\label{algorithm: the sampling algorithm for even p with eigengap}
\begin{algorithmic}[1]
\INPUT Cycle length $q$, matrix size $n$.
\OUTPUT $(1\pm\tau)$-approximation estimator.
\For{$i=1$ \textbf{to} $N$}
\State Uniformly sample a cycle $\lambda_i$ of length $q$.
\State Compute $\A_{\lambda_i}$ by Eqn. \eqref{equ: cycle estimator}.
\EndFor
\State Compute $Z$ as defined in \eqref{equ: averaging}.
\State \Return $Z^{1/(2q)}n$. \label{alg: return}
\end{algorithmic}
\end{algorithm}

\begin{theorem}
\label{theorem: operator norm estimation of small matrix with eigengap}
Let $\tau\in(0,\frac{1}{2})$ be the accuracy parameter and suppose that the input matrix $\A \in \R^{n\times n}$ satisfies
\begin{itemize}[topsep=0pt,itemsep=0pt,partopsep=1ex,parsep=1ex]
\item $\|\A\|_\infty\leq 1$;
\item $\|\A\|_F\geq cn$ for some absolute constant $c > 0$;
\item $\sigma_2(\A)/\sigma_1(\A)\leq \tau^\gamma$ for some absolute constant $\gamma > 0$;
\item $\srank(\A) = \cO(1)$.
\end{itemize}
Let $N = \frac{C_1}{\tau^2}\exp(\frac{c_1}{\gamma})$ and $q=\frac{C_2}{\gamma}$ for some large constants $C_1, C_2 > 0$ and some small constant $c_1 > 0$. Then with probability at least $0.9$, the estimator returned by Algorithm \ref{algorithm: the sampling algorithm for even p with eigengap} satisfies $(1-\tau)\|\A\|\le Z^{1/(2q)}n\le (1+\tau)\|\A\|$. The sample complexity is $\Theta(Nq) = \Theta\left(\frac{1}{\gamma\tau^2}\exp(\frac{c_1}{\gamma})\right)$.
\end{theorem}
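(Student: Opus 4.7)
\begin{proofoutline}
The plan is to identify the cycle estimator $\A_\lambda$ (averaged in $Z$) as an unbiased estimator of a suitable Schatten norm of $\A$, then use the spectral gap to argue that this Schatten norm is essentially $\sigma_1^{2q}$, and finally control the variance by a Chebyshev bound. The three ingredients are: (i) unbiasedness, (ii) a bias analysis relating $\|\A\|_{\cS_{2q}}$ to $\sigma_1(\A)$, and (iii) a variance bound.

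First, observe that for a uniformly random cycle $\lambda = ((i_1,\dots,i_q),(j_1,\dots,j_q))\in [n]^{2q}$,
\[
\bbE[\A_\lambda] = \frac{1}{n^{2q}}\sum_{i_1,\dots,i_q}\sum_{j_1,\dots,j_q}\prod_{\ell=1}^q \A_{i_\ell,j_\ell}\A_{i_{\ell+1},j_\ell} = \frac{\tr((\A\A^\top)^q)}{n^{2q}} = \frac{\|\A\|_{\cS_{2q}}^{2q}}{n^{2q}},
\]
so that $\bbE[Z]\cdot n^{2q} = \|\A\|_{\cS_{2q}}^{2q}$. Hence the quantity $(\bbE[Z])^{1/(2q)}n$ in Line~\ref{alg: return} would exactly be $\|\A\|_{\cS_{2q}}$; the remaining work is to show that $\|\A\|_{\cS_{2q}}$ is $(1\pm\tau)$-close to $\sigma_1(\A)$ and that $Z$ concentrates around its expectation.

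For the bias, write $\|\A\|_{\cS_{2q}}^{2q} = \sigma_1^{2q} + \sum_{i\ge 2}\sigma_i^{2q}$. Using the gap hypothesis $\sigma_2\le \tau^\gamma \sigma_1$ together with $\|\A\|_F^2\le n^2$ (from $\|\A\|_\infty\le 1$) and $\sigma_1 \ge \|\A\|_F/\sqrt{\srank(\A)} = \Omega(n)$ (from $\|\A\|_F\ge cn$ and $\srank(\A)=\cO(1)$), I will bound
\[
\sum_{i\ge 2}\sigma_i^{2q} \le \sigma_2^{2q-2}\sum_{i\ge 2}\sigma_i^2 \le \tau^{2\gamma(q-1)}\sigma_1^{2q-2}\|\A\|_F^2 \lesssim \tau^{2\gamma(q-1)}\,\sigma_1^{2q}.
\]
Choosing $q = C_2/\gamma$ for $C_2$ large enough (depending only on $c$ and the stable rank bound) ensures the exponent $2\gamma(q-1)-1 \ge 1$ whenever $\tau\le 1/2$, and hence $\|\A\|_{\cS_{2q}}^{2q} = (1\pm O(\tau))\sigma_1^{2q}$. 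Taking $(2q)$-th roots then gives $\|\A\|_{\cS_{2q}} = (1\pm O(\tau/q))\sigma_1$, which is comfortably within the desired $(1\pm\tau)$ window.

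For the variance, each factor of $\A_\lambda$ has absolute value at most $1$ by $\|\A\|_\infty\le 1$, so $|\A_\lambda|\le 1$ and $\bbE[\A_\lambda^2]\le 1$. Therefore $\mathsf{Var}[Z]\le 1/N$. By Chebyshev, to guarantee $|Z-\bbE[Z]| \le \delta\cdot\bbE[Z]$ with probability $\ge 0.9$ it suffices to take $N \gtrsim 1/(\delta^2(\bbE[Z])^2)$. Since $\sigma_1/n \ge c/\sqrt{\srank(\A)} = \Omega(1)$, we have $\bbE[Z] \ge (\sigma_1/n)^{2q}\cdot (1-O(\tau)) \ge \exp(-O(q))$. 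Taking $\delta = \Theta(\tau/q)$ (to absorb the $(2q)$-th root into a $(1\pm\tau)$ relative error on $\|\A\|$) gives the bound $N = \Theta(q^2\exp(O(q))/\tau^2) = \Theta(\exp(c_1/\gamma)/\tau^2)$ stated in the theorem. A final union bound combining the bias estimate and the Chebyshev deviation concludes $(1-\tau)\|\A\| \le Z^{1/(2q)}n \le (1+\tau)\|\A\|$, with total sample complexity $Nq = \Theta((1/(\gamma\tau^2))\exp(c_1/\gamma))$.

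The main obstacle is calibrating the choice of $q$: it must be large enough (of order $1/\gamma$) for the subdominant singular values to be crushed relative to $\sigma_1^{2q}$, yet every additional increment of $q$ shrinks $\bbE[Z]$ geometrically and so blows up the required sample size through the $1/(\bbE[Z])^2$ factor. The hypotheses $\|\A\|_F=\Omega(n)$ and $\srank(\A)=\cO(1)$ are precisely what guarantee that $\sigma_1/n$ is bounded below by an absolute constant, so that the exponential cost is $\exp(O(q)) = \exp(O(1/\gamma))$ rather than polynomially in $n$. Making these dependencies compatible is the crux of the argument.
\end{proofoutline}
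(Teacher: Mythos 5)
Your proof follows essentially the same approach as the paper's: compute $\bbE[\A_\lambda] = \|\A\|_{\cS_{2q}}^{2q}/n^{2q}$, show the Schatten-$2q$ norm is a $(1\pm\tau)$-factor from $\sigma_1^{2q}$ by bounding $\sum_{i\ge 2}\sigma_i^{2q}$ using the eigengap and stable rank, then bound $\mathsf{Var}[\A_\lambda]\le 1$ and apply Chebyshev with the lower bound $\bbE[Z]\ge (\sigma_1/n)^{2q}=\exp(-O(1/\gamma))$. The one small inefficiency is your choice of relative error $\delta=\Theta(\tau/q)$ for the Chebyshev step: since the $(2q)$-th root already contracts a relative error $\delta$ on $Z$ to roughly $\delta/(2q)$ on $Z^{1/(2q)}$, taking $\delta=\tau$ (as the paper does) already suffices and avoids the superfluous $q^2$ factor; this does not change the asymptotics since $q^2$ is absorbed into $\exp(O(1/\gamma))$, but it is worth noticing.
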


\begin{proof}[Proof of Theorem \ref{theorem: operator norm estimation of small matrix with eigengap}]
We show that the cycle estimator approximates $\|\A\|$ within a $(1\pm\tau)$ relative error.
Let $\lambda=(\{i_s\},\{j_s\})$ which is chosen uniformly with replacement. Recall that
\begin{equation*}
\A_{\lambda}=\prod_{\ell=1}^q \A_{i_\ell,j_\ell}\A_{i_{\ell+1},j_\ell}.
\end{equation*}
Hence
\[
\mathbb{E}\A_\lambda = \mathbb{E}\left[\prod_{\ell=1}^q \A_{i_\ell,j_\ell}\A_{i_{\ell+1},j_\ell}\right]
=\frac{1}{n^{2q}}\left[\sum_{i_1, i_2,..., i_q,j_1, j_2,..., j_q}\prod_{\ell=1}^q \A_{i_\ell,j_\ell}\A_{i_{\ell+1},j_\ell}\right].
\]
Note that (see, e.g.,~\cite{li2016approximating})
\[
\sum_{i_1, i_2,\dots, i_q,j_1, j_2,\dots, j_q}\prod_{\ell=1}^q \A_{i_\ell,j_\ell}\A_{i_{\ell+1},j_\ell} = \|\A\|_{2q}^{2q},
\]
and by the assumption on the singular values and the stable rank,
\[
\sigma_1(\A)^{2q} \leq \|\A\|_{2q}^{2q} \leq (1+\tau) \sigma_1(\A)^{2q},
\]
provided that $q\geq \frac{1}{2\gamma}(\frac{\log \srank(\A)}{\log(1/\tau)}+1)$, and thus it suffices to take $q = \Theta(\frac{1}{\gamma})$.

%
Therefore, noting that $\mathbb{E}[Z] = \mathbb{E}[\A_\lambda]$,
\begin{align}
\mathbb{E}[Z]&\leq \frac{1+\tau}{n^{2q}}\sigma_1(\A)^{2q}\leq 1+\tau, \label{equ: expectation of cycle}\\
\mathbb{E}[Z]&\geq \frac{1}{n^{2q}}\sigma_1(\A)^{2q} \geq \frac{1}{n^{2q}}\left(\frac{\|\A\|_F^2}{\srank(\A)}\right)^q\geq \left(\frac{c^2}{\srank(\A)}\right)^q = \exp\left(\frac{c_1}{\gamma}\right). \label{eqn:expectation_cycle_lb}
\end{align}

We now bound the variance of $\A_\lambda$. Observe that
\begin{equation*}
\begin{split}
\mathsf{Var}[\A_\lambda]&\le \mathbb{E}[\A_\lambda^2]\leq 1,
\end{split}
\end{equation*}
because $|\A_{i,j}|\le 1$ for all $i,j\in[n]$.
Thus by repeating the procedure $N = \frac{C_1}{\tau^2}\exp\left(\frac{2c_1}{\gamma}\right)$ times, we have
\begin{equation*}
\mathsf{Var}[Z]=\frac{1}{N}\mathsf{Var}[\A_\lambda] \leq \frac{1}{10}\tau^2\exp\left(\frac{2c_1}{\gamma}\right),
\end{equation*}
by choosing $C_1$ sufficiently large.
It follows from the Chebyshev inequality that
\begin{equation*}
\Pr\left[\left|\mathbb{E}[Z]-Z\right|> \tau \mathbb{E}[Z]\right]\le \frac{\mathsf{Var}[Z]}{\tau^2\mathbb{E}[Z]^2}\le \frac{1}{10},
\end{equation*}
where we have used the lower bound~\eqref{eqn:expectation_cycle_lb}.
This together with \eqref{equ: expectation of cycle} and \eqref{eqn:expectation_cycle_lb}
implies that
\begin{equation*}
\Pr\left[(1- \tau)\frac{1}{n^{2q}}\sigma_1(\A)^{2q}\le Z\le (1+\tau)^2\frac{1}{n^{2q}}\sigma_1(\A)^{2q}\right]>\frac{9}{10}.
\end{equation*}
So
\begin{equation*}
\Pr\left[\left(1-\tau\right)\sigma_1(\A)\le Z^{1/(2q)}n\le \left(1+\tau\right)\sigma_1(\A)\right]>\frac{9}{10}.
\end{equation*}
\end{proof}

\subsection{Sensing Algorithms}

\begin{theorem}
\label{theorem: operator norm estimator under sensing model}
Suppose that $\A$ is an $n\times n$ matrix such that $\|\A\|_F^2=\Omega(\tau n^2)$, $\|\A\|_\infty\le 1$ and $\srank(\A) = \cO(d)$. Then Algorithm \ref{algorithm: the sketching algorithm for even p} outputs a value $Z$, which satisfies $(1-\tau)\|\A\|\le Z\le (1+\tau)\|\A\|$ with probability at least $0.9$. The sketching complexity is $\cO(\max\{\log^2(d\log(n)/\tau),d^2\log(n)\}/\tau^2)$.
\end{theorem}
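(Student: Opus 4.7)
The plan is to mirror Theorem~\ref{theorem: operator norm estimator by Ismail}: the entry-level row/column samples used there will be replaced by linear sensing queries against Gaussian (or SRHT) matrices, while the overall two-sided ``sketch-then-spectrally-norm'' structure is preserved. First, I would estimate $\|\A\|_F^2$ to within a constant factor using $\cO(\log^2(d\log n/\tau)/\tau^2)$ queries of the form $\langle \G_i,\A\rangle$, with each $\G_i$ a $\pm 1$ random matrix normalized to unit Frobenius norm; Chebyshev's inequality on the average of $\langle \G_i,\A\rangle^2$ gives the $\log^2$ term in the final complexity and plays the role of the preliminary Frobenius-norm estimator in Algorithm~\ref{algorithm: the sampling algorithm for even p}. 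Second, I would draw independent Gaussian sketches $\S\in\R^{k\times n}$ and $\T\in\R^{n\times k}$ with $k=\Theta(d\sqrt{\log n}/\tau)$, set $\widetilde\A:=\tfrac{1}{k}\S\A\T$, and output $Z=\|\widetilde\A\|$. Each of the $k^2$ entries $\widetilde\A_{i,j}=\tfrac{1}{k}\langle\S_{i,:}^\top\T_{:,j},\A\rangle$ is computable with a single sensing query, so this second stage contributes the dominant $\cO(d^2\log n/\tau^2)$ term.

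The analysis reduces to a sensing analog of Lemma~\ref{lemma: Ismail}: for Gaussian $\S\in\R^{k\times n}$ with $k=\Theta(d\sqrt{\log n}/\tau)$ and any matrix $\B$ with $\srank(\B)=\cO(d)$,
\[
\Bigl\|\tfrac{1}{k}\B^\top\S^\top\S\B-\B^\top\B\Bigr\|\;\le\;\tau\|\B\|^2\qquad\text{with probability}\ge 0.95.
\]
Applying this with $\B=\A$ yields $\|\tfrac{1}{\sqrt{k}}\S\A\|=(1\pm\tau/2)\|\A\|$. Conditioning on this event, the matrix $\tfrac{1}{\sqrt{k}}\S\A$ still has stable rank $\cO(d)$ (the operator norm is preserved up to a $(1\pm\tau/2)$ factor and the Frobenius norm up to a constant factor by standard Gaussian concentration), so a second application with $\T$ on the right gives $Z=\|\widetilde\A\|=(1\pm\tau)\|\A\|$, exactly the claimed guarantee. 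The constant-factor Frobenius estimate from the first stage is only used to certify that the stable-rank hypothesis transfers to the sketched matrices and that $\|\A\|_F^2=\Omega(\tau n^2)$ holds, both of which are needed to keep the relative-error form of the bound.

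The main obstacle will be proving the key lemma with sketching dimension $k=\Theta(d\sqrt{\log n}/\tau)$ rather than the $k=\Theta(d/\tau^2)$ dimension given by an off-the-shelf oblivious subspace embedding, because the latter would inflate the final budget to $\cO(d^2/\tau^4)$ and miss the target. I would circumvent this by a covering-net argument restricted to the top-$\cO(d)$ singular subspace of $\B$, which carries a $(1-\tau)$-fraction of $\|\B\|_F^2$ once $\srank(\B)=\cO(d)$: a standard $\varepsilon$-net of this subspace has size $e^{\cO(d)}$, a Hanson--Wright tail bound controls the deviation at each net point, and a union bound contributes an additive $\log(e^{\cO(d)}\cdot n^{\cO(1)})=\cO(d+\log n)$ inside the square root, which is absorbed into the $\sqrt{\log n}$ factor in $k$. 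The tail of the spectrum of $\B$ contributes at most $\tau\|\B\|^2/2$ additively and is handled by a crude Frobenius-norm bound, using the bounded-entry hypothesis $\|\A\|_\infty\le 1$ combined with $\|\A\|_F^2=\Omega(\tau n^2)$. Stitching the head and tail bounds together, and then invoking the lemma twice, yields the theorem with the advertised sketching complexity $\cO(\max\{\log^2(d\log(n)/\tau),\,d^2\log n\}/\tau^2)$.
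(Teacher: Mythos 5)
Your plan replaces the paper's cycle estimator with the direct statistic $Z=\|\tfrac{1}{k}\S\A\T\|$, and this is where the gap lies. The sensing analog of Lemma~\ref{lemma: Ismail} you propose,
\[
\Bigl\|\tfrac{1}{k}\B^\top\S^\top\S\B-\B^\top\B\Bigr\|\le\tau\|\B\|^2\quad\text{with }k=\Theta(d\sqrt{\log n}/\tau),
\]
does not follow from the covering-net argument you describe, and is in fact false at that sketch dimension. For any fixed unit vector $x$, $\|\S\A x\|^2/k = \|\A x\|^2\cdot\chi^2_k/k$, and the chi-squared tail gives $\Pr[|\chi^2_k/k - 1|>\tau]\le 2\exp(-c\tau^2 k)$, so surviving a union bound over a net of size $e^{\cO(d)}$ forces $\tau^2 k\gtrsim d$, i.e.\ $k=\Omega(d/\tau^2)$. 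The log-factor from the union bound enters linearly in $k$, not under a square root; there is no way to trade the quadratic $\tau^{-2}$ for $\tau^{-1}$ within this scheme because the deviation scale of a quadratic form around its mean is intrinsically $\Theta(1/\sqrt k)$. So your second stage would need $k^2=\Omega(d^2/\tau^4)$ queries, not the claimed $\cO(d^2\log n/\tau^2)$.

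The paper sidesteps this by \emph{not} computing the spectral norm of the sketch. Instead it forms a degree-$p$ \emph{cycle estimator} $Y$ of $\|\A\|_{\cS_p}^p$ on $\G\A\H^\top$ with $p=\Theta(\log n/\tau)$ (Lemma~\ref{lemma: operator norm estimator under sensing model}), and returns $Y^{1/p}$. The $\tau$-dependence is absorbed by the exponent: because one only needs $Y$ to estimate $\|\A\|_{\cS_p}^p$ to within a constant factor (second-moment Chebyshev), and a constant multiplicative error in $Y$ becomes a $1+\cO(1/p)=1+\cO(\tau/\log n)$ error in $Y^{1/p}$, while $\|\A\|_{\cS_p}/\|\A\|\le n^{1/p}\le 1+\tau/2$. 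This decouples the Gaussian sketch width $k$ from $\tau$ entirely; for an $n'\times n'$ input of stable rank $\cO(d)$, $k=\Theta(\sqrt{n'd})$ suffices. The $\tau^{-1}$ in the final $k$ comes only from first passing to a $q\times q$ non-uniformly sampled submatrix with $q=\cO(d\log n/\tau^2)$, so that $k=\Theta(\sqrt{qd})=\Theta(d\sqrt{\log n}/\tau)$ and $k^2$ matches the claim. Your outline also omits this submatrix-sampling reduction (your first stage only estimates $\|\A\|_F$), which is what supplies the small ambient dimension the cycle-estimator variance bound needs. To fix the proof you would need to either adopt the cycle-estimator idea, or find some other mechanism that recovers $\|\A\|$ to relative error $\tau$ from a constant-accuracy statistic of the sketch; the spectral norm of the sketch alone does not have this property at the advertised sketch size.
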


\begin{remark}
The optimality of Theorem \ref{theorem: operator norm estimator under sensing model} follows from the hard instance in the proof of Theorem \ref{theorem: stable rank lower bound}.
\end{remark}

\begin{algorithm}
\caption{The sketching/sensing algorithm to estimate $\|\A\|$ up to $(1\pm\tau)$ relative error}
\label{algorithm: the sketching algorithm for even p}
\begin{algorithmic}[1]
\State Obtain indices $\cI_{\mathsf{row}}$, $\cI_{\mathsf{col}}$ and scaling factors $\left\{\sqrt{\frac{q}{p_i q_{\mathsf{row}}}}\right\}$, $\left\{\sqrt{\frac{q'}{p_i'q_{\mathsf{col}}}}\right\}$ by Algorithm \ref{algorithm: the sampling algorithm for even p} with $|\cI_{\mathsf{row}}|=|\cI_{\mathsf{col}}|=\cO(d\log (n)/\tau^2)$.
\State Let $\G$ and $\H$ be $\Theta(\frac{\max\{\log(d\log(n)/\tau),d\}}{\tau})\times \cO(\frac{d\log n}{\tau^2})$ matrices with i.i.d. $\cN(0,1)$ entries. Scale the columns of $\G$ by $\left\{\sqrt{\frac{q}{p_i q_{\mathsf{row}}}}\right\}$ and the columns of $\H$ by $\left\{\sqrt{\frac{q'}{p_i'q_{\mathsf{col}}}}\right\}$.
\State Maintain $\G\A_{\cI_{\mathsf{row}},\cI_{\mathsf{col}}}\H^\top$.
\State Compute $Y$ defined in Eqn. \eqref{equ: cycle estimator main}.
\State \Return $Y^{\tau/(2\log (d\log(n)/\tau^2))}$.
\end{algorithmic}
\end{algorithm}

Before proving Theorem \ref{theorem: operator norm estimator under sensing model}, we introduce a new estimator of operator norm under the sensing model, which approximates the operator norm by the Schatten-$p$ norm of large $p$.

Specifically, let $\A$ be an $n\times n$ matrix. We define a cycle $\sigma$ to be an ordered pair of a sequence of length $q$ with $p=2q$: $\lambda=((i_1,\dots,i_q),(j_1,\dots,j_q))$ such that $i_r,j_r\in[k]$ for all $r$, $i_r\neq i_s$ and $j_r\neq j_s$ for $r\neq s$. Now we associate with $\lambda$ a scalar
\begin{equation}
\label{equ: cycle estimator by reading entries}
\A_\lambda=\prod_{\ell=1}^q\A_{i_\ell,j_\ell}\A_{i_{\ell+1},j_\ell},
\end{equation}
where for convention we define that $i_{q+1}=i_1$. Denote by $\cC$ the set of cycles. We define
\begin{equation}
\label{equ: cycle estimator main}
Y=\frac{1}{|\cC|}\sum_{\lambda\in\cC}(\G\A\H^\top)_\lambda
\end{equation}
for even $p$, where $\G\sim\cG(k,n)$, $\H\sim\cG(k,n)$, and $k\ge q$. This estimator, akin to that in~\cite{li2014sketching}, approximates the Schatten-$p$ and thus the operator norm, as we shall show below.

\begin{lemma}
\label{lemma: operator norm estimator under sensing model}
Suppose that $\A$ is a $n\times n$ matrix of stable rank at most $d$. Let $k=\Theta(\max\{\sqrt{n d},\log n\})$ and $Y$ be the estimator defined in \eqref{equ: cycle estimator main}. With probability at least $0.9$, it holds that $(1-\tau)\|\A\|\le Y^{\tau/(2\log (n))}\le (1+\tau)\|\A\|$. The sketching complexity is $\cO(k^2) = \cO(\max\{nd,\log^2n\})$.
\end{lemma}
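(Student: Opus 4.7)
The estimator $Y$ is a restricted-cycle approximation to $\|\G\A\H^\top\|_{\cS_{2q}}^{2q}$, and my plan has three steps: compute $\bbE[Y]$, bound $\mathsf{Var}[Y]$, and convert a constant-factor concentration statement on $Y$ into a multiplicative estimate of $\|\A\|$ via Schatten-norm comparison, where $q=\log(n)/\tau$.

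First I would show that $\bbE[Y] = \|\A\|_{\cS_{2q}}^{2q}$. Fix any $\lambda=((i_1,\ldots,i_q),(j_1,\ldots,j_q))\in\cC$ and expand $(\G\A\H^\top)_{a,b}=\sum_{s,t}\G_{a,s}\A_{s,t}\H_{b,t}$ inside
$(\G\A\H^\top)_\lambda=\prod_\ell (\G\A\H^\top)_{i_\ell,j_\ell}(\G\A\H^\top)_{i_{\ell+1},j_\ell}$. Because the $i$'s are distinct and the $j$'s are distinct, each row $\G_{i_\ell,\cdot}$ appears in exactly two factors (once from $(\G\A\H^\top)_{i_\ell,j_\ell}$, once from $(\G\A\H^\top)_{i_\ell,j_{\ell-1}}$) and similarly each $\H_{j_\ell,\cdot}$ appears exactly twice. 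Isserlis' theorem then forces a unique non-zero Gaussian pairing, and carrying out the contraction gives
\begin{equation*}
\bbE[(\G\A\H^\top)_\lambda] \;=\; \sum_{s_1,\ldots,s_q,\,t_1,\ldots,t_q}\prod_{\ell=1}^q \A_{s_\ell,t_\ell}\A_{s_{\ell+1},t_\ell} \;=\; \tr((\A\A^\top)^q) \;=\; \|\A\|_{\cS_{2q}}^{2q},
\end{equation*}
independent of $\lambda$, so $\bbE[Y]=\|\A\|_{\cS_{2q}}^{2q}$.

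Second I would bound $\mathsf{Var}[Y]$. For $\bbE[(\G\A\H^\top)_\lambda(\G\A\H^\top)_{\lambda'}]$, pairs whose row- and column-index sets are disjoint from those of $\lambda$ give independent $\G$- and $\H$-pairings, hence $\bbE[(\G\A\H^\top)_\lambda(\G\A\H^\top)_{\lambda'}]=\bbE[(\G\A\H^\top)_\lambda]\bbE[(\G\A\H^\top)_{\lambda'}]$ and contribute nothing to the variance. When $\lambda,\lambda'$ share a total of $s$ row/column indices, extra Wick contractions arise; each such contraction produces an additional matrix weight of the form $\tr(\A^\top\A\cdots)$ which I would bound by $\|\A\|_{\cS_{2q}}^{2(2q-s)}\|\A\|_F^{2s}\le d^s\|\A\|_{\cS_{2q}}^{4q}$, using the stable-rank hypothesis $\|\A\|_F^2\le d\|\A\|^2\le d\|\A\|_{\cS_{2q}}^2$. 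Counting pairs with a fixed overlap pattern as at most $|\cC|^2\cdot O(q^{2s}/k^s)$, summing gives
\begin{equation*}
\mathsf{Var}[Y] \;\lesssim\; \bbE[Y]^2\sum_{s\ge 1}\binom{2q}{s}^{\!2} s!\left(\frac{d}{k^2}\right)^{\!s},
\end{equation*}
which is at most $\tfrac{1}{100}\bbE[Y]^2$ once $k^2\gtrsim q^2 d$; with $q=\log(n)/\tau$ this is guaranteed by $k\ge\max\{\sqrt{nd},\log n\}$.

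Third, Chebyshev gives $Y=(1\pm\tfrac{1}{3})\|\A\|_{\cS_{2q}}^{2q}$ with probability $\ge 0.9$. Raising to the $\tau/(2\log n)=1/(2q)$ power yields $Y^{\tau/(2\log n)}=(1\pm O(\tau/\log n))\|\A\|_{\cS_{2q}}$. Since $\|\A\|\le \|\A\|_{\cS_{2q}}\le n^{1/(2q)}\|\A\|=e^{\tau/2}\|\A\|\le (1+\tau)\|\A\|$, we conclude $(1-\tau)\|\A\|\le Y^{\tau/(2\log n)}\le (1+\tau)\|\A\|$. The sketching complexity is $k^2=\cO(\max\{nd,\log^2 n\})$, since every entry of $\G\A\H^\top$ is one rank-one measurement $\langle\A,\G_{a,:}^\top\H_{b,:}\rangle$. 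The main obstacle will be the second step: the Wick expansion produces a combinatorial zoo of crossing patterns, and organizing them into the clean per-overlap bound $d^s/k^s$ requires the cycle-graph bookkeeping of~\cite{li2014sketching}; in particular, the sharp threshold $k\ge\sqrt{nd}$ rather than a larger power of $n$ is exactly what emerges from showing that the worst crossings contribute the factor $d/k^2$ per extra shared index.
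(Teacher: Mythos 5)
Your proposal follows the paper's three-step structure --- compute $\bbE[Y]=\|\A\|_{\cS_{2q}}^{2q}$, bound $\mathsf{Var}[Y]$, then apply Chebyshev and the comparison $\|\A\|\le\|\A\|_{\cS_{2q}}\le n^{1/(2q)}\|\A\|$ --- and your mean and conversion steps match the paper's essentially verbatim (the paper also sets $p=2\lceil\log(n)/\tau\rceil$ and observes the unique Wick pairing forced by the distinctness of cycle indices). The variance step, however, is a genuinely different decomposition. The paper applies a single \emph{uniform} bound $\bbE[\widetilde\A_\lambda\widetilde\A_\tau]\le (2cnd)^p\|\A\|_{\cS_p}^{2p}$ to every pair, obtained by counting non-vanishing Wick configurations in the expanded product ($\lesssim(4n)^{2q}$), bounding the Gaussian moments by $25^q$, and bounding the singular-value products by $d^p\|\A\|_{\cS_p}^{2p}$; it is the factor $(nd)^p$ in this bound, divided against $|\cC|^2=\Theta(k^{2p})$, that produces the requirement $k\gtrsim\sqrt{nd}$. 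You instead propose an \emph{overlap-dependent} bound --- disjoint cycles have zero covariance exactly, and a pair sharing $s$ indices contributes at most $d^s\|\A\|_{\cS_{2q}}^{4q}$, with combinatorial weight $\lesssim\binom{2q}{s}^2 s!/k^s$. This is sharper: carried through, it would require only $k\gtrsim q\sqrt{d}$, much weaker than $\sqrt{nd}$ for large $n$, and it makes explicit the decorrelation of disjoint cycles, which the paper's derivation uses only implicitly through the $(a_1,a_2)$-disjoint bookkeeping.

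The gap is exactly where you flag it. The per-overlap weight $\|\A\|_{\cS_{2q}}^{2(2q-s)}\|\A\|_F^{2s}$ is asserted, not derived: when two index-loops share $s$ vertices, the admissible Wick partitions merge them into closed walks of lengths $m_1,\dots,m_w$ with $\sum_j m_j=2q$, each contributing a trace $\tr((\A^\top\A)^{m_j})$, and one must bound every such factorization and reassemble it into a comparison with $\|\A\|_{\cS_{2q}}^{4q}$ --- for instance via $\tr((\A^\top\A)^m)\le\|\A\|^{2(m-1)}\|\A\|_F^2$ and the stable-rank hypothesis $\|\A\|_F^2\le d\|\A\|^2$. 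Likewise the count $\binom{2q}{s}^2 s!/k^s$ is a reasonable cartoon but not an argument: row overlap $s_1$ and column overlap $s_2$ must be tracked separately (they interact differently with the $\G$- and $\H$-pairings), and the number of Wick partitions consistent with a given overlap pattern must itself be bounded. The paper side-steps all of this with its much looser uniform bound, at the price of a worse dependence on $n$. Two smaller points: your final step stacks two independent $(1\pm\Theta(\tau))$ factors, so the constants must be absorbed into the $\Theta$'s defining $q$ and $k$, exactly as the paper does; and your observation that a constant-factor Chebyshev deviation in $Y$, not a $\Theta(\tau)$ one, suffices thanks to the $(2q)$-th root is correct and matches the paper.
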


\begin{proof}[Proof of Lemma \ref{lemma: operator norm estimator under sensing model}]
We first show that $\|\A\|_{\cS_p}$ and $\|\A\|$ differ at most a $(1\pm\tau)$ factor for $p=2\lceil\log (n)/\tau\rceil$. To see this,
\begin{equation*}
1\le \frac{\|\A\|_{\cS_p}^p}{\|\A\|^p}=\frac{\sigma_1^p(\A)+\sigma_2^p(\A)+\dots+\sigma_n^p(\A)}{\sigma_1^p(\A)}\le n,
\end{equation*}
and therefore
\begin{equation*}
1\le \frac{\|\A\|_{\cS_p}}{\|\A\|}\le n^{1/p}\le 1+\frac{1}{2}\tau.
\end{equation*}

We now show that the cycle estimator $Y^{1/p}$ approximates $\|\A\|_{\cS_p}$ within a $(1\pm\frac{1}{2}\tau)$ relative error.
We say that two cycles $\lambda=(\{i\},\{j\})$ and $\tau=(\{i'\},\{j'\})$ are $(a_1,a_2)$-disjoint if $|i\Delta i'|=2a_1$ and $|j\Delta j'|=2a_2$, denoted by $|\lambda\Delta\tau|=(a_1,a_2)$. Here $\Delta$ is the symmetric difference.
Denote by $\A=\U\mathbf{\Sigma}\V^\top$ the skinny SVD of $\A$. Let $\G$ and $\H$ be random matrices with i.i.d. $\cN(0,1)$ entries. Note that $\G\A\H^\top$ is identically distributed as $\G\mathbf{\Sigma}\H^\top$ by rotational invariance. Let $\widetilde\A$ be the $k\times k$ matrix $\G\mathbf{\Sigma}\H^\top$, where $k\ge q$. It is clear that
\begin{equation*}
\widetilde\A_{s,t}=\sum_{i=1}^n \sigma_i \G_{s,i}\H_{t,i}.
\end{equation*}
Define
\begin{equation*}
Y=\frac{1}{|\cC|}\sum_{\lambda\in\cC}\widetilde\A_{\lambda}.
\end{equation*}
Let $\lambda=(\{i_s\},\{j_s\})$. Then
\begin{equation*}
\widetilde\A_{\lambda}=\sum_{\substack{\ell_1\in[n],\dots,\ell_q\in[n]\\m_1\in[n],\dots,m_q\in[n]}} \prod_{s=1}^q \sigma_{\ell_s}\sigma_{m_s} \G_{i_s,\ell_s}\H_{j_s,\ell_s}\G_{i_{s+1},m_s}\H_{j_s,m_s}.
\end{equation*}
We note that
\begin{equation*}
\mathbb{E}Y=\bbE\widetilde\A_\lambda=\sum_{i=1}^n \sigma_i^{2q}=\|\A\|_{\cS_p}^p.
\end{equation*}
We now bound the variance of $Y$. Let $\tau=(\{i_s'\},\{j_s'\})$. Observe that
\begin{equation*}
\mathbb{E}Y^2=\frac{1}{|\cC|^2}\sum_{a_1=0}^q\sum_{a_2=0}^q\sum_{\substack{\lambda,\tau\in\cC\\|\lambda\Delta\tau|=(a_1,a_2)}}\bbE(\widetilde\A_{\lambda}\widetilde\A_{\tau}),
\end{equation*}
where
\begin{equation}
\label{equ: expectation of pair of cycles}
\begin{split}
\bbE(\widetilde\A_{\lambda}\widetilde\A_{\tau})=\sum_{\substack{\ell_1\in[n],\dots,\ell_q\in[n]\\\ell_1'\in[n],\dots,\ell_q'\in[n]\\m_1\in[n],\dots,m_q\in[n]\\m_1'\in[n],\dots,m_q'\in[n]}} &\left(\prod_{i=1}^q\sigma_{\ell_i}\sigma_{m_i}\sigma_{\ell_i'}\sigma_{m_i'}\right)\bbE\left(\prod_{s=1}^q \G_{i_s,\ell_s}\G_{i_{s+1},m_s}\G_{i_s',\ell_s'}\G_{i_{s+1}',m_s'}\right)\\
&\times\mathbb{E}\left(\prod_{s=1}^q \H_{j_s,\ell_s}\H_{j_{s},m_s}\H_{j_s',\ell_s'}\H_{j_{s}',m_s'}\right).
\end{split}
\end{equation}

For any fixed cycles $\lambda=(\{i_s\},\{j_s\})$ and $\tau=(\{i_s'\},\{j_s'\})$ such that $|\lambda\Delta \tau|=(a_1,a_2)$, we notice that
\begin{equation}
\label{equ: expectation}
\bbE(\widetilde\A_\lambda\widetilde\A_\tau)\le (2cnd)^{p}\|\A\|_{\cS_p}^{2p},
\end{equation}
for an absolute constant $c$.
To see this, we observe that for the expectation $\bbE(\widetilde\A_\lambda\widetilde\A_\tau)$ to be non-zero, we must have that each appeared $\G$ and $\H$ in Eqn. \eqref{equ: expectation of pair of cycles} repeats an even number of times. Though there are totally $n^{4q}$ many of configurations for $\{\ell_s\}$, $\{\ell_s'\}$, $\{m_s\}$ and $\{m_s'\}$, there are at most $n^{2q}3^q$ non-zero terms among the summation in Eqn. \eqref{equ: expectation of pair of cycles}. This is because each $\G$ and $\H$ must have power $2$ or $4$ by the construction of the cycle. We know that for each fixed configuration of blocks there are at most $n^{2q}$ free variables, and there are at most $16^q$ different kinds of configurations of blocks because the size of each block is at most $4$. So the number of non-zero terms is at most $(4n)^{2q}$. This is true no matter whether there exists some $i_r,i_s'$ or $j_r,j_s'$ such that
$i_r=i_s'$ or $j_r=j_s'$. We also claim that for each non-zero term in the summation of Eqn. \eqref{equ: expectation of pair of cycles},
\begin{equation*}
\bbE\left(\prod_{s=1}^q \G_{i_s,\ell_s}\G_{i_{s+1},m_s}\G_{i_s',\ell_s'}\G_{i_{s+1}',m_s'}\right)\cdot \bbE\left(\prod_{s=1}^q \H_{j_s,\ell_s}\H_{j_{s},m_s}\H_{j_s',\ell_s'}\H_{j_{s}',m_s'}\right)\le 25^q.
\end{equation*}
This is because $\bbE\G^2=\bbE\H^2=1$ and $\bbE\G^4=\bbE\H^4=3$. Therefore, for a certain configuration in which $p_1,\dots,p_w$ are free variables with multiplicity $r_1,\dots,r_w\ge 2$, the summation in Eqn. \eqref{equ: expectation of pair of cycles} is bounded by
\begin{equation*}
4n^{2q} 100^q\sum_{p_1,\dots,p_w}\sigma_{p_1}^{r_1}\cdots\sigma_{p_w}^{r_w}\le (2n)^{p}\|\A\|_{\cS_{r_1}}^{r_1}\cdots\|\A\|_{\cS_{r_w}}^{r_w}\le (2nd)^{p}\|\A\|_{\cS_{p}}^{2p},
\end{equation*}
where the last inequality follows from the facts that $\sum_{i=1}^w r_i=2p$ and, by the assumption $\srank(A)\leq d$, that $\|\A\|_{\cS_r}\le \|\A\|_{F} \le \sqrt{d}\|\A\|_{\cS_p}$ for any $r\geq 2$. Thus we obtain Eqn. \eqref{equ: expectation}.

We now bound $\mathbb{E}Y^2$. Note that $|\cC|=\Theta(k^p)$ and there are
\begin{equation*}
\binom{k}{q} \binom{q}{q-a_1} \binom{k-(q-a_1)}{a_1} \binom{k}{q} \binom{q}{q-a_2} \binom{k-(q-a_2)}{a_2}
\end{equation*}
pairs of $(a_1,a_2)$-disjoint cycles, which can be upper bounded by $\cO(10^q)$.
Hence
\begin{equation*}
\mathbb{E}Y^2=\frac{1}{|\cC|^2}\sum_{a_1=0}^q\sum_{a_2=0}^q\sum_{\substack{\lambda,\tau\in\cC\\|\lambda\Delta\tau|=(a_1,a_2)}}\bbE(\widetilde\A_{\lambda}\widetilde\A_{\tau})\leq C' \frac{1}{k^{2p}}q^210^q (2nd)^p\|\A\|_{\cS_p}^{2p}\leq \|\A\|_{\cS_p}^{2p},
\end{equation*}
by the assumption that $k=\Omega(\sqrt{ nd})$.

It follows that
\begin{equation*}
\mathsf{Var}[Y]\le \bbE Y^2\leq \|\A\|_{\cS_p}^{2p}.
\end{equation*}
Then by the Chebyshev inequality,
\begin{equation*}
\Pr\left[\left|\|\A\|_{\cS_p}^p-Y\right|> \frac{1}{2}\|\A\|_{\cS_p}^p\right]\le \frac{\mathsf{Var}[Y]}{4\|\A\|_{\cS_p}^{2p}}\le \frac{1}{10},
\end{equation*}
namely,
\[
\Pr\left[\left(1-\frac{1}{2}\tau\right)\|\A\|_{\cS_p}\le Y^{1/p}\le \left(1+\frac{1}{2}\tau\right)\|\A\|_{\cS_p}\right]>\frac{9}{10}.
\]
This together with the fact that $\|\A\|\le \|\A\|_{\cS_p}\le (1+\frac{1}{2}\tau)\|\A\|$ implies that
\[
\Pr\left[(1-\tau)\|\A\|\le Y^{1/p}\le (1+\tau)\|\A\|\right]>\frac{9}{10},
\]
as desired. This completes the proof of Lemma~\ref{lemma: operator norm estimator under sensing model}.
\end{proof}

We are now ready to prove Theorem~\ref{theorem: operator norm estimator under sensing model}. Recall that we have shown that by focusing on an $\cO(\frac{d\log n}{\tau^2})\times \cO(\frac{d\log n}{\tau^2})$ submatrix (without sampling it), we can achieve guarantee \eqref{equ: small stable rank, operator norm} when $\|\A\|_F^2=\Omega(\tau n^2)$ and $\|\A\|_\infty\le 1$. Letting $d\gets c_1 d$ and $n\gets \cO(\frac{d\log n}{\tau^2})$ in Lemma \ref{lemma: operator norm estimator under sensing model} concludes the proof of Theorem~\ref{theorem: operator norm estimator under sensing model}.

\end{document}